\g@addto@macro\bfseries{\boldmath}
\DeclarePairedDelimiter\bra{\langle}{\rvert}
\DeclarePairedDelimiter\ket{\lvert}{\rangle}
\DeclarePairedDelimiterX\braket[2]{\langle}{\rangle}{#1 \delimsize\vert #2}
\let\originalleft\left
\let\originalright\right
\renewcommand{\left}{\mathopen{}\mathclose\bgroup\originalleft}
\renewcommand{\right}{\aftergroup\egroup\originalright}
\newcommand{\eq}[1]{(\ref{eq:#1})}
\renewcommand{\sec}[1]{\hyperref[sec:#1]{Section~\ref*{sec:#1}}}
\newcommand{\app}[1]{\hyperref[app:#1]{Appendix~\ref*{app:#1}}}
\newcommand{\thm}[1]{\hyperref[thm:#1]{Theorem~\ref*{thm:#1}}}
\newcommand{\prop}[1]{\hyperref[prop:#1]{Proposition~\ref*{prop:#1}}}
\newcommand{\lem}[1]{\hyperref[lem:#1]{Lemma~\ref*{lem:#1}}}
\newcommand{\fig}[1]{\hyperref[fig:#1]{Figure~\ref*{fig:#1}}}
\newcommand{\tab}[1]{\hyperref[tab:#1]{Table~\ref*{tab:#1}}}
\DeclarePairedDelimiter\ceil{\lceil}{\rceil}
\DeclarePairedDelimiter\norm{\|}{\|}
\newcommand{\C}{\mathbb{C}}
\newcommand{\N}{\mathbb{N}}
\newcommand{\R}{\mathbb{R}}
\newcommand{\Z}{\mathbb{Z}}
\renewcommand{\Re}{\operatorname{Re}}
\newcommand{\Sym}[1]{\operatorname{Sym}(#1)}
\DeclareMathOperator{\poly}{poly}
\DeclareMathOperator{\select}{select}
\newcommand{\rem}{\mathcal{R}}
\newcommand{\comment}[1]{}
\newcommand{\ys}[1]{\textbf{\color{purple}[Yuan: #1]}}
\newcommand{\nam}[1]{\textbf{\color{teal}[Nam: #1]}}
\newtheorem{theorem}{Theorem}[section]
\newtheorem{lemma}[theorem]{Lemma}
\newtheorem{proposition}[theorem]{Proposition}
\newtheorem{definition}{Definition}
\newtheorem{example}{Example}
\pgfplotsset{
	log x ticks with fixed point/.style={
		xticklabel={
			\pgfkeys{/pgf/fpu=true}
			\pgfmathparse{exp(\tick)}			\pgfmathprintnumber[fixed relative, precision=3]{\pgfmathresult}
			\pgfkeys{/pgf/fpu=false}
		}
	},
	log y ticks with fixed point/.style={
		yticklabel={
			\pgfkeys{/pgf/fpu=true}
			\pgfmathparse{exp(\tick)}			\pgfmathprintnumber[fixed relative, precision=3]{\pgfmathresult}
			\pgfkeys{/pgf/fpu=false}
		}
	}
}
\newcommand{\PF}{PF}
\newcommand{\TS}{TS}
\newcommand{\QSP}{QSP}
\newcommand{\NOT}{\textsc{not}}
\newcommand{\CNOT}{\textsc{cnot}}
\newcommand{\TOF}{\textsc{tof}}
\newcommand{\OR}{\textsc{or}}
\newcommand{\XOR}{\textsc{xor}}
\newcommand{\T}{T}
\newcommand{\Rz}{R_z}
\newcommand{\numselect}{\Gamma}
\newcommand{\lenstring}{w}
\title{Toward the first quantum simulation with quantum speedup}
\author
{Andrew M.\ Childs,$^{1,2,3,\ast}$ Dmitri Maslov,$^{2,3,4}$ Yunseong Nam$^{2,3,5}$, \\
Neil J.\ Ross$^{2,3,6}$, and Yuan Su$^{1,2,3}$\\
\\
\normalsize{$^{1}$Department of Computer Science, University of Maryland}\\
\normalsize{$^{2}$Institute for Advanced Computer Studies, University of Maryland}\\
\normalsize{$^{3}$Joint Center for Quantum Information and Computer Science, University of Maryland}\\
\normalsize{$^{4}$National Science Foundation}\\
\normalsize{$^{5}$IonQ, Inc.}\\
\normalsize{$^{6}$Department of Mathematics and Statistics, Dalhousie University}
}
\date{}
\begin{document}

\etocdepthtag.toc{mtchapter}
\etocsettagdepth{mtchapter}{subsection}
\etocsettagdepth{mtappendix}{none}

\maketitle

\begin{abstract}
With quantum computers of significant size now on the horizon, we should understand how to best exploit their initially limited abilities. To this end, we aim to identify a practical problem that is beyond the reach of current classical computers, but that requires the fewest resources for a quantum computer. We consider quantum simulation of spin systems, which could be applied to understand condensed matter phenomena. We synthesize explicit circuits for three leading quantum simulation algorithms, employing diverse techniques to tighten error bounds and optimize circuit implementations. Quantum signal processing appears to be preferred among algorithms with rigorous performance guarantees, whereas higher-order product formulas prevail if empirical error estimates suffice. Our circuits are orders of magnitude smaller than those for the simplest classically-infeasible instances of factoring and quantum chemistry.
\end{abstract}

\section{Introduction}
\label{sec:intro}

While a scalable quantum computer remains a long-term goal, recent experimental progress suggests that devices capable of outperforming classical computers will soon be available \cite{Che14,www:IBM,DLFLWM16,Son17,Ber17,Zha17}.  Multiple groups have already developed programmable devices with several qubits and two-qubit gate fidelities around 98\% \cite{Lin17}, and similar devices with around 50 qubits are under active development. While the error rates of these early machines severely limit the total number of gates that can be reliably performed, future improvements should lead to machines with more qubits and more reliable gates.  This raises the exciting possibility of solving practical problems that are beyond the reach of classical computation.
Such an outcome would be a landmark in the development of quantum computers and would begin an era in which they serve not only as testbeds for science, but as practical computing machines.

Reaching this goal will require not only significant experimental advances, but also careful quantum algorithm design and implementation.  Here we address the latter issue by developing explicit circuits, and thereby producing concrete resource estimates, for practical quantum computations that can outperform classical computers. Through this work, we aim to identify applications for small quantum computers that help to motivate the significant investment required to develop scalable, fault-tolerant quantum computers.

There has been considerable previous research on compiling quantum algorithms into explicit circuits (see \app{related} for more detail). However, to the best of our knowledge, none of these studies aimed to identify minimal examples of super-classical quantum computation, and typical resource counts were high.  Our work is also distinct from recent work on quantum computational supremacy \cite{HM17}, where the goal is merely to accomplish a super-classical task, regardless of its practicality.  Instead, we aim to pave the way toward practical quantum computations (which may not be far beyond the threshold for supremacy).

Arguably, the most natural application of quantum computers is to the problem of simulating quantum dynamics \cite{Fey82}.  Quantum computers can simulate a wide variety of quantum systems, including fermionic lattice models \cite{WHWCNT15}, quantum chemistry \cite{WBCHT14}, and quantum field theories \cite{JLP12}.  However, simulations of spin systems with local interactions likely have less overhead, so we focus on them as an early candidate for practical quantum simulation.  While analog simulation may be easier to realize in the short term, we focus on digital simulation for its greater flexibility and the prospect of invoking fault tolerance.

Efficient quantum algorithms for simulating quantum dynamics have been known for over two decades \cite{Llo96}.  Recent work has led to algorithms with significantly improved asymptotic performance as a function of various parameters such as the evolution time and the allowed simulation error \cite{BACS05,BC12,BCCKS14,LC16,LC17}.  Our work investigates whether these alternative algorithms can be advantageous for simulations of relatively small systems, and aims to lay the groundwork for the first practical application of quantum computers.

\section{Target system}
\label{sec:targetsystem}

To produce concrete benchmarks, we focus on a specific simulation task. Specifically, we consider a one-dimensional nearest-neighbor Heisenberg model with a random magnetic field in the $z$ direction.  This model is described by the Hamiltonian
\begin{align}
  \sum_{j=1}^n (\vec \sigma_j \cdot \vec \sigma_{j+1} + h_j \sigma_j^z)
	\label{eq:heisenberg}
\end{align}
where $\vec \sigma_j = (\sigma^x_j,\sigma^y_j,\sigma^z_j)$ denotes a vector of Pauli $x$, $y$, and $z$ matrices on qubit $j$.  We impose periodic boundary conditions (i.e., $\vec\sigma_{n+1} = \vec\sigma_1$), and $h_j \in [-h,h]$ is chosen uniformly at random.  The parameter $h$ characterizes the strength of the disorder.

This Hamiltonian has been considered in recent studies of self-thermalization and many-body localization (see \app{targetsystem} for more detail).  Despite intensive investigation, the details of a transition between thermal and localized phases remain poorly understood. A major challenge is the difficulty of simulating quantum systems with classical computers; indeed, the most extensive numerical study we are aware of was restricted to at most 22 spins \cite{LLA15}.

Hamiltonian simulation can efficiently access any feature that could be observed experimentally (and more), and there are several proposals for exploring self-thermalization by simulating dynamics \cite{Ser14,Sch15,Smi16}.  Since all of these approaches involve only very simple state preparations and measurements, we focus on the cost of simulating dynamics.  We consider evolution times comparable to the number of spins, since the system must evolve for this long for self-thermalization to take place (or even for information to propagate across the system, owing to the Lieb-Robinson bound).

Specifically, we produce gate counts for simulations with $h=1$, evolution time $t=n$ (the number of spins in the system), and overall accuracy $\epsilon = 10^{-3}$.  These explicit choices help us to focus on the system-size dependence of quantum simulation algorithms. This is a key consideration for practical applications, yet it has been deemphasized in the literature on sparse Hamiltonian simulation.

\section{Implementations}
\label{sec:implementations}

\begin{table}
  \begin{center}
    \renewcommand{\arraystretch}{1.15}
    \small
    \begin{tabular}{l|c|l|l}
      Algorithm & Ref. & Gate complexity ($t,\epsilon$) & Gate complexity ($n$)\\ \hline
      Product formula (\PF), 1st order & \cite{Llo96} & $O(t^2/\epsilon)$ & $O(n^5)$ \\
      Product formula (\PF), ($2k$)th order & \cite{BACS05} & $O(5^{2k} t^{1+1/2k} / \epsilon^{1/2k})$ & $O(5^{2k} n^{3+1/k})$ \\
      Quantum walk & \cite{BC12} & $O(t/\sqrt{\epsilon})$ & $O(n^4 \log n)$ \\
      Fractional-query simulation & \cite{BCCKS13} & $O\bigl(t \frac{\log^2(t/\epsilon)}{\log\log(t/\epsilon)}\bigr)$ & $O\bigl(n^4\frac{\log n}{\log\log n}\bigr)$ \\
      Taylor series (\TS) & \cite{BCCKS14} & $O\bigl(t \frac{\log^2(t/\epsilon)}{\log\log(t/\epsilon)}\bigr)$ & $O\bigl(n^3\frac{\log^2 n}{\log\log n}\bigr)$ \\
      Linear combination of quantum walk steps & \cite{BCK15} & $O\bigl(t \frac{\log^{3.5}(t/\epsilon)}{\log\log(t/\epsilon)}\bigr)$ & $O\bigl(n^4\frac{\log n}{\log\log n}\bigr)$ \\
      Quantum signal processing (\QSP) & \cite{LC16} & $O(t + \log(1/\epsilon))$ & $O(n^3 \log n)$
    \end{tabular}
    \renewcommand{\arraystretch}{1}
  \end{center}
\caption{Previously-established asymptotic gate complexities of quantum simulation algorithms as a function of the simulation time $t$, allowed error $\epsilon$, and the system size $n$ for a one-dimensional nearest-neighbor spin model as in \eq{heisenberg} with $t=n$ and fixed $\epsilon$.
\label{tab:algsummary}}
\end{table}

There are many distinct quantum algorithms for Hamiltonian simulation, some of which are summarized in \tab{algsummary}.  We implement algorithms based on high-order product formulas (\PF, introduced in \sec{algpf}) \cite{BACS05}, direct application of the Taylor series (\TS, \sec{alglcu}) \cite{BCCKS14}, and the recent quantum signal processing method (\QSP, \sec{algqsp}) \cite{LC16}.  We expect these to be among the most efficient approaches to digital quantum simulation.  In particular, approaches based on quantum walk \cite{BC12,BCK15} appear to incur greater overhead (as discussed in \app{algother}).

To produce concrete circuits, we implement quantum simulation algorithms in a quantum circuit description language called Quipper \cite{GLRSV13} (see \app{circuit} for more details). Wherever possible, we tighten the analysis of algorithm parameters and manually optimize the implementation.  We also process all circuits using an automated tool we developed for large-scale quantum circuit optimization \cite{Optimizer}. Our implementation is available in a public repository \cite{SourceCode}.

We express our circuits over the set of two-qubit $\CNOT$ gates, single-qubit Clifford gates, and single-qubit $z$ rotations $R_z(\theta) := \exp(-i \sigma^z \theta/2)$ for $\theta \in \R$.  Such gates can be directly implemented at the physical level with both trapped ions \cite{DLFLWM16} and superconducting circuits \cite{Che14, www:IBM}.  In both technologies, two-qubit gates take longer to perform and incur more error than single-qubit gates.  Thus, the $\CNOT$ count is a useful figure of merit for assessing the cost of physical-level circuits on a universal device.  We also produce Clifford+$\T$ circuits using optimal circuit synthesis \cite{RS16} so that we can count $\T$ gates, which are typically the most expensive gates for fault-tolerant computation.

Our analysis ignores many practical details, such as architectural constraints, instead aiming to give a broad overview of potential implementation costs that can be refined for specific systems.  When counting qubits, we assume that measured ancillas can be reused later.

\subsection{Product formula algorithm}
\label{sec:pf}

The product formula (\PF) approach approximates the exponential of a sum of operators by a product of exponentials of the individual operators. The asymptotic complexity of this approach can be improved with higher-order Suzuki formulas \cite{Suz91}.  By splitting the evolution into $r$ segments and making $r$ sufficiently large, we can ensure that the simulation is arbitrarily precise.  The main challenge in making these algorithms concrete is to choose an explicit $r$ that ensures some desired upper bound on the error.  \app{pf} gives a detailed description of these implementation details.

We present two bounds, which we call the \emph{analytic} and \emph{minimized} bounds, that slightly strengthen previous analysis \cite{BACS05}.  However, bounds of this type are far from tight \cite{RWS12,BMWAW15,RWSWT17}.  Thus, we develop an improved bound that exploits commutation relations among terms in the target Hamiltonian.  For a one-dimensional system of $n$ spins with nearest-neighbor couplings, evolving for time $t=n$, this \emph{commutator bound} improves the asymptotic complexity of the $(2k)$th-order \PF\ algorithm from $O(n^{3+1/k})$ to $O(n^{3+2/(2k+1)})$ while also significantly improving the leading constant.

Naive computation of the commutator bound takes time $O(n^{2k+1})$, which can be prohibitive even for small $n$.  To make this approach practical, we develop techniques that exploit the structure of the Hamiltonian to compute the commutator bound in closed form.  We explicitly evaluate this bound for the first-, second-, and fourth-order product formulas.

Unfortunately, even the commutator bound can be very loose.  To address this, we report empirical error estimates by extrapolating the error seen in direct classical simulations of small instances (as also explored in previous work on simulating many-body dynamics \cite{RWS12} and quantum chemistry \cite{BMWAW15,RWSWT17}).  While these \emph{empirical bounds} do not provide rigorous guarantees on the simulation error, they may nevertheless be useful in practice, and they improve the cost of \PF\ algorithms by several orders of magnitude.

\subsection{Taylor series algorithm}
\label{sec:ts}

The Taylor series (\TS) algorithm directly implements the (truncated) Taylor series of the evolution operator for a carefully-chosen constant time using a procedure for implementing linear combinations of unitary operations \cite{BCCKS14}.  This segment is then simply repeated until the entire evolution time has been simulated.  The circuit for a segment is built using three subroutines: a state preparation procedure, a reflection about the $\ket{0}$ state, and an operation denoted $\select(V)$ (discussed further below).  Our implementation of the \TS\ algorithm (described in detail in \app{lcu}) also includes a concrete error analysis that establishes rigorous, non-asymptotic bounds on the simulation parameters.

The aforementioned $\select(V)$ operation applies a unitary $V_j$ conditioned on a control register being in the state $\ket{j}$, for $j \in \{1,\ldots,\numselect\}$.  We develop an improved implementation of this operation by designing an optimized walk on a binary tree, saving a factor of about $\log_2 \numselect$ in the gate count.  For our simulations of systems with 10 to 100 spins, this reduces $\CNOT$ and $\T$ gate counts over a naive implementation by a factor of between $5$ and $9$, significantly improving the overall complexity. Furthermore, the cost of our $\select(V)$ implementation meets a previously-established asymptotic lower bound \cite{ar:m}.  This improvement may be more broadly applied to any algorithm using the $\select(V)$ procedure, such as others based on linear combinations of unitaries.

\subsection{Quantum signal processing algorithm}
\label{sec:qsp}

The quantum signal processing (\QSP) algorithm of Low and Chuang \cite{LC17,LC16} effectively implements a linear combination of unitary operators by a different mechanism. This algorithm applies a sequence of operations called \emph{phased iterates} that manifest each eigenvalue of the Hamiltonian as a rotation acting on an ancilla qubit.  By carefully choosing a sequence of rotation angles for that qubit, we induce the desired evolution.

The circuit for each phased iterate is built from similar subroutines as the \TS\ algorithm.  However, computing the $M$ rotation angles for the phased iterates requires finding the roots of a polynomial of degree $2M$, and these roots must be determined to high precision.  Because of these challenges, we were unable to compute the parameters of the algorithm explicitly except in very small instances.  Instead, we produced estimates of the gate count (but not a complete implementation) by synthesizing a version of the algorithm with placeholder values of the parameters.

One way to alleviate this problem is to consider a segmented implementation of the \QSP\ algorithm.  In this approach, we divide the evolution time into $r$ segments, each of which is sufficiently short that the classical preprocessing is tractable.  Since the optimality of the \QSP\ approach to Hamiltonian simulation relies essentially on simulating the entire evolution as a single segment, the segmented approach has higher asymptotic complexity.  However, it allows us to develop a complete implementation, and the overhead for moderate values of $n$ is not too high.

For the full version of the algorithm, we consider an empirical error bound on the Jacobi-Anger expansion, giving a modest improvement. Numerical evidence suggests that the additional savings from an empirical error bound for the overall algorithm would not be significant. For the segmented version of the algorithm, we instead used an analytic error bound so that the algorithm remains rigorous (and because an empirical Jacobi-Anger error bound did not give much improvement in that case).

\app{lowchuang} discusses our implementation of \QSP\ algorithms in detail.

\section{Results}
\label{sec:results}

\def\preoptimcznormaltotalfstanaavg{
(13,419860070266)
(16,1185735236192)
(20,3618576770120)
(25,11043019928200)
(32,37943527552320)
(40,115794456641760)
(50,353376637699800)
(63,1122252443826696)
(79,3479551501731006)
(100,11308052406391000)
}
 
\def\preoptimcznormaltotalfstminavg{
(13,154458116488)
(16,436208041984)
(20,1331200832000)
(25,4062501625000)
(32,13958647119872)
(40,42598406656000)
(50,130000013000000)
(63,412853627892888)
(79,1280055513260056)
(100,4160000104000000)
}
 
\def\preoptimcznormaltotalfstcomavg{
(13,7427444206)
(16,17042994592)
(20,41608873280)
(25,101584162550)
(32,272687910144)
(40,665741966240)
(50,1625346595600)
(63,4096643444712)
(79,10129181042292)
(100,26005545506200)
}
 
\def\preoptimcznormaltotalfstfitavg{
(13,125901958)
(16,232979552)
(20,451398480)
(25,874585400)
(32,1817916672)
(40,3522220000)
(50,6824312300)
(63,13538124054)
(79,26477746334)
(100,53249510600)
}
 
\def\preoptimcznormaltotalsndanaavg{
(13,992122482)
(16,2276521680)
(20,5557913700)
(25,13569125025)
(32,36424340352)
(40,88926611040)
(50,217105982550)
(63,547209930582)
(79,1353007289292)
(100,3473695695300)
}
 
\def\preoptimcznormaltotalsndminavg{
(13,602200911)
(16,1381615296)
(20,3372676920)
(25,8233276800)
(32,22099163328)
(40,53949770640)
(50,131706931350)
(63,331950606624)
(79,820740978411)
(100,2107106932200)
}
 
\def\preoptimcznormaltotalsndcomavg{
(13,82434105)
(16,173313504)
(20,384499200)
(25,851962650)
(32,2052509280)
(40,4542355800)
(50,10051437000)
(63,22884463980)
(79,51234198324)
(100,118666983600)
}
 
\def\preoptimcznormaltotalsndfitavg{
(13,3273231)
(16,5955168)
(20,11331180)
(25,21557700)
(32,43918752)
(40,83560440)
(50,158987400)
(63,309521142)
(79,594313761)
(100,1172520600)
}
 
\def\preoptimcznormaltotalfrthanaavg{
(13,1114834500)
(16,2305836480)
(20,5035163700)
(25,10995083625)
(32,26087560800)
(40,56966357400)
(50,124395171000)
(63,279318958065)
(79,616741693125)
(100,1407370627500)
}
 
\def\preoptimcznormaltotalfrthminavg{
(13,873820740)
(16,1806203760)
(20,3941739000)
(25,8602731000)
(32,20400399840)
(40,44528344800)
(50,97197279000)
(63,218170507590)
(79,481574524740)
(100,1098608391000)
}
 
\def\preoptimcznormaltotalfrthcomavg{
(13,77133420)
(16,154709520)
(20,325686000)
(25,683482875)
(32,1547519520)
(40,3232839000)
(50,6744176250)
(63,14428747620)
(79,30365888070)
(100,65890827000)
}
 
\def\preoptimcznormaltotalfrthfitavg{
(13,722670)
(16,1228080)
(20,2167500)
(25,3837750)
(32,7205280)
(40,12739800)
(50,22516500)
(63,40628385)
(79,72441420)
(100,132268500)
}
 
\def\preoptimcznormaltotalsxthanaavg{
(13,9909512925)
(16,19798893600)
(20,41655678000)
(25,87641045625)
(32,199560225600)
(40,419862804000)
(50,883366155000)
(63,1908574934175)
(79,4058177413650)
(100,8903772525000)
}
 
\def\preoptimcznormaltotalsxthminavg{
(13,8481079425)
(16,16932591600)
(20,35599147500)
(25,74847663750)
(32,170311072800)
(40,358115880000)
(50,753047130000)
(63,1626166210725)
(79,3456063810825)
(100,7579261725000)
}
 
\def\preoptimcznormaltotalsxthfitavg{
(13,861900)
(16,1387200)
(20,2320500)
(25,3888750)
(32,6895200)
(40,11526000)
(50,19316250)
(63,32933250)
(79,55499475)
(100,95752500)
}
 
\def\preoptimcznormaltotaleigthanaavg{
(13,147721206750)
(16,290079432000)
(20,599065635000)
(25,1237177125000)
(32,2759715264000)
(40,5699304060000)
(50,11770078668750)
(63,24945007066875)
(79,52049299774125)
(100,111976488675000)
}
 
\def\preoptimcznormaltotaleigthminavg{
(13,132102667125)
(16,259237692000)
(20,535010655000)
(25,1104186337500)
(32,2461414224000)
(40,5080362960000)
(50,10486195743750)
(63,22212256586625)
(79,46324574757750)
(100,99612600112500)
}
 
\def\preoptimcznormaltotaleigthfitavg{
(13,1823250)
(16,2856000)
(20,4462500)
(25,7331250)
(32,12240000)
(40,19635000)
(50,31875000)
(63,52211250)
(79,84609000)
(100,139612500)
}
 
\def\preoptimcznormaltotallcuavg{
(13,54427655)
(16,90990931)
(20,218737242)
(25,430586721)
(32,798264070)
(40,1937060895)
(50,3381345692)
(63,6067043170)
(79,14779269571)
(100,29556188245)
}
 
\def\preoptimcznormaltotalspavg{
(13,3917202)
(16,6687180)
(20,15000188)
(25,27072874)
(32,51744930)
(40,116017194)
(50,209895913)
(63,389642362)
(79,882150419)
(100,1649670860)
}
 
\def\preoptimcznormaltotalspjaavg{
(13,3018703)
(16,5094574)
(20,11298856)
(25,20122302)
(32,37917779)
(40,83904742)
(50,149754920)
(63,274139268)
(79,612236028)
(100,1128617998)
}
 
\def\preoptimcznormaltotalspsegmentavg{
(13,8884651)
(16,15586061)
(20,36361703)
(25,67568438)
(32,133553108)
(40,311278768)
(50,579672151)
(63,1109067960)
(79,2610928166)
(100,5032905446)
}
 
\def\preoptimcznormaltotalspsegmentfitavg{
(13,8401132)
(16,14701262)
(20,34292338)
(25,63820945)
(32,126280414)
(40,294463218)
(50,548791595)
(63,1050682425)
(79,2475944448)
(100,4776142259)
}
 
\def\preoptimcznormalcnotfstanaavg{
(13,96890785446)
(16,273631208352)
(20,835056177720)
(25,2548389214200)
(32,8756198665920)
(40,26721797686560)
(50,81548454853800)
(63,258981333190776)
(79,802973423476386)
(100,2609550555321000)
}
 
\def\preoptimcznormalcnotfstminavg{
(13,35644180728)
(16,100663394304)
(20,307200192000)
(25,937500375000)
(32,3221226258432)
(40,9830401536000)
(50,30000003000000)
(63,95273914129128)
(79,295397426136936)
(100,960000024000000)
}
 
\def\preoptimcznormalcnotfstcomavg{
(13,1714025586)
(16,3932998752)
(20,9602047680)
(25,23442499050)
(32,62927979264)
(40,153632761440)
(50,375079983600)
(63,945379256472)
(79,2337503317452)
(100,6001279732200)
}
 
\def\preoptimcznormalcnotfstfitavg{
(13,29054298)
(16,53764512)
(20,104168880)
(25,201827400)
(32,419519232)
(40,812820000)
(50,1574841300)
(63,3124182474)
(79,6110249154)
(100,12288348600)
}
 
\def\preoptimcznormalcnotsndanaavg{
(13,233440584)
(16,535652160)
(20,1307744400)
(25,3192735300)
(32,8570433024)
(40,20923908480)
(50,51083760600)
(63,128755277784)
(79,318354656304)
(100,817340163600)
}
 
\def\preoptimcznormalcnotsndminavg{
(13,141694332)
(16,325085952)
(20,793571040)
(25,1937241600)
(32,5199803136)
(40,12694063680)
(50,30989866200)
(63,78106025088)
(79,193115524332)
(100,495789866400)
}
 
\def\preoptimcznormalcnotsndcomavg{
(13,19396260)
(16,40779648)
(20,90470400)
(25,200461800)
(32,482943360)
(40,1068789600)
(50,2365044000)
(63,5384579760)
(79,12055105488)
(100,27921643200)
}
 
\def\preoptimcznormalcnotsndfitavg{
(13,770172)
(16,1401216)
(20,2666160)
(25,5072400)
(32,10333824)
(40,19661280)
(50,37408800)
(63,72828504)
(79,139838532)
(100,275887200)
}
 
\def\preoptimcznormalcnotfrthanaavg{
(13,262314000)
(16,542549760)
(20,1184744400)
(25,2587078500)
(32,6138249600)
(40,13403848800)
(50,29269452000)
(63,65722107780)
(79,145115692500)
(100,331146030000)
}
 
\def\preoptimcznormalcnotfrthminavg{
(13,205604880)
(16,424989120)
(20,927468000)
(25,2024172000)
(32,4800094080)
(40,10477257600)
(50,22869948000)
(63,51334237080)
(79,113311652880)
(100,258496092000)
}
 
\def\preoptimcznormalcnotfrthcomavg{
(13,18149040)
(16,36402240)
(20,76632000)
(25,160819500)
(32,364122240)
(40,760668000)
(50,1586865000)
(63,3394999440)
(79,7144914840)
(100,15503724000)
}
 
\def\preoptimcznormalcnotfrthfitavg{
(13,170040)
(16,288960)
(20,510000)
(25,903000)
(32,1695360)
(40,2997600)
(50,5298000)
(63,9559620)
(79,17045040)
(100,31122000)
}
 
\def\preoptimcznormalcnotsxthanaavg{
(13,2331650100)
(16,4658563200)
(20,9801336000)
(25,20621422500)
(32,46955347200)
(40,98791248000)
(50,207850860000)
(63,449076455100)
(79,954865273800)
(100,2095005300000)
}
 
\def\preoptimcznormalcnotsxthminavg{
(13,1995548100)
(16,3984139200)
(20,8376270000)
(25,17611215000)
(32,40073193600)
(40,84262560000)
(50,177187560000)
(63,382627343700)
(79,813191484900)
(100,1783355700000)
}
 
\def\preoptimcznormalcnotsxthfitavg{
(13,202800)
(16,326400)
(20,546000)
(25,915000)
(32,1622400)
(40,2712000)
(50,4545000)
(63,7749000)
(79,13058700)
(100,22530000)
}
 
\def\preoptimcznormalcnoteigthanaavg{
(13,34757931000)
(16,68253984000)
(20,140956620000)
(25,291100500000)
(32,649344768000)
(40,1341012720000)
(50,2769430275000)
(63,5869413427500)
(79,12246894064500)
(100,26347409100000)
}
 
\def\preoptimcznormalcnoteigthminavg{
(13,31082980500)
(16,60997104000)
(20,125884860000)
(25,259808550000)
(32,579156288000)
(40,1195379520000)
(50,2467340175000)
(63,5226413314500)
(79,10899899943000)
(100,23438258850000)
}
 
\def\preoptimcznormalcnoteigthfitavg{
(13,429000)
(16,672000)
(20,1050000)
(25,1725000)
(32,2880000)
(40,4620000)
(50,7500000)
(63,12285000)
(79,19908000)
(100,32850000)
}
 
\def\preoptimcznormalcnotlcuavg{
(13,15293875)
(16,26273434)
(20,59722877)
(25,121398913)
(32,232537682)
(40,531572580)
(50,957949080)
(63,1773102110)
(79,4061154075)
(100,8397372240)
}
 
\def\preoptimcznormalcnotspavg{
(13,1203248)
(16,2097200)
(20,4499940)
(25,8326604)
(32,16315236)
(40,34917300)
(50,64776972)
(63,123067130)
(79,265569240)
(100,509992356)
}
 
\def\preoptimcznormalcnotspjaavg{
(13,927224)
(16,1597700)
(20,3389508)
(25,6188800)
(32,11955444)
(40,25252392)
(50,46216428)
(63,86585734)
(79,184311882)
(100,348909616)
}
 
\def\preoptimcznormalcnotspsegmentavg{
(13,2708916)
(16,4854800)
(20,10817604)
(25,20625080)
(32,41826120)
(40,92905128)
(50,177547680)
(63,347943904)
(79,779462640)
(100,1544213904)
}
 
\def\preoptimcznormalcnotspsegmentfitavg{
(13,2561492)
(16,4579200)
(20,10201968)
(25,19481168)
(32,39548460)
(40,87886312)
(50,168089280)
(63,329626820)
(79,739164800)
(100,1465432912)
}
 
\def\preoptimcznormalrzfstanaavg{
(13,64593856964)
(16,182420805568)
(20,556704118480)
(25,1698926142800)
(32,5837465777280)
(40,17814531791040)
(50,54365636569200)
(63,172654222127184)
(79,535315615650924)
(100,1739700370214000)
}
 
\def\preoptimcznormalrzfstminavg{
(13,23762787152)
(16,67108929536)
(20,204800128000)
(25,625000250000)
(32,2147484172288)
(40,6553601024000)
(50,20000002000000)
(63,63515942752752)
(79,196931617424624)
(100,640000016000000)
}
 
\def\preoptimcznormalrzfstcomavg{
(13,1142683724)
(16,2621999168)
(20,6401365120)
(25,15628332700)
(32,41951986176)
(40,102421840960)
(50,250053322400)
(63,630252837648)
(79,1558335544968)
(100,4000853154800)
}
 
\def\preoptimcznormalrzfstfitavg{
(13,19369532)
(16,35843008)
(20,69445920)
(25,134551600)
(32,279679488)
(40,541880000)
(50,1049894200)
(63,2082788316)
(79,4073499436)
(100,8192232400)
}
 
\def\preoptimcznormalrzsndanaavg{
(13,136173674)
(16,312463760)
(20,762850900)
(25,1862428925)
(32,4999419264)
(40,12205613280)
(50,29798860350)
(63,75107245374)
(79,185706882844)
(100,476781762100)
}
 
\def\preoptimcznormalrzsndminavg{
(13,82655027)
(16,189633472)
(20,462916440)
(25,1130057600)
(32,3033218496)
(40,7404870480)
(50,18077421950)
(63,45561847968)
(79,112650722527)
(100,289210755400)
}
 
\def\preoptimcznormalrzsndcomavg{
(13,11314485)
(16,23788128)
(20,52774400)
(25,116936050)
(32,281716960)
(40,623460600)
(50,1379609000)
(63,3141004860)
(79,7032144868)
(100,16287625200)
}
 
\def\preoptimcznormalrzsndfitavg{
(13,449267)
(16,817376)
(20,1555260)
(25,2958900)
(32,6028064)
(40,11469080)
(50,21821800)
(63,42483294)
(79,81572477)
(100,160934200)
}
 
\def\preoptimcznormalrzfrthanaavg{
(13,153016500)
(16,316487360)
(20,691100900)
(25,1509129125)
(32,3580645600)
(40,7818911800)
(50,17073847000)
(63,38337896205)
(79,84650820625)
(100,193168517500)
}
 
\def\preoptimcznormalrzfrthminavg{
(13,119936180)
(16,247910320)
(20,541023000)
(25,1180767000)
(32,2800054880)
(40,6111733600)
(50,13340803000)
(63,29944971630)
(79,66098464180)
(100,150789387000)
}
 
\def\preoptimcznormalrzfrthcomavg{
(13,10586940)
(16,21234640)
(20,44702000)
(25,93811375)
(32,212404640)
(40,443723000)
(50,925671250)
(63,1980416340)
(79,4167866990)
(100,9043839000)
}
 
\def\preoptimcznormalrzfrthfitavg{
(13,99190)
(16,168560)
(20,297500)
(25,526750)
(32,988960)
(40,1748600)
(50,3090500)
(63,5576445)
(79,9942940)
(100,18154500)
}
 
\def\preoptimcznormalrzsxthanaavg{
(13,1360129225)
(16,2717495200)
(20,5717446000)
(25,12029163125)
(32,27390619200)
(40,57628228000)
(50,121246335000)
(63,261961265475)
(79,557004743050)
(100,1222086425000)
}
 
\def\preoptimcznormalrzsxthminavg{
(13,1164069725)
(16,2324081200)
(20,4886157500)
(25,10273208750)
(32,23376029600)
(40,49153160000)
(50,103359410000)
(63,223199283825)
(79,474361699525)
(100,1040290825000)
}
 
\def\preoptimcznormalrzsxthfitavg{
(13,118300)
(16,190400)
(20,318500)
(25,533750)
(32,946400)
(40,1582000)
(50,2651250)
(63,4520250)
(79,7617575)
(100,13142500)
}
 
\def\preoptimcznormalrzeigthanaavg{
(13,20275459750)
(16,39814824000)
(20,82224695000)
(25,169808625000)
(32,378784448000)
(40,782257420000)
(50,1615500993750)
(63,3423824499375)
(79,7144021537625)
(100,15369321975000)
}
 
\def\preoptimcznormalrzeigthminavg{
(13,18131738625)
(16,35581644000)
(20,73432835000)
(25,151554987500)
(32,337841168000)
(40,697304720000)
(50,1439281768750)
(63,3048741100125)
(79,6358274966750)
(100,13672317662500)
}
 
\def\preoptimcznormalrzeigthfitavg{
(13,250250)
(16,392000)
(20,612500)
(25,1006250)
(32,1680000)
(40,2695000)
(50,4375000)
(63,7166250)
(79,11613000)
(100,19162500)
}
 
\def\preoptimcznormalrzlcuavg{
(13,2641346)
(16,4009709)
(20,12410183)
(25,21861035)
(32,35842739)
(40,112179375)
(50,175119150)
(63,277788600)
(79,870496473)
(100,1548544755)
}
 
\def\preoptimcznormalrzspavg{
(13,119322)
(16,180468)
(20,560514)
(25,875082)
(32,1432258)
(40,4465149)
(50,6975771)
(63,11072589)
(79,34783372)
(100,55730272)
}
 
\def\preoptimcznormalrzspjaavg{
(13,91974)
(16,137511)
(20,422248)
(25,650464)
(32,1049585)
(40,3229332)
(50,4977123)
(63,7790415)
(79,24140797)
(100,38127947)
}
 
\def\preoptimcznormalrzspsegmentavg{
(13,284004)
(16,442428)
(20,1421388)
(25,2291940)
(32,3890520)
(40,12530268)
(50,20214576)
(63,33167904)
(79,107634744)
(100,178376688)
}
 
\def\preoptimcznormalrzspsegmentfitavg{
(13,268548)
(16,417312)
(20,1340496)
(25,2164824)
(32,3678660)
(40,11853372)
(50,19137696)
(63,31421820)
(79,102070080)
(100,169276464)
}
 
\def\preoptimctnormaltotalfstanaavg{
(13,47260343885622)
(16,141265531701824)
(20,446434950212074)
(25,1395715396242198)
(32,5056431098360490)
(40,15575913189848436)
(50,48590157533907608)
(63,158560978678151552)
(79,511523208186702144)
(100,1730565203569807360)
}
 
\def\preoptimctnormaltotalfstminavg{
(13,17653909889616)
(16,50955365885338)
(20,153870384084480)
(25,500912600182500)
(32,1745589007247770)
(40,5626544567573760)
(50,17740040887002000)
(63,57235928134176792)
(79,178765295338061696)
(100,619908487748856064)
}
 
\def\preoptimctnormaltotalfstcomavg{
(13,751814494133)
(16,1762565712089)
(20,4465084206251)
(25,11417405141426)
(32,30311273281114)
(40,78516917426896)
(50,191753449474848)
(63,492872194722907)
(79,1284882274304587)
(100,3217791158686036)
}
 
\def\preoptimctnormaltotalfstfitavg{
(13,5561216132)
(16,10208648725)
(20,20080808612)
(25,40005689022)
(32,84460583381)
(40,167953673950)
(50,347077174319)
(63,678341012429)
(79,1367875954280)
(100,2799900228510)
}
 
\def\preoptimctnormaltotalsndanaavg{
(13,59080593082)
(16,136855035060)
(20,350217251353)
(25,859060730304)
(32,2384560540080)
(40,5993508411632)
(50,15664452643297)
(63,39995414151540)
(79,102213286969661)
(100,251852057483364)
}
 
\def\preoptimctnormaltotalsndminavg{
(13,34961799406)
(16,83513945112)
(20,207676314582)
(25,540854292763)
(32,1380319136250)
(40,3625748997980)
(50,8955671979901)
(63,23988774844990)
(79,60513524477754)
(100,158740221777242)
}
 
\def\preoptimctnormaltotalsndcomavg{
(13,4343749830)
(16,8790832645)
(20,22063215439)
(25,45985482629)
(32,119732803942)
(40,274407911743)
(50,590356506191)
(63,1475290076839)
(79,3275047277747)
(100,7787183791464)
}
 
\def\preoptimctnormaltotalsndfitavg{
(13,105857179)
(16,194023168)
(20,390816842)
(25,758807369)
(32,1619315603)
(40,3240506632)
(50,6265762017)
(63,12421517895)
(79,24141234435)
(100,49322837729)
}
 
\def\preoptimctnormaltotalfrthanaavg{
(13,55583770145)
(16,117139830531)
(20,242561386591)
(25,590476618930)
(32,1492420269566)
(40,3280992214612)
(50,7059461661616)
(63,16279713951272)
(79,37443319864546)
(100,87778932917643)
}
 
\def\preoptimctnormaltotalfrthminavg{
(13,42781761523)
(16,89235194646)
(20,194179134049)
(25,450828244337)
(32,1115866766699)
(40,2508172926758)
(50,5573653510828)
(63,12610001473503)
(79,28687755908510)
(100,67733961074872)
}
 
\def\preoptimctnormaltotalfrthcomavg{
(13,3506641108)
(16,7242540015)
(20,15277543114)
(25,33307621453)
(32,77423801835)
(40,167148848350)
(50,348913988924)
(63,777499138046)
(79,1697220157624)
(100,3668684155163)
}
 
\def\preoptimctnormaltotalfrthfitavg{
(13,21535697)
(16,38175890)
(20,67141075)
(25,122982339)
(32,247109846)
(40,425700667)
(50,806741648)
(63,1495811444)
(79,2652236273)
(100,5003015035)
}
 
\def\preoptimctnormaltotalsxthanaavg{
(13,517453753142)
(16,1029268089018)
(20,2232543055884)
(25,4828695504885)
(32,10892601131340)
(40,23785983887270)
(50,50597130300512)
(63,110851168242978)
(79,244598695952240)
(100,554646372164712)
}
 
\def\preoptimctnormaltotalsxthminavg{
(13,408873018526)
(16,852646769713)
(20,1813388883103)
(25,3971989950183)
(32,9632942194890)
(40,19872066461442)
(50,43280460938841)
(63,98438276643426)
(79,212194011163776)
(100,473110165586685)
}
 
\def\preoptimctnormaltotalsxthfitavg{
(13,26253396)
(16,42506256)
(20,74603529)
(25,124180750)
(32,223451969)
(40,403030094)
(50,679782318)
(63,1228908990)
(79,2029116457)
(100,3610901124)
}
 
\def\preoptimctnormaltotaleigthanaavg{
(13,7712672501160)
(16,15831841109120)
(20,32348783835825)
(25,70966162086860)
(32,160148800244650)
(40,338881949896655)
(50,717333828438080)
(63,1540400212311360)
(79,3249654475114170)
(100,7294278201404250)
}
 
\def\preoptimctnormaltotaleigthminavg{
(13,6920033926920)
(16,14097904068040)
(20,29782300872495)
(25,62618877733300)
(32,138433045104780)
(40,301212578041520)
(50,625388959783280)
(63,1338919176698230)
(79,2885666353735580)
(100,6466291422996410)
}
 
\def\preoptimctnormaltotaleigthfitavg{
(13,58941850)
(16,91924140)
(20,146588400)
(25,258129460)
(32,424756800)
(40,698339565)
(50,1159474000)
(63,1827286500)
(79,3140732280)
(100,5303369700)
}
 
\def\preoptimctnormaltotallcuavg{
(13,714339441)
(16,1258744526)
(20,3972645353)
(25,6393946285)
(32,10767711506)
(40,34478477298)
(50,55034890434)
(63,99737485361)
(79,319332877327)
(100,522927657872)
}
 
\def\preoptimctnormaltotalspavg{
(13,29569510)
(16,46490865)
(20,145287646)
(25,234458468)
(32,398747037)
(40,1255287617)
(50,2025062258)
(63,3320912804)
(79,10535421877)
(100,17382712665)
}
 
\def\preoptimctnormaltotalspjaavg{
(13,22579337)
(16,35034919)
(20,108185228)
(25,172383737)
(32,289860939)
(40,897046335)
(50,1428286995)
(63,2310358735)
(79,7226204699)
(100,11767713971)
}
 
\def\preoptimctnormaltotalspsegmentavg{
(13,76303828)
(16,123402508)
(20,401997876)
(25,670269092)
(32,1178407289)
(40,3844462406)
(50,6386089186)
(63,10382683530)
(79,35532475124)
(100,60712341531)
}
 
\def\preoptimctnormaltotalspsegmentfitavg{
(13,72209585)
(16,116028054)
(20,378303822)
(25,629145180)
(32,1106373419)
(40,3600332442)
(50,5978755482)
(63,10118821509)
(79,33179512838)
(100,56604377934)
}
 
\def\preoptimctnormalcnotfstanaavg{
(13,193781570892)
(16,547262416704)
(20,1670112355440)
(25,5096778428400)
(32,17512397331840)
(40,53443595373120)
(50,163096909707600)
(63,517962666381552)
(79,1605946846952772)
(100,5219101110641400)
}
 
\def\preoptimctnormalcnotfstminavg{
(13,71288308728)
(16,201326690304)
(20,614400192000)
(25,1875000375000)
(32,6442451730432)
(40,19660801536000)
(50,60000003000000)
(63,190547822257128)
(79,590794840440936)
(100,1920000024000000)
}
 
\def\preoptimctnormalcnotfstcomavg{
(13,3427685586)
(16,7865158848)
(20,19202047920)
(25,46879999650)
(32,125842540608)
(40,307232764800)
(50,750079992000)
(63,1890556937262)
(79,4674508229592)
(100,12001279866600)
}
 
\def\preoptimctnormalcnotfstfitavg{
(13,29054298)
(16,53764512)
(20,104168880)
(25,201827400)
(32,419519232)
(40,812820000)
(50,1574841300)
(63,3124182474)
(79,6110249154)
(100,12288348600)
}
 
\def\preoptimctnormalcnotsndanaavg{
(13,330134844)
(16,757526400)
(20,1849429680)
(25,4515209400)
(32,12120422400)
(40,29590874880)
(50,72243346800)
(63,182087459568)
(79,450221472420)
(100,1155893544000)
}
 
\def\preoptimctnormalcnotsndminavg{
(13,200342376)
(16,459659712)
(20,1122119760)
(25,2739362700)
(32,7352980608)
(40,17950844640)
(50,43823804400)
(63,110453628348)
(79,273096791232)
(100,701132872800)
}
 
\def\preoptimctnormalcnotsndcomavg{
(13,26861952)
(16,56414400)
(20,124985760)
(25,276515700)
(32,664914816)
(40,1468773600)
(50,3243663600)
(63,7368759972)
(79,16459949568)
(100,38029857600)
}
 
\def\preoptimctnormalcnotsndfitavg{
(13,770172)
(16,1401216)
(20,2666160)
(25,5072400)
(32,10333824)
(40,19661280)
(50,37408800)
(63,72828504)
(79,139838532)
(100,275887200)
}
 
\def\preoptimctnormalcnotfrthanaavg{
(13,311946180)
(16,645204480)
(20,1408906800)
(25,3076572000)
(32,7299649920)
(40,15939952800)
(50,34807440000)
(63,78157196460)
(79,172572611760)
(100,393801210000)
}
 
\def\preoptimctnormalcnotfrthminavg{
(13,244258560)
(16,504936960)
(20,1102047600)
(25,2405394000)
(32,5704600320)
(40,12452390400)
(50,27182958000)
(63,61018729380)
(79,134695186440)
(100,307292106000)
}
 
\def\preoptimctnormalcnotfrthcomavg{
(13,21532680)
(16,43185600)
(20,90898800)
(25,190723500)
(32,431702400)
(40,901560000)
(50,1880067000)
(63,4020411780)
(79,8456643480)
(100,18338448000)
}
 
\def\preoptimctnormalcnotfrthfitavg{
(13,170040)
(16,288960)
(20,510000)
(25,903000)
(32,1695360)
(40,2997600)
(50,5298000)
(63,9559620)
(79,17045040)
(100,31122000)
}
 
\def\preoptimctnormalcnotsxthanaavg{
(13,2617188600)
(16,5229057600)
(20,11001624000)
(25,23146762500)
(32,52705603200)
(40,110889420000)
(50,233304690000)
(63,504071278200)
(79,1071800031000)
(100,2351563920000)
}
 
\def\preoptimctnormalcnotsxthminavg{
(13,2237266200)
(16,4467076800)
(20,9392334000)
(25,19748947500)
(32,44940816000)
(40,94503708000)
(50,198734235000)
(63,429180330600)
(79,912176217600)
(100,2000530950000)
}
 
\def\preoptimctnormalcnotsxthfitavg{
(13,202800)
(16,326400)
(20,546000)
(25,915000)
(32,1622400)
(40,2712000)
(50,4545000)
(63,7749000)
(79,13058700)
(100,22530000)
}
 
\def\preoptimctnormalcnoteigthanaavg{
(13,37903788000)
(16,74431488000)
(20,153714270000)
(25,317447362500)
(32,708115488000)
(40,1462384740000)
(50,3020085075000)
(63,6400640736000)
(79,13355332636500)
(100,28732053300000)
}
 
\def\preoptimctnormalcnoteigthminavg{
(13,33859429500)
(16,66449136000)
(20,137144190000)
(25,283060875000)
(32,631023936000)
(40,1302494880000)
(50,2688551400000)
(63,5695236981000)
(79,11878125738000)
(100,25542761550000)
}
 
\def\preoptimctnormalcnoteigthfitavg{
(13,429000)
(16,672000)
(20,1050000)
(25,1725000)
(32,2880000)
(40,4620000)
(50,7500000)
(63,12285000)
(79,19908000)
(100,32850000)
}
 
\def\preoptimctnormalcnotlcuavg{
(13,15293875)
(16,29561472)
(20,67194252)
(25,121398913)
(32,232537682)
(40,531572580)
(50,957949080)
(63,1970166845)
(79,4512477222)
(100,8397372240)
}
 
\def\preoptimctnormalcnotspavg{
(13,1204550)
(16,2097200)
(20,4499940)
(25,8326604)
(32,16315236)
(40,34925324)
(50,64776972)
(63,123067130)
(79,265569240)
(100,509992356)
}
 
\def\preoptimctnormalcnotspjaavg{
(13,929828)
(16,1602200)
(20,3395700)
(25,6201030)
(32,11981796)
(40,25300536)
(50,46306944)
(63,86756794)
(79,184664052)
(100,349566216)
}
 
\def\preoptimctnormalcnotspsegmentavg{
(13,2837912)
(16,5109200)
(20,11374608)
(25,21699664)
(32,43938132)
(40,97638784)
(50,186533160)
(63,351010200)
(79,819092560)
(100,1622595668)
}
 
\def\preoptimctnormalcnotspsegmentfitavg{
(13,2690488)
(16,4812400)
(20,10729656)
(25,20417096)
(32,41329176)
(40,91650424)
(50,174980400)
(63,342456848)
(79,766326880)
(100,1516267944)
}
 
\def\preoptimctnormaltfstanaavg{
(13,18705387225129)
(16,54384202659960)
(20,173992305189739)
(25,550723898450048)
(32,1961497953649404)
(40,6139956527099846)
(50,18707215543461720)
(63,61934084074550104)
(79,199943770519136896)
(100,664683841046846208)
}
 
\def\preoptimctnormaltfstminavg{
(13,6753013532429)
(16,19427186753126)
(20,61030419072000)
(25,193415038683000)
(32,703113075730022)
(40,2182643882519040)
(50,6886720344336000)
(63,22417900879074324)
(79,70045334268142872)
(100,238936322986703968)
}
 
\def\preoptimctnormaltfstcomavg{
(13,288892372338)
(16,674961715139)
(20,1718263254708)
(25,4319460661085)
(32,11936164976669)
(40,30568635988384)
(50,75101009119008)
(63,190576141633416)
(79,497773983094958)
(100,1297938417572790)
}
 
\def\preoptimctnormaltfstfitavg{
(13,2084459636)
(16,3875525240)
(20,7771692907)
(25,15162350701)
(32,32155275134)
(40,65131266600)
(50,130094490110)
(63,262837967440)
(79,529219765334)
(100,1082365936920)
}
 
\def\preoptimctnormaltsndanaavg{
(13,22670105788)
(16,52266165240)
(20,135532371716)
(25,332367574074)
(32,921101600640)
(40,2347542740480)
(50,6042915148664)
(63,15109501783867)
(79,38862091678218)
(100,96702439188888)
}
 
\def\preoptimctnormaltsndminavg{
(13,13525165174)
(16,32082332649)
(20,78681167372)
(25,203980251609)
(32,535289328908)
(40,1402409737500)
(50,3482473224581)
(63,9255897173416)
(79,23523543566445)
(100,62032263499065)
}
 
\def\preoptimctnormaltsndcomavg{
(13,1660206387)
(16,3443864060)
(20,8291347009)
(25,17594140960)
(32,47175013576)
(40,104610951704)
(50,232008445096)
(63,558567601137)
(79,1269166288589)
(100,3020052404970)
}
 
\def\preoptimctnormaltsndfitavg{
(13,41192353)
(16,73852841)
(20,143812670)
(25,289978963)
(32,614012140)
(40,1211904915)
(50,2353437486)
(63,4670927391)
(79,9140483690)
(100,18655952276)
}
 
\def\preoptimctnormaltfrthanaavg{
(13,21358555014)
(16,44599490845)
(20,94987557185)
(25,229244814541)
(32,554864639544)
(40,1257720722414)
(50,2796833495904)
(63,6305697798584)
(79,14318968529035)
(100,33658031898216)
}
 
\def\preoptimctnormaltfrthminavg{
(13,16506241920)
(16,34850328198)
(20,76391000838)
(25,175588630493)
(32,426286955038)
(40,958941981517)
(50,2117269725437)
(63,4867699723449)
(79,11054462367837)
(100,25926316927782)
}
 
\def\preoptimctnormaltfrthcomavg{
(13,1328268211)
(16,2726612826)
(20,5856033992)
(25,12490558304)
(32,29685206094)
(40,63334289480)
(50,134481443186)
(63,299111403945)
(79,648315548450)
(100,1429352429901)
}
 
\def\preoptimctnormaltfrthfitavg{
(13,8348441)
(16,14252832)
(20,26282680)
(25,48511809)
(32,92494603)
(40,162923557)
(50,303965333)
(63,560273648)
(79,1036453504)
(100,1905153978)
}
 
\def\preoptimctnormaltsxthanaavg{
(13,194806071460)
(16,394667479908)
(20,846629974920)
(25,1815255529830)
(32,4201756569108)
(40,9181847273270)
(50,19848225411644)
(63,43218271279728)
(79,93060824024960)
(100,213281360561520)
}
 
\def\preoptimctnormaltsxthminavg{
(13,161929885608)
(16,326213867166)
(20,707164480750)
(25,1504395824760)
(32,3652452468360)
(40,7667054328804)
(50,16694762153818)
(63,37034447594508)
(79,81157357267968)
(100,181775310659690)
}
 
\def\preoptimctnormaltsxthfitavg{
(13,9976928)
(16,16053848)
(20,28143206)
(25,47892320)
(32,86868704)
(40,152200604)
(50,258710490)
(63,458056100)
(79,769292976)
(100,1381025916)
}
 
\def\preoptimctnormalteigthanaavg{
(13,3009871772640)
(16,6062196592640)
(20,12746704601660)
(25,27099931094530)
(32,61945647842120)
(40,131246593107100)
(50,275427329381890)
(63,594388559454720)
(79,1276121756482650)
(100,2827923613999200)
}
 
\def\preoptimctnormalteigthminavg{
(13,2694724401520)
(16,5422775553260)
(20,11368019053290)
(25,24103199628000)
(32,53536333656880)
(40,116054030467840)
(50,242926750298400)
(63,510485485412620)
(79,1120031076708920)
(100,2492668716992170)
}
 
\def\preoptimctnormalteigthfitavg{
(13,22413820)
(16,34297200)
(20,55443150)
(25,95494160)
(32,159468600)
(40,266620200)
(50,444783000)
(63,708758700)
(79,1203671280)
(100,2031774690)
}
 
\def\preoptimctnormaltlcuavg{
(13,267604832)
(16,470285430)
(20,1499261106)
(25,2406388900)
(32,4055118440)
(40,13067087389)
(50,20820441486)
(63,37694248872)
(79,121388793869)
(100,198622026823)
}
 
\def\preoptimctnormaltspavg{
(13,10891518)
(16,17058610)
(20,53923117)
(25,86875904)
(32,147581816)
(40,468634093)
(50,754038193)
(63,1234384663)
(79,3950463032)
(100,6511333865)
}
 
\def\preoptimctnormaltspjaavg{
(13,8301092)
(16,12861281)
(20,40127812)
(25,63814928)
(32,107003393)
(40,335262063)
(50,531291883)
(63,858123333)
(79,2708033579)
(100,4401254761)
}
 
\def\preoptimctnormaltspsegmentavg{
(13,28204114)
(16,45525382)
(20,149546374)
(25,249171538)
(32,437115024)
(40,1438001248)
(50,2388303654)
(63,3874299660)
(79,13383018834)
(100,22819433606)
}
 
\def\preoptimctnormaltspsegmentfitavg{
(13,26679631)
(16,42780783)
(20,140743690)
(25,233796482)
(32,410119717)
(40,1346826700)
(50,2235355444)
(63,3776805202)
(79,12492177266)
(100,21275296099)
}
 
\def\preoptimczlargetotalfstanaavg{
(13,419860070266)
(17,1605581736772)
(23,7278250395234)
(31,32373993986646)
(42,147786330051360)
(57,680396531306544)
(78,3264831905974704)
(106,15132724963071974)
(145,72481608689772880)
(197,335520409626236288)
(269,1592755863473543168)
(367,7528667788363251712)
(500,35337663769967583232)
}
 
\def\preoptimczlargetotalfstminavg{
(13,154458116488)
(17,590661022952)
(23,2677519953368)
(31,11909729914264)
(42,54367560217152)
(57,250303914972072)
(78,1201064586441408)
(106,5567018526681664)
(145,26664494017057000)
(197,123431061590030800)
(269,585942139001555328)
(367,2769642103889321472)
(500,13000000012999999488)
}
 
\def\preoptimczlargetotalfstcomavg{
(13,7427444206)
(17,21720092160)
(23,72774179296)
(31,240166674852)
(42,809213521116)
(57,2745145645554)
(78,9625967234676)
(106,32831402032312)
(145,114957676416050)
(197,391679527900570)
(269,1361680091770362)
(367,4717698961817712)
(500,16253465930208000)
}
 
\def\preoptimczlargetotalfstfitavg{
(13,125901958)
(17,278841004)
(23,683075666)
(31,1654647072)
(42,4070256372)
(57,10062956358)
(78,25496622372)
(106,63288169412)
(145,160181528130)
(197,397298804006)
(269,1000246042744)
(367,2511846032306)
(500,6281619162000)
}
 
\def\preoptimczlargetotalsndanaavg{
(13,992122482)
(17,2901265509)
(23,9720825777)
(31,32080309731)
(42,108090850644)
(57,366683354955)
(78,1285790406498)
(106,4385460780060)
(145,15355493527635)
(197,52318667554611)
(269,181886677658076)
(367,630167537591535)
(500,2171059808160000)
}
 
\def\preoptimczlargetotalsndminavg{
(13,602200911)
(17,1760708802)
(23,5898459825)
(31,19463768469)
(42,65575528578)
(57,222442474293)
(78,779968106892)
(106,2660159379942)
(145,9314199517590)
(197,31734435562815)
(269,110323817399829)
(367,382226016072171)
(500,1316839837470000)
}
 
\def\preoptimczlargetotalsndcomavg{
(13,82434105)
(17,215237952)
(23,632933205)
(31,1833111003)
(42,5403861162)
(57,16024854942)
(78,48961323450)
(106,146061610956)
(145,446523618060)
(197,1333897218981)
(269,4062759558834)
(367,12354869229738)
(500,37440126502500)
}
 
\def\preoptimczlargetotalsndfitavg{
(13,3273231)
(17,7092927)
(23,16952196)
(31,40078350)
(42,96180084)
(57,231952437)
(78,572887692)
(106,1386974172)
(145,3422051040)
(197,8278677765)
(269,20320884618)
(367,49756617705)
(500,121339225500)
}
 
\def\preoptimczlargetotalfrthanaavg{
(13,1114834500)
(17,2850886740)
(23,8212131945)
(31,23343955110)
(42,67574220210)
(57,196775251515)
(78,589847298300)
(106,1725754487790)
(145,5166500628900)
(197,15102201801450)
(269,44930521250760)
(367,133272241272165)
(500,393372043320000)
}
 
\def\preoptimczlargetotalfrthminavg{
(13,873820740)
(17,2232767760)
(23,6426567885)
(31,18256099485)
(42,52815443940)
(57,153720241380)
(78,460582049070)
(106,1347052618950)
(145,4031441228475)
(197,11781086178270)
(269,35041526025345)
(367,103918498196445)
(500,306677039790000)
}
 
\def\preoptimczlargetotalfrthcomavg{
(13,77133420)
(17,189452505)
(23,518290050)
(31,1393327395)
(42,3797112690)
(57,10380940605)
(78,29120630760)
(106,79796019840)
(145,223425171975)
(197,612043799055)
(269,1706371072065)
(367,4750691946735)
(500,13188039637500)
}
 
\def\preoptimczlargetotalfrthfitavg{
(13,722670)
(17,1434885)
(23,3102585)
(31,6640200)
(42,14426370)
(57,31468275)
(78,70112250)
(106,153476340)
(145,341685975)
(197,747496800)
(269,1656569250)
(367,3663104070)
(500,8071005000)
}
 
\def\preoptimczlargetotalsxthanaavg{
(13,9909512925)
(17,24232823400)
(23,66374351625)
(31,179520099450)
(42,494012994300)
(57,1367173364625)
(78,3889460758950)
(106,10812520327500)
(145,30723266815125)
(197,85335185887725)
(269,241035855603375)
(367,678881955208950)
(500,1903154584687500)
}
 
\def\preoptimczlargetotalsxthminavg{
(13,8481079425)
(17,20720389650)
(23,56699418300)
(31,153221389725)
(42,421309980000)
(57,1165131414000)
(78,3312464688000)
(106,9203080065300)
(145,26135929792875)
(197,72558877145025)
(269,204858045926175)
(367,576758158737150)
(500,1616289354375000)
}
 
\def\preoptimczlargetotalsxthfitavg{
(13,861900)
(17,1603950)
(23,3225750)
(31,6403050)
(42,12905550)
(57,26090325)
(78,53901900)
(106,109471500)
(145,225732375)
(197,458394375)
(269,941809350)
(367,1930658550)
(500,3945487500)
}
 
\def\preoptimczlargetotaleigthanaavg{
(13,147721206750)
(17,353253081000)
(23,943500937125)
(31,2489157427125)
(42,6678624802500)
(57,18018562025250)
(78,49938363013500)
(106,135322778463000)
(145,374604379346250)
(197,1014243136670625)
(269,2791397933310000)
(367,7661107413222374)
(500,20930488118437500)
}
 
\def\preoptimczlargetotaleigthminavg{
(13,132102667125)
(17,315635576625)
(23,842276608875)
(31,2220287033625)
(42,5952612645000)
(57,16048224315000)
(78,44447005512000)
(106,120367291648500)
(145,333007737288750)
(197,901138285669125)
(269,2478862859724750)
(367,6800179917126375)
(500,18570433640437500)
}
 
\def\preoptimczlargetotaleigthfitavg{
(13,1823250)
(17,3142875)
(23,6011625)
(31,11462250)
(42,21955500)
(57,42151500)
(78,82046250)
(106,158125500)
(145,309665625)
(197,596540625)
(269,1160970375)
(367,2257738125)
(500,4373250000)
}
 
\def\preoptimczlargecnotfstanaavg{
(13,96890785446)
(17,370518862332)
(23,1679596245054)
(31,7470921689226)
(42,34104537704160)
(57,157014584147664)
(78,753422747532624)
(106,3492167299170456)
(145,16726525082255280)
(197,77427786836823776)
(269,367559045416971520)
(367,1737384874237673728)
(500,8154845485377134592)
}
 
\def\preoptimczlargecnotfstminavg{
(13,35644180728)
(17,136306389912)
(23,617889220008)
(31,2748399210984)
(42,12546360050112)
(57,57762441916632)
(78,277168750717248)
(106,1284696583080384)
(145,6153344773167000)
(197,28484091136160952)
(269,135217416692666624)
(367,639148177820612736)
(500,3000000002999999488)
}
 
\def\preoptimczlargecnotfstcomavg{
(13,1714025586)
(17,5012328960)
(23,16794041376)
(31,55423078812)
(42,186741581796)
(57,633495148974)
(78,2221377054156)
(106,7576477392072)
(145,26528694557550)
(197,90387583361670)
(269,314233867331622)
(367,1088699760419472)
(500,3750799830048000)
}
 
\def\preoptimczlargecnotfstfitavg{
(13,29054298)
(17,64347924)
(23,157632846)
(31,381841632)
(42,939289932)
(57,2322220698)
(78,5883835932)
(106,14604962172)
(145,36964968030)
(197,91684339386)
(269,230826009864)
(367,579656776686)
(500,1449604422000)
}
 
\def\preoptimczlargecnotsndanaavg{
(13,233440584)
(17,682650708)
(23,2287253124)
(31,7548308172)
(42,25433141328)
(57,86278436460)
(78,302538919176)
(106,1031873124720)
(145,3613057300620)
(197,12310274718732)
(269,42796865331312)
(367,148274714727420)
(500,510837601920000)
}
 
\def\preoptimczlargecnotsndminavg{
(13,141694332)
(17,414284424)
(23,1387872900)
(31,4579710228)
(42,15429536136)
(57,52339405716)
(78,183521907504)
(106,625919854104)
(145,2191576357080)
(197,7466926014780)
(269,25958545270548)
(367,89935533193452)
(500,309844667640000)
}
 
\def\preoptimczlargecnotsndcomavg{
(13,19396260)
(17,50644224)
(23,148925460)
(31,431320236)
(42,1271496744)
(57,3770554104)
(78,11520311400)
(106,34367437872)
(145,105064380720)
(197,313858169172)
(269,955943425608)
(367,2907028054056)
(500,8809441530000)
}
 
\def\preoptimczlargecnotsndfitavg{
(13,770172)
(17,1668924)
(23,3988752)
(31,9430200)
(42,22630608)
(57,54577044)
(78,134797104)
(106,326346864)
(145,805188480)
(197,1947924180)
(269,4781384616)
(367,11707439460)
(500,28550406000)
}
 
\def\preoptimczlargecnotfrthanaavg{
(13,262314000)
(17,670796880)
(23,1932266340)
(31,5492695320)
(42,15899816520)
(57,46300059180)
(78,138787599600)
(106,406059879480)
(145,1215647206800)
(197,3553459247400)
(269,10571887353120)
(367,31358174416980)
(500,92558127840000)
}
 
\def\preoptimczlargecnotfrthminavg{
(13,205604880)
(17,525357120)
(23,1512133620)
(31,4295552820)
(42,12427163280)
(57,36169468560)
(78,108372246840)
(106,316953557400)
(145,948574406700)
(197,2772020277240)
(269,8245064947140)
(367,24451411340340)
(500,72159303480000)
}
 
\def\preoptimczlargecnotfrthcomavg{
(13,18149040)
(17,44577060)
(23,121950600)
(31,327841740)
(42,893438280)
(57,2442574260)
(78,6851913120)
(106,18775534080)
(145,52570628700)
(197,144010305660)
(269,401499075780)
(367,1117809869820)
(500,3103068150000)
}
 
\def\preoptimczlargecnotfrthfitavg{
(13,170040)
(17,337620)
(23,730020)
(31,1562400)
(42,3394440)
(57,7404300)
(78,16497000)
(106,36112080)
(145,80396700)
(197,175881600)
(269,389781000)
(367,861906840)
(500,1899060000)
}
 
\def\preoptimczlargecnotsxthanaavg{
(13,2331650100)
(17,5701840800)
(23,15617494500)
(31,42240023400)
(42,116238351600)
(57,321687850500)
(78,915167237400)
(106,2544122430000)
(145,7229003956500)
(197,20078867267700)
(269,56714318965500)
(367,159736930637400)
(500,447801078750000)
}
 
\def\preoptimczlargecnotsxthminavg{
(13,1995548100)
(17,4875385800)
(23,13341039600)
(31,36052091700)
(42,99131760000)
(57,274148568000)
(78,779403456000)
(106,2165430603600)
(145,6149630539500)
(197,17072676975300)
(269,48201893159100)
(367,135707802055800)
(500,380303377500000)
}
 
\def\preoptimczlargecnotsxthfitavg{
(13,202800)
(17,377400)
(23,759000)
(31,1506600)
(42,3036600)
(57,6138900)
(78,12682800)
(106,25758000)
(145,53113500)
(197,107857500)
(269,221602200)
(367,454272600)
(500,928350000)
}
 
\def\preoptimczlargecnoteigthanaavg{
(13,34757931000)
(17,83118372000)
(23,222000220500)
(31,585684100500)
(42,1571441130000)
(57,4239661653000)
(78,11750203062000)
(106,31840653756000)
(145,88142206905000)
(197,238645443922500)
(269,656799513720000)
(367,1802613508993500)
(500,4924820733750000)
}
 
\def\preoptimczlargecnoteigthminavg{
(13,31082980500)
(17,74267194500)
(23,198182731500)
(31,522420478500)
(42,1400614740000)
(57,3776052780000)
(78,10458118944000)
(106,28321715682000)
(145,78354761715000)
(197,212032537804500)
(269,583261849347000)
(367,1600042333441500)
(500,4369513797750000)
}
 
\def\preoptimczlargecnoteigthfitavg{
(13,429000)
(17,739500)
(23,1414500)
(31,2697000)
(42,5166000)
(57,9918000)
(78,19305000)
(106,37206000)
(145,72862500)
(197,140362500)
(269,273169500)
(367,531232500)
(500,1029000000)
}
 
\def\preoptimczlargerzfstanaavg{
(13,64593856964)
(17,247012574888)
(23,1119730830036)
(31,4980614459484)
(42,22736358469440)
(57,104676389431776)
(78,502281831688416)
(106,2328111532780304)
(145,11151016721503520)
(197,51618524557882512)
(269,245039363611314336)
(367,1158256582825115904)
(500,5436563656918089728)
}
 
\def\preoptimczlargerzfstminavg{
(13,23762787152)
(17,90870926608)
(23,411926146672)
(31,1832266140656)
(42,8364240033408)
(57,38508294611088)
(78,184779167144832)
(106,856464388720256)
(145,4102229848778000)
(197,18989394090773968)
(269,90144944461777744)
(367,426098785213741824)
(500,2000000001999999744)
}
 
\def\preoptimczlargerzfstcomavg{
(13,1142683724)
(17,3341552640)
(23,11196027584)
(31,36948719208)
(42,124494387864)
(57,422330099316)
(78,1480918036104)
(106,5050984928048)
(145,17685796371700)
(197,60258388907780)
(269,209489244887748)
(367,725799840279648)
(500,2500533220032000)
}
 
\def\preoptimczlargerzfstfitavg{
(13,19369532)
(17,42898616)
(23,105088564)
(31,254561088)
(42,626193288)
(57,1548147132)
(78,3922557288)
(106,9736641448)
(145,24643312020)
(197,61122892924)
(269,153884006576)
(367,386437851124)
(500,966402948000)
}
 
\def\preoptimczlargerzsndanaavg{
(13,136173674)
(17,398212913)
(23,1334230989)
(31,4403179767)
(42,14835999108)
(57,50329087935)
(78,176481036186)
(106,601925989420)
(145,2107616758695)
(197,7180993585927)
(269,24964838109932)
(367,86493583590995)
(500,297988601120000)
}
 
\def\preoptimczlargerzsndminavg{
(13,82655027)
(17,241665914)
(23,809592525)
(31,2671497633)
(42,9000562746)
(57,30531320001)
(78,107054446044)
(106,365119914894)
(145,1278419541630)
(197,4355706841955)
(269,15142484741153)
(367,52462394362847)
(500,180742722790000)
}
 
\def\preoptimczlargerzsndcomavg{
(13,11314485)
(17,29542464)
(23,86873185)
(31,251603471)
(42,741706434)
(57,2199489894)
(78,6720181650)
(106,20047672092)
(145,61287555420)
(197,183083932017)
(269,557633664938)
(367,1695766364866)
(500,5138840892500)
}
 
\def\preoptimczlargerzsndfitavg{
(13,449267)
(17,973539)
(23,2326772)
(31,5500950)
(42,13201188)
(57,31836609)
(78,78631644)
(106,190369004)
(145,469693280)
(197,1136289105)
(269,2789141026)
(367,6829339685)
(500,16654403500)
}
 
\def\preoptimczlargerzfrthanaavg{
(13,153016500)
(17,391298180)
(23,1127155365)
(31,3204072270)
(42,9274892970)
(57,27008367855)
(78,80959433100)
(106,236868263030)
(145,709127537300)
(197,2072851227650)
(269,6166934289320)
(367,18292268409905)
(500,53992241240000)
}
 
\def\preoptimczlargerzfrthminavg{
(13,119936180)
(17,306458320)
(23,882077945)
(31,2505739145)
(42,7249178580)
(57,21098856660)
(78,63217143990)
(106,184889575150)
(145,553335070575)
(197,1617011828390)
(269,4809621219165)
(367,14263323281865)
(500,42092927030000)
}
 
\def\preoptimczlargerzfrthcomavg{
(13,10586940)
(17,26003285)
(23,71137850)
(31,191241015)
(42,521172330)
(57,1424834985)
(78,3996949320)
(106,10952394880)
(145,30666200075)
(197,84006011635)
(269,234207794205)
(367,652055757395)
(500,1810123087500)
}
 
\def\preoptimczlargerzfrthfitavg{
(13,99190)
(17,196945)
(23,425845)
(31,911400)
(42,1980090)
(57,4319175)
(78,9623250)
(106,21065380)
(145,46898075)
(197,102597600)
(269,227372250)
(367,502778990)
(500,1107785000)
}
 
\def\preoptimczlargerzsxthanaavg{
(13,1360129225)
(17,3326073800)
(23,9110205125)
(31,24640013650)
(42,67805705100)
(57,187651246125)
(78,533847555150)
(106,1484071417500)
(145,4216918974625)
(197,11712672572825)
(269,33083352729875)
(367,93179876205150)
(500,261217295937500)
}
 
\def\preoptimczlargerzsxthminavg{
(13,1164069725)
(17,2843975050)
(23,7782273100)
(31,21030386825)
(42,57826860000)
(57,159919998000)
(78,454652016000)
(106,1263167852100)
(145,3587284481375)
(197,9959061568925)
(269,28117771009475)
(367,79162884532550)
(500,221843636875000)
}
 
\def\preoptimczlargerzsxthfitavg{
(13,118300)
(17,220150)
(23,442750)
(31,878850)
(42,1771350)
(57,3581025)
(78,7398300)
(106,15025500)
(145,30982875)
(197,62916875)
(269,129267950)
(367,264992350)
(500,541537500)
}
 
\def\preoptimczlargerzeigthanaavg{
(13,20275459750)
(17,48485717000)
(23,129500128625)
(31,341649058625)
(42,916673992500)
(57,2473135964250)
(78,6854285119500)
(106,18573714691000)
(145,51416287361250)
(197,139209842288125)
(269,383133049670000)
(367,1051524546912875)
(500,2872812094687500)
}
 
\def\preoptimczlargerzeigthminavg{
(13,18131738625)
(17,43322530125)
(23,115606593375)
(31,304745279125)
(42,817025265000)
(57,2202697455000)
(78,6100569384000)
(106,16521000814500)
(145,45706944333750)
(197,123685647052625)
(269,340236078785750)
(367,933358027840875)
(500,2548883048687500)
}
 
\def\preoptimczlargerzeigthfitavg{
(13,250250)
(17,431375)
(23,825125)
(31,1573250)
(42,3013500)
(57,5785500)
(78,11261250)
(106,21703500)
(145,42503125)
(197,81878125)
(269,159348875)
(367,309885625)
(500,600250000)
}

\def\preoptimcznormalcnotbestfitavg{
(13,170040)
(16,288960)
(20,510000)
(25,903000)
(32,1622400)
(40,2712000)
(50,4545000)
(63,7749000)
(79,13058700)
(100,22530000)
}

 \def\preoptimctnormaltbestfitavg{
(13,8348441)
(16,14252832)
(20,26282680)
(25,48511809)
(32,86868704)
(40,152200604)
(50,258710490)
(63,458056100)
(79,769292976)
(100,1381025916)
 }

 \def\postoptimctnormaltotalfstanaavg{
 (                   13 ,       46922468326118 )
 (                   16 ,      140429816386316 )
 (                   20 ,      443033488048161 )
 (                   25 ,     1385848033204816 )
 (                   32 ,     5022300165643956 )
 (                   40 ,    15478022337656672 )
 (                   50 ,    48306151448470104 )
 (                   63 ,   157634675073723200 )
 (                   79 ,   508304538345763136 )
 (                  100 ,  1721824948909853184 )
 }
 \def\postoptimctnormaltfstanaavg{
 (                   13 ,       18706380976774 )
 (                   16 ,       54386482920030 )
 (                   20 ,      173919933654337 )
 (                   25 ,      550737489859190 )
 (                   32 ,     1960804754588352 )
 (                   40 ,     6140134672417757 )
 (                   50 ,    18708955243831936 )
 (                   63 ,    61951075442441984 )
 (                   79 ,   199915310701595968 )
 (                  100 ,   664868249286088832 )
 }
 \def\postoptimctnormalcnotfstanaavg{
 (                   13 ,         193781570892 )
 (                   16 ,         547262416704 )
 (                   20 ,        1670112355440 )
 (                   25 ,        5096778428400 )
 (                   32 ,       17512397331840 )
 (                   40 ,       53443595373120 )
 (                   50 ,      163096909707600 )
 (                   63 ,      517962666381552 )
 (                   79 ,     1605946846952772 )
 (                  100 ,     5219101110641400 )
 }
 \def\postoptimctnormaltotalfstminavg{
 (                   13 ,       17515720245005 )
 (                   16 ,       50605980191539 )
 (                   20 ,      152848431765120 )
 (                   25 ,      497280099456000 )
 (                   32 ,     1734059703005184 )
 (                   40 ,     5587567028528640 )
 (                   50 ,    17639200881960000 )
 (                   63 ,    56861788997459224 )
 (                   79 ,   177641538738606944 )
 (                  100 ,   616110727701383936 )
 }
 \def\postoptimctnormaltfstminavg{
 (                   13 ,        6753013532429 )
 (                   16 ,       19423831308288 )
 (                   20 ,       61036563073920 )
 (                   25 ,      193415038683000 )
 (                   32 ,      702952014436762 )
 (                   40 ,     2181464234426880 )
 (                   50 ,     6888560344428000 )
 (                   63 ,    22417295965352872 )
 (                   79 ,    70046331390236440 )
 (                  100 ,   238918402986480032 )
 }
 \def\postoptimctnormalcnotfstminavg{
 (                   13 ,          71288308728 )
 (                   16 ,         201326690304 )
 (                   20 ,         614400192000 )
 (                   25 ,        1875000375000 )
 (                   32 ,        6442451730432 )
 (                   40 ,       19660801536000 )
 (                   50 ,       60000003000000 )
 (                   63 ,      190547822257128 )
 (                   79 ,      590794840440936 )
 (                  100 ,     1920000024000000 )
 }
 \def\postoptimctnormaltotalfstfitavg{
 (                   13 ,           5507204937 )
 (                   16 ,          10112992698 )
 (                   20 ,          19871081934 )
 (                   25 ,          39624907994 )
 (                   32 ,          83717247742 )
 (                   40 ,         166501435550 )
 (                   50 ,         343986285794 )
 (                   63 ,         672684423844 )
 (                   79 ,        1356302575187 )
 (                  100 ,        2777633740847 )
 }
 \def\postoptimctnormaltfstfitavg{
 (                   13 ,           2084459636 )
 (                   16 ,           3875525240 )
 (                   20 ,           7771692907 )
 (                   25 ,          15162350701 )
 (                   32 ,          32155275134 )
 (                   40 ,          65131266600 )
 (                   50 ,         130094490110 )
 (                   63 ,         262841273453 )
 (                   79 ,         529219765334 )
 (                  100 ,        1082357744688 )
 }
 \def\postoptimctnormalcnotfstfitavg{
 (                   13 ,             29054298 )
 (                   16 ,             53764512 )
 (                   20 ,            104168880 )
 (                   25 ,            201827400 )
 (                   32 ,            419519232 )
 (                   40 ,            812820000 )
 (                   50 ,           1574841300 )
 (                   63 ,           3124182474 )
 (                   79 ,           6110249154 )
 (                  100 ,          12288348600 )
 }
 \def\postoptimctnormaltotalfstcomavg{
 (                   13 ,         746233518884 )
 (                   16 ,        1748211797192 )
 (                   20 ,        4429592421012 )
 (                   25 ,       11339896875338 )
 (                   32 ,       30081086300585 )
 (                   40 ,       77936759556032 )
 (                   50 ,      190252289384192 )
 (                   63 ,      489413175733990 )
 (                   79 ,     1276257116292771 )
 (                  100 ,     3196704909960420 )
 }
 \def\postoptimctnormaltfstcomavg{
 (                   13 ,         288892372338 )
 (                   16 ,         674928943644 )
 (                   20 ,        1718263254708 )
 (                   25 ,        4319585674417 )
 (                   32 ,       11936164976669 )
 (                   40 ,       30569660097600 )
 (                   50 ,       75106009652288 )
 (                   63 ,      190588145169526 )
 (                   79 ,      497730591035443 )
 (                  100 ,     1298442471327187 )
 }
 \def\postoptimctnormalcnotfstcomavg{
 (                   13 ,           3427685586 )
 (                   16 ,           7865158848 )
 (                   20 ,          19202047920 )
 (                   25 ,          46879999650 )
 (                   32 ,         125842540608 )
 (                   40 ,         307232764800 )
 (                   50 ,         750079992000 )
 (                   63 ,        1890556937262 )
 (                   79 ,        4674508229592 )
 (                  100 ,       12001279866600 )
 }
 \def\postoptimctnormaltotalsndanaavg{
 (                   13 ,          40882548530 )
 (                   16 ,          96781893396 )
 (                   20 ,         248342965202 )
 (                   25 ,         607295672400 )
 (                   32 ,        1717141915568 )
 (                   40 ,        4337529090360 )
 (                   50 ,       10791349945250 )
 (                   63 ,       27535044069568 )
 (                   79 ,       70328678308537 )
 (                  100 ,      181513045632504 )
 }
 \def\postoptimctnormaltsndanaavg{
 (                   13 ,          15928584481 )
 (                   16 ,          37497558656 )
 (                   20 ,          94040419285 )
 (                   25 ,         237108699442 )
 (                   32 ,         665070306912 )
 (                   40 ,        1673314661779 )
 (                   50 ,        4205092575303 )
 (                   63 ,       10838828296534 )
 (                   79 ,       27705527616596 )
 (                  100 ,       70746464373920 )
 }
 \def\postoptimctnormalcnotsndanaavg{
 (                   13 ,            220089922 )
 (                   16 ,            505017632 )
 (                   20 ,           1232953160 )
 (                   25 ,           3010139650 )
 (                   32 ,           8080281664 )
 (                   40 ,          19727250000 )
 (                   50 ,          48162231300 )
 (                   63 ,         121391639838 )
 (                   79 ,         300147648438 )
 (                  100 ,         770595696200 )
 }
 \def\postoptimctnormaltotalsndminavg{
 (                   13 ,          24168486262 )
 (                   16 ,          58044971481 )
 (                   20 ,         147408204192 )
 (                   25 ,         363277852673 )
 (                   32 ,        1002885289597 )
 (                   40 ,        2577240176878 )
 (                   50 ,        6417119695292 )
 (                   63 ,       16486168331201 )
 (                   79 ,       42165337981246 )
 (                  100 ,      111439344251599 )
 }
 \def\postoptimctnormaltsndminavg{
 (                   13 ,           9425339751 )
 (                   16 ,          22378726428 )
 (                   20 ,          56476290001 )
 (                   25 ,         140653493952 )
 (                   32 ,         387325916721 )
 (                   40 ,         987326378314 )
 (                   50 ,        2483699513235 )
 (                   63 ,        6406602658118 )
 (                   79 ,       16480757593144 )
 (                  100 ,       43717036899826 )
 }
 \def\postoptimctnormalcnotsndminavg{
 (                   13 ,            133561610 )
 (                   16 ,            306439840 )
 (                   20 ,            748079880 )
 (                   25 ,           1826241850 )
 (                   32 ,           4901987136 )
 (                   40 ,          11967229840 )
 (                   50 ,          29215869700 )
 (                   63 ,          73635752358 )
 (                   79 ,         182064527646 )
 (                  100 ,         467421915400 )
 }
 \def\postoptimctnormaltotalsndfitavg{
 (                   13 ,             73518181 )
 (                   16 ,            137128817 )
 (                   20 ,            273088740 )
 (                   25 ,            526281626 )
 (                   32 ,           1124204453 )
 (                   40 ,           2140787512 )
 (                   50 ,           4031284745 )
 (                   63 ,           8575517068 )
 (                   79 ,          16760232768 )
 (                  100 ,          34016278123 )
 }
 \def\postoptimctnormaltsndfitavg{
 (                   13 ,             28288257 )
 (                   16 ,             52652035 )
 (                   20 ,            104368593 )
 (                   25 ,            206570868 )
 (                   32 ,            430288561 )
 (                   40 ,            836312769 )
 (                   50 ,           1550175694 )
 (                   63 ,           3302991647 )
 (                   79 ,           6513475256 )
 (                  100 ,          13198915260 )
 }
 \def\postoptimctnormalcnotsndfitavg{
 (                   13 ,               513474 )
 (                   16 ,               934176 )
 (                   20 ,              1777480 )
 (                   25 ,              3381650 )
 (                   32 ,              6889280 )
 (                   40 ,             13107600 )
 (                   50 ,             24939300 )
 (                   63 ,             48552462 )
 (                   79 ,             93225846 )
 (                  100 ,            183925000 )
 }
 \def\postoptimctnormaltotalsndcomavg{
 (                   13 ,           2990255305 )
 (                   16 ,           6234147950 )
 (                   20 ,          14634171019 )
 (                   25 ,          33333974235 )
 (                   32 ,          85350484043 )
 (                   40 ,         192603967022 )
 (                   50 ,         426834788945 )
 (                   63 ,        1011508533321 )
 (                   79 ,        2357592634999 )
 (                  100 ,        5588696773037 )
 }
 \def\postoptimctnormaltsndcomavg{
 (                   13 ,           1173593121 )
 (                   16 ,           2476828852 )
 (                   20 ,           5653524584 )
 (                   25 ,          12863144226 )
 (                   32 ,          33127306688 )
 (                   40 ,          74570864530 )
 (                   50 ,         166743777022 )
 (                   63 ,         393168188940 )
 (                   79 ,         910596367953 )
 (                  100 ,        2166129995819 )
 }
 \def\postoptimctnormalcnotsndcomavg{
 (                   13 ,             17907994 )
 (                   16 ,             37609632 )
 (                   20 ,             83323880 )
 (                   25 ,            184343850 )
 (                   32 ,            443276608 )
 (                   40 ,            979182480 )
 (                   50 ,           2162442500 )
 (                   63 ,           4912506774 )
 (                   79 ,          10973299870 )
 (                  100 ,          25353238600 )
 }
 \def\postoptimctnormaltotalfrthanaavg{
 (                   13 ,          38194374130 )
 (                   16 ,          82440066693 )
 (                   20 ,         178438990491 )
 (                   25 ,         410413482046 )
 (                   32 ,        1017030579854 )
 (                   40 ,        2198746476197 )
 (                   50 ,        5090019594880 )
 (                   63 ,       11586490115032 )
 (                   79 ,       26193835603182 )
 (                  100 ,       60348578406023 )
 }
 \def\postoptimctnormaltfrthanaavg{
 (                   13 ,          14992134919 )
 (                   16 ,          31567975344 )
 (                   20 ,          69399463559 )
 (                   25 ,         160106450768 )
 (                   32 ,         389183881102 )
 (                   40 ,         870796970165 )
 (                   50 ,        1967822383428 )
 (                   63 ,        4516046879069 )
 (                   79 ,       10223711240016 )
 (                  100 ,       23565484475824 )
 }
 \def\postoptimctnormalcnotfrthanaavg{
 (                   13 ,            207964146 )
 (                   16 ,            430136352 )
 (                   20 ,            939271240 )
 (                   25 ,           2051048050 )
 (                   32 ,           4866433344 )
 (                   40 ,          10626635280 )
 (                   50 ,          23204960100 )
 (                   63 ,          52104797766 )
 (                   79 ,         115048407998 )
 (                  100 ,         262534140200 )
 }
 \def\postoptimctnormaltotalfrthminavg{
 (                   13 ,          29322054141 )
 (                   16 ,          64340229422 )
 (                   20 ,         139185678866 )
 (                   25 ,         318834341612 )
 (                   32 ,         778490771466 )
 (                   40 ,        1745817882826 )
 (                   50 ,        3840906677820 )
 (                   63 ,        8979571122119 )
 (                   79 ,       20059307328374 )
 (                  100 ,       47838487355255 )
 }
 \def\postoptimctnormaltfrthminavg{
 (                   13 ,          11410258328 )
 (                   16 ,          24472192307 )
 (                   20 ,          54632910082 )
 (                   25 ,         123229299928 )
 (                   32 ,         305269805573 )
 (                   40 ,         676141974298 )
 (                   50 ,        1519338890091 )
 (                   63 ,        3487094480845 )
 (                   79 ,        7917826370459 )
 (                  100 ,       18484132343810 )
 }
 \def\postoptimctnormalcnotfrthminavg{
 (                   13 ,            162839066 )
 (                   16 ,            336624672 )
 (                   20 ,            734698440 )
 (                   25 ,           1603596050 )
 (                   32 ,           3803066944 )
 (                   40 ,           8301593680 )
 (                   50 ,          18121972100 )
 (                   63 ,          40679153046 )
 (                   79 ,          89796791118 )
 (                  100 ,         204861404200 )
 }
 \def\postoptimctnormaltotalfrthfitavg{
 (                   13 ,             15227637 )
 (                   16 ,             26946945 )
 (                   20 ,             48794500 )
 (                   25 ,             87152677 )
 (                   32 ,            172272872 )
 (                   40 ,            303496806 )
 (                   50 ,            552604899 )
 (                   63 ,           1024111912 )
 (                   79 ,           1881296359 )
 (                  100 ,           3523789515 )
 }
 \def\postoptimctnormaltfrthfitavg{
 (                   13 ,              5846221 )
 (                   16 ,             10172704 )
 (                   20 ,             18432690 )
 (                   25 ,             33860997 )
 (                   32 ,             65697727 )
 (                   40 ,            116119713 )
 (                   50 ,            212882578 )
 (                   63 ,            392674266 )
 (                   79 ,            722884766 )
 (                  100 ,           1352640988 )
 }
 \def\postoptimctnormalcnotfrthfitavg{
 (                   13 ,               113386 )
 (                   16 ,               192672 )
 (                   20 ,               340040 )
 (                   25 ,               602050 )
 (                   32 ,              1130304 )
 (                   40 ,              1998480 )
 (                   50 ,              3532100 )
 (                   63 ,              6373206 )
 (                   79 ,             11363518 )
 (                  100 ,             20748200 )
 }
 \def\postoptimctnormaltotalfrthcomavg{
 (                   13 ,           2382185944 )
 (                   16 ,           4909001778 )
 (                   20 ,          10828340800 )
 (                   25 ,          22917953325 )
 (                   32 ,          54421133904 )
 (                   40 ,         116240247440 )
 (                   50 ,         244488439941 )
 (                   63 ,         540420156896 )
 (                   79 ,        1154054253002 )
 (                  100 ,        2616455830166 )
 }
 \def\postoptimctnormaltfrthcomavg{
 (                   13 ,            924482148 )
 (                   16 ,           1911792316 )
 (                   20 ,           4172226900 )
 (                   25 ,           8924743689 )
 (                   32 ,          20966440338 )
 (                   40 ,          44910915680 )
 (                   50 ,          95302356859 )
 (                   63 ,         209182246076 )
 (                   79 ,         450249549691 )
 (                  100 ,        1010829937718 )
 }
 \def\postoptimctnormalcnotfrthcomavg{
 (                   13 ,             14355146 )
 (                   16 ,             28790432 )
 (                   20 ,             60599240 )
 (                   25 ,            127149050 )
 (                   32 ,            287801664 )
 (                   40 ,            601040080 )
 (                   50 ,           1253378100 )
 (                   63 ,           2680274646 )
 (                   79 ,           5637762478 )
 (                  100 ,          12225632200 )
 }
 \def\postoptimctnormaltotalsxthanaavg{
 (                   13 ,         342349747317 )
 (                   16 ,         711110442334 )
 (                   20 ,        1538187299230 )
 (                   25 ,        3289287667805 )
 (                   32 ,        7634342948623 )
 (                   40 ,       16709152336737 )
 (                   50 ,       35398838470024 )
 (                   63 ,       79258759609801 )
 (                   79 ,      171531383495600 )
 (                  100 ,      381675394941790 )
 }
 \def\postoptimctnormaltsxthanaavg{
 (                   13 ,         132271371282 )
 (                   16 ,         277911776599 )
 (                   20 ,         595619858102 )
 (                   25 ,        1284705812256 )
 (                   32 ,        2989435465118 )
 (                   40 ,        6519318378550 )
 (                   50 ,       13833033250938 )
 (                   63 ,       30568247845420 )
 (                   79 ,       66626074698074 )
 (                  100 ,      150033509259086 )
 }
 \def\postoptimctnormalcnotsxthanaavg{
 (                   13 ,           1744792426 )
 (                   16 ,           3486038432 )
 (                   20 ,           7334416040 )
 (                   25 ,          15431175050 )
 (                   32 ,          35137068864 )
 (                   40 ,          73926280080 )
 (                   50 ,         155536460100 )
 (                   63 ,         336047518926 )
 (                   79 ,         714533354158 )
 (                  100 ,        1567709280200 )
 }
 \def\postoptimctnormaltotalsxthminavg{
 (                   13 ,         294962097678 )
 (                   16 ,         606199264546 )
 (                   20 ,        1291179815090 )
 (                   25 ,        2817865153078 )
 (                   32 ,        6477814169520 )
 (                   40 ,       14162229617150 )
 (                   50 ,       30612992430462 )
 (                   63 ,       67071679892105 )
 (                   79 ,      145041091405471 )
 (                  100 ,      325401696459050 )
 }
 \def\postoptimctnormaltsxthminavg{
 (                   13 ,         113541605457 )
 (                   16 ,         234434425922 )
 (                   20 ,         507264934086 )
 (                   25 ,        1081710421214 )
 (                   32 ,        2537920235784 )
 (                   40 ,        5436048597423 )
 (                   50 ,       11786010657579 )
 (                   63 ,       26224053606432 )
 (                   79 ,       57184812047797 )
 (                  100 ,      128363668315160 )
 }
 \def\postoptimctnormalcnotsxthminavg{
 (                   13 ,           1491510826 )
 (                   16 ,           2978051232 )
 (                   20 ,           6261556040 )
 (                   25 ,          13165965050 )
 (                   32 ,          29960544064 )
 (                   40 ,          63002472080 )
 (                   50 ,         132489490100 )
 (                   63 ,         286120220526 )
 (                   79 ,         608117478558 )
 (                  100 ,        1333687300200 )
 }
 \def\postoptimctnormaltotalsxthfitavg{
 (                   13 ,             18248701 )
 (                   16 ,             30182853 )
 (                   20 ,             51856837 )
 (                   25 ,             87471392 )
 (                   32 ,            159949414 )
 (                   40 ,            276336825 )
 (                   50 ,            470562834 )
 (                   63 ,            822471933 )
 (                   79 ,           1413883016 )
 (                  100 ,           2500801172 )
 }
 \def\postoptimctnormaltsxthfitavg{
 (                   13 ,              7053478 )
 (                   16 ,             11652101 )
 (                   20 ,             19726079 )
 (                   25 ,             33668686 )
 (                   32 ,             61971256 )
 (                   40 ,            105675143 )
 (                   50 ,            180892328 )
 (                   63 ,            316551594 )
 (                   79 ,            546220371 )
 (                  100 ,            970090617 )
 }
 \def\postoptimctnormalcnotsxthfitavg{
 (                   13 ,               135226 )
 (                   16 ,               217632 )
 (                   20 ,               364040 )
 (                   25 ,               610050 )
 (                   32 ,              1081664 )
 (                   40 ,              1808080 )
 (                   50 ,              3030100 )
 (                   63 ,              5166126 )
 (                   79 ,              8705958 )
 (                  100 ,             15020200 )
 }
 \def\postoptimctnormaltotallcuavg{
 (                   13 ,            690241422 )
 (                   16 ,           1215441743 )
 (                   20 ,           3858762347 )
 (                   25 ,           6201621022 )
 (                   32 ,          10429570468 )
 (                   40 ,          33554077124 )
 (                   50 ,          53522184143 )
 (                   63 ,          96861283642 )
 (                   79 ,         311490601682 )
 (                  100 ,         509515774312 )
 }
 \def\postoptimctnormaltlcuavg{
 (                   13 ,            265427934 )
 (                   16 ,            466238131 )
 (                   20 ,           1491110405 )
 (                   25 ,           2391284775 )
 (                   32 ,           4025983222 )
 (                   40 ,          13009333548 )
 (                   50 ,          20709208054 )
 (                   63 ,          37457403781 )
 (                   79 ,         120918056368 )
 (                  100 ,         197676206082 )
 }
 \def\postoptimctnormalcnotlcuavg{
 (                   13 ,             15113318 )
 (                   16 ,             29399669 )
 (                   20 ,             66727835 )
 (                   25 ,            120512824 )
 (                   32 ,            231936676 )
 (                   40 ,            529760688 )
 (                   50 ,            954465726 )
 (                   63 ,           1966464875 )
 (                   79 ,           4502905836 )
 (                  100 ,           8382111144 )
 }
 \def\postoptimctnormaltotalspavg{
 (                   13 ,             27930105 )
 (                   16 ,             43760894 )
 (                   20 ,            138802175 )
 (                   25 ,            223326102 )
 (                   32 ,            378452547 )
 (                   40 ,           1205915291 )
 (                   50 ,           1940548688 )
 (                   63 ,           3170653537 )
 (                   79 ,          10162087705 )
 (                  100 ,          16725591055 )
 }
 \def\postoptimctnormaltspavg{
 (                   13 ,             10630753 )
 (                   16 ,             16626990 )
 (                   20 ,             53100914 )
 (                   25 ,             85342826 )
 (                   32 ,            144724584 )
 (                   40 ,            463069944 )
 (                   50 ,            743682818 )
 (                   63 ,           1214839210 )
 (                   79 ,           3912565658 )
 (                  100 ,           6436868830 )
 }
 \def\postoptimctnormalcnotspavg{
 (                   13 ,              1191600 )
 (                   16 ,              2086016 )
 (                   20 ,              4469420 )
 (                   25 ,              8265332 )
 (                   32 ,             16270660 )
 (                   40 ,             34803454 )
 (                   50 ,             64532226 )
 (                   63 ,            122808134 )
 (                   79 ,            264958428 )
 (                  100 ,            509013696 )
 }
 \def\postoptimctnormaltotalspjaavg{
 (                   13 ,             21276620 )
 (                   16 ,             32972754 )
 (                   20 ,            103248285 )
 (                   25 ,            164084902 )
 (                   32 ,            275146498 )
 (                   40 ,            861451238 )
 (                   50 ,           1367087054 )
 (                   63 ,           2204629973 )
 (                   79 ,           6966119007 )
 (                  100 ,          11315696692 )
 }
 \def\postoptimctnormaltspjaavg{
 (                   13 ,              8098966 )
 (                   16 ,             12538406 )
 (                   20 ,             39521888 )
 (                   25 ,             62688464 )
 (                   32 ,            104942752 )
 (                   40 ,            331309558 )
 (                   50 ,            523865896 )
 (                   63 ,            844484712 )
 (                   79 ,           2681464588 )
 (                  100 ,           4350199087 )
 }
 \def\postoptimctnormalcnotspjaavg{
 (                   13 ,               919832 )
 (                   16 ,              1593656 )
 (                   20 ,              3372670 )
 (                   25 ,              6155400 )
 (                   32 ,             11949060 )
 (                   40 ,             25212252 )
 (                   50 ,             46131984 )
 (                   63 ,             86574214 )
 (                   79 ,            184239324 )
 (                  100 ,            348895410 )
 }
 \def\postoptimctnormaltotalspsegmentavg{
 (                   13 ,             72274110 )
 (                   16 ,            116607561 )
 (                   20 ,            384906398 )
 (                   25 ,            639344192 )
 (                   32 ,           1121527928 )
 (                   40 ,           3700699261 )
 (                   50 ,           6130090371 )
 (                   63 ,           9939016590 )
 (                   79 ,          34339097291 )
 (                  100 ,          58528770319 )
 }
 \def\postoptimctnormaltspsegmentavg{
 (                   13 ,             27611276 )
 (                   16 ,             44486865 )
 (                   20 ,            147529706 )
 (                   25 ,            245196688 )
 (                   32 ,            429638369 )
 (                   40 ,           1422960301 )
 (                   50 ,           2358998400 )
 (                   63 ,           3819496620 )
 (                   79 ,          13269217062 )
 (                  100 ,          22586142498 )
 }
 \def\postoptimctnormalcnotspsegmentavg{
 (                   13 ,              2803416 )
 (                   16 ,              5082208 )
 (                   20 ,             11287696 )
 (                   25 ,             21524384 )
 (                   32 ,             43819300 )
 (                   40 ,             97255296 )
 (                   50 ,            185760080 )
 (                   63 ,            350279400 )
 (                   79 ,            817032320 )
 (                  100 ,           1619181628 )
 }
 \def\postoptimctnormaltotalspsegmentfitavg{
 (                   13 ,             68317459 )
 (                   16 ,            109531223 )
 (                   20 ,            362329166 )
 (                   25 ,            600468301 )
 (                   32 ,           1052773034 )
 (                   40 ,           3465338978 )
 (                   50 ,           5740876414 )
 (                   63 ,           9687080238 )
 (                   79 ,          32064248389 )
 (                  100 ,          54563122325 )
 }
 \def\postoptimctnormaltspsegmentfitavg{
 (                   13 ,             26111866 )
 (                   16 ,             41782074 )
 (                   20 ,            138801547 )
 (                   25 ,            230094499 )
 (                   32 ,            403094595 )
 (                   40 ,           1332889510 )
 (                   50 ,           2208074456 )
 (                   63 ,           3723626184 )
 (                   79 ,          12385395411 )
 (                  100 ,          21057397243 )
 }
 \def\postoptimctnormalcnotspsegmentfitavg{
 (                   13 ,              2657784 )
 (                   16 ,              4786976 )
 (                   20 ,             10647672 )
 (                   25 ,             20252176 )
 (                   32 ,             41217400 )
 (                   40 ,             91290456 )
 (                   50 ,            174255200 )
 (                   63 ,            341743856 )
 (                   79 ,            764399360 )
 (                  100 ,           1513077624 )
 }
  \def\postoptimczlargetotalsxthanaavg{
 (                   17 ,           9027914719 )
 (                   23 ,          24727699786 )
 (                   31 ,          66880037267 )
 (                   42 ,         184044056994 )
 (                   57 ,         509339097024 )
 (                   78 ,        1449014793096 )
 (                  106 ,        4028193848242 )
 (                  145 ,       11445922932140 )
 (                  197 ,       31791539841904 )
 (                  269 ,       89797671697258 )
 (                  367 ,      252916806845119 )
 (                  500 ,      709018374691000 )
 }
 \def\postoptimczlargerzsxthanaavg{
 (                   17 ,           2375767017 )
 (                   23 ,           6507289398 )
 (                   31 ,          17600009781 )
 (                   42 ,          48432646542 )
 (                   57 ,         134036604432 )
 (                   78 ,         381319682328 )
 (                  106 ,        1060051012606 )
 (                  145 ,        3012084982020 )
 (                  197 ,        8366194695072 )
 (                  269 ,       23630966235894 )
 (                  367 ,       66557054432617 )
 (                  500 ,      186583782813000 )
 }
 \def\postoptimczlargecnotsxthanaavg{
 (                   17 ,           3801227234 )
 (                   23 ,          10411663046 )
 (                   31 ,          28160015662 )
 (                   42 ,          77492234484 )
 (                   57 ,         214458567114 )
 (                   78 ,         610111491756 )
 (                  106 ,        1696081620212 )
 (                  145 ,        4819335971290 )
 (                  197 ,       13385911512194 )
 (                  269 ,       37809545977538 )
 (                  367 ,      106491287092334 )
 (                  500 ,      298534052501000 )
 }
 \def\postoptimczlargetotalsxthminavg{
 (                   17 ,           7719360969 )
 (                   23 ,          21123312861 )
 (                   31 ,          57082478742 )
 (                   42 ,         156958620294 )
 (                   57 ,         434068566399 )
 (                   78 ,        1234055472546 )
 (                  106 ,        3428598456442 )
 (                  145 ,        9736915021890 )
 (                  197 ,       27031738545604 )
 (                  269 ,       76319664170458 )
 (                  367 ,      214870686590919 )
 (                  500 ,      602147014378500 )
 }
 \def\postoptimczlargerzsxthminavg{
 (                   17 ,           2031410767 )
 (                   23 ,           5558766523 )
 (                   31 ,          15021704906 )
 (                   42 ,          41304900042 )
 (                   57 ,         114228570057 )
 (                   78 ,         324751440078 )
 (                  106 ,         902262751606 )
 (                  145 ,        2562346058270 )
 (                  197 ,        7113615406572 )
 (                  269 ,       20084122149894 )
 (                  367 ,       56544917523617 )
 (                  500 ,      158459740625500 )
 }
 \def\postoptimczlargecnotsxthminavg{
 (                   17 ,           3250257234 )
 (                   23 ,           8894026446 )
 (                   31 ,          24034727862 )
 (                   42 ,          66087840084 )
 (                   57 ,         182765712114 )
 (                   78 ,         519602304156 )
 (                  106 ,        1443620402612 )
 (                  145 ,        4099753693290 )
 (                  197 ,       11381784650594 )
 (                  269 ,       32134595439938 )
 (                  367 ,       90471868037934 )
 (                  500 ,      253535585001000 )
 }
 \def\postoptimczlargetotalsxthfitavg{
 (                   17 ,               597669 )
 (                   23 ,              1201911 )
 (                   31 ,              2385667 )
 (                   42 ,              4808244 )
 (                   57 ,              9720324 )
 (                   78 ,             20081646 )
 (                  106 ,             40784242 )
 (                  145 ,             84097390 )
 (                  197 ,            170775754 )
 (                  269 ,            350872033 )
 (                  367 ,            719267519 )
 (                  500 ,           1469891000 )
 }
 \def\postoptimczlargerzsxthfitavg{
 (                   17 ,               157267 )
 (                   23 ,               316273 )
 (                   31 ,               627781 )
 (                   42 ,              1265292 )
 (                   57 ,              2557932 )
 (                   78 ,              5284578 )
 (                  106 ,             10732606 )
 (                  145 ,             22130770 )
 (                  197 ,             44940822 )
 (                  269 ,             92334519 )
 (                  367 ,            189280617 )
 (                  500 ,            386813000 )
 }
 \def\postoptimczlargecnotsxthfitavg{
 (                   17 ,               251634 )
 (                   23 ,               506046 )
 (                   31 ,              1004462 )
 (                   42 ,              2024484 )
 (                   57 ,              4092714 )
 (                   78 ,              8455356 )
 (                  106 ,             17172212 )
 (                  145 ,             35409290 )
 (                  197 ,             71905394 )
 (                  269 ,            147735338 )
 (                  367 ,            302849134 )
 (                  500 ,            618901000 )
 }
 \def\postoptimczlargetotaleigthanaavg{
 (                   17 ,         131604089119 )
 (                   23 ,         351500349286 )
 (                   31 ,         927333159342 )
 (                   42 ,        2488115122794 )
 (                   57 ,        6712797617649 )
 (                   78 ,       18604488182046 )
 (                  106 ,       50414368447742 )
 (                  145 ,      139558494267265 )
 (                  197 ,      377855286212004 )
 (                  269 ,     1039932563391883 )
 (                  367 ,     2854138055908944 )
 (                  500 ,     7797632828441000 )
 }
 \def\postoptimczlargerzeigthanaavg{
 (                   17 ,          34632655017 )
 (                   23 ,          92500091898 )
 (                   31 ,         244035041906 )
 (                   42 ,         654767137542 )
 (                   57 ,        1766525688807 )
 (                   78 ,        4895917942578 )
 (                  106 ,       13266939065106 )
 (                  145 ,       36725919543895 )
 (                  197 ,       99435601634572 )
 (                  269 ,      273666464050269 )
 (                  367 ,      751088962080992 )
 (                  500 ,     2052008639063000 )
 }
 \def\postoptimczlargecnoteigthanaavg{
 (                   17 ,          55412248034 )
 (                   23 ,         148000147046 )
 (                   31 ,         390456067062 )
 (                   42 ,        1047627420084 )
 (                   57 ,        2826441102114 )
 (                   78 ,        7833468708156 )
 (                  106 ,       21227102504212 )
 (                  145 ,       58761471270290 )
 (                  197 ,      159096962615394 )
 (                  269 ,      437866342480538 )
 (                  367 ,     1201742339329734 )
 (                  500 ,     3283213822501000 )
 }
 \def\postoptimczlargetotaleigthminavg{
 (                   17 ,         117589724744 )
 (                   23 ,         313789325036 )
 (                   31 ,         827165757842 )
 (                   42 ,        2217640005294 )
 (                   57 ,        5978750235399 )
 (                   78 ,       16558688328546 )
 (                  106 ,       44842716497242 )
 (                  145 ,      124061706049765 )
 (                  197 ,      335718184858504 )
 (                  269 ,      923497928134633 )
 (                  367 ,     2533400361284944 )
 (                  500 ,     6918396846441000 )
 }
 \def\postoptimczlargerzeigthminavg{
 (                   17 ,          30944664392 )
 (                   23 ,          82576138148 )
 (                   31 ,         217675199406 )
 (                   42 ,         583589475042 )
 (                   57 ,        1573355325057 )
 (                   78 ,        4357549560078 )
 (                  106 ,       11800714867606 )
 (                  145 ,       32647817381395 )
 (                  197 ,       88346890752072 )
 (                  269 ,      243025770561519 )
 (                  367 ,      666684305600992 )
 (                  500 ,     1820630749063000 )
 }
 \def\postoptimczlargecnoteigthminavg{
 (                   17 ,          49511463034 )
 (                   23 ,         132121821046 )
 (                   31 ,         348280319062 )
 (                   42 ,         933743160084 )
 (                   57 ,        2517368520114 )
 (                   78 ,        6972079296156 )
 (                  106 ,       18881143788212 )
 (                  145 ,       52236507810290 )
 (                  197 ,      141355025203394 )
 (                  269 ,      388841232898538 )
 (                  367 ,     1066694888961734 )
 (                  500 ,     2913009198501000 )
 }
 \def\postoptimczlargetotaleigthfitavg{
 (                   17 ,              1170994 )
 (                   23 ,              2239786 )
 (                   31 ,              4270467 )
 (                   42 ,              8179794 )
 (                   57 ,             15703899 )
 (                   78 ,             30566796 )
 (                  106 ,             58910242 )
 (                  145 ,            115366640 )
 (                  197 ,            222242004 )
 (                  269 ,            432520258 )
 (                  367 ,            841120694 )
 (                  500 ,           1629253500 )
 }
 \def\postoptimczlargerzeigthfitavg{
 (                   17 ,               308142 )
 (                   23 ,               589398 )
 (                   31 ,              1123781 )
 (                   42 ,              2152542 )
 (                   57 ,              4132557 )
 (                   78 ,              8043828 )
 (                  106 ,             15502606 )
 (                  145 ,             30359520 )
 (                  197 ,             58484572 )
 (                  269 ,            113820894 )
 (                  367 ,            221347242 )
 (                  500 ,            428750500 )
 }
 \def\postoptimczlargecnoteigthfitavg{
 (                   17 ,               493034 )
 (                   23 ,               943046 )
 (                   31 ,              1798062 )
 (                   42 ,              3444084 )
 (                   57 ,              6612114 )
 (                   78 ,             12870156 )
 (                  106 ,             24804212 )
 (                  145 ,             48575290 )
 (                  197 ,             93575394 )
 (                  269 ,            182113538 )
 (                  367 ,            354155734 )
 (                  500 ,            686001000 )
 }

\def\postoptimcznormaltotalfstanaavg{
(13,242226963615)
(16,684078020880)
(20,2087640444300)
(25,6370973035500)
(32,21890496664800)
(40,66804494216400)
(50,203871137134500)
(63,647453332976940)
(79,2007433558690965)
(100,6523876388302500)
}

\def\postoptimcznormaltotalfstminavg{
(13,89110451820)
(16,251658485760)
(20,768000480000)
(25,2343750937500)
(32,8053065646080)
(40,24576003840000)
(50,75000007500000)
(63,238184785322820)
(79,738493565342340)
(100,2400000060000000)
}

\def\postoptimcznormaltotalfstcomavg{
(13,4285063965)
(16,9832496880)
(20,24005119200)
(25,58606247625)
(32,157319948160)
(40,384081903600)
(50,937699959000)
(63,2363448141180)
(79,5843758293630)
(100,15003199330500)
}

\def\postoptimcznormaltotalfstfitavg{
(13,72635745)
(16,134411280)
(20,260422200)
(25,504568500)
(32,1048798080)
(40,2032050000)
(50,3937103250)
(63,7810456185)
(79,15275622885)
(100,30720871500)
}

\def\postoptimcznormaltotalsndanaavg{
(13,369614349)
(16,848116032)
(20,2070595440)
(25,5055164400)
(32,13569852512)
(40,33129522040)
(50,80882621300)
(63,203862523599)
(79,504061539701)
(100,1294121926400)
}

\def\postoptimcznormaltotalsndminavg{
(13,224349450)
(16,514719536)
(20,1256487620)
(25,3067299375)
(32,8233021856)
(40,20098934440)
(50,49067288500)
(63,123667873497)
(79,305766247412)
(100,785000622500)
}

\def\postoptimcznormaltotalsndcomavg{
(13,30710836)
(16,64567888)
(20,143244940)
(25,317398025)
(32,764660544)
(40,1692250480)
(50,3744653350)
(63,8525585061)
(79,19087250909)
(100,44209269100)
}

\def\postoptimcznormaltotalsndfitavg{
(13,1219530)
(16,2218704)
(20,4221560)
(25,8031475)
(32,16362112)
(40,31130640)
(50,59230950)
(63,115312239)
(79,221411562)
(100,436822100)
}

\def\postoptimcznormaltotalfrthanaavg{
(13,415330591)
(16,859037232)
(20,1875845440)
(25,4096207800)
(32,9718895424)
(40,21222760880)
(50,46343299350)
(63,104060004426)
(79,229766513678)
(100,524314548200)
}

\def\postoptimcznormaltotalfrthminavg{
(13,325541151)
(16,672899552)
(20,1468491140)
(25,3204939175)
(32,7600149184)
(40,16588991480)
(50,36210751350)
(63,81279209151)
(79,179410117613)
(100,409285479700)
}

\def\postoptimcznormaltotalfrthcomavg{
(13,28736071)
(16,57636992)
(20,121334140)
(25,254631050)
(32,576527104)
(40,1204391280)
(50,2512536600)
(63,5375416221)
(79,11312782383)
(100,24547563700)
}

\def\postoptimcznormaltotalfrthfitavg{
(13,269321)
(16,457632)
(20,807640)
(25,1429925)
(32,2684544)
(40,4746480)
(50,8388850)
(63,15136506)
(79,26988533)
(100,49277200)
}

\def\postoptimcznormaltotalsxthanaavg{
(13,3691779416)
(16,7376058512)
(20,15518782140)
(25,32650585800)
(32,74345966624)
(40,156419476280)
(50,329097195350)
(63,711037721016)
(79,1511870017403)
(100,3317091725700)
}

\def\postoptimcznormaltotalsxthminavg{
(13,3159617916)
(16,6308220512)
(20,13262427640)
(25,27884423925)
(32,63449223424)
(40,133415720280)
(50,280546970350)
(63,605826627966)
(79,1287553184978)
(100,2823646525700)
}

\def\postoptimcznormaltotalsxthfitavg{
(13,321191)
(16,516912)
(20,864640)
(25,1448925)
(32,2569024)
(40,4294280)
(50,7196600)
(63,12269691)
(79,20676828)
(100,35673200)
}

\def\postoptimcznormaltotaleigthanaavg{
(13,55033390841)
(16,108068808112)
(20,223181315140)
(25,460909125175)
(32,1028129216224)
(40,2123270140280)
(50,4384931269100)
(63,9293237927316)
(79,19390915602678)
(100,41716731075700)
}

\def\postoptimcznormaltotaleigthminavg{
(13,49214719216)
(16,96578748112)
(20,199317695140)
(25,411363537675)
(32,916997456224)
(40,1892684240280)
(50,3906621944100)
(63,8275154415066)
(79,17258174910303)
(100,37110576513200)
}

\def\postoptimcznormaltotaleigthfitavg{
(13,679341)
(16,1064112)
(20,1662640)
(25,2731425)
(32,4560224)
(40,7315280)
(50,11875350)
(63,19451691)
(79,31521553)
(100,52013200)
}

\def\postoptimcznormaltotallcuavg{
(13,31565650)
(16,54077683)
(20,118703830)
(25,242534100)
(32,465842357)
(40,1031508555)
(50,1873844322)
(63,3493409371)
(79,7762720416)
(100,16222059150)
}

\def\postoptimcznormaltotalspavg{
(13,2486152)
(16,4329453)
(20,9108371)
(25,16862821)
(32,33132993)
(40,69522783)
(50,129453077)
(63,246285095)
(79,521324152)
(100,1011143533)
}

\def\postoptimcznormaltotalspjaavg{
(13,1915829)
(16,3298285)
(20,6860715)
(25,12533375)
(32,24279102)
(40,50279209)
(50,92360820)
(63,173277517)
(79,361812200)
(100,691770383)
}

\def\postoptimcznormaltotalspsegmentavg{
(13,5526289)
(16,9867931)
(20,21574766)
(25,41183758)
(32,83529828)
(40,181955716)
(50,349233343)
(63,687069768)
(79,1509877670)
(100,3009991730)
}

\def\postoptimcznormaltotalspsegmentfitavg{
(13,5225538)
(16,9307742)
(20,20346934)
(25,38899617)
(32,78981174)
(40,172126310)
(50,330628827)
(63,650899815)
(79,1431817728)
(100,2856431311)
}

\def\postoptimcznormalcnotfstanaavg{
(13,96890785446)
(16,273631208352)
(20,835056177720)
(25,2548389214200)
(32,8756198665920)
(40,26721797686560)
(50,81548454853800)
(63,258981333190776)
(79,802973423476386)
(100,2609550555321000)
}

\def\postoptimcznormalcnotfstminavg{
(13,35644180728)
(16,100663394304)
(20,307200192000)
(25,937500375000)
(32,3221226258432)
(40,9830401536000)
(50,30000003000000)
(63,95273914129128)
(79,295397426136936)
(100,960000024000000)
}

\def\postoptimcznormalcnotfstcomavg{
(13,1714025586)
(16,3932998752)
(20,9602047680)
(25,23442499050)
(32,62927979264)
(40,153632761440)
(50,375079983600)
(63,945379256472)
(79,2337503317452)
(100,6001279732200)
}

\def\postoptimcznormalcnotfstfitavg{
(13,29054298)
(16,53764512)
(20,104168880)
(25,201827400)
(32,419519232)
(40,812820000)
(50,1574841300)
(63,3124182474)
(79,6110249154)
(100,12288348600)
}

\def\postoptimcznormalcnotsndanaavg{
(13,155627082)
(16,357101472)
(20,871829640)
(25,2128490250)
(32,5713622080)
(40,13949272400)
(50,34055840500)
(63,85836851982)
(79,212236437694)
(100,544893442600)
}

\def\postoptimcznormalcnotsndminavg{
(13,94462914)
(16,216724000)
(20,529047400)
(25,1291494450)
(32,3466535488)
(40,8462709200)
(50,20659910900)
(63,52070683518)
(79,128743683046)
(100,330526577800)
}

\def\postoptimcznormalcnotsndcomavg{
(13,12930866)
(16,27186464)
(20,60313640)
(25,133641250)
(32,321962304)
(40,712526480)
(50,1576696100)
(63,3589719966)
(79,8036737150)
(100,18614429000)
}

\def\postoptimcznormalcnotsndfitavg{
(13,513474)
(16,934176)
(20,1777480)
(25,3381650)
(32,6889280)
(40,13107600)
(50,24939300)
(63,48552462)
(79,93225846)
(100,183925000)
}

\def\postoptimcznormalcnotfrthanaavg{
(13,174876026)
(16,361699872)
(20,789829640)
(25,1724719050)
(32,4092166464)
(40,8935899280)
(50,19512968100)
(63,43814738646)
(79,96743795158)
(100,220764020200)
}

\def\postoptimcznormalcnotfrthminavg{
(13,137069946)
(16,283326112)
(20,618312040)
(25,1349448050)
(32,3200062784)
(40,6984838480)
(50,15246632100)
(63,34222824846)
(79,75541102078)
(100,172330728200)
}

\def\postoptimcznormalcnotfrthcomavg{
(13,12099386)
(16,24268192)
(20,51088040)
(25,107213050)
(32,242748224)
(40,507112080)
(50,1057910100)
(63,2263333086)
(79,4763276718)
(100,10335816200)
}

\def\postoptimcznormalcnotfrthfitavg{
(13,113386)
(16,192672)
(20,340040)
(25,602050)
(32,1130304)
(40,1998480)
(50,3532100)
(63,6373206)
(79,11363518)
(100,20748200)
}

\def\postoptimcznormalcnotsxthanaavg{
(13,1554433426)
(16,3105708832)
(20,6534224040)
(25,13747615050)
(32,31303564864)
(40,65860832080)
(50,138567240100)
(63,299384303526)
(79,636576849358)
(100,1396670200200)
}

\def\postoptimcznormalcnotsxthminavg{
(13,1330365426)
(16,2656092832)
(20,5584180040)
(25,11740810050)
(32,26715462464)
(40,56175040080)
(50,118125040100)
(63,255084895926)
(79,542127656758)
(100,1188903800200)
}

\def\postoptimcznormalcnotsxthfitavg{
(13,135226)
(16,217632)
(20,364040)
(25,610050)
(32,1081664)
(40,1808080)
(50,3030100)
(63,5166126)
(79,8705958)
(100,15020200)
}

\def\postoptimcznormalcnoteigthanaavg{
(13,23171954026)
(16,45502656032)
(20,93971080040)
(25,194067000050)
(32,432896512064)
(40,894008480080)
(50,1846286850100)
(63,3912942285126)
(79,8164596043158)
(100,17564939400200)
}

\def\postoptimcznormalcnoteigthminavg{
(13,20721987026)
(16,40664736032)
(20,83923240040)
(25,173205700050)
(32,386104192064)
(40,796919680080)
(50,1644893450100)
(63,3484275543126)
(79,7266599962158)
(100,15625505900200)
}

\def\postoptimcznormalcnoteigthfitavg{
(13,286026)
(16,448032)
(20,700040)
(25,1150050)
(32,1920064)
(40,3080080)
(50,5000100)
(63,8190126)
(79,13272158)
(100,21900200)
}

\def\postoptimcznormalcnotlcuavg{
(13,15131174)
(16,26149939)
(20,59337878)
(25,120551221)
(32,231981742)
(40,529826220)
(50,954541356)
(63,1769858741)
(79,4052668785)
(100,8382277740)
}

\def\postoptimcznormalcnotspavg{
(13,1190312)
(16,2086016)
(20,4469420)
(25,8265332)
(32,16270660)
(40,34795458)
(50,64532226)
(63,122808134)
(79,264958428)
(100,509013696)
}

\def\postoptimcznormalcnotspjaavg{
(13,917256)
(16,1589180)
(20,3366520)
(25,6143260)
(32,11922780)
(40,25164276)
(50,46041810)
(63,86403514)
(79,183887964)
(100,348240070)
}

\def\postoptimcznormalcnotspsegmentavg{
(13,2675988)
(16,4829152)
(20,10734948)
(25,20458480)
(32,41713000)
(40,92540232)
(50,176811840)
(63,347219488)
(79,777502080)
(100,1540964784)
}

\def\postoptimcznormalcnotspsegmentfitavg{
(13,2530356)
(16,4555008)
(20,10124016)
(25,19323808)
(32,39441500)
(40,87541128)
(50,167392640)
(63,328940540)
(79,737305600)
(100,1462349552)
}

\def\postoptimcznormalrzfstanaavg{
(13,64593856964)
(16,182420805568)
(20,556704118480)
(25,1698926142800)
(32,5837465777280)
(40,17814531791040)
(50,54365636569200)
(63,172654222127184)
(79,535315615650924)
(100,1739700370214000)
}

\def\postoptimcznormalrzfstminavg{
(13,23762787152)
(16,67108929536)
(20,204800128000)
(25,625000250000)
(32,2147484172288)
(40,6553601024000)
(50,20000002000000)
(63,63515942752752)
(79,196931617424624)
(100,640000016000000)
}

\def\postoptimcznormalrzfstcomavg{
(13,1142683724)
(16,2621999168)
(20,6401365120)
(25,15628332700)
(32,41951986176)
(40,102421840960)
(50,250053322400)
(63,630252837648)
(79,1558335544968)
(100,4000853154800)
}

\def\postoptimcznormalrzfstfitavg{
(13,19369532)
(16,35843008)
(20,69445920)
(25,134551600)
(32,279679488)
(40,541880000)
(50,1049894200)
(63,2082788316)
(79,4073499436)
(100,8192232400)
}

\def\postoptimcznormalrzsndanaavg{
(13,97266923)
(16,223188416)
(20,544893520)
(25,1330306400)
(32,3571013792)
(40,8718295240)
(50,21284900300)
(63,53648032473)
(79,132647773539)
(100,340558401600)
}

\def\postoptimcznormalrzsndminavg{
(13,59039318)
(16,135452496)
(20,330654620)
(25,807184025)
(32,2166584672)
(40,5289193240)
(50,12912444300)
(63,32544177183)
(79,80464801884)
(100,206579111100)
}

\def\postoptimcznormalrzsndcomavg{
(13,8081788)
(16,16991536)
(20,37696020)
(25,83525775)
(32,201226432)
(40,445329040)
(50,985435050)
(63,2243574963)
(79,5022960699)
(100,11634018100)
}

\def\postoptimcznormalrzsndfitavg{
(13,320918)
(16,583856)
(20,1110920)
(25,2113525)
(32,4305792)
(40,8192240)
(50,15587050)
(63,30345273)
(79,58266134)
(100,114953100)
}

\def\postoptimcznormalrzfrthanaavg{
(13,109297513)
(16,226062416)
(20,493643520)
(25,1077949400)
(32,2557604032)
(40,5584937040)
(50,12195605050)
(63,27384211638)
(79,60464871954)
(100,137977512600)
}

\def\postoptimcznormalrzfrthminavg{
(13,85668713)
(16,177078816)
(20,386445020)
(25,843405025)
(32,2000039232)
(40,4365524040)
(50,9529145050)
(63,21389265513)
(79,47213188779)
(100,107706705100)
}

\def\postoptimcznormalrzfrthcomavg{
(13,7562113)
(16,15167616)
(20,31930020)
(25,67008150)
(32,151717632)
(40,316945040)
(50,661193800)
(63,1414583163)
(79,2977047929)
(100,6459885100)
}

\def\postoptimcznormalrzfrthfitavg{
(13,70863)
(16,120416)
(20,212520)
(25,376275)
(32,706432)
(40,1249040)
(50,2207550)
(63,3983238)
(79,7102179)
(100,12967600)
}

\def\postoptimcznormalrzsxthanaavg{
(13,971520888)
(16,1941068016)
(20,4083890020)
(25,8592259400)
(32,19564728032)
(40,41163020040)
(50,86604525050)
(63,187115189688)
(79,397860530829)
(100,872918875100)
}

\def\postoptimcznormalrzsxthminavg{
(13,831478388)
(16,1660058016)
(20,3490112520)
(25,7338006275)
(32,16697164032)
(40,35109400040)
(50,73828150050)
(63,159428059938)
(79,338829785454)
(100,743064875100)
}

\def\postoptimcznormalrzsxthfitavg{
(13,84513)
(16,136016)
(20,227520)
(25,381275)
(32,676032)
(40,1130040)
(50,1893800)
(63,3228813)
(79,5441204)
(100,9387600)
}

\def\postoptimcznormalrzeigthanaavg{
(13,14482471263)
(16,28439160016)
(20,58731925020)
(25,121291875025)
(32,270560320032)
(40,558755300040)
(50,1153929281300)
(63,2445588928188)
(79,5102872526954)
(100,10978087125100)
}

\def\postoptimcznormalrzeigthminavg{
(13,12951241888)
(16,25415460016)
(20,52452025020)
(25,108253562525)
(32,241315120032)
(40,498074800040)
(50,1028058406300)
(63,2177672214438)
(79,4541624976329)
(100,9765941187600)
}

\def\postoptimcznormalrzeigthfitavg{
(13,178763)
(16,280016)
(20,437520)
(25,718775)
(32,1200032)
(40,1925040)
(50,3125050)
(63,5118813)
(79,8295079)
(100,13687600)
}

\def\postoptimcznormalrzlcuavg{
(13,2641346)
(16,4009709)
(20,12410183)
(25,21861035)
(32,35842739)
(40,112179375)
(50,175119150)
(63,277788600)
(79,870496473)
(100,1548544755)
}

\def\postoptimcznormalrzspavg{
(13,118399)
(16,179071)
(20,558335)
(25,871679)
(32,1426687)
(40,4456447)
(50,6962175)
(63,11051007)
(79,34749439)
(100,55675903)
}

\def\postoptimcznormalrzspjaavg{
(13,91263)
(16,136447)
(20,420607)
(25,647935)
(32,1045503)
(40,3223039)
(50,4967423)
(63,7775231)
(79,24117247)
(100,38090751)
}

\def\postoptimcznormalrzspsegmentavg{
(13,282093)
(16,439451)
(20,1416591)
(25,2284205)
(32,3877390)
(40,12509091)
(50,20180412)
(63,33111848)
(79,107543718)
(100,178225836)
}

\def\postoptimcznormalrzspsegmentfitavg{
(13,266741)
(16,414504)
(20,1335972)
(25,2157518)
(32,3666245)
(40,11833339)
(50,19105352)
(63,31368715)
(79,101983760)
(100,169133308)
}

\def\postoptimczlargetotalfstanaavg{
(13,242226963615)
(17,926297155830)
(23,4198990612635)
(31,18677304223065)
(42,85261344260400)
(57,392536460369160)
(78,1883556868831560)
(106,8730418247926141)
(145,41816312705638200)
(197,193569467092059424)
(269,918897613542428800)
(367,4343462185594183680)
(500,20387113713442836480)
}

\def\postoptimczlargetotalfstminavg{
(13,89110451820)
(17,340765974780)
(23,1544723050020)
(31,6870998027460)
(42,31365900125280)
(57,144406104791580)
(78,692921876793120)
(106,3211741457700960)
(145,15383361932917500)
(197,71210227840402384)
(269,338043541731666560)
(367,1597870444551531776)
(500,7500000007499999232)
}

\def\postoptimczlargetotalfstcomavg{
(13,4285063965)
(17,12530822400)
(23,41985103440)
(31,138557697030)
(42,466853954490)
(57,1583737872435)
(78,5553442635390)
(106,18941193480180)
(145,66321736393875)
(197,225968958404175)
(269,785584668329055)
(367,2721749401048680)
(500,9376999575120000)
}

\def\postoptimczlargetotalfstfitavg{
(13,72635745)
(17,160869810)
(23,394082115)
(31,954604080)
(42,2348224830)
(57,5805551745)
(78,14709589830)
(106,36512405430)
(145,92412420075)
(197,229210848465)
(269,577065024660)
(367,1449141941715)
(500,3624011055000)
}

\def\postoptimczlargetotalsndanaavg{
(13,369614349)
(17,1080863740)
(23,3621484274)
(31,11951488156)
(42,40269140730)
(57,136607524794)
(78,479019955908)
(106,1633799114882)
(145,5720674060330)
(197,19491268306038)
(269,67761703443127)
(367,234768298320984)
(500,808826203043500)
}

\def\postoptimczlargetotalsndminavg{
(13,224349450)
(17,655950457)
(23,2197465586)
(31,7251208078)
(42,24430099176)
(57,82870726116)
(78,290576354094)
(106,991039769740)
(145,3469995899725)
(197,11822632858114)
(269,41101030013584)
(367,142397927558868)
(500,490587390433500)
}

\def\postoptimczlargetotalsndcomavg{
(13,30710836)
(17,80186807)
(23,235798806)
(31,682923924)
(42,2013203472)
(57,5970044397)
(78,18240493596)
(106,54415110706)
(145,166351937155)
(197,496942102568)
(269,1513577092429)
(367,4602794421491)
(500,13948282426000)
}

\def\postoptimczlargetotalsndfitavg{
(13,1219530)
(17,2642582)
(23,6315685)
(31,14931367)
(42,35832090)
(57,86414052)
(78,213429294)
(106,516716610)
(145,1274882775)
(197,3084214664)
(269,7570527525)
(367,18536781714)
(500,45204813000)
}

\def\postoptimczlargetotalfrthanaavg{
(13,415330591)
(17,1062095179)
(23,3059421866)
(31,8696767807)
(42,25174709784)
(57,73308427434)
(78,219747033246)
(106,642928143252)
(145,1924774745115)
(197,5626310476429)
(269,16738821644323)
(367,49650442829454)
(500,146550369083500)
}

\def\postoptimczlargetotalfrthminavg{
(13,325541151)
(17,831815559)
(23,2394211726)
(31,6801292182)
(42,19676342154)
(57,57268325619)
(78,171589391376)
(106,501843133292)
(145,1501909478290)
(197,4389032107009)
(269,13054686168188)
(367,38714734624774)
(500,114252230513500)
}

\def\postoptimczlargetotalfrthcomavg{
(13,28736071)
(17,70580464)
(23,193088611)
(31,519082972)
(42,1414610904)
(57,3867409644)
(78,10848862986)
(106,29727929702)
(145,83236829790)
(197,228016318674)
(269,635706871868)
(367,1769865629784)
(500,4913191241000)
}

\def\postoptimczlargetotalfrthfitavg{
(13,269321)
(17,534684)
(23,1156026)
(31,2474017)
(42,5374824)
(57,11723874)
(78,26120796)
(106,57178202)
(145,127295790)
(197,278480579)
(269,617155133)
(367,1364688399)
(500,3006848500)
}

\def\postoptimczlargecnotfstanaavg{
(13,96890785446)
(17,370518862332)
(23,1679596245054)
(31,7470921689226)
(42,34104537704160)
(57,157014584147664)
(78,753422747532624)
(106,3492167299170456)
(145,16726525082255280)
(197,77427786836823776)
(269,367559045416971520)
(367,1737384874237673728)
(500,8154845485377134592)
}

\def\postoptimczlargecnotfstminavg{
(13,35644180728)
(17,136306389912)
(23,617889220008)
(31,2748399210984)
(42,12546360050112)
(57,57762441916632)
(78,277168750717248)
(106,1284696583080384)
(145,6153344773167000)
(197,28484091136160952)
(269,135217416692666624)
(367,639148177820612736)
(500,3000000002999999488)
}

\def\postoptimczlargecnotfstcomavg{
(13,1714025586)
(17,5012328960)
(23,16794041376)
(31,55423078812)
(42,186741581796)
(57,633495148974)
(78,2221377054156)
(106,7576477392072)
(145,26528694557550)
(197,90387583361670)
(269,314233867331622)
(367,1088699760419472)
(500,3750799830048000)
}

\def\postoptimczlargecnotfstfitavg{
(13,29054298)
(17,64347924)
(23,157632846)
(31,381841632)
(42,939289932)
(57,2322220698)
(78,5883835932)
(106,14604962172)
(145,36964968030)
(197,91684339386)
(269,230826009864)
(367,579656776686)
(500,1449604422000)
}

\def\postoptimczlargecnotsndanaavg{
(13,155627082)
(17,455100506)
(23,1524835462)
(31,5032205510)
(42,16955427636)
(57,57518957754)
(78,201692612940)
(106,687915416692)
(145,2408704867370)
(197,8206849812882)
(269,28531243554746)
(367,98849809819014)
(500,340558401281000)
}

\def\postoptimczlargecnotsndminavg{
(13,94462914)
(17,276189650)
(23,925248646)
(31,3053140214)
(42,10286357508)
(57,34892937258)
(78,122347938492)
(106,417279902948)
(145,1461050905010)
(197,4977950676914)
(269,17305696847570)
(367,59957022129702)
(500,206563111761000)
}

\def\postoptimczlargecnotsndcomavg{
(13,12930866)
(17,33762850)
(23,99283686)
(31,287546886)
(42,847664580)
(57,2513702850)
(78,7680207756)
(106,22911625460)
(145,70042920770)
(197,209238779842)
(269,637295617610)
(367,1938018703438)
(500,5872961021000)
}

\def\postoptimczlargecnotsndfitavg{
(13,513474)
(17,1112650)
(23,2659214)
(31,6286862)
(42,15087156)
(57,36384810)
(78,89864892)
(106,217564788)
(145,536792610)
(197,1298616514)
(269,3187590282)
(367,7804960374)
(500,19033605000)
}

\def\postoptimczlargecnotfrthanaavg{
(13,174876026)
(17,447197954)
(23,1288177606)
(31,3661796942)
(42,10599877764)
(57,30866706234)
(78,92525066556)
(106,270706586532)
(145,810431471490)
(197,2368972831994)
(269,7047924902618)
(367,20905449612054)
(500,61705418561000)
}

\def\postoptimczlargecnotfrthminavg{
(13,137069946)
(17,350238114)
(23,1008089126)
(31,2863701942)
(42,8284775604)
(57,24112979154)
(78,72248164716)
(106,211302371812)
(145,632382938090)
(197,1848013518554)
(269,5496709965298)
(367,16300940894294)
(500,48106202321000)
}

\def\postoptimczlargecnotfrthcomavg{
(13,12099386)
(17,29718074)
(23,81300446)
(31,218561222)
(42,595625604)
(57,1628382954)
(78,4567942236)
(106,12517022932)
(145,35047086090)
(197,96006870834)
(269,267666051058)
(367,745206580614)
(500,2068712101000)
}

\def\postoptimczlargecnotfrthfitavg{
(13,113386)
(17,225114)
(23,486726)
(31,1041662)
(42,2263044)
(57,4936314)
(78,10998156)
(106,24074932)
(145,53598090)
(197,117254794)
(269,259854538)
(367,574605294)
(500,1266041000)
}

\def\postoptimczlargerzfstanaavg{
(13,64593856964)
(17,247012574888)
(23,1119730830036)
(31,4980614459484)
(42,22736358469440)
(57,104676389431776)
(78,502281831688416)
(106,2328111532780304)
(145,11151016721503520)
(197,51618524557882512)
(269,245039363611314336)
(367,1158256582825115904)
(500,5436563656918089728)
}

\def\postoptimczlargerzfstminavg{
(13,23762787152)
(17,90870926608)
(23,411926146672)
(31,1832266140656)
(42,8364240033408)
(57,38508294611088)
(78,184779167144832)
(106,856464388720256)
(145,4102229848778000)
(197,18989394090773968)
(269,90144944461777744)
(367,426098785213741824)
(500,2000000001999999744)
}

\def\postoptimczlargerzfstcomavg{
(13,1142683724)
(17,3341552640)
(23,11196027584)
(31,36948719208)
(42,124494387864)
(57,422330099316)
(78,1480918036104)
(106,5050984928048)
(145,17685796371700)
(197,60258388907780)
(269,209489244887748)
(367,725799840279648)
(500,2500533220032000)
}

\def\postoptimczlargerzfstfitavg{
(13,19369532)
(17,42898616)
(23,105088564)
(31,254561088)
(42,626193288)
(57,1548147132)
(78,3922557288)
(106,9736641448)
(145,24643312020)
(197,61122892924)
(269,153884006576)
(367,386437851124)
(500,966402948000)
}

\def\postoptimczlargerzsndanaavg{
(13,97266923)
(17,284437812)
(23,953022158)
(31,3145128436)
(42,10597142262)
(57,35949348582)
(78,126057883068)
(106,429947135406)
(145,1505440542070)
(197,5129281133002)
(269,17832027221649)
(367,61781131136792)
(500,212849000800500)
}

\def\postoptimczlargerzsndminavg{
(13,59039318)
(17,172618527)
(23,578280398)
(31,1908212626)
(42,6428973432)
(57,21808085772)
(78,76467461538)
(106,260799939316)
(145,913156815595)
(197,3111219173022)
(269,10816060529664)
(367,37473138830972)
(500,129101944850500)
}

\def\postoptimczlargerzsndcomavg{
(13,8081788)
(17,21101777)
(23,62052298)
(31,179716796)
(42,529790352)
(57,1571064267)
(78,4800129828)
(106,14319765886)
(145,43776825445)
(197,130774237352)
(269,398309760939)
(367,1211261689557)
(500,3670600638000)
}

\def\postoptimczlargerzsndfitavg{
(13,320918)
(17,695402)
(23,1662003)
(31,3929281)
(42,9429462)
(57,22740492)
(78,56165538)
(106,135977966)
(145,335495345)
(197,811635272)
(269,1992243859)
(367,4878100142)
(500,11896003000)
}

\def\postoptimczlargerzfrthanaavg{
(13,109297513)
(17,279498717)
(23,805110998)
(31,2288623081)
(42,6624923592)
(57,19291691382)
(78,57828166578)
(106,169191616556)
(145,506519669645)
(197,1480608019947)
(269,4404953064069)
(367,13065906007442)
(500,38565886600500)
}

\def\postoptimczlargerzfrthminavg{
(13,85668713)
(17,218898817)
(23,630055698)
(31,1789813706)
(42,5177984742)
(57,15070611957)
(78,45155102928)
(106,132063982356)
(145,395239336270)
(197,1155008449047)
(269,3435443728244)
(367,10188088058842)
(500,30066376450500)
}

\def\postoptimczlargerzfrthcomavg{
(13,7562113)
(17,18573792)
(23,50812773)
(31,136600756)
(42,372265992)
(57,1017739332)
(78,2854963878)
(106,7823139306)
(145,21904428770)
(197,60004294222)
(269,167291281844)
(367,465754112792)
(500,1292945063000)
}

\def\postoptimczlargerzfrthfitavg{
(13,70863)
(17,140692)
(23,304198)
(31,651031)
(42,1414392)
(57,3085182)
(78,6873828)
(106,15046806)
(145,33498770)
(197,73284197)
(269,162409019)
(367,359128217)
(500,791275500)
}

\def\postoptimcznormalcnotbestfitavg{
(13,113386)
(16,192672)
(20,340040)
(25,602050)
(32,1081664)
(40,1808080)
(50,3030100)
(63,5166126)
(79,8705958)
(100,15020200)
}

 \def\postoptimctnormaltbestfitavg{
 (                   13 ,              5846221 )
 (                   16 ,             10172704 )
 (                   20 ,             18432690 )
 (                   25 ,             33668686 )
 (                   32 ,             61971256 )
 (                   40 ,            105675143 )
 (                   50 ,            180892328 )
 (                   63 ,            316551594 )
 (                   79 ,            546220371 )
 (                  100 ,            970090617 )
 }

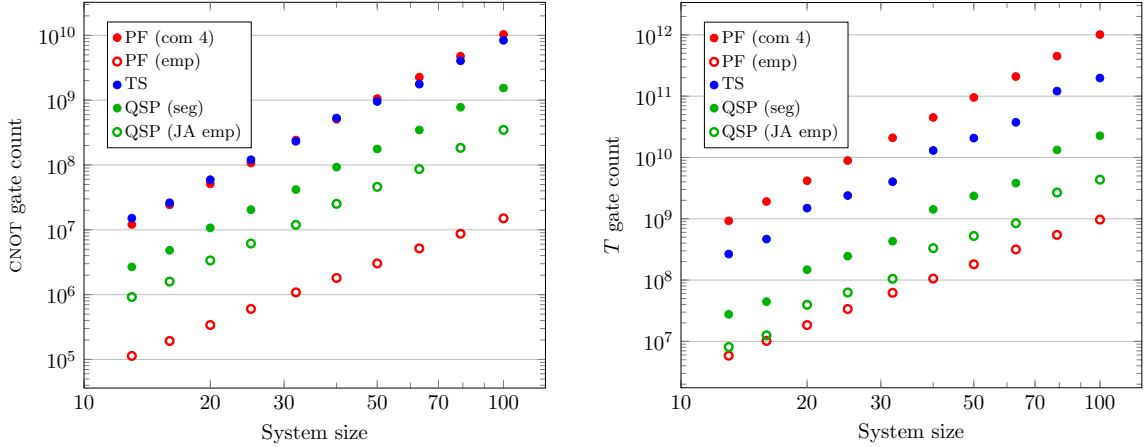
\begin{figure}
  \begin{subfigure}{.5\linewidth}
  \centering
  \resizebox{.9\textwidth}{!}{
  \begin{tikzpicture}
    \begin{loglogaxis}[
      width=10cm,
      ymajorgrids=true,
      legend style={at={(0.05,0.95)},anchor=north west,font=\footnotesize},
      xlabel={System size},
      ylabel={$\CNOT$ gate count},
      xmin=10,
      xtick={10,100},
      xticklabels={10,100},
      extra x ticks={20,30,50,70},
      extra x tick labels={20,30,50,70},
      every axis legend/.append style={nodes={right}}
      ]

      \addplot[only marks, red] coordinates {
        \postoptimcznormalcnotfrthcomavg
      };
      \addlegendentry{\PF~(com 4)}

      \addplot[only marks,red,mark=o,mark options={fill=white,line width=1.25pt}] coordinates {
        \postoptimcznormalcnotbestfitavg
      };
      \addlegendentry{\PF~(emp)}

      \addplot[only marks, blue] coordinates {
        \postoptimcznormalcnotlcuavg
      };
      \addlegendentry{\TS}

      \addplot[only marks, black!30!green] coordinates {
        \postoptimcznormalcnotspsegmentavg
      };
      \addlegendentry{\QSP~(seg)}

      \addplot[only marks,black!30!green,mark=o,mark options={fill=white,line width=1.25pt}] coordinates {
        \postoptimcznormalcnotspjaavg
      };
      \addlegendentry{\QSP~(JA emp)}

    \end{loglogaxis}
  \end{tikzpicture}
  }
 \end{subfigure}
 \begin{subfigure}{.5\linewidth}
  \resizebox{.9\textwidth}{!}{
  \begin{tikzpicture}
    \begin{loglogaxis}[
      width=10cm,
      ymajorgrids=true,
      legend style={at={(0.05,0.95)},anchor=north west,font=\footnotesize},
      xlabel={System size},
      ylabel={$\T$ gate count},
      xmin=10,
      xtick={10,100},
      xticklabels={10,100},
      extra x ticks={20,30,50,70},
      extra x tick labels={20,30,50,70},
      every axis legend/.append style={nodes={right}}
      ]

      \addplot[only marks, red] coordinates {
        \postoptimctnormaltfrthcomavg
      };
      \addlegendentry{\PF~(com 4)}

      \addplot[only marks,red,mark=o,mark options={fill=white,line width=1.25pt}] coordinates {
        \postoptimctnormaltbestfitavg
      };
      \addlegendentry{\PF~(emp)}

      \addplot[only marks, blue] coordinates {
        \postoptimctnormaltlcuavg
      };
      \addlegendentry{\TS}

      \addplot[only marks, black!30!green] coordinates {
        \postoptimctnormaltspsegmentavg
      };
      \addlegendentry{\QSP~(seg)}

      \addplot[only marks,black!30!green,mark=o,mark options={fill=white,line width=1.25pt}] coordinates {
        \postoptimctnormaltspjaavg
      };
      \addlegendentry{\QSP~(JA emp)}

    \end{loglogaxis}
  \end{tikzpicture}
  }
  \end{subfigure}
 \caption{Gate counts for optimized implementations of the \PF\ algorithm (using the fourth-order formula with commutator bound and the better of the fourth- or sixth-order formula with empirical error bound), the \TS\ algorithm, and the \QSP\ algorithm (using the segmented version with analytic error bound and the non-segmented version with empirical Jacobi-Anger error bound) for system sizes between 10 and 100.  Left: $\CNOT$ gates for Clifford+$\Rz$ circuits.  Right: $\T$ gates for Clifford+$\T$ circuits.
 \label{fig:pflcu_gates}}
\end{figure}

\begin{figure}
\centering \resizebox{.5\linewidth}{!}{\begin{tikzpicture}
  \begin{axis}[
    width=10cm,
    grid=major,
    legend style={at={(0.95,0.05)},anchor=south east},
    xlabel={System size},
    ylabel={Qubits},
    every axis legend/.append style={nodes={right}}
    ]
    
    \addplot[only marks, red] coordinates {
      (5,5)
      (10,10)
      (15,15)
      (20,20)
      (25,25)
      (30,30)
      (35,35)
      (40,40)
      (45,45)
      (50,50)
      (55,55)
      (60,60)
      (65,65)
      (70,70)
      (75,75)
      (80,80)
      (85,85)
      (90,90)
      (95,95)
      (100,100)
    };
    \addlegendentry{PF}

    \addplot[only marks, blue] coordinates {
      (5,68)
      (10,94)
      (15,99)
      (20,116)
      (25,133)
      (30,138)
      (35,156)
      (40,161)
      (45,166)
      (50,171)
      (55,176)
      (60,181)
      (65,200)
      (70,205)
      (75,210)
      (80,215)
      (85,235)
      (90,240)
      (95,245)
      (100,250)
    };
    \addlegendentry{TS}
 
    \addplot[only marks, black!30!green] coordinates {
      (5,16)
      (10,23)
      (15,28)
      (20,35)
      (25,40)
      (30,45)
      (35,52)
      (40,57)
      (45,62)
      (50,67)
      (55,72)
      (60,77)
      (65,84)
      (70,89)
      (75,94)
      (80,99)
      (85,104)
      (90,109)
      (95,114)
      (100,119)
      };
      \addlegendentry{QSP}
      \end{axis}
\end{tikzpicture}}
  \caption{Number of qubits used by the \PF, \TS, and \QSP\ algorithms.\label{fig:qubitcounts}}
\end{figure}
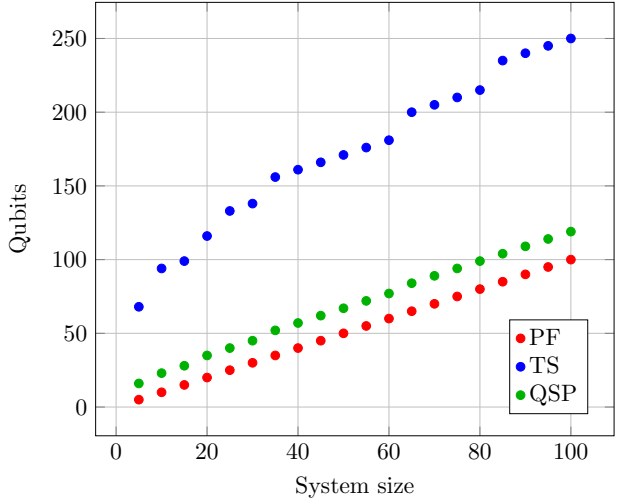

\fig{pflcu_gates} compares gate counts for the \PF\ algorithm (with commutator and empirical error bounds), the \TS\ algorithm, and the \QSP\ algorithm (in both segmented and non-segmented versions).  The \TS\ algorithm uses significantly more qubits than the \QSP\ algorithm (as shown in \fig{qubitcounts}) while also requiring more gates, so the latter is clearly preferred.  In contrast, the \QSP\ algorithm has only slightly greater space requirements than the \PF\ algorithm.

Surprisingly, despite being more involved, the \QSP\ algorithm outperforms the rigorously-bounded \PF\ algorithm even for small system sizes.  In particular, among the rigorously-analyzed algorithms, the segmented \QSP\ algorithm has the best performance, improving over the \PF\ algorithm by about an order of magnitude for $\CNOT$ count and by almost two orders of magnitude for $\T$ count.

Empirical error bounds improve the performance of the \PF\ algorithm by two to three orders of magnitude, making it the preferred approach if rigorous performance guarantees are not required.  For the $\CNOT$ count, the empirical \PF\ algorithm improves over the full \QSP\ algorithm by about an order of magnitude.  The advantage in the $\T$ count is less significant, but still indicates that the \PF\ algorithm is dominant, especially considering its lower qubit count.

Although we expected that higher-order product formulas would not be advantageous for small system sizes, we find that the fourth- and sixth-order formulas had the best performance for our benchmark system with tens to hundreds of qubits, as shown in \fig{pf}.  The fourth-order formula with commutator bound gives the best available \PF\ algorithm with a rigorous performance guarantee.  Using empirical error bounds, the sixth-order formula outperforms the fourth-order formula for systems of about 30 or more qubits, making the former the method of choice for simulations just beyond the reach of classical computers (again, provided a heuristic error bound can be tolerated).  These results suggest that higher-order formulas may be advantageous for other quantum simulations, such as those for quantum chemistry, even though they have not usually been considered \cite{Pou15,RWSWT17}.

\begin{figure}
\centering
  \begin{subfigure}{.32\linewidth}
    \resizebox{.95\textwidth}{!}{\begin{tikzpicture}
  \begin{loglogaxis}[
    width=10cm,
    ymajorgrids=true,
    legend style={at={(0.05,0.95)},anchor=north west,font=\footnotesize},
    xlabel={System size},
    ylabel={Total gate count},
    xmin=10,
    xmax=600,
    xtick={10,100},
    xticklabels={10,100},
    extra x ticks={30,300},
    extra x tick labels={30,300},
    every axis legend/.append style={nodes={right}}
    ]

    \addlegendimage{empty legend}
    \addlegendentry[yshift=0pt]{\hspace{-.25cm}\textbf{Minimized}}
    
    \addplot[only marks, red] coordinates {
      \preoptimczlargetotalfstminavg
    };
    \addlegendentry{First order}
    
    \addplot[only marks, red, mark=triangle*] coordinates {
      \preoptimczlargetotalsndminavg
    };
    \addlegendentry{Second order}
    
    \addplot[only marks, red, fill opacity=.5, mark=square*] coordinates {
      \preoptimczlargetotalfrthminavg
    };
    \addlegendentry{Fourth order}

    \addplot[only marks, red, fill opacity=.5, mark=hexagon*] coordinates {
      \preoptimczlargetotalsxthminavg
    };
    \addlegendentry{Sixth order}

    \addplot[only marks, red, fill opacity=.5, mark=octagon*] coordinates {
      \preoptimczlargetotaleigthminavg
    };        
    \addlegendentry{Eighth order}    
  \end{loglogaxis}
\end{tikzpicture}}
  \end{subfigure}
  \begin{subfigure}{.32\linewidth}
    \resizebox{.95\textwidth}{!}{\begin{tikzpicture}
  \begin{loglogaxis}[
    width=10cm,
    ymajorgrids=true,
    legend style={at={(0.05,0.95)},anchor=north west,font=\footnotesize},
    xlabel={System size},
    ylabel={Total gate count},
    xmin=10,
    xmax=600,
    xtick={10,100},
    xticklabels={10,100},
    extra x ticks={30,300},
    extra x tick labels={30,300},
    every axis legend/.append style={nodes={right}}
    ]

    \addlegendimage{empty legend}
    \addlegendentry[yshift=0pt]{\hspace{-.25cm}\textbf{Commutator}}
    
    \addplot[only marks, red] coordinates {
      \preoptimczlargetotalfstcomavg
    };
    \addlegendentry{First order}

    \addplot[only marks, mark=triangle*, red] coordinates {
      \preoptimczlargetotalsndcomavg
    };
    \addlegendentry{Second order}
    
    \addplot[only marks, fill opacity=.5, red, mark=square*] coordinates {
      \preoptimczlargetotalfrthcomavg
    };
    \addlegendentry{Fourth order}

  \end{loglogaxis}
\end{tikzpicture}}
  \end{subfigure}
  \begin{subfigure}{.32\linewidth}
    \resizebox{.95\textwidth}{!}{\begin{tikzpicture}
  \begin{loglogaxis}[
    width=10cm,
    ymajorgrids=true,
    legend style={at={(0.05,0.95)},anchor=north west,font=\footnotesize},
    xlabel={System size},
    ylabel={Total gate count},
    xmin=10,
    xmax=600,
    xtick={10,100},
    xticklabels={10,100},
    extra x ticks={30,300},
    extra x tick labels={30,300},
    every axis legend/.append style={nodes={right}}
    ]

    \addlegendimage{empty legend}
    \addlegendentry[yshift=0pt]{\hspace{-.25cm}\textbf{Empirical}}
    
    \addplot[only marks, red, mark options={fill=white,line width=1pt}] coordinates {
      \preoptimczlargetotalfstfitavg
    };
    \addlegendentry{First order}
    
    \addplot[only marks, red, mark=triangle*, mark options={fill=white,line width=1pt}] coordinates {
      \preoptimczlargetotalsndfitavg
    };
    \addlegendentry{Second order}
    
    \addplot[only marks, red, fill opacity=0, mark=square*, mark options={fill=white,line width=1pt}] coordinates {
      \preoptimczlargetotalfrthfitavg
    };
    \addlegendentry{Fourth order}

    \addplot[only marks, red, fill opacity=0, draw opacity=.66, mark=hexagon*, mark options={fill=white,line width=1pt}] coordinates {
      \preoptimczlargetotalsxthfitavg
    };
    \addlegendentry{Sixth order}    

    \addplot[only marks, red, fill opacity=0, draw opacity=.5, mark=octagon*, mark options={fill=white,line width=1pt}] coordinates {
      \preoptimczlargetotaleigthfitavg
    };        
    \addlegendentry{Eighth order}    

  \end{loglogaxis}
\end{tikzpicture}}
  \end{subfigure}
\caption{Total gate counts in the Clifford+$\Rz$ basis for product formula algorithms using the minimized (left), commutator (center), and empirical (right) bounds, for system sizes between 13 and 500.}\label{fig:pf}
\end{figure}
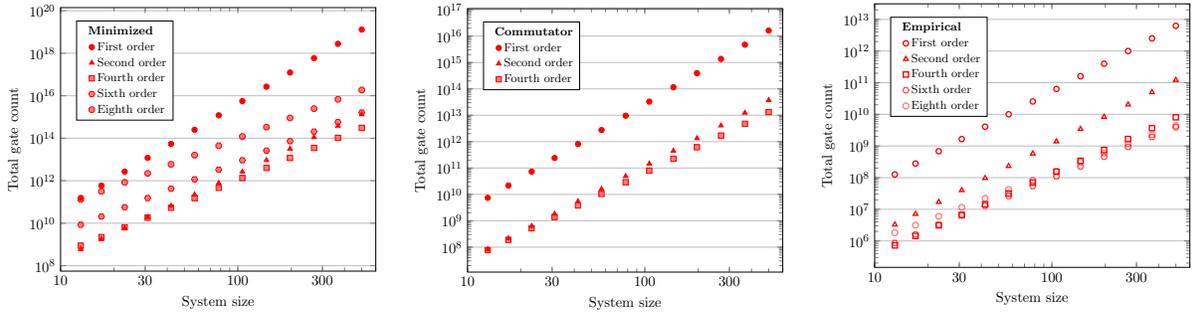

For a system of $50$ qubits---which is presumably close to the limits of direct classical simulation for circuits such as ours \cite{HS17}\footnote{Recent work has demonstrated simulation of 56-qubit computations, but only for circuits of much smaller depth than those considered in our work \cite{Ped17}.}---the \TS\ algorithm uses $171$ qubits and the \QSP\ algorithm uses $67$, whereas the \PF\ algorithm uses only $50$. At this size, the segmented \QSP\ algorithm is the best rigorously-analyzed approach, using about $1.8 \times 10^8$ $\CNOT$ gates (over the set of Clifford+$\Rz$ gates) and $2.4 \times 10^9$ $\T$ gates (over the set of Clifford+$\T$ gates).  Using the empirical error bound, the \PF\ algorithm uses about $3 \times 10^6$ $\CNOT$s and $1.8 \times 10^8$ $\T$s (over Clifford+$\Rz$ and Clifford+$\T$, respectively).

For comparison, previous estimates of gate counts for factoring, discrete logarithms, and quantum chemistry simulations are significantly larger.
First consider factoring a 1024-bit number, which is beyond the factorization of RSA-768 that was achieved classically in 2009 \cite{Kle10}. The best implementation we are aware of uses 3132 qubits and about $5.7 \times 10^9$ $\T$ gates (when realized over the set of Clifford+$\T$ gates) \cite{Kut06}.\footnote{Reference \cite{Kut06} does not give explicit resource counts; we estimate them as described in \app{related}.}
Quantum algorithms for classically-hard instances of the elliptic curve discrete logarithm problem have roughly comparable cost \cite{RNSL17}.
For quantum chemistry, a natural target for a problem just beyond the reach of classical computing is a simulation of FeMoco, the primary cofactor of nitrogenase, an enzyme that catalyzes the nitrogen fixation process.  Even for a fairly low-precision simulation, and using non-rigorous estimates of the product formula error, the best implementation we are aware of uses $111$ qubits and $1.0 \times 10^{14}$ $\T$ gates \cite{RWSWT17}.  Thus it appears that simulation of spin systems is indeed a significantly easier task for near-term practical quantum computation.

For a more detailed discussion of the results, see \app{results}.

\section{Discussion}
\label{sec:discussion}

The work described in this paper represents progress toward the first genuine application of quantum computers, solving a practical problem that is beyond the reach of classical computation.  Of course, our results only represent upper bounds. While we attempted to optimize the implementation wherever possible, it is likely that further improvements can be found, and it is conceivable that another algorithm (or computational task) may offer better performance.  Our work establishes a concrete set of benchmarks that we hope can be improved through future studies.

Demonstrations of digital quantum simulation performed to date \cite{BCC06,Lan11,Bar15} have been limited in scope, primarily using the first-order formula (except for some limited applications of the second-order formula \cite{BCC06,Lan11}).  Our results show that higher-order formulas are useful even for simulations of small systems.  In the near term, it could be fruitful to demonstrate the utility of these formulas experimentally. Even relatively small experiments might be able to probe the validity of our empirical error bounds.

We have also identified some avenues for future improvement of quantum simulation algorithms.  We saw that rigorous error bounds for product formulas are very loose, even with our newly-developed commutator bound.  This motivates attempting to prove stronger rigorous error bounds for product formulas.  Also, the difficulty of computing the angles needed to perform the \QSP\ algorithm prevents us from taking full advantage of the algorithm in practice, so it would be useful to develop a more efficient classical procedure for specifying these angles.

Further reduction of the gate count could be especially significant if it led to a simulation with sufficiently few gates to be performed without invoking fault tolerance.  With our current estimate of millions of $\CNOT$ gates for a superclassical simulation, this is likely out of reach at present.  However, further improvement could obviate the need for error correction in a system with highly accurate gates, making an early demonstration of superclassical simulation more accessible.

Finally, our work has considered an idealized system, and we hope future work will take more realistic constraints into account.  Practical devices will come with architectural constraints, may employ different basic operations than those considered here, may allow parallelization of gates, and will likely require fault tolerance.  By incorporating such features, we hope the work begun here will lead to a blueprint for the first practical quantum computation.

\section*{Acknowledgments}

We thank Zhexuan Gong, Alexey Gorshkov, Guang Hao Low, Chris Monroe, and Nathan Wiebe for helpful discussions.

This work was supported in part by the Army Research Office (MURI award W911NF-16-1-0349), the Canadian Institute for Advanced Research, and the National Science Foundation (grant 1526380).

This material was partially based on work supported by the National Science Foundation during DM's assignment at the Foundation. Any opinion, finding, and conclusions or recommendations expressed in this material are those of the authors and do not necessarily reflect the views of the National Science Foundation.

\clearpage

\appendix

\renewcommand{\contentsname}{Appendices}
\etocdepthtag.toc{mtappendix}
\etocsettagdepth{mtchapter}{none}
\etocsettagdepth{mtappendix}{subsubsection}
\tableofcontents‎‎

\section{Related work}
\label{app:related}

In this appendix, we discuss related work on quantum algorithms for simulating physics and resource estimates for practical quantum computation.

The idea of simulating physics with quantum computers was suggested by Feynman \cite{Fey82} and others in the 1980s. Since that time, there has been significant progress on the development of algorithms for the basic problem of simulating Hamiltonian dynamics \cite{Llo96,AT03,Chi04,BACS05,Chi10,CK10,BC12,BCCKS13,BCCKS14,BCK15,LC16,LC17}. In addition, many authors have developed methods for using quantum computers to simulate the behavior of specific types of systems. Quantum computers presumably offer an advantage for simulations of any system in which quantum mechanics plays a significant role, including fermionic lattice models \cite{SOGKL02a,SOGKL02b,RWS12,WHWCNT15}, quantum chemistry \cite{KWPYA11,HWBT15,WBCHT14,Pou15,Bab16,Bab15,RWSWT17,BMWAW15}, and quantum field theories \cite{JLP12,JLP14a,JLP14b}. Much of this work has focused on asymptotic analysis, although some work on concrete resource estimates for quantum chemistry simulation is mentioned below.

There has been considerable previous research on compiling quantum algorithms into explicit circuits, including work on the IARPA Quantum Computer Science (QCS) program \cite{QCS}.  The QCS program focused on synthesizing a selection of quantum algorithms
and developing optimized implementations over certain models of physical machines that were chosen to describe realistic devices.  To the best of our knowledge, none of these studies aimed to construct a minimal example of a super-classical quantum computation, and typical resource counts were high.

In addition, many researchers have developed optimized implementations of Shor's integer factoring algorithm \cite{BCDP96,Bea03,FDH04,Kun05,Kut06,TK06,PG14,HRS16} and estimated resource requirements for
simulating quantum chemistry \cite{WBCHT14,HWBT15,RWSWT17}. As discussed in \sec{results}, our results suggest that quantum simulation of spin models will be accessible with dramatically fewer computational resources, making this a promising candidate for an early demonstration of practical quantum computation.

The best implementation of the factoring algorithm that we are aware of appears in \cite{Kut06}. That paper does not give explicit resource counts, so we estimate them as follows. The implementation of Shor's algorithm described there uses $4n^3 + O(n^2 \log(n))$ gates and $3n + 6\log(n) + O(1)$ qubits \cite[page 10]{Kut06}.  However, this count assumes we can perform arbitrary 2-qubit gates.  The dominant contribution to the gate count comes from modular exponentiation (in particular, the cost of the QFT is small), which relies mainly on so-called psuedo-Toffoli gates.
Each of these gates is realized with two controlled-$H$ gates and one controlled-$Z$ gate; the former needs two $T$ gates and the latter needs none, so the total number of $T$ gates is approximately $\frac{16}{3}n^3$.

Note that it is possible to factor an $n$-bit number using only about $2n$ qubits \cite{Bea03,HRS16}, but at the expense of a significantly higher gate count. Similarly, gate counts for quantum algorithms the elliptic curve discrete logarithm problem can also be reduced at the cost of using significantly more qubits \cite{CMMP09}.

\section{Self-thermalization in spin models}
\label{app:targetsystem}

In this appendix we motivate our choice of a candidate spin system to simulate with an early universal quantum computer, elaborating on the brief discussion in \sec{targetsystem}.

Recently, there has been considerable interest among the condensed matter community in understanding the equilibration of closed quantum systems \cite{NH15}.  While equilibration is normally viewed as a consequence of coupling a system to a bath, a large but isolated quantum system may effectively display features of thermalization through its own unitary dynamics---or it may fail to self-thermalize through a phenomenon known as \emph{many-body localization}.  Despite intense study, the details of phase transitions between localized and thermal phases in various model systems remain poorly understood.  A major challenge is the difficulty of simulating quantum systems with classical computers, which has restricted numerical investigations to systems with fewer than 25 qubits.

To produce concrete benchmarks, we focus on a specific simulation task with the potential for practical applications. As discussed in \sec{targetsystem}, we consider a one-dimensional nearest-neighbor Heisenberg model with a random magnetic field in the $z$ direction, described by the Hamiltonian \eq{heisenberg}.  This Hamiltonian has been studied as a model of self-thermalization and many-body localization \cite{PH10,NH15,LLA15}, where the most extensive numerical study we are aware of was restricted to at most 22 spins \cite{LLA15}.  The nature of the phase transition as a function of $h$ remains unclear \cite[Section 6.2]{NH15} and could be illuminated by larger-scale simulations.

Classical numerics typically involve performing exact diagonalization to evaluate properties of the full spectrum of the Hamiltonian (which is inefficient, and even more costly than simulating the dynamics classically).  To apply efficient quantum simulation, we must choose an initial state and final measurement so that the outcomes are informative (say, so they provide useful information about the phase diagram).  While more limited than a calculation of the full spectrum, a quantum simulation performed on a universal quantum computer should be able to efficiently extract any information that can be probed by an experiment (in addition to other quantities that could be hard to extract experimentally).  Experimental probes of thermalization and many-body localization typically involve preparing a product initial state and performing a final standard-basis measurement to see how the outcomes evolve in time---whether they retain memory of the initial state or approach a thermal distribution.  Specific observables include the Hamming distance from an initial classical configuration \cite{Smi16} and the imbalance between occupation of even and odd sites \cite{Sch15}. Another proposal considers performing a spin-echo sequence (which involves simulating evolution backward in time, something that is easy to accomplish using digital simulation) \cite{Ser14}.

One might also perform similar tests on a (randomly sampled) eigenstate of the Hamiltonian, using phase estimation \cite{BN14}.  However, this introduces additional overhead.  Since we expect that it will be informative (if not decisive) to study a global quench, we focus here on the cost of simulating dynamics.

While we have considered the Heisenberg model \eq{heisenberg} for concreteness, there is also interest in exploring many-body localization and thermalization (among other phenomena) for diverse spin systems including Ising \cite{KBP14,Smi16} and XXZ \cite{Ser14} models.  Our basic approach to estimating simulation resource requirements would apply to these models essentially unchanged and we expect that similar conclusions about the relative performance of different quantum simulation algorithms would hold.

\section{Simulation algorithms}
\label{app:algorithms}

\newcommand{\iso}{\mathcal{V}}
\newcommand{\tseg}{t_{\mathrm{seg}}}
\newcommand{\trem}{t_{\mathrm{rem}}}

As discussed in \sec{implementations}, there are many different quantum algorithms for Hamiltonian simulation, including algorithms based on product formulas \cite{Llo96,Chi04,BACS05}, discrete-time quantum walks \cite{Chi10,BC12,BCK15,LC17}, and linear combinations of unitaries \cite{CW12,BCCKS13,BCCKS14,BCK15,LC17,LC16}. The asymptotic performance of these algorithms is summarized in \tab{algsummary}, both as a function of the simulation time $t$ and allowed error $\epsilon$ (as commonly emphasized in the literature on sparse Hamiltonian simulation) and as a function of the number of qubits $n$ (for the target system described in \sec{targetsystem}).

In this paper, we focus on the algorithms based on product formulas (\PF) \cite{BACS05}, on direct implementation of the Taylor series (\TS) \cite{BCCKS14}, and on Low and Chuang's recent quantum signal processing (\QSP) approach \cite{LC16}.  We expect these three algorithms to be among the best for simulations of spin systems.  Considering the dependence on system size, the algorithm based on quantum walk \cite{BC12} has worse asymptotic performance than the three methods we consider, and requires a costly computation of trigonometric functions performed in quantum superposition. The algorithm based on the fractional-query model \cite{BCCKS13} is conceptually similar to the one that implements the Taylor series \cite{BCCKS14}, but the latter has a streamlined form, resulting in improved asymptotic complexity as a function of $n$. For sparse Hamiltonians, the algorithm based on a linear combination of quantum walk steps \cite{BCK15} has improved query complexity as a function of sparsity over \cite{BCCKS14}, but this improvement is not relevant to local Hamiltonians, and again the gate complexity is higher.

As discussed in \sec{targetsystem}, we focus on a specific type of Hamiltonian that can be described as a sum of $L$ terms, of the form
\begin{equation}\label{eq:hamsum}
  H = \sum_{\ell=1}^{L} \alpha_\ell H_\ell,
\end{equation}
where each $H_\ell$ is a tensor product of Pauli operators acting nontrivially on at most two out of $n$ qubits. We also assume that all the coefficients are positive real numbers $\alpha_\ell>0$, since if $\alpha_\ell < 0$, we can absorb the negative sign into the definition of $H_\ell$. In each of the algorithms described below, our goal is to implement an approximation of the unitary operation $\exp(-iHt)$ for a given time $t \in \R$, up to an allowed error at most $\epsilon > 0$. Although for some applications we might want to simulate evolutions for negative times (e.g., to implement the spin-echo sequence proposed in \cite{Ser14}), we can absorb this into the sign of the Hamiltonian, so we can consider $t>0$ without loss of generality.

In the remainder of this appendix, we review the three algorithms we consider, emphasizing aspects relevant to our implementations.  We consider the \PF\ algorithm in \sec{algpf}, the \TS\ algorithm in \sec{alglcu}, and the \QSP\ algorithm in \sec{algqsp}.  We briefly discuss the system size dependence of other quantum simulation algorithms in \app{algother}.

\subsection{Product formula algorithm}
\label{sec:algpf}

The product formula (\PF) approach approximates the exponential of a sum of operators by a product of exponentials of the individual operators.  The first-order formula
\begin{equation}
\norm*{\exp\biggl(-it\sum_{j=1}^{L}\alpha_jH_{j}\biggr)-\biggl[\prod_{j=1}^{L}\exp\biggl(-\frac{it}{r}\alpha_jH_{j}\biggr)\biggr]^{r}}=O\biggl(\frac{(L \Lambda t)^2}{r}\biggr),
\end{equation}
where $\Lambda := \max_j \alpha_j$, underlies the first explicit quantum simulation algorithm \cite{Llo96}.  The complexity of quantum simulation can be improved \cite{Chi04,BACS05} using the $(2k)$th-order Suzuki formula $S_{2k}$, defined recursively by \cite{Suz91}
\begin{align}
S_{2}(\lambda)&:=\prod_{j=1}^{L}\exp(\alpha_jH_{j}\lambda/2)\prod_{j=L}^{1}\exp(\alpha_jH_{j}\lambda/2)\\
S_{2k}(\lambda)&:=[S_{2k-2}(p_{k}\lambda)]^{2}S_{2k-2}((1-4p_{k})\lambda)[S_{2k-2}(p_{k}\lambda)]^{2},
\label{eq:recursive_def}
\end{align}
with $p_{k}:=1/(4-4^{1/(2k-1)})$ for $k>1$.
Using this improved formula, we have \cite{BACS05}
\begin{equation}
\norm*{\exp\biggl(-it\sum_{j=1}^{L}\alpha_jH_{j}\biggr)- \biggl[S_{2k}\biggl(-\frac{it}{r}\biggr)\biggr]^r}=O\biggl(\frac{(L \Lambda t)^{2k+1}}{r^{2k}}\biggr).
\end{equation}

For the $n$-qubit system described in \sec{targetsystem}, we have $L = O(n)$ and $\Lambda = O(1)$.  Using the first-order formula to simulate that system for time $t=n$ within a fixed error $\epsilon$, it suffices to take $r=O(n^4)$. The circuit for each segment has size $O(n)$, giving an overall gate complexity of $O(n^5)$. Similarly, the complexity of simulation using the $(2k)$th-order formula is $O(n^{3+1/k})$.

The main challenge in making these algorithms concrete is to choose an explicit $r$ that ensures some desired upper bound on the error.  We derive such bounds in \app{pf}.  We consider four bounds, which we call the \emph{analytic bound}, the \emph{minimized bound}, the \emph{commutator bound}, and the \emph{empirical bound}.  The analytic and minimized bounds involve only minor improvements over standard approaches \cite{BACS05}, whereas the commutator bound can be significantly tighter, and the empirical bound attempts to capture the true performance.

The commutator bound, derived in \sec{pfcom}, exploits the fact that the error is smaller if many pairs of terms in the Hamiltonian commute.
This bound not only improves the overall gate count of the \PF\ algorithm, but also tightens the asymptotic complexity with respect to the system size $n$. Specifically, for the ($2k$)th-order formula, this bound gives an asymptotic gate count $O(n^{3+2/(2k+1)})$, improving over the above $O(n^{3+1/k})$.

To evaluate the commutator bound, we must compute the number of $(2k+1)$-tuples of terms from the Hamiltonian satisfying certain commutation relations. A brute-force approach to this counting problem takes time $O(n^{2k+1})$, which can be prohibitive for large $n$, even for modest $k$. However, for certain Hamiltonians (including our target system), we show how to use the combinatorial structure of the bound to compute it in closed form. We evaluate the bound for the first-, second- and fourth-order product formulas, and study its asymptotic behavior for higher-order formulas. We compare the commutator bound to other \PF\ error bounds in \sec{results-pf}.

While the analytic, minimized, and commutator bounds all provide rigorous performance guarantees, even the strongest of these---the commutator bound---is likely to be loose in practical applications. To overcome this, we consider a non-rigorous bound based on extrapolating the actual error seen in small instances using classical simulation.  The details of how we compute this empirical bound are described in \sec{empiricalbounds}.

Although we focus on improving the gate count for small instances, note that the commutator and empirical bounds actually improve the asymptotic performance of the \PF\ algorithm as a function of system size.  We discuss the nature of this improvement in \sec{results-pf}.

\subsection{Taylor series algorithm}
\label{sec:alglcu}

We now summarize the Taylor series (\TS) algorithm \cite{BCCKS14}.
This algorithm directly implements the (truncated) Taylor series of the evolution operator for a carefully-chosen constant time, and repeats that procedure until the entire evolution time has been simulated.

Denote the Taylor series for the evolution up to time $t$, truncated at order $K$, by
\begin{equation}
	\tilde{U}(t):=\sum_{k=0}^{K}\frac{(-itH)^k}{k!}.
  \label{eq:utwiddle}
\end{equation}
For sufficiently large $K$, the operator $\tilde U(t)$ is a good approximation of $\exp(-iHt)$.
Using \eq{hamsum}, we can rewrite $\tilde U(t)$ as a linear combination of unitaries, namely
\begin{align}\label{eq:defnofm}
	\tilde{U}(t)&=\sum_{k=0}^{K}\frac{(-itH)^k}{k!} \\
	&=\sum_{k=0}^{K}\sum_{\ell_1,\ldots,\ell_k=1}^{L}\frac{t^k}{k!}\alpha_{\ell_1}\cdots\alpha_{\ell_k}(-i)^k H_{\ell_1}\cdots H_{\ell_k}\\
	&=\sum_{j=0}^{\numselect-1}\beta_{j}\tilde{V}_{j}, \label{eq:tslcu}
\end{align}
for $\numselect=\sum_{k=0}^{K}L^k$, where the $\tilde V_j$ are products of the form $(-i)^k H_{\ell_1} \cdots H_{\ell_k}$, and the $\beta_j$ are the corresponding coefficients such that $\beta_j>0$. (For notational convenience, we omit the dependence of $\beta_j$ on $t$.) The \TS\ algorithm effectively implements this linear combination.

To do this, for any $t>0$ we define an isometry $\iso(t)\colon \mathcal{H}\rightarrow \C^{\numselect}\otimes\mathcal{H}$ as follows. Let
$B$ be a unitary operator on $\C^\numselect$ satisfying
\begin{align}
  B|0\rangle = \frac{1}{\sqrt{s}}\sum_{j=0}^{\numselect-1}\sqrt{\beta_{j}}|j\rangle,
\end{align}
where
\begin{align}
  s
  &:= \sum_{j=0}^{\numselect-1} \beta_j
  = \sum_{k=0}^K \frac{(t(\alpha_1 + \cdots + \alpha_L))^k}{k!}.
\end{align}
We also define
\begin{align}
	W &:= (B^{\dagger} \otimes I) \select(\tilde V) (B \otimes I)
\end{align}
where
\begin{align}
	\select(\tilde V) &:= \sum_{j=0}^{\numselect-1}|j\rangle\langle j|\otimes \tilde{V}_{j}. \label{eq:defnofselectV}
\end{align}
It is easy to see that $(\langle 0| \otimes I)W(|0\rangle \otimes I) \propto \tilde{U}(t)$. More precisely, we have
\begin{equation}
	W|0\rangle|\psi\rangle=\frac{1}{s}|0\rangle \tilde{U}(t)|\psi\rangle+\sqrt{1-\frac{1}{s^2}}|\Phi\rangle
\end{equation}
for $|\psi\rangle\in\mathcal{H}$ and some $|\Phi\rangle$ whose ancillary state is supported in the subspace orthogonal to $|0\rangle$. To boost the amplitude to perform the desired operation, we consider the isometry
\begin{align}
  \iso(t) := -WRW^{\dagger}RW(|0\rangle\otimes I)
  \label{eq:iso}
\end{align}
where
\begin{align}
	R&:=(I-2|0\rangle\langle 0|)\otimes I.
\end{align}

To ensure that $\iso(t)$ implements evolution according to $H$ nearly deterministically, we consider evolution for time
\begin{align}
  \tseg := \frac{\ln 2}{\alpha_1 + \cdots + \alpha_L}.
\label{eq:tseg}
\end{align}
The overall evolution is realized as a sequence of $r:=\ceil{t/\tseg}$ segments, where the first $r-1$ segments each evolve the state for time $\tseg$ and the final segment evolves the state for time $\trem := t-(r-1)\tseg$. It can be shown that there is a choice of $K$ with
\begin{equation}
	K=O\biggl(\frac{\log\frac{(\alpha_1+\cdots+\alpha_L)\tseg}{\epsilon}}{\log\log\frac{(\alpha_1+\cdots+\alpha_L)\tseg}{\epsilon}}\biggr)
\end{equation}
such that
\begin{align}
	\norm{(\langle 0|\otimes I) \iso(\tseg) - \exp(-i \tseg H)} = O(\epsilon/r).
\end{align}

The evolution for the remaining time $\trem$ can be performed by rotating an ancilla qubit to artificially increase the duration of the segment. Specifically, provided $s<2$, we can introduce an ancilla register in the state $|0\rangle$ and apply the rotation
\begin{equation}
	|0\rangle\mapsto \frac{s}{2}|0\rangle+\sqrt{1-\frac{s^2}{4}}|1\rangle.
  \label{eq:isorot}
\end{equation}
Together with the the unitary operator $W$, this implements the transformation
\begin{equation}
	|0\rangle|\psi\rangle\mapsto \frac{1}{2}|00\rangle \tilde{U}(t)|\psi\rangle+\frac{\sqrt{3}}{2}|\Phi'\rangle
\end{equation}
for some normalized state $|\Phi'\rangle$ with $(\langle 00| \otimes I)|\Phi'\rangle = 0$.  Then we can proceed as before, but with $s=2$.  Indeed, we also perform a similar rotation for the initial $r-1$ segments to ensure that they have $s=2$ instead of a slightly smaller value.
By an abuse of notation, we incorporate this rotation into the definition of $\iso$, so that $\iso(\tseg)$ and $\iso(\trem)$ are the corresponding evolution operators for the first $r-1$ segments and the final segment.

The asymptotic gate complexity of this simulation algorithm is \cite{BCCKS14}
\begin{equation}
	O\biggl(TL(n+\log L)\frac{\log(T/\epsilon)}{\log\log(T/\epsilon)}\biggr),
\end{equation}
where $T=(\alpha_1+\cdots+\alpha_L)t$. For the Hamiltonian studied in this paper, we have $T=O(n^2)$ and $L=O(n)$, which gives a bound of $O(n^4\frac{\log n}{\log\log n})$. However, the analysis in \cite{BCCKS14} assumes only that each term in the Hamiltonian is a tensor product of Pauli gates, possibly acting nontrivially on all of the qubits. Since our Hamiltonian is 2-local, we have a tighter bound of $O(n^3\frac{\log^2 n}{\log\log n})$ for the asymptotic gate count, as indicated in \tab{algsummary}.

To concretely implement the \TS\ algorithm, we must replace the asymptotic statements above with an explicit error analysis. We present the details of such an analysis in \app{lcu}. In particular, \sec{lcuerror} shows how to choose $K$ to ensure that the overall error is at most some allowed $\epsilon$.  In addition, unlike the \PF\ algorithm, the \TS\ algorithm requires a measurement on the ancilla register and succeeds probabilistically. We discuss in \sec{tsfailure} how this fact can be taken into account to make a fair comparison between the \PF\ and \TS\ algorithms.

Another crucial aspect is the implementation of the $\select(\tilde V)$ gate.  In \sec{controlencoding}, we discuss how we encode the control register for this operation.  Then, in \sec{selectV}, we present a novel method to implement $\select(V)$ gates by walking on a binary tree, improving the gate complexity to $O(\numselect)$ from the naive complexity of $O(\numselect\log\numselect)$.

It is also natural to ask whether an empirical error bound could be established for the \TS\ algorithm. Unfortunately, the number of ancilla qubits used by the algorithm (as shown in \fig{qubitcounts}) makes direct classical simulation infeasible even for very small sizes.  An alternative is to only give an empirical bound on the remainder of the Taylor series.  However, as we discuss in \sec{lcuerror}, such a bound does not give a significant improvement.

\subsection{Quantum signal processing algorithm}
\label{sec:algqsp}

Now we summarize the quantum signal processing (\QSP) algorithm of Low and Chuang \cite{LC17,LC16}.
Again consider a Hamiltonian of the form \eq{hamsum}.
We have
\begin{equation}
	H = \alpha (\langle G| \otimes I)\select(H)(|G\rangle \otimes I),
\end{equation}
where the first register holds an $L$-dimensional ancilla,
\begin{align}
	\select(H)
  &:=\sum_{\ell=1}^{L}|\ell\rangle\langle \ell| \otimes H_\ell,
\end{align}
(similarly to \eq{defnofselectV}), and
\begin{align}
	|G\rangle
  &:=\frac{1}{\sqrt{\alpha}}\sum_{\ell=1}^{L}\sqrt{\alpha_\ell}|\ell\rangle, \qquad
  \alpha
  :=\sum_{\ell=1}^{L} \alpha_\ell.
\end{align}
Low and Chuang's concept of \emph{qubitization} \cite{LC16} relates the spectral decompositions of $H/\alpha$ and
\begin{equation}
	-iQ := -i\bigl((2|G\rangle\langle G|-I)\otimes I\bigr)\select(H).
\end{equation}
Specifically, let $H/\alpha=\sum_{\lambda 
}\lambda|\lambda\rangle\langle\lambda|$ be a spectral decomposition of $H/\alpha$,
where the sum runs over all eigenvalues of $H/\alpha$. By the triangle inequality, $\norm{H} \le \alpha$, i.e., $|\lambda|\leq1$.
For each eigenvalue $\lambda \in (-1,1)$ of $H/\alpha$, the qubitization theorem (Theorem 2 of \cite{LC16}) asserts that $-iQ$ has two corresponding eigenvalues
\begin{equation}
	\lambda_\pm
  =\mp\sqrt{1-\lambda^2}-i\lambda
  =\mp e^{\pm i\arcsin\lambda}
\end{equation}
with eigenvectors $|\lambda_\pm\rangle=(|G_\lambda\rangle \pm i|G_\lambda^\bot\rangle)/\sqrt{2}$, where
\begin{equation}
	|G_\lambda\rangle:=|G\rangle\otimes|\lambda\rangle
\end{equation}
and
\begin{equation}
	|G_\lambda^\bot\rangle
  :=\frac{\lambda |G_\lambda\rangle - \select(H)|G_\lambda\rangle}
         {\sqrt{1-\lambda^2}}.
\end{equation}
(Eigenvalues $\lambda_\pm
=\pm 1$ correspond to degenerate cases that can be analyzed separately.)

The signal processing algorithm applies a sequence of operations called \emph{phased iterates}. We introduce an additional ancilla qubit and define the operator
\begin{equation}
\label{eq:qspiter}
	V_\phi := (e^{-i\phi\sigma^z/2}\otimes I)\big(|+\rangle\langle+| \otimes I+|-\rangle\langle-|\otimes (-iQ)\big)(e^{i\phi\sigma^z/2}\otimes I)
\end{equation}
for any $\phi \in \R$. Let
$-iQ=\sum_{\nu}e^{i\theta_\nu}|\nu\rangle\langle \nu|$
be a spectral decomposition of $-iQ$, where the sum runs over $\nu$ labeling all eigenvectors of $-iQ$.
As described above, each eigenvalue $\lambda \in (-1,1)$ of $H/\alpha$
corresponds to two eigenvalues $e^{i\theta_{\lambda_\pm}}$ of $-iQ$,
where $\theta_{\lambda_+}=\arcsin(\lambda)+\pi$ and $\theta_{\lambda_-}=-\arcsin(\lambda)$. Eigenvalues $\pm1$ of $-iQ$ correspond to degenerate cases that can be handled separately. The remaining eigenspaces cannot be reached during any execution of the quantum signal processing algorithm, so we can neglect them. Then one can show that
\begin{equation}
	V_\phi = \sum_{\nu}e^{i\theta_\nu/2}
	R_\phi(\theta_\nu)
	\otimes |\nu\rangle\langle \nu|
\end{equation}
where
\begin{equation}
	R_\phi(\theta):=e^{-i\theta\sigma^\phi/2},\qquad
	\sigma^\phi:=\cos(\phi)\sigma^x+\sin(\phi)\sigma^y.
\end{equation}
Thus each eigenvalue $e^{i\theta_\nu}$ of $-iQ$ is manifested in $V_\phi$ as an $\mathrm{SU}(2)$ operator $R_\phi(\theta_\nu)$ acting on the ancilla qubit.

For any positive even integer $M$, composing gates with the same rotation amplitude $\theta$ but with varying phases $\phi_1,\ldots,\phi_M$ yields
\begin{align}
R_{\phi_M}(\theta)\cdots R_{\phi_1}(\theta)
&= A(\cos\tfrac{\theta}{2})\,I+iB(\cos\tfrac{\theta}{2})\,\sigma^z+i\cos\tfrac{\theta}{2}C(\sin\tfrac{\theta}{2})\,\sigma^x+i\cos\tfrac{\theta}{2}D(\sin\tfrac{\theta}{2})\,\sigma^y
\end{align}
for polynomials $A,B,C,D$ of degree at most $M$. In the \QSP\ algorithm, only the polynomials $A$ and $C$ are used. This component can be extracted by preparing the ancilla qubit in the state $|+\rangle$, composing the primitive rotations, and postselecting the ancilla qubit in the state $|+\rangle$.
The unwanted factor $e^{i\theta_\nu/2}$ may be canceled by alternating between $V_\phi$ and $V_{\phi+\pi}^\dagger$, giving
\begin{equation}
  V := V_{\phi_M+\pi}^\dagger V_{\phi_{M-1}}\cdots V_{\phi_2+\pi}^\dagger V_{\phi_1}.
\label{eq:pisequence}
\end{equation}

To perform Hamiltonian simulation by qubitization, we implement a function of $\theta$ that converts the eigenvalue $e^{i \theta_{\lambda_\pm}}$ of $-iQ$ to the desired phase $e^{-i\lambda t}$, namely the Jacobi-Anger expansion
\begin{equation}
	e^{i\sin(\theta)t}=\sum_{k=-\infty}^{\infty}J_k(t)e^{ik\theta}.
\end{equation}
To do this with a polynomial of degree $M$, we truncate the expansion at order $q:=\frac{M}{2}+1$, giving an approximation with error at most \cite{BCK15}
\begin{equation}
	2\sum_{k=q}^{\infty}|J_k(t)|\leq\frac{4t^q}{2^q q!}.
	\label{eq:jaerror}
\end{equation}
The angles $\phi_1,\ldots,\phi_M$ that realize this expansion can be computed by an efficient classical procedure (see Lemmas 1 and 3 of \cite{LYC16}).

\comment{
\color{red}
Let us describe the quantum part of the Low-Chuang algorithm in a top-down manner. The algorithm takes a quantum state $|\psi\rangle\in \mathcal{H}_s$ as input. Two registers are appended and initialized in the states
\begin{equation*}
\begin{aligned}
|+\rangle&=\frac{|0\rangle+|1\rangle}{\sqrt{2}}\in \mathcal{H}_b,\\
|G\rangle&=\sum_{\ell=1}^{L}\sqrt{\frac{\alpha_\ell}{\alpha}}|\ell\rangle\in \mathcal{H}_a.
\end{aligned}
\end{equation*}
Then, a joint operation $V$ is performed on the state $|+\rangle|G\rangle|\psi\rangle\in \mathcal{H}_b \otimes \mathcal{H}_a \otimes \mathcal{H}_s$. If the ancillas are postselected in the state $|+\rangle|G\rangle$, the resulting quantum state satisfies
\begin{equation*}
\norm{\langle+|\langle G|\otimes I\cdot V\cdot |+\rangle|G\rangle|\psi\rangle
	-\exp(-itH)|\psi\rangle}\leq \epsilon.
\end{equation*}
In practice, such a postselection may be achieved by measuring the ancilla qubits in an appropriate basis.

The operation $V$ consists of a sequence of $q$ operations
\begin{equation*}
V = \begin{cases}
  V_{\varphi_q+\pi}^\dagger V_{\varphi_{q-1}}\cdots V_{\varphi_2+\pi}^\dagger V_{\varphi_1} & \text{$q$ even} \\
  V_{\varphi_q} V_{\varphi_{q-1}+\pi}^\dagger \cdots V_{\varphi_2+\pi}^\dagger V_{\varphi_1} & \text{$q$ odd}.
  \end{cases}
\end{equation*}
Here the parameters $\varphi_1,\ldots,\varphi_q \in \R$ are phases produced by classical precomputation.

Each $V_\varphi$ is defined as
\begin{equation*}
V_\varphi=(e^{-i\frac{\varphi}{2}\sigma^z}\otimes I_{as}) V_0  (e^{i\frac{\varphi}{2}\sigma^z}\otimes I_{as}),
\end{equation*}
where
\begin{equation*}
V_0=|+\rangle\langle+|_b\otimes I_{as}+|-\rangle\langle-|_b \otimes (-iW)
\end{equation*}
for some operator $W$ acting on $\mathcal{H}_{as}$.

Finally, we choose $W$ to be
\begin{equation*}
W=(2|G\rangle\langle G|_a-I_a) \otimes I_s \cdot U,
\end{equation*}
where
\begin{equation*}
U=\sum_{\ell=1}^{L}|l\rangle\langle l| \otimes H_\ell
\end{equation*}
is a controlled unitary operator acting on $\mathcal{H}_{a}\otimes\mathcal{H}_s$.

An explicit error analysis \ys{need reference here} shows that it suffices to choose $q$ such that
\begin{equation*}
\frac{4(\alpha t)^{\frac{q}{2}+1}}{2^{\frac{q}{2}+1}(\frac{q}{2}+1)!}\leq \frac{\epsilon}{8}.
\end{equation*}
} 

To simulate evolution of an initial state $|\psi\rangle$, the \QSP\ algorithm applies $V$ to the state $|+\rangle \otimes |G\rangle \otimes |\psi\rangle$ and postselects the ancilla register of the output on the state $|+\rangle \otimes |G\rangle$. This procedure simulates the desired evolution with error at most
\begin{equation}
	\label{eq:qsp_errbd}
	8\frac{4(\alpha t)^{q}}{2^{q} \, q!}\leq \epsilon.
\end{equation}
To achieve simulation for time $t$ and error $\epsilon$, the \QSP\ algorithm uses
\begin{equation}
M = O\biggl(\alpha t+\frac{\log(1/\epsilon)}{\log\log(1/\epsilon)}\biggr)
\end{equation}
phased iterates \cite{LC17}. For each phased iterate, the dominant part is the $\select(H)$ subroutine, which is straightforward to implement with $O(n \log n)$ elementary gates. Overall, we see that the asymptotic gate count in terms of the system size is $O(n^3 \log n)$.  (Note that by using our improved $\select(\cdot)$ implementation described in \sec{selectV}, the asymptotic complexity is reduced to $O(n^3)$.)

Operationally, post-selection of the ancilla is achieved by measurement. If the outcome is $|+\rangle\otimes|G\rangle$, then a state close to $e^{-iHt}|\psi\rangle$ is produced; otherwise, the algorithm fails. In \sec{lc_failure}, we compute the success probability of the \QSP\ algorithm and discuss how it can be fairly compared with the deterministic \PF\ algorithm.

When implementing the \QSP\ algorithm, we eliminate unnecessary gates wherever possible to reduce the gate count. In particular, note that in place of \eq{qspiter}, Low and Chuang define the phased iterate as
\begin{equation}
  V_\phi' :=
(e^{-i\phi\sigma^z/2}\otimes I)\Bigl(|+\rangle\langle+| \otimes I+|-\rangle\langle-|\otimes \bigl(Z_{-\pi/2}(-iQ)Z_{\pi/2}\bigr)\Bigr)(e^{i\phi\sigma^z/2}\otimes I),
\label{eq:lciter}
\end{equation}
where $Z_{\varphi}:=(1+e^{-i\varphi})|G\rangle\langle G|-I$ is a partial reflection about $|G\rangle$. Our modified definition \eq{qspiter} saves two partial reflections in each phased iterate---saving $O(n^2 \log n)$ gates overall---but has the same behavior.  This and other optimizations of the implementation are detailed in \sec{lc_opt}.

The \QSP\ algorithm requires more substantial classical preprocessing than the \PF\ and \TS\ approaches. Computing the angles $\phi_1,\ldots,\phi_M$ requires finding the roots of a polynomial of degree $2M$, and these roots must be determined to high precision.  Thus we were unable to compute the parameters of the algorithm explicitly except in very small instances.

As discussed in \sec{qsp}, we address this issue by considering a segmented version of the algorithm. We discuss this approach further in \sec{lc_phase_seg}. Here we briefly consider how the segmented implementation impacts the asymptotic performance of the \QSP\ algorithm.

Suppose we fix a positive even integer $M$, the maximal number of phased iterates for which the angles $\phi_1,\ldots,\phi_M$ can practically be computed.
As shown in \sec{lc_phase_seg}, it suffices to use $r=O(t (\tfrac{t}{\epsilon})^{2/M})$ segments to ensure overall error at most $\epsilon$.
In the instance of Hamiltonian simulation considered in this paper, with $t=n$ and $\alpha=O(n)$, we have $r=O(n^{2+4/M})$ segments. Within each segment, the number of phased iterates is $M$, which is independent of the system size. The circuit size of each phased iterate is $O(n)$ using the improved $\select(V)$ implementation described in \sec{selectV}. Thus the segmented algorithm has gate complexity $O(n^{3+4/M})$.  In our implementation of the segmented algorithm, we use $M=28$, so the exponent is about $3.14$.

We also consider empirical bounds for the \QSP\ algorithm. Specifically, we find an improved empirical estimate of the truncation error of the Jacobi-Anger expansion. This partial empirical bound leads to a small reduction in the gate count, as discussed further in \sec{lc_bound}. As with the \TS\ algorithm, the need for ancilla qubits in the \QSP\ algorithm makes it difficult to establish a comprehensive empirical bound by performing a full simulation on a classical computer. Fixing the target error $\epsilon=10^{-3}$, we investigate the empirical performance of \QSP\ for systems of size $5$ to $9$. Our preliminary data suggest that the gate count is not significantly improved even with such an empirical bound, so we do not consider such a bound in our study.

\section{System-size dependence for other simulation algorithms}
\label{app:algother}

In this appendix, we discuss the asymptotic gate complexity of the Hamiltonian simulation algorithms we chose not to implement.

For a local Hamiltonian, the gate complexity of the algorithm based on fractional queries is \cite{BCCKS13}
\begin{equation}
O\biggl(\tilde{\tau}  \frac{\log(\tilde{\tau}/\epsilon)}{\log\log(\tilde{\tau}/\epsilon)}(\log(\tilde\tau/\epsilon)+n)\biggr)
\end{equation}
where $\tilde{\tau}=L\norm{H}_{\max}t$, with $\norm{H}_{\max}$ denoting the largest entry of $H$ in absolute value (note that this expression slightly tightens the one given in the main statement of Theorem 1.1 of \cite{BCCKS13}; see the end of the proof of that theorem for details).  We have $L=\Theta(n)$ and $\norm{H}_{\max}=\Theta(n)$, so $\tilde\tau = O(n^3)$, and the gate complexity is $O(n^4\frac{\log n}{\log\log n})$.

Similarly, the gate complexity of the algorithm based on a linear combination of quantum walk steps is \cite[Theorem 1]{BCK15}
\begin{equation}
O\biggl(\tau \bigl(n+\log^{5/2}(\tau/\epsilon)\bigr)\frac{\log(\tau/\epsilon)}{\log\log(\tau/\epsilon)}\biggr)
\end{equation}
where $\tau=d\norm{H}_{\max}t$ with $d$ denoting the sparsity of $H$. Since $d = \Theta(n)$, we find a gate complexity of $O(n^4\frac{\log n}{\log\log n})$.

We now turn to the simulation algorithm based on quantum walk \cite[Theorem 1]{BC12}.  Previous work on this approach has not focused on the gate complexity, so we evaluate it here.  The algorithm proceeds as follows. First we perform phase estimation on the quantum walk with one bit of precision. We then apply $O(d\norm{H}_{\max}t)$ steps of a lazy quantum walk. Finally, we invoke phase estimation again with $O(d\norm{H}_{\max}t)$ applications of the walk operator and use this estimate to correct the phase, further improving the accuracy.

The above procedure uses $O(d\norm{H}_{\max}t)=O(n^3)$ quantum walk steps, which dominates the cost of the Fourier transform in the phase estimation procedure (since the estimated phase has $\log(d\norm{H}_{\max}t) = O(\log n)$ bits, the Fourier transform has complexity $O(\log n \log\log n)$) and the coherent computation of the sine function to correct the phase (with complexity $O(M(\log n)\log\log n) = \poly(\log n)$, where $M(k)$ is the complexity of multiplying $k$-bit numbers \cite{BB87}). Thus, to understand the asymptotic gate complexity of the algorithm, it suffices to study the gate complexity of performing a quantum walk step.

The quantum walk operator is the product of a swap (with complexity $O(n)$) and the reflection
\begin{equation}
	\sum_{x=1}^{2^n}|x\rangle\langle x|\otimes (2|\phi_x\rangle\langle\phi_x|-I),
\end{equation}
where
\begin{equation}
	|\phi_x\rangle:=\frac{1}{\sqrt{2^n}}\sum_{y=1}^{2^n}|y\rangle\bigg[\sqrt{\frac{H_{xy}^*}{\mathcal{X}}}|0\rangle+\sqrt{1-\frac{|H_{xy}|}{\mathcal{X}}}|1\rangle\bigg]
\end{equation}
with
\begin{equation}
	\mathcal{X}:=\frac{1}{dt}\max\bigl\{\lceil \norm{H}t/\sqrt{\epsilon}\rceil,\lceil \norm{H}_{\max}dt\rceil\bigr\}.
\end{equation}
To implement the walk operator, it suffices to give procedures for preparing $|\phi_x\rangle$ and reflecting about $|0\rangle$. The reflection operator can be implemented by performing an $X$ gate on each qubit, applying a controlled-$Z$ gate with $n-1$ control qubits, and again performing an $X$ gate on each qubit, for a total cost of $O(n)$.  Thus it remains to understand the cost of preparing $|\phi_x\rangle$.

The algorithm uses oracles $O_F$ and $O_H$ acting as
\begin{equation}
	O_F|x,\ell\rangle=|x,f(x,\ell)\rangle
  \label{eq:OFdef}
\end{equation}
and
\begin{equation}
	O_H|x,y,z\rangle=|x,y,z\oplus H_{xy}\rangle,
\end{equation}
where $f(x,\ell)$ is the column index of the $\ell$th nonzero element
in row $x$. Reference \cite{BC12} observes that $|\phi_x\rangle$ can be prepared with cost
\begin{equation}
	O( n+\text{cost}(O_F)+\text{cost}(O_H) ).
\end{equation}
Therefore, it suffices to separately understand the cost of implementing $O_F$ and $O_H$ for our Hamiltonian.

Recall that the Hamiltonian takes the form
\begin{equation}
H = \sum_{j=1}^n ( \vec\sigma_j \cdot \vec\sigma_{j+1} + h_j \sigma_j^z).
\end{equation}
For any $j \in \{1,\ldots,n\}$, the term $\vec \sigma_j \cdot \vec \sigma_{j+1}$ acts on the $j$th and $(j+1)$st qubits according to the matrix
\begin{equation}
\begin{bmatrix}
	1 & 0 & 0 & 0\\
	0 & -1 & 2 & 0\\
	0 & 2 & -1 & 0\\
	0 & 0 & 0 & 1
\end{bmatrix}.
\end{equation}
Thus the nonzero elements of row $x$ of $H$, aside from the diagonal, correspond to the ways of swapping two adjacent bits in the binary representation of $x$.  There are at most $n+1$ such nonzero elements, so we can suppose that $\ell \in \{1,\ldots,n+1\}$ in \eq{OFdef}, where (say) $\ell=n+1$ corresponds to the diagonal element.

To implement $O_F$, we begin by making a copy of the first register using $O(n)$ CNOT gates, performing
\begin{equation}
	|x,\ell\rangle\mapsto|x,x,\ell\rangle.
\end{equation}
Then, conditioned on the value of $\ell \in \{1,\ldots,n\}$, we swap the $\ell$th and $(\ell+1)$st bits of the second register, giving
\begin{equation}
	|x,x,\ell\rangle\mapsto|x,f(x,\ell),\ell\rangle.
\end{equation}
This can be done with complexity $O(n \log n)$, since it is a $\select(\cdot)$ gate where the selection register has $n+1$ possible states and the target gates act only on two qubits.
To uncompute the third register, note that there is a classical algorithm with running time $O(n)$ that computes $\ell$ from $x$ and $f(x,\ell)$: we just compare the two strings and detect which pair of bits are swapped. This procedure can be made reversible by standard techniques. Therefore, we can erase $|\ell\rangle$ and thereby implement $O_F$ with gate complexity $O(n \log n)$.

To implement $O_H$, we use a classical algorithm that, given the row and column indices $(x,y)$, outputs the corresponding matrix element $H_{xy}$.
The $(x,x)$ diagonal element is easy to compute from the binary representation of $x$: every adjacent pair of bits gives a contribution of $+1$ if the bits agree and $-1$ if they disagree, and the $j$th bit gives an additional contribution of $h_j$ if the bit is $0$ and $-h_j$ if the bit is $1$. For $x \ne y$, the off-diagonal element $H_{xy}$ is $2$ if $x$ and $y$ differ by swapping some adjacent pair of distinct bits, and is $0$ otherwise.  Since all these calculations can be performed in time $O(n)$, the complexity of implementing $O_H$ is $O(n)$.

Altogether, we see that the gate complexity of the quantum walk simulation is $O(n^4 \log n)$, as shown in \tab{algsummary}.

In fact, our improved implementation of $\select(\cdot)$ gates in \lem{sV2} shows that $O_F$ can be implemented in time $O(n)$.  With this improvement, the quantum walk simulation can be performed slightly faster, in time $O(n^4)$.

\section{Circuit synthesis and optimization}
\label{app:circuit}

We implemented quantum simulation algorithms in the Quipper programming language
\cite{GLRSV13}. Quipper is a circuit description language equipped
with many high-level circuit combinators (e.g., circuit iteration and
circuit reversal) that allow for a concise specification of complex
quantum circuits. Quipper also supports a hierarchical circuit
structure that allows for the efficient manipulation of the very large
quantum circuits considered here. We made full use of these features. The Quipper source code for our implementations, together with sample output circuits and optimized versions thereof, are available in a public repository \cite{SourceCode}.

\subsection{Overview of algorithm implementations}

\subsubsection{Product formulas}

We implement simulation algorithms using product formulas of order $1$, $2$, $4$, $6$, and $8$.  For each order, we implement various time-slicing methods as discussed in \sec{algpf} and \app{pf}. Specifically, we implement the analytic, minimized, and empirical bounds for all orders.  For first-, second-, and fourth-order algorithms, we also implement the commutator bound. Overall, we consider $18$ different types of product formula algorithm.

The product formula algorithms are the simplest of the simulation algorithms we consider.  To simulate the evolution for a short time, these algorithms simply exponentiate each term of the Hamiltonian (which in our case is always proportional to a 1- or 2-qubit Pauli operator) in a carefully-chosen sequence.  Simulation for a longer time is then obtained by iterating this sequence.  The duration of the individual evolutions and the total number of repetitions is determined using an appropriate error bound.  The iteration is easily realized using Quipper's built-in iteration combinator.

\subsubsection{Taylor series}

The \TS\ algorithm is more involved than the \PF\ algorithm, using more complicated subroutines. Our implementation is based on a mixed unary-binary encoding of the control register, as explained in \sec{controlencoding}. The implementation consists of several subroutines, including two distinct state preparation subroutines, the $\select(V)$ procedure discussed in \sec{selectV}, and the reflection about $\ket{0}$. The first state preparation subroutine maps $|0\rangle$ into the state $\sum_{l=1}^{L}\sqrt{\alpha_l}|l\rangle$, where $\{\alpha_l\}_{l=1}^{L}$ are the coefficients of terms in the target Hamiltonian.  This is accomplished through the generic state preparation algorithm described in \cite{Shende2006}.  The second state preparation subroutine generates a state proportional to $\sum_{k=0}^{K}\sqrt{t^k/k!}|1^k0^{K-k}\rangle$ starting with the basis state $|0\rangle$, for which the generic method \cite{Shende2006} is suboptimal.  Instead, we use the state preparation procedure described in \cite{BCCKS14}, applying a rotation on the first qubit, followed by rotations on qubits $k=2$ to $K$ controlled by the qubit $k{-}1$.  We develop our own implementation of the $\select(V)$ operation, as described in \sec{selectV}.  The reflection about $\ket{0}$ is implemented as a multiply-controlled $Z$ gate, following the construction in \cite{M16} that uses ancillas in the state $\ket{0}$.

\subsubsection{Quantum signal processing}

We implement two versions of the \QSP\ algorithm. First, we consider the original version of the algorithm as described in \sec{algqsp}. In choosing the parameters for this implementation, we use the empirical estimate of the remainder of the Jacobi-Anger expansion described in \sec{lc_bound}. Unfortunately, our implementation of the classical computation of the rotation angles \cite{LYC16} is only able to handle very small system sizes. Instead, to generate gate counts, we use randomly-selected angles as placeholder values. Thus the circuits constructed by our implementation have the correct structure (and in particular, give the correct gate count over the Clifford+$R_z$ basis, and a good approximation over the Clifford+$T$ basis), but do not actually implement the desired unitary.

We also implement a segmented version of the \QSP\ algorithm as discussed near the end of \sec{algqsp} and detailed in \sec{lc_phase_seg}.  In this case, we invoke a rigorous error bound, and we are able to correctly compute all parameters of the algorithm.  However, the rotation angle computation is involved and uses high-precision arithmetic.  For this reason, we compute these parameters off-line using Mathematica and store the results in a look-up table that is accessed by the Quipper program to construct the quantum circuit.  Thus our Quipper implementation can only produce circuits for the system sizes for which these values have been precomputed. The Mathematica scripts are available as part of our implementation \cite{SourceCode} so that the interested reader can compute the required parameters and add them to the look-up tables if additional system sizes are of interest.

\subsection{Gate counts and synthesis}
\label{sec:synthesis}

We express all algorithms using Clifford gates (including $\CNOT$) and single-qubit $z$-rotations by arbitrary angles.  We use Quipper's standard gate-counting feature to produce the first set of resource estimates (in the Clifford+$R_z$ basis).  To obtain our second set of estimates (in the Clifford+$T$ basis), we must approximate $z$-rotations by Clifford+$T$ circuits.  These approximations are obtained using the optimal algorithm of Ross and Selinger \cite{RS16} (which builds upon the exact synthesis algorithm of Kliuchnikov, Maslov, and Mosca \cite{kmm12}).

Note that a better approach to decomposing $\Rz$ gates into Clifford+$T$ circuits might be to rely on the repeat-until-success (RUS) strategy of \cite{brs14}.  This reduces the $T$-count by a factor of about $2.5$ on average, at the cost of using additional resources in the form of measurements, classical feedback, and additional $\CNOT$ gates.  One may combine RUS decomposition with Campbell's unitary mixing approach \cite{c16} to further reduce the $T$-count by a factor of about $2$.  Finally, rather than using only the $T=\Rz(\pi/4)$ gate, one may also distill $\Rz(\pi/6)$ with comparable or better efficiency as the $T$ gate \cite{BK05}.  Using $\Rz(\pi/6)$ along with the $T$ gate in the RUS circuits \cite{brs14} together with unitary mixing \cite{c16}, we expect an improvement in the resource count for a fault-tolerant implementation by a factor of $5$ or more. Alternatively, one might employ the gate distillation techniques of \cite{DP15}, which could also significantly reduce the cost of implementing fault-tolerant gates. We leave a detailed investigation of such possible improvements as a subject for future work.

When producing Clifford+$T$ gate counts for the non-segmented \QSP\ algorithm, we synthesize rotations that are incorrect, since we are unable to compute the exact rotation angles.  Since the number of Clifford+$T$ gates required to synthesize a given rotation depends on its angle, the resulting gate counts are not, strictly speaking, precise.  Nevertheless, we expect the produced gate counts to accurately represent the true counts.  Since the rotation angles are randomly chosen, their cost is that of a typical angle (roughly $3\log (1/\tau)$ where $\tau$ is the approximation precision). Furthermore, because we synthesize many rotations, occurrences of over/underestimation of the specific gate counts average out over the entire circuit by the law of large numbers.  The cost of the approximation therefore depends primarily on the precision---which, in turn, depends on the number of rotations in the overall circuit---but not on the individual rotation angles.

To approximate a given Hamiltonian $e^{-iHt}$ to precision $\epsilon$ with a Clifford+$T$ circuit, we divide the allowed error $\epsilon$ between gate synthesis and simulation algorithm errors.  In our implementation, we allocate $\epsilon/2$ to the simulation error and $\epsilon/2$ to the gate synthesis error.  The latter is then divided by the total number of $z$-rotations appearing in the circuit.  While anecdotal evidence suggests that such an even division of $\epsilon$ between simulation and synthesis is a good choice, we leave it as an avenue for future research to determine whether a more subtle partition should be preferred.

\subsection{Optimization}
\label{sec:circuit-opt}

We also employ a circuit optimizer \cite{Optimizer} that uses a variety of techniques to reduce gate counts in an automated way.  The optimizer uses various circuit equivalences to induce quantum gate commutations, mergers, and cancellations.

Many of the subcircuits found in the \TS\ and \QSP\ algorithms are based on Toffoli gates.  In our unoptimized circuits, we simply replace each Toffoli gate with an optimal Clifford+$T$ implementation \cite{ar:ammr}.  However, there are many ways to write the Toffoli gate as a Clifford+$T$ circuit (e.g., some gates in an implementation may commute, controls can be interchanged, and the circuit can be reversed and/or complex conjugated since it is real and self-inverse).  Carefully choosing when to use which decomposition can enable additional gate cancellations.  For this reason, we outsource Toffoli gate decomposition to the optimizer.

Once the circuits are expressed over the Clifford+$\Rz$ basis, our optimizer first performs a sequence of rewrites to reduce the number of Hadamard gates. The resulting circuit is more amenable to further optimization since it typically contains larger chunks that can be expressed using so-called phase polynomials \cite{ar:ammr,Optimizer}.  Then we identify pairs of gates that can be canceled or merged (allowing the gates in each pair to be separated by a subcircuit through which one of the gates commutes, according to a fixed set of commutation rules).  Finally, we use the phase polynomial representation over $\{\CNOT,\Rz\}$ to further lower the $\Rz$ count of the circuit.  This representation can be used to identify $\Rz$ gates that are applied to the same linear Boolean function of the input, and thereby merge them, even if they originally correspond to distant gates.
Merging such rotations can enable additional gate commutations, leading to more optimizations.  We repeat the entire sequence of optimization procedures until no more gate count reduction is achieved.

Our optimizer comes with ``light'' and ``heavy'' options \cite{Optimizer}.  In general, the heavy version finds more reductions, but is more computationally intensive.  We used the light version of the optimizer to obtain the results reported in this paper since we do not expect the heavy version to change our conclusions qualitatively.

\subsection{Correctness}

We carry out a number of tests to verify correctness of our circuit-level implementations.  We simulate entire \PF\ circuits for systems with up to $15$ qubits and check for proximity to the ideal evolution operator by evaluating the spectral norm distance.  We also simulate the entire segmented \QSP\ circuit for a system of size $5$ and compare the outcome of the circuit to the ideal operator in the same sense.  Unfortunately, the number of ancillas makes full simulation for larger instances of the \TS\ algorithm prohibitively expensive.  Thus, in lieu of full simulation, we test individual subroutines for correctness: we verify the reflection and state preparation subroutines for systems of up to $15$ qubits, and test various instances of the $\select(V)$ circuit, including its controlled versions, for systems of size $5$.  Finally, we test the output of the optimizer by simulating and comparing circuits before and after optimization \cite{Optimizer}.

\section{Product formula implementation details}
\label{app:pf}

\newcommand{\support}{\operatorname{\mathsf{supp}}}
\newcommand{\interval}{\operatorname{\mathsf{ival}}}
\newcommand{\numterms}{\mu}   
\newcommand{\numcycles}{\nu}  
\newcommand{\tuplesize}{\tau} 
\newcommand{\range}[1]{\left\llbracket #1 \right\rrbracket}
\newcommand{\boundary}{\operatorname{\mathsf{bndry}}}
\newcommand{\infix}{\operatorname{\mathsf{infix}}}
\newcommand{\prefix}{\operatorname{\mathsf{prefix}}}
\newcommand{\suffix}{\operatorname{\mathsf{suffix}}}

The main issue involved in implementing product formula (\PF) algorithms---as introduced in \sec{algpf}---is to establish error bounds that determine how finely to split the evolution so that some target error is achieved. In this section, we present explicit error analysis for PF algorithms that yields effective methods for computing values of $r$ that ensure the error is at most $\epsilon$.

First, we establish error bounds that provide a proven guarantee that the simulated evolution is $\epsilon$-close to the ideal evolution. We present three such bounds, which we call the \emph{analytic}, \emph{minimized}, and \emph{commutator} bounds.  The analytic and minimized bounds, presented in \sec{pfanamin}, follow essentially the same strategy as in previous work \cite{BACS05}.  The commutator bound, presented in \sec{pfcom}, is substantially more involved, using the structure of commutators between terms in the Hamiltonian to improve the result.

Second, in \sec{empiricalbounds}, we present \emph{empirical} error bounds, which are obtained by extrapolating numerical data. Since they ostensibly describe the true performance of \PF\ algorithms, the empirical bounds result in smaller gate counts than the rigorous bounds. However, while the empirical bounds are plausible, they do not provide a guarantee about the actual distance between the simulated evolution and the ideal evolution, unlike with rigorous bounds.

Throughout, we quantify the simulation error with respect to the spectral norm.
Note that for a unitary process, the diamond norm distance is at most twice the spectral norm distance \cite[Lemma 7]{BCK15}.

\subsection{Analytic and minimized bounds}
\label{sec:pfanamin}

Reference \cite{BACS05} gives an explicit error bound for Suzuki product formulas.  Using similar techniques, we can also give an explicit bound for the first-order case.  We present these bounds here, tightening the analysis wherever possible.

First, recall some useful properties of Taylor expansions. For any $k\in\N$ and any analytic function $f\colon \C \to \C$ with $f(x) = \sum_{j=0}^\infty a_j x^j$, let $\rem_k(f) := \sum_{j=k+1}^\infty a_j x^j$ denote the remainder of the Taylor series expansion of $f$ up to order $k$.

\begin{lemma}
  \label{lem:prodexp}
  If $\lambda\in\C$ and $H_1,\ldots, H_L$ are Hermitian, then
  \begin{equation}
    \norm*{\rem_{k}\Biggl(\prod_{j=1}^L\exp(\lambda H_{j})\Biggr)}
    \leq
    \rem_{k}\Biggl(\exp\Biggl(\sum_{j=1}^L|\lambda|\cdot \norm{H_{j}}\Biggr)\Biggr).
  \end{equation}
\end{lemma}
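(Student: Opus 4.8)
The plan is to compare the operator-valued power series $\prod_{j=1}^{L}\exp(\lambda H_{j})$ with the scalar power series $\exp\bigl(\sum_{j=1}^L |\lambda|\cdot\norm{H_j}\bigr)$ coefficient by coefficient in powers of $\lambda$, and then pass to norms using the triangle inequality and submultiplicativity of the spectral norm. First I would record that $\rem_k$ extends verbatim to entire operator-valued functions: if $F(\lambda) = \sum_{p=0}^{\infty} A_p \lambda^p$ with operator coefficients $A_p$, then $\rem_k(F)(\lambda) := \sum_{p=k+1}^{\infty} A_p \lambda^p$, all the series below converging absolutely (in operator norm) since each $\exp(\lambda H_j)$ is entire, which also justifies the rearrangements.

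Next I would expand $\exp(\lambda H_j) = \sum_{m\ge 0} \lambda^m H_j^m/m!$ and multiply out, collecting powers of $\lambda$: the coefficient of $\lambda^p$ in $\prod_{j=1}^{L}\exp(\lambda H_{j})$ is
\begin{equation}
A_p \;=\; \sum_{m_1+\cdots+m_L = p} \frac{H_1^{m_1}\cdots H_L^{m_L}}{m_1!\cdots m_L!},
\end{equation}
so that $\rem_k\bigl(\prod_{j}\exp(\lambda H_{j})\bigr) = \sum_{p=k+1}^{\infty} A_p\lambda^p$. Taking the spectral norm, applying the triangle inequality, and using $\norm{XY}\le\norm{X}\norm{Y}$ gives
\begin{equation}
\norm*{\rem_{k}\Biggl(\prod_{j=1}^L\exp(\lambda H_{j})\Biggr)}
\;\le\;
\sum_{p=k+1}^{\infty} |\lambda|^p \sum_{m_1+\cdots+m_L = p} \frac{\norm{H_1}^{m_1}\cdots\norm{H_L}^{m_L}}{m_1!\cdots m_L!}.
\end{equation}

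Finally I would invoke the multinomial theorem, $\sum_{m_1+\cdots+m_L=p}\frac{\norm{H_1}^{m_1}\cdots\norm{H_L}^{m_L}}{m_1!\cdots m_L!} = \frac{1}{p!}\bigl(\norm{H_1}+\cdots+\norm{H_L}\bigr)^p$, to recognize the right-hand side above as exactly $\sum_{p=k+1}^{\infty}\frac{|\lambda|^p}{p!}\bigl(\sum_{j}\norm{H_j}\bigr)^p$, which is $\rem_k$ of the scalar entire function $x\mapsto\exp\bigl(x\sum_j\norm{H_j}\bigr)$ evaluated at $x=|\lambda|$ — i.e., $\rem_{k}\bigl(\exp(\sum_{j}|\lambda|\cdot\norm{H_j})\bigr)$. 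This chain of inequalities is the claim. I do not expect a genuine obstacle here; the only care needed is notational — being explicit that on the left $\rem_k$ truncates the power series in $\lambda$ while on the right it truncates the series in $|\lambda|$ (scaled by $\sum_j\norm{H_j}$) — and verifying that absolute convergence of all series legitimizes the term-by-term manipulations.
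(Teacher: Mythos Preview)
Your proof is correct and follows essentially the same approach as the paper: expand the product of exponentials as a power series in $\lambda$, take norms term by term via the triangle inequality and submultiplicativity, and recognize the resulting scalar series as the remainder of $\exp\bigl(\sum_j|\lambda|\norm{H_j}\bigr)$. The only cosmetic difference is that the paper keeps the sum indexed by the multi-index $(r_1,\ldots,r_L)$ with $\sum_j r_j\ge k+1$ and then invokes $\prod_j e^{|\lambda|\norm{H_j}}=e^{\sum_j|\lambda|\norm{H_j}}$ for scalars, whereas you group by total degree $p$ and cite the multinomial theorem explicitly---these are the same computation.
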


\begin{proof}
  We have
  \begin{equation}
    \label{eq:ts1}
    \prod_{j=1}^L\exp(\lambda H_j)=
    \prod_{j=1}^L\sum_{r_j=0}^{\infty}\lambda^{r_j}\frac{H_j^{r_j}}{r_j!}=
    \sum_{r_{1},\ldots,r_{L}=0}^{\infty}\prod_{j=1}^L\lambda^{r_j}\frac{H_j^{r_j}}{r_j!},
  \end{equation}
  so
  \begin{equation}
    \label{eq:ts2}
    \rem_{k}\left(\prod_{j=1}^L\exp(\lambda H_j)\right)=
    \sum_{\substack{r_{1},\ldots,r_L=0\\\sum_{l}r_{l}\geq
        k+1}}^{\infty}\prod_{j=1}^L\lambda^{r_j}\frac{H_{j}^{r_{j}}}{r_{j}!}.
  \end{equation}
  Using the triangle inequality and submultiplicativity of the norm, we find
  \begin{equation}
    \norm*{\rem_{k}\Biggl(\prod_{j=1}^L\exp(\lambda H_{j})\Biggr)}
    =\norm*{\sum_{\substack{r_{1},\ldots,r_L=0\\\sum_{l}r_{l}\geq
          k+1}}^{\infty}\prod_{j=1}^L\lambda^{r_j}\frac{H_{j}^{r_j}}{r_j!}}
    \leq\sum_{\substack{r_{1},\ldots,r_L=0\\\sum_{l}r_{l} \geq
        k+1}}^{\infty}\prod_{j=1}^L|\lambda|^{r_j}\frac{\norm{H_j}^{r_j}}{r_j!}.
  \end{equation}
  Finally, similarly as in \eq{ts1} and \eq{ts2}, we have
  \begin{equation}
    \sum_{\substack{r_{1},\ldots,r_L=0\\\sum_{l}r_{l}\geq
        k+1}}^{\infty}\prod_{j=1}^L|\lambda|^{r_j}\frac{\norm{H_{j}}^{r_j}}{r_j!}
    =\rem_{k}\Biggl(\prod_{j=1}^L\exp(|\lambda|\cdot \norm{H_j})\Biggr)
    =\rem_{k}\Biggl(\exp\Biggl(\sum_{j=1}^L|\lambda|\cdot
    \norm{H_{j}}\Biggr)\Biggr),
  \end{equation}
  which completes the proof.
\end{proof}

\begin{lemma}
  \label{lem:prodtail}
  If $\lambda \in \C$, then
  $|\rem_{k}(\exp(\lambda))|\leq
  \frac{|\lambda|^{k+1}}{(k+1)!}\exp(|\lambda|)$.
\end{lemma}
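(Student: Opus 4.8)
The plan is to expand the remainder as a power series, pass to absolute values term-by-term, factor out the leading monomial, and bound the resulting tail by $\exp(|\lambda|)$.

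First I would write $\rem_k(\exp(\lambda)) = \sum_{j=k+1}^{\infty}\frac{\lambda^j}{j!}$ directly from the definition of $\rem_k$ applied to $f(x)=\exp(x)=\sum_{j\ge 0}x^j/j!$. Since this series converges absolutely, the triangle inequality gives
\begin{equation}
  |\rem_k(\exp(\lambda))| \le \sum_{j=k+1}^{\infty}\frac{|\lambda|^j}{j!}.
\end{equation}
Reindexing with $j = k+1+m$ and pulling out the factor $\frac{|\lambda|^{k+1}}{(k+1)!}$, this becomes $\frac{|\lambda|^{k+1}}{(k+1)!}\sum_{m=0}^{\infty}\frac{(k+1)!}{(k+1+m)!}|\lambda|^m$.

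The one point requiring (a trivial) argument is the factorial ratio: $\frac{(k+1)!}{(k+1+m)!} = \prod_{i=1}^{m}\frac{1}{k+1+i} \le \prod_{i=1}^{m}\frac{1}{i} = \frac{1}{m!}$, because $k+1+i \ge i$ for every $i \ge 1$ (using $k \ge 0$, or more generally $k \in \N$). Substituting this bound yields
\begin{equation}
  \sum_{m=0}^{\infty}\frac{(k+1)!}{(k+1+m)!}|\lambda|^m \le \sum_{m=0}^{\infty}\frac{|\lambda|^m}{m!} = \exp(|\lambda|),
\end{equation}
and combining the displays gives exactly $|\rem_k(\exp(\lambda))| \le \frac{|\lambda|^{k+1}}{(k+1)!}\exp(|\lambda|)$, as claimed.

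I do not anticipate any real obstacle here: the statement is the classical Lagrange-type tail bound for the exponential series, and the only subtlety is making sure the factorial-ratio estimate is stated cleanly (and that the convention on $k$, presumably $k\in\N$ as used elsewhere in the section, guarantees $k+1+i\ge i$). If one preferred, an alternative route would be the integral form of the remainder, $\rem_k(\exp(\lambda)) = \frac{\lambda^{k+1}}{k!}\int_0^1 (1-u)^k e^{u\lambda}\,\mathrm{d}u$, and then bound $|e^{u\lambda}| \le e^{|\lambda|}$ and $\int_0^1(1-u)^k\,\mathrm{d}u = \frac{1}{k+1}$; but the series manipulation above is the most elementary and self-contained, so that is the version I would present.
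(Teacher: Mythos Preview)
Your proposal is correct and follows essentially the same approach as the paper: expand the remainder as a tail series, apply the triangle inequality, factor out $|\lambda|^{k+1}/(k+1)!$, bound $(k+1)!/(k+1+m)!\le 1/m!$, and sum to $\exp(|\lambda|)$. The paper presents this in a single chain of inequalities without separately justifying the factorial bound, but the argument is identical.
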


\begin{proof}
Using the Taylor expansion of the exponential function, we have
  \begin{align}
    |\rem_{k}(\exp(\lambda))|
    &=\Biggl|\sum_{l=k+1}^{\infty}\frac{\lambda^{l}}{l!}\Biggr|
      \leq \sum_{l=k+1}^{\infty}\frac{|\lambda|^{l}}{l!}
      =|\lambda|^{k+1}\sum_{l=k+1}^{\infty}\frac{|\lambda|^{l-(k+1)}}{l!}\\
    &=|\lambda|^{k+1}\sum_{l=0}^{\infty}\frac{|\lambda|^{l}}{(l+(k+1))!}
      \leq|\lambda|^{k+1}\sum_{l=0}^{\infty}\frac{|\lambda|^{l}}{l!}\frac{1}{(k+1)!}\\
    &=\frac{|\lambda|^{k+1}}{(k+1)!}\exp(|\lambda|),
  \end{align}
which completes the proof.
\end{proof}

Now we give an explicit error bound for the first-order product formula.

\begin{proposition}
  \label{prop:firstana}
  Let $r\in \N$ and $t\in\R$. Let $H_1,\ldots,H_L$ be Hermitian operators,
  and $\Lambda := \max_{j}\norm{H_{j}}$. Then
  \begin{equation}
    \norm*{\exp\Biggl(-it\sum_{j=1}^L H_{j}\Biggr)-\Biggl[\prod_{j=1}^L
        \exp\biggl(-\frac{it}{r}H_{j}\biggr)\Biggr]^{r}}
    \leq\frac{(L\Lambda t)^2}{r}\exp\biggl(\frac{L\Lambda |t|}{r}\biggr).
  \end{equation}
\end{proposition}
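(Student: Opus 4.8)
The plan is to reduce the bound to the error of a \emph{single} product-formula step, and then control that step via the Taylor-remainder estimates \lem{prodexp} and \lem{prodtail}, following the strategy used for the Suzuki formulas in \cite{BACS05}. Set $\lambda := -it/r$, $M := \sum_{j=1}^L H_j$, $A := \exp(\lambda M)$, and $B := \prod_{j=1}^L \exp(\lambda H_j)$, so that the left-hand side of the claimed inequality is $\norm{A^r - B^r}$. Since each $H_j$ and $M$ is Hermitian and $\lambda$ is purely imaginary, $A$ and $B$ are unitary, so $\norm{A^m} = \norm{B^m} = 1$ for all $m$. Combining this with the telescoping identity $A^r - B^r = \sum_{m=0}^{r-1} A^m (A-B) B^{r-1-m}$ and submultiplicativity of the spectral norm gives $\norm{A^r - B^r} \le r\,\norm{A-B}$. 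It thus suffices to prove $\norm{A-B} \le \frac{(L\Lambda t)^2}{r^2}\exp\!\bigl(L\Lambda|t|/r\bigr)$.

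To bound $\norm{A-B}$ I would use that the degree-$\le 1$ parts of the Taylor expansions in $\lambda$ of $A$ and $B$ agree: writing $A = I + \lambda M + \rem_1(A)$ and $B = I + \lambda\sum_{j=1}^L H_j + \rem_1(B)$ and using $M = \sum_{j=1}^L H_j$, we get $A - B = \rem_1(A) - \rem_1(B)$, hence $\norm{A-B} \le \norm{\rem_1(A)} + \norm{\rem_1(B)}$. For the second term, \lem{prodexp} gives $\norm{\rem_1(B)} \le \rem_1\!\bigl(\exp(\sum_{j=1}^L |\lambda|\,\norm{H_j})\bigr)$; since $\norm{H_j}\le\Lambda$ and $x \mapsto \rem_1(\exp(x))$ is nondecreasing on $[0,\infty)$, this is at most $\rem_1(\exp(L\Lambda|\lambda|))$, which \lem{prodtail} with $k=1$ bounds by $\frac{(L\Lambda|\lambda|)^2}{2}\exp(L\Lambda|\lambda|)$. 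For the first term, apply \lem{prodexp} in the trivial case of a single factor $\exp(\lambda M)$, obtaining $\norm{\rem_1(A)} \le \rem_1(\exp(|\lambda|\,\norm{M}))$; using $\norm{M} \le \sum_{j}\norm{H_j} \le L\Lambda$ and \lem{prodtail} again yields the same bound $\frac{(L\Lambda|\lambda|)^2}{2}\exp(L\Lambda|\lambda|)$.

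Adding the two contributions gives $\norm{A-B} \le (L\Lambda|\lambda|)^2\exp(L\Lambda|\lambda|)$; substituting $|\lambda| = |t|/r$ and multiplying by $r$ produces exactly the claimed bound $\frac{(L\Lambda t)^2}{r}\exp\!\bigl(L\Lambda|t|/r\bigr)$.

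I do not expect a genuine obstacle: the argument is routine once the two lemmas are in place. The only delicate point is the bookkeeping of constants—the two remainder estimates each carry a factor $\tfrac12$, and these must recombine to the stated leading constant $1$ (not $2$), which forces one to use the uniform bound $\norm{H_j}\le\Lambda$ together with the exact count of $L$ terms and to bound $\norm{M}$ by $L\Lambda$ without introducing extra slack. One should also confirm that $\rem_k$ of the operator-valued analytic function $\lambda\mapsto\prod_j\exp(\lambda H_j)$ is understood with $\lambda$ as the expansion variable, matching the convention in the statement of \lem{prodexp}, so that lemma applies verbatim to $\rem_1(B)$.
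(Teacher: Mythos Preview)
Your proposal is correct and essentially identical to the paper's proof: both reduce to a single step via the telescoping bound $\norm{A^r-B^r}\le r\,\norm{A-B}$, observe that $A-B=\rem_1(A)-\rem_1(B)$, and then apply \lem{prodexp} and \lem{prodtail} to each remainder to obtain $(L\Lambda|\lambda|)^2\exp(L\Lambda|\lambda|)$. The only cosmetic difference is that the paper first bounds the single step with $\lambda=-it$ and then substitutes $t\mapsto t/r$, whereas you set $\lambda=-it/r$ from the outset.
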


\begin{proof}
  Let $\lambda = -it$. We start by considering
  $\norm{\exp(\lambda\sum_{j=1}^L H_{j})-\prod_{j=1}^L\exp(\lambda H_{j})}$. Using \lem{prodexp} and \lem{prodtail}, we find
  \begin{align}
    \norm*{\exp\biggl(\lambda\sum_{j=1}^L H_{j}\biggr)-\prod_{j=1}^L\exp\left(\lambda H_{j}\right)}
    &= \norm*{\rem_1 \biggl( \exp\biggl(\lambda\sum_{j=1}^L H_{j}\biggr)-\prod_{j=1}^L \exp(\lambda H_{j}) \biggr)} \\
    & \leq \norm*{\rem_1 \biggl( \exp(\lambda\sum_{j=1}^L H_{j}) \biggr)}
      +\norm*{\rem_1 \Biggl(\prod_{j=1}^L \exp(\lambda H_{j}) \Biggr)}\\
    &\leq \rem_1\biggl(\exp\biggl(|\lambda|\cdot\biggl|\sum_{j=1}^L H_{j}\biggr|\biggl)\biggr)
      +\rem_1 \biggl(\exp\biggl(\sum_{j=1}^L|\lambda|\cdot \norm{H_{j}}\biggr)\biggr)\\
    &\leq 2\rem_{1}(\exp(|\lambda|L\Lambda))\\
    &\leq (|\lambda|L\Lambda)^2 \exp(|\lambda|L\Lambda).
  \end{align}
  We then divide the evolution into $r$ segments and apply the above
  inequality to each segment, giving
  \begin{align}
    \norm*{\exp\left(-it\sum_{j=1}^LH_{j}\right)-
    \left[\prod_{j=1}^L\exp\left(-\frac{it}{r}H_{j}\right)\right]^{r}}
    &\leq r \norm*{\exp\left(-\frac{it}{r}\sum_{j=1}^LH_{j}\right)
      -\prod_{j=1}^L\exp\left(-\frac{it}{r}H_{j}\right)}\\
    &\leq r \left(\frac{t L\Lambda}{r}\right)^2 \exp\left(\frac{|t| L\Lambda}{r}\right)\\
    &=\frac{(L\Lambda t)^2}{r}\exp\left(\frac{L\Lambda |t|}{r}\right),
  \end{align}
  which completes the proof.
\end{proof}

A similar bound holds for the higher-order case, as follows.

\begin{proposition}
  \label{prop:higherana}
  Let $r\in \N$ and $t\in\R$. Let $H_1,\ldots,H_L$ be Hermitian operators,
  and $\Lambda := \max_{j}\norm{H_{j}}$. Then
  \begin{equation}
    \norm*{\exp\Biggl(-it\sum_{j=1}^L H_{j}\Biggr)
      -\left[S_{2k}\left(-\frac{it}{r}\right)\right]^r}
    \leq\frac{(2L5^{k-1}\Lambda |t|)^{2k+1}}{3r^{2k}}
    \exp\left(\frac{2L5^{k-1}\Lambda |t|}{r}\right).
  \end{equation}
\end{proposition}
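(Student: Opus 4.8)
The plan is to mirror the proof of \prop{firstana} exactly, replacing the first-order formula by $S_{2k}$ and using the order-$2k$ accuracy of the Suzuki construction to push the relevant Taylor remainder from order $1$ up to order $2k$. The only genuinely new ingredient is a bookkeeping lemma describing the shape of $S_{2k}(\lambda)$ when the recursion \eq{recursive_def} is unrolled.

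First I would record that structural fact. By induction on $k$, $S_{2k}(\lambda)$ is a product of exactly $N_{2k} := 2L\cdot 5^{k-1}$ exponentials of the form $\exp(c_i H_{j_i})$ with $j_i \in \{1,\dots,L\}$, where each coefficient is $c_i = \tfrac{\lambda}{2}\prod_{k'=2}^{k}\varepsilon_{k'}$ with each $\varepsilon_{k'}\in\{p_{k'},\,1-4p_{k'}\}$. From $p_{k'} = 1/(4-4^{1/(2k'-1)})$ with $2k'-1\ge 3$ one checks $1 < 4^{1/(2k'-1)} \le 4^{1/3} < 2$, so $p_{k'}\in(\tfrac13,\tfrac12)$ and hence $|1-4p_{k'}|\in(\tfrac13,1)$; thus every factor $|\varepsilon_{k'}| < 1$ and $|c_i| \le |\lambda|/2 \le |\lambda|$. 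Consequently
\[
  \sum_i |c_i|\,\|H_{j_i}\| \;\le\; N_{2k}\,|\lambda|\,\Lambda \;=\; 2L5^{k-1}\Lambda\,|\lambda| .
\]
Writing $c_i = \lambda\cdot(c_i/\lambda)$ with $c_i/\lambda$ real, we may regard $S_{2k}(\lambda) = \prod_i \exp(\lambda H'_i)$ with $H'_i := (c_i/\lambda)H_{j_i}$ Hermitian, so \lem{prodexp} applies verbatim and gives $\norm{\rem_{2k}(S_{2k}(\lambda))} \le \rem_{2k}\bigl(\exp(2L5^{k-1}\Lambda|\lambda|)\bigr)$.

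Next I would invoke the defining property of the Suzuki formula \cite{Suz91,BACS05}: $S_{2k}(\lambda)$ agrees with $\exp(\lambda\sum_j H_j)$ through order $\lambda^{2k}$, so the first $2k$ Taylor coefficients cancel and $\exp(\lambda\sum_j H_j) - S_{2k}(\lambda) = \rem_{2k}\bigl(\exp(\lambda\sum_j H_j)\bigr) - \rem_{2k}\bigl(S_{2k}(\lambda)\bigr)$. Taking $\lambda = -it/r$ and applying the triangle inequality together with the estimate above and the elementary bound $\norm{\rem_{2k}(\exp(\lambda\sum_j H_j))} \le \rem_{2k}\bigl(\exp(|\lambda|\,\|\textstyle\sum_j H_j\|)\bigr) \le \rem_{2k}\bigl(\exp(|\lambda|L\Lambda)\bigr)$, and using that $x\mapsto\rem_{2k}(\exp(x)) = \sum_{m\ge 2k+1} x^m/m!$ is nondecreasing on $[0,\infty)$ with $L\Lambda \le 2L5^{k-1}\Lambda$, I get
\[
  \norm*{\exp\Bigl(\lambda\sum_j H_j\Bigr) - S_{2k}(\lambda)} \;\le\; 2\,\rem_{2k}\bigl(\exp(2L5^{k-1}\Lambda|\lambda|)\bigr).
\]
Setting $\mu := 2L5^{k-1}\Lambda|\lambda|$ and applying \lem{prodtail}, the right side is at most $\tfrac{2\mu^{2k+1}}{(2k+1)!}e^{\mu}$, and since $(2k+1)! \ge 3! = 6$ for $k\ge 1$ this is $\le \tfrac{\mu^{2k+1}}{3}e^{\mu}$.

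Finally I would split the evolution into $r$ segments and telescope. Both $\exp(-it\sum_j H_j) = \bigl[\exp(-\tfrac{it}{r}\sum_j H_j)\bigr]^r$ and $\bigl[S_{2k}(-it/r)\bigr]^r$ are products of $r$ unitaries (each factor $\exp(c_i H_{j_i})$ has $c_i H_{j_i}$ anti-Hermitian), so $\norm{A^r - B^r}\le r\norm{A-B}$; plugging in $\lambda = -it/r$, so $\mu = 2L5^{k-1}\Lambda|t|/r$, yields
\[
  \norm*{\exp\Bigl(-it\sum_j H_j\Bigr) - \bigl[S_{2k}(-it/r)\bigr]^r}
  \;\le\; r\cdot\frac{(2L5^{k-1}\Lambda|t|/r)^{2k+1}}{3}\,e^{2L5^{k-1}\Lambda|t|/r}
  \;=\; \frac{(2L5^{k-1}\Lambda|t|)^{2k+1}}{3r^{2k}}\,e^{2L5^{k-1}\Lambda|t|/r},
\]
which is the claimed bound. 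The main obstacle is the first step: establishing by induction the exact factor count $N_{2k} = 2L5^{k-1}$ and the uniform coefficient bound $|c_i|\le|\lambda|$, and correctly citing (or re-deriving) the order-$2k$ cancellation that makes $\rem_{2k}$ rather than $\rem_0$ the relevant remainder. Everything after that is a routine combination of \lem{prodexp}, \lem{prodtail}, monotonicity of the exponential remainder, and the telescoping estimate, exactly as in \prop{firstana}.
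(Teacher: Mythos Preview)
Your proposal is correct and is exactly the argument the paper has in mind: the paper omits the proof with the remark that it ``follows along the same lines as the proof of \prop{firstana},'' and your write-up is a faithful execution of that plan, supplying the two missing ingredients (the factor count $N_{2k}=2L\cdot 5^{k-1}$ with $|c_i|\le|\lambda|$, and the order-$2k$ cancellation from \cite{Suz91}) before invoking \lem{prodexp}, \lem{prodtail}, and the telescoping bound exactly as in \prop{firstana}.
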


We omit the proof, which follows along the same lines as the proof of \prop{firstana}.

When performing quantum simulation, our goal is to ensure that the error is at most some given $\epsilon$.  Thus, to apply \prop{firstana} and \prop{higherana}, we must choose a value of $r$ that ensures the right-hand side is at most $\epsilon$.  One approach \cite{BACS05} is to give a closed-form expression for some suitable $r$ as a function of $\epsilon$ (and other simulation parameters).  We call the resulting error bound an \emph{analytic bound}.

\begin{definition}
  \label{def:anabound}
  Let $t\in\R$ and $\epsilon > 0$.  Let $H_1,\ldots,H_L$ be Hermitian operators, and $\Lambda := \max_{j}\norm{H_{j}}$. Then the \emph{first-order
    analytic bound} is given by
  \begin{equation}
    r_{1}= \ceil*{\max\bigg\{L|t|\Lambda,
      \frac{e(Lt\Lambda)^{2}}{\epsilon}\bigg\}}
  \end{equation}
  and the \emph{$(2k)$th order analytic bound} is given by
  \begin{equation}
    r_{2k}= \ceil*{\max\bigg\{2L5^{k-1}\Lambda |t|,
      \sqrt[2k]{\frac{e(2L5^{k-1}\Lambda
          |t|)^{2k+1}}{3\epsilon}}\bigg\}}.
  \end{equation}
\end{definition}

Analytic bounds can be computed efficiently and result in circuits approximating the time evolution up to the required precision. However, because the error estimates are pessimistic, the corresponding circuits are unnecessarily large.

A slight improvement can be obtained by searching for the smallest possible $r$ that satisfies the constraints given by \prop{firstana} and \prop{higherana}.  We call the resulting error bound a \emph{minimized bound}.

\begin{definition}
  \label{def:minbound}
  Let $t\in\R$ and $\epsilon > 0$.  Let $H_1,\ldots,H_L$ be Hermitian operators, and $\Lambda := \max_{j}\norm{H_{j}}$. Then the \emph{first-order
    minimized bound} is given by
  \begin{equation}
    r_1 = \min\left\{r \in \N : \frac{(L\Lambda t)^2}{r}\exp\left(\frac{L\Lambda|t|}{r}\right) \le \epsilon \right\}
  \end{equation}
  and the \emph{$(2k)$th-order minimized bound} is given by
  \begin{equation}
    r_{2k} = \min\left\{r \in \N : \frac{(2L5^{k-1}\Lambda |t|)^{2k+1}}{3r^{2k}} \exp\left(\frac{2L5^{k-1}\Lambda |t|}{r}\right) \le \epsilon \right\}.
  \end{equation}
\end{definition}

Minimized bounds can be computed efficiently using binary search. First, we compute the error bound with $r=1$. If the result is within our desired accuracy $\epsilon$, we stop; otherwise, we double $r$ and repeat the procedure. This allows us to bracket the value of $r$ in logarithmic time. Then we perform binary search to determine the precise value of $r$.

\subsection{Commutator bounds}
\label{sec:pfcom}

If all terms in the Hamiltonian commute, then even the lowest-order product formula has no error.  Intuitively, if many pairs of terms commute, then the error should be small.  Indeed, the second-order contribution to the error in the first-order formula involves commutators between all pairs of terms, and this fact can be used to strengthen the bound \cite{Llo96}.

In this section, we present improved error bounds that take advantage of commuting pairs of terms in the Hamiltonian. In particular, we give commutator bounds for the second- and fourth-order formulas, whose constructions can in principle be adapted to higher-order cases. To the best of our knowledge, these are the first error bounds for higher-order product formulas that take advantage of commutator structure.

In \sec{pfcomabstract}, we present abstract versions of the commutator bounds, stated in terms of coefficients that count the number of terms in the Hamiltonian with particular commutation relations.  We specialize these bounds to our particular system in \sec{pfcomconcrete}, using the symmetry of our Hamiltonian to substantially simplify application of the bounds.

\subsubsection{Abstract commutator bounds}
\label{sec:pfcomabstract}

We begin by presenting a bound for the first-order formula as described in \cite{Llo96}, giving an explicit bound for the higher-order terms.

\begin{theorem}[First-order commutator bound]
	\label{thm:first_comm}
Let $H_1,\ldots,H_L$ be Hermitian operators with norm at most $\Lambda := \max_{j}\norm{H_{j}}$.  Let $C:=|\{(H_i,H_j):[H_i,H_j]\neq 0,\,i<j\}|$ be the number of non-commuting pairs of operators, and let $t\in\R$.  Then
	\begin{equation}
	\norm*{\exp\left(-it\sum_{j=1}^{L}H_{j}\right)-\left[\prod_{j=1}^{L}\exp\left(-\frac{it}{r}H_{j}\right)\right]^{r}}
	\leq C\frac{(\Lambda t)^2}{r}+\frac{(L\Lambda |t|)^3}{3r^2}\exp\left(\frac{L\Lambda |t|}{r}\right).
	\end{equation}
\end{theorem}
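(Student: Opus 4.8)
## Proof proposal for the first-order commutator bound

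The plan is to isolate the contributions to the error that involve commutators and bound them separately from the higher-order remainder. First I would reduce to a single segment: by the triangle inequality and unitarity, it suffices to bound
\begin{equation}
\norm*{\exp\biggl(-\frac{it}{r}\sum_{j=1}^{L}H_j\biggr)-\prod_{j=1}^{L}\exp\biggl(-\frac{it}{r}H_j\biggr)}
\end{equation}
and then multiply by $r$. Writing $\lambda := -it/r$, I would expand both operators as Taylor series in $\lambda$. The zeroth-order terms ($I$) and the first-order terms ($\lambda\sum_j H_j$) agree exactly, so the difference starts at order $\lambda^2$. The key observation is that the order-$\lambda^2$ part of the difference is
\begin{equation}
\frac{\lambda^2}{2}\biggl(\sum_{j}H_j\biggr)^2 - \lambda^2\sum_{i<j}H_iH_j - \frac{\lambda^2}{2}\sum_j H_j^2 = \frac{\lambda^2}{2}\sum_{i<j}[H_i,H_j],
\end{equation}
using $(\sum_j H_j)^2 = \sum_j H_j^2 + \sum_{i<j}(H_iH_j + H_jH_i)$. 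Only the $C$ noncommuting pairs contribute, and each commutator has norm at most $2\Lambda^2$, so this term is bounded by $C|\lambda|^2\Lambda^2 = C(\Lambda t)^2/r^2$; after multiplying by $r$ this gives the $C(\Lambda t)^2/r$ leading term.

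Next I would bound the remaining contributions, i.e. everything of order $\lambda^3$ and higher in the difference. Using $\rem_k$ notation from \lem{prodexp} and \lem{prodtail} (with $k=2$), the order-$\geq 3$ part of $\exp(\lambda\sum_j H_j) - \prod_j\exp(\lambda H_j)$ is $\rem_2(\exp(\lambda\sum_j H_j)) - \rem_2(\prod_j\exp(\lambda H_j))$. By the triangle inequality, \lem{prodexp}, and \lem{prodtail} applied to each piece (with $\sum_j\norm{H_j}\leq L\Lambda$), this is at most
\begin{equation}
2\cdot\frac{(|\lambda|L\Lambda)^3}{3!}\exp(|\lambda|L\Lambda) = \frac{(L\Lambda|t|/r)^3}{3}\exp\biggl(\frac{L\Lambda|t|}{r}\biggr),
\end{equation}
and multiplying by $r$ yields the second term $\frac{(L\Lambda|t|)^3}{3r^2}\exp(L\Lambda|t|/r)$. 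Combining the two estimates gives the stated bound.

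The main subtlety — not a deep obstacle, but the place requiring care — is cleanly separating the order-$2$ term (which must be handled by the explicit commutator identity, since the crude $\rem_1$ bound would only give $\sim (L\Lambda t)^2/r$ with no commutator savings) from the order-$\geq 3$ tail (which is handled by the generic remainder lemmas). Concretely, one should apply \lem{prodexp}/\lem{prodtail} at truncation level $k=2$ rather than $k=1$, and account separately for the exact order-$2$ discrepancy $\frac{\lambda^2}{2}\sum_{i<j}[H_i,H_j]$. One must also check that the order-$2$ term of $\exp(\lambda\sum_j H_j)$, namely $\frac{\lambda^2}{2}(\sum_j H_j)^2$, is correctly matched against the order-$2$ term of the product $\prod_j\exp(\lambda H_j)$, which is $\lambda^2\sum_{i<j}H_iH_j + \frac{\lambda^2}{2}\sum_j H_j^2$ (the cross terms come from choosing the linear term from two distinct factors, respecting their order $i<j$). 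Everything else is routine application of submultiplicativity and the triangle inequality as in \prop{firstana}.
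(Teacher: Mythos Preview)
Your proposal is correct and follows essentially the same route as the paper: reduce to one segment, compute the order-$\lambda^2$ discrepancy explicitly as a sum of commutators bounded by $C|\lambda|^2\Lambda^2$, and control the order-$\ge 3$ tail via \lem{prodexp} and \lem{prodtail} at $k=2$. One cosmetic slip: your displayed identity has the wrong sign (the difference is $-\tfrac{\lambda^2}{2}\sum_{i<j}[H_i,H_j]$, equivalently $\tfrac{\lambda^2}{2}\sum_{i<j}[H_j,H_i]$), but this is irrelevant once you take the norm.
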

\begin{proof}
We show that
\begin{equation}
	\norm*{\exp\left(\lambda\sum_{j=1}^{L}H_{j}\right)-\prod_{j=1}^{L}\exp(\lambda H_{j})}
	\leq C(|\lambda|\Lambda)^2+\frac{(|\lambda|L\Lambda)^3}{3} \exp\left(|\lambda|L\Lambda\right),
\label{eq:first_comm_one_segment}
	\end{equation}
which implies the claimed result by the triangle inequality.
	The upper bound \eq{first_comm_one_segment} can be established by explicitly computing the second-order error and bounding the higher-order errors by the norm of the remainder $\rem_{2}$. The second-order error is bounded as follows:
	\begin{align}
	&\norm*{\frac{\lambda^2}{2}\left(\sum_{j=1}^{L}H_j\right)^2-\left[\frac{\lambda^2}{2}\sum_{j=1}^{L}H_j^2+\lambda^2\sum_{j<k}H_jH_k\right]} \nonumber\\
	&\quad=\norm*{\left[\frac{\lambda^2}{2}\sum_{j=1}^{L}H_j^2+\frac{\lambda^2}{2}\sum_{j<k}H_jH_k+\frac{\lambda^2}{2}\sum_{j>k}H_jH_k\right]-\left[\frac{\lambda^2}{2}\sum_{j=1}^{L}H_j^2+\lambda^2\sum_{j<k}H_jH_k\right]}\\
	&\quad=\norm*{\frac{\lambda^2}{2}\left[\sum_{j>k}H_jH_k-\sum_{j<k}H_jH_k\right]}
	=\norm*{\frac{\lambda^2}{2}\left[\sum_{j>k}H_jH_k-H_kH_j\right]}\\
	&\quad\le
  C |\lambda|^2 \Lambda^2.
	\end{align}
	The rest of the proof proceeds similarly to the second half of the proof of \prop{firstana}; we omit the details.
\end{proof}

\begin{theorem}[Second-order commutator bound]
	\label{thm:second_comm}
Let $H_1,\ldots,H_L$ be Hermitian operators with norm at most $\Lambda := \max_{j}\norm{H_{j}}$, where each $H_j$ is a tensor product of Pauli operators. Define the augmented set of Hamiltonians
	\begin{equation}
	\tilde{H}_{j}=
	\begin{cases}
	H_{j},   & 1\leq j\leq L\\
	H_{2L-j+1},       & L+1\leq j \leq 2L.
	\end{cases}
	\end{equation}
	Let $f(i,j)=1$ if $\tilde{H}_{i}, \tilde{H}_{j}$ commute and $f(i,j)=-1$ otherwise. Finally, let
	\begin{align}
	D&:=|\{(i,j):f(i,j)=-1,i\neq j\}|, \label{eq:countD}\\
	T_{1}&:=|\{(i,j,k):f(i,j)=f(j,k)=f(i,k)=1,\,i<j<k\}|,\\
	T_{2}&:=|\{(i,j,k):f(i,j)=1,f(j,k)=f(i,k)=-1,\,i<j<k\}|\nonumber\\
	&\quad+|\{(i,j,k):f(i,j)=f(i,k)=-1,\,f(j,k)=1,\,i<j<k\}|,\\
	T_{3}&:=|\{(i,j,k):f(i,j)=f(j,k)=-1,\,f(i,k)=1,\,i<j<k\}|,\\
	T_{4}&:=|\{(i,j,k):\text{all other cases}\}| \label{eq:countT4}
	\end{align}
where $i,j,k \in \{1,\ldots,2L\}$, and let $t\in\R$.
Then
\begin{align}
	&\norm*{\exp\left(-it\sum_{j=1}^{L}H_{j}\right)-\left[S_{2}\left(-\frac{it}{r}\right)\right]^r} \nonumber\\
	&\quad \leq\frac{\Lambda^3|t|^3}{r^2}\bigg\{\frac{1}{24}D+\frac{1}{12}T_2+\frac{1}{6}T_3+\frac{1}{8}T_4\bigg\}
	+\frac{4(L\Lambda t)^4}{3r^3}\exp\left(\frac{2L\Lambda |t|}{r}\right).
  \label{eq:second_comm_result}
	\end{align}
\end{theorem}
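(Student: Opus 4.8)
The plan is to follow the template of the first-order commutator bound (\thm{first_comm}): reduce to a single Trotter segment, compute the leading (third-order) error term exactly in terms of the commutation data, and control the higher-order tail with Lemmas~\ref{lem:prodexp} and~\ref{lem:prodtail}. Since $\exp(-it\sum_j H_j)$ and each $S_2(-it/r)$ are unitary, subadditivity of the spectral norm under products of unitaries gives
\[
\norm*{\exp\Bigl(-it\sum_{j=1}^{L}H_j\Bigr)-\Bigl[S_2\Bigl(-\tfrac{it}{r}\Bigr)\Bigr]^r}\le r\,\norm*{\exp\Bigl(-\tfrac{it}{r}\sum_{j=1}^{L}H_j\Bigr)-S_2\Bigl(-\tfrac{it}{r}\Bigr)},
\]
so it suffices to bound the one-segment error with $\lambda:=-it/r$. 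Writing $S_2(\lambda)=\prod_{j=1}^{2L}\exp(\lambda\tilde H_j/2)$ using the augmented list, I expand both operators as power series in $\lambda$. Because $S_2$ is palindromic, $S_2(\lambda)^{-1}=S_2(-\lambda)$, and comparing the two sides order by order forces the Taylor coefficients of $S_2(\lambda)$ at orders $0,1,2$ to coincide with those of $\exp(\lambda\sum_j H_j)$. Hence the one-segment error equals $\lambda^3\bigl(P_3^{S_2}-P_3^{\exp}\bigr)+\rem_3\bigl(S_2(\lambda)\bigr)-\rem_3\bigl(\exp(\lambda\sum_j H_j)\bigr)$, where $P_3^{\cdot}$ denotes the order-$3$ coefficient.

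The technical core — and the main obstacle — is showing $\norm*{P_3^{S_2}-P_3^{\exp}}\le\Lambda^3\bigl\{\tfrac1{24}D+\tfrac1{12}T_2+\tfrac16T_3+\tfrac18T_4\bigr\}$. Here $P_3^{\exp}=\tfrac1{48}\sum_{a,b,c=1}^{2L}\tilde H_a\tilde H_b\tilde H_c$, while expanding the ordered product $\prod_{j=1}^{2L}e^{\lambda\tilde H_j/2}$ gives $P_3^{S_2}=\tfrac1{48}\sum_a\tilde H_a^3+\tfrac1{16}\sum_{a<b}(\tilde H_a^2\tilde H_b+\tilde H_a\tilde H_b^2)+\tfrac18\sum_{a<b<c}\tilde H_a\tilde H_b\tilde H_c$. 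Splitting the $\exp$ sum according to whether its three indices are all equal, exactly two equal, or all distinct, the $\sum_a\tilde H_a^3$ pieces cancel and the difference regroups into nested double commutators: two-index blocks of the form $[\tilde H_a,[\tilde H_a,\tilde H_b]]$ weighted by the ordered non-commuting pairs counted in $D$, and three-index blocks of the shapes $[\tilde H_a,[\tilde H_b,\tilde H_c]]$ and $[\tilde H_b,[\tilde H_a,\tilde H_c]]$ weighted by triples. A triple with $f(a,b)=f(b,c)=f(a,c)=1$ (the count $T_1$) contributes nothing, since every nested commutator on $\{\tilde H_a,\tilde H_b,\tilde H_c\}$ vanishes; the surviving contributions are exactly those counted by $T_2$, $T_3$, and $T_4$, with $T_3$ the subtle case where two pairs commute yet one nested commutator survives. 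Using that each $\tilde H_j$ is a Pauli product (so $\tilde H_j^2=I$, $\norm{\tilde H_j}=\Lambda$, and $\norm*{[\tilde H_a,\tilde H_b]}\le 2\Lambda^2$), the norm of each surviving block is a constant times $\Lambda^3$, and matching the combinatorial multiplicities against the coefficients $\tfrac1{48},\tfrac1{16},\tfrac18$ produces the stated weights. Carefully tracking which ordered versus unordered index tuples arise, and verifying the $T_1$ cancellation, is the delicate part.

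Finally, for the tail: applying \lem{prodexp} to the $2L$ exponentials of $S_2(\lambda)$ (and to $\exp(\lambda\sum_j H_j)$ as a one-factor product) together with \lem{prodtail}, and using the conservative estimate $\sum_j\norm{\tilde H_j}\le 2L\Lambda$, bounds each of $\norm*{\rem_3(S_2(\lambda))}$ and $\norm*{\rem_3(\exp(\lambda\sum_j H_j))}$ by $\tfrac{(2L\Lambda|\lambda|)^4}{24}\exp(2L\Lambda|\lambda|)$, so their sum is at most $\tfrac{(2L\Lambda|\lambda|)^4}{12}\exp(2L\Lambda|\lambda|)$. Substituting $|\lambda|=|t|/r$ and multiplying through by $r$ combines the third-order term and the tail into
\[
\frac{\Lambda^3|t|^3}{r^2}\Bigl\{\tfrac1{24}D+\tfrac1{12}T_2+\tfrac16T_3+\tfrac18T_4\Bigr\}+\frac{4(L\Lambda t)^4}{3r^3}\exp\Bigl(\frac{2L\Lambda|t|}{r}\Bigr),
\]
which is \eq{second_comm_result}. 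The tail estimate here is just the second-order analogue of the computation in the proof of \prop{firstana}, with the $2L$ sub-steps of $S_2$ replacing the $L$ exponentials of the first-order formula.
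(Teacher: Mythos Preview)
Your overall scaffolding matches the paper exactly: reduce to one segment by unitarity, write $S_2(\lambda)=\prod_{j=1}^{2L}\exp(\tfrac{\lambda}{2}\tilde H_j)$ as a first-order product on the augmented list, isolate the third-order coefficient, and bound the tail by $2\rem_3(\exp(2L|\lambda|\Lambda))\le\tfrac{(2L|\lambda|\Lambda)^4}{12}\exp(2L|\lambda|\Lambda)$ via Lemmas~\ref{lem:prodexp} and~\ref{lem:prodtail}. Your palindrome argument for the vanishing of orders $0,1,2$ is a nice touch the paper leaves implicit.

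The one place where you diverge from the paper is the handling of the third-order difference, and there your proposal is underdeveloped. The paper does \emph{not} regroup the all-distinct triple sum into nested commutators. Instead it uses the Pauli property directly: for $a<b<c$ it writes
\[
\sum_{\sigma\in\Sym{3}}\tilde H_{i_{\sigma(1)}}\tilde H_{i_{\sigma(2)}}\tilde H_{i_{\sigma(3)}}
=\bigl(1+f(a,b)+f(b,c)+f(a,b)f(a,c)f(b,c)+f(a,c)f(a,b)+f(a,c)f(b,c)\bigr)\,\tilde H_a\tilde H_b\tilde H_c,
\]
so that each permuted product collapses to $\pm$ the ordered one. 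Subtracting the coefficient $6$ coming from $\tfrac18\sum_{a<b<c}$ and doing case analysis on the sign pattern yields the exact weights $0,\tfrac23,\tfrac43,1$ for $T_1,T_2,T_3,T_4$; the overall factor $\tfrac18$ from $\lambda\mapsto\lambda/2$ then turns these into $\tfrac1{12},\tfrac16,\tfrac18$. The two-index part is handled separately as $\tfrac{1}{6}\sum_{i\neq j}\tilde H_i[\tilde H_j,\tilde H_i]$ (your $D$ term), giving $\tfrac{1}{24}D$ after the same rescaling. Your nested-commutator route could in principle reproduce these numbers for Pauli strings, but you have not shown it, and generic bounds like $\norm{[\tilde H_a,[\tilde H_b,\tilde H_c]]}\le 4\Lambda^3$ do not by themselves distinguish $T_2$ from $T_3$ from $T_4$---that distinction comes precisely from the sign-sum computation above. (Incidentally, your description of $T_3$ as ``two pairs commute'' is inverted: in $T_3$ two pairs \emph{anti}commute and only the $(i,k)$ pair commutes; and $\tilde H_j^2=I$ would force $\Lambda=1$, whereas the theorem allows scalar multiples of Pauli strings, for which $\tilde H_j^2=\norm{\tilde H_j}^2 I$.)
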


\begin{proof}
	As in the proof of \thm{first_comm}, we explicitly compute the third-order error and bound the higher-order terms by $\rem_{3}$. First we show that the first-order formula
	\begin{equation}
	\norm*{\exp\left(\lambda\sum_{j=1}^{L}H_{j}\right)-\prod_{j=1}^{L}\exp(\lambda H_{j})}
	\end{equation}
	has a third-order error of at most
	\begin{equation}
	|\lambda|^3\Lambda^3\biggl(\frac{1}{3}\bar D+\frac{2}{3}\bar T_2+\frac{4}{3}\bar T_3+\bar T_4\biggr),
  \label{eq:third_order_err}
	\end{equation}
where the coefficients $\bar D, \bar T_2, \bar T_3, \bar T_4$ are defined as in \eq{countD}--\eq{countT4}, but with respect to the original Hamiltonians $\{H_j\}_{j=1}^{L}$ instead of $\{\tilde H_j\}_{j=1}^{2L}$.

The third-order term in $\exp(\lambda\sum_{j=1}^{L}H_{j})$ is
	\begin{equation}
	\frac{1}{3!}\Biggl(\lambda\sum_{j=1}^{L}H_{j}\Biggr)^3=\frac{\lambda^3}{6}\sum_{i,j,k}H_iH_jH_k,
	\end{equation}
	whereas the third-order term in $\prod_{j=1}^{L}\exp(\lambda H_{j})$ is
	\begin{align}
	&\frac{\lambda^3}{6}\sum_{i}H_i^3+\frac{\lambda^3}{2}\sum_{i<k}H_i^2H_k
	+\frac{\lambda^3}{2}\sum_{i<k}H_iH_k^2+\lambda^3\sum_{i<j<k}H_iH_jH_k \nonumber\\
	&\quad=\frac{\lambda^3}{6}\sum_{i}H_i^3+\frac{\lambda^3}{2}\sum_{i\neq k}H_i^2H_k+\lambda^3\sum_{i<j<k}H_iH_jH_k,
	\end{align}
	where we have used the fact that the square of any Pauli operator is the identity. Taking the difference gives
	\begin{equation}
	\frac{\lambda^3}{6}\sum_{i\neq j}H_i[H_j,H_i]+\frac{\lambda^3}{6}\sum_{\substack{i,j,k\\\text{pairwise different}}}H_iH_jH_k-\lambda^3\sum_{i<j<k}H_iH_jH_k.
	\end{equation}
The norm of the first term is at most
	\begin{equation}
	\label{eq:first_term2}
	\frac{1}{3}|\lambda|^3 \Lambda^3 \bar D,
	\end{equation}
	whereas the last two terms can be written as follows:
	\begin{align}
	&\frac{\lambda^3}{6}\sum_{\substack{i_1,i_2,i_3\\\text{pairwise different}}}H_{i_1}H_{i_2}H_{i_3}-\lambda^3\sum_{i_1<i_2<i_3}H_{i_1}H_{i_2}H_{i_3} \nonumber\\
	&\quad=\frac{\lambda^3}{6}\sum_{i_1<i_2<i_3}\sum_{\sigma\in \Sym{3}}H_{i_{\sigma(1)}}H_{i_{\sigma(2)}}H_{i_{\sigma(3)}}
	-\lambda^3\sum_{i_1<i_2<i_3}H_{i_1}H_{i_2}H_{i_3}\\
	&\quad=\frac{\lambda^3}{6}\sum_{i_1<i_2<i_3}\bigl(1+f(1,2)+f(2,3)+f(1,2)f(1,3)f(2,3)\\
	&\qquad+f(1,3)f(1,2)+f(1,3)f(2,3)\bigr)H_{i_1}H_{i_2}H_{i_3}-\lambda^3\sum_{i_1<i_2<i_3}H_{i_1}H_{i_2}H_{i_3}.
	\end{align}
(Here $\Sym{3}$ denotes the symmetric group on three elements.)
By performing case analysis, we can evaluate the coefficients and upper bound the norm by
	\begin{equation}
	\label{eq:last_term2}
  |\lambda|^3\Lambda^3 \biggl(
	\frac{2}{3}\bar T_2
  + \frac{4}{3}\bar T_3
  + \bar T_4 \biggr).
	\end{equation}
	Combining (\ref{eq:first_term2}) and (\ref{eq:last_term2}), we obtain the claimed upper bound \eq{third_order_err}
	for the third-order error in the first-order formula.

Now we consider the second-order formula. Similarly to the proof of \thm{first_comm}, we begin by proving the bound
	\begin{equation}
	\label{eq:second_pre}
	\norm*{\exp\left(\lambda\sum_{j=1}^{L}H_j\right)-S_2(\lambda)}
	\leq(|\lambda|\Lambda)^3\biggl(\frac{D}{24}+\frac{T_2}{12}+\frac{T_3}{6}+\frac{T_4}{8}\biggr)+\frac{4}{3}(L|\lambda|\Lambda)^4\exp(2L|\lambda|\Lambda),
	\end{equation}
which implies \eq{second_comm_result} by the triangle inequality.

	To establish \eq{second_pre}, we apply \eq{third_order_err} to the augmented Hamiltonian list $\{\tilde{H}_{j}\}_{j=1}^{2L}$ with $\lambda$ replaced by $\frac{\lambda}{2}$. This shows that the third-order error is at most
	\begin{equation}
	\frac{|\lambda|^3\Lambda^3}{8}\biggl(\frac{D}{3}+\frac{2T_2}{3}+\frac{4T_3}{3}+T_4\biggr).
	\end{equation}
	The higher-order errors can be bounded by a routine calculation as
	\begin{equation}
	\rem_{3}\left(\exp\left(\lambda\sum_{j=1}^{L}H_j\right)-S_2(\lambda)\right)
	\leq 2\rem_{3}(\exp(2L|\lambda|\Lambda))
	\leq 2\frac{(2L|\lambda|\Lambda)^4}{4!}\exp(2L|\lambda|\Lambda).
	\end{equation}
	This completes the proof of \eq{second_pre}. The remainder of the proof proceeds similarly to the second half of the proof of \prop{firstana}.
\end{proof}

A similar bound holds for the fourth-order formula, as follows.

\begin{theorem}[Fourth-order commutator bound]
	\label{thm:fourth_comm}
Let $H_1,\ldots,H_L$ be Hermitian operators with norm at most $\Lambda := \max_{j}\norm{H_{j}}$, where each $H_j$ is a tensor product of Pauli operators. Define the augmented set of Hamiltonians
	\begin{equation}
	\tilde{H}_{j}=
	\begin{cases}
	H_{j-2hL},   & 2hL+1\leq j\leq (2h+1)L\\
	H_{2(h+1)L-j+1},       & (2h+1)L+1\leq j \leq 2(h+1)L\\
	\end{cases}
	\qquad h \in \{0,1,2,3,4\}.
	\end{equation}
	Let $f(i,j)=1$ if $\tilde{H}_{i}, \tilde{H}_{j}$ commute and $f(i,j)=-1$ otherwise. Finally, let
	\begin{align}
	\label{eq:comm_pattern}
	N_{a}&:=|\{(i,j) : f(i,j)=a_{1},\,i<j\}|\\
	N_{b_1 b_2 b_3}&:=|\{(i,j,k) : f(i,j)=b_{1},\,f(i,k)=b_2,\,f(j,k)=b_3,\,i<j<k\}|\\
	N_{c_1 \ldots c_6}&:=|\{(i,j,k,l) : f(i,j)=c_{1},\,f(i,k)=c_2,\,f(i,l)=c_3, \nonumber\\
	&\qquad f(j,k)=c_4,\,f(j,l)=c_5,\,f(k,l)=c_6,\,i<j<k<l\}|\\
	N_{d_1 \ldots d_{10}}&:=|\{(i,j,k,l,m) : f(i,j)=d_{1},\,f(i,k)=d_2,\,f(i,l)=d_3, \nonumber\\
	&\qquad f(i,m)=d_4,\,f(j,k)=d_5,\,f(j,l)=d_6,\,f(j,m)=d_7,\,f(k,l)=d_8, \nonumber\\
	&\qquad f(k,m)=d_9,\,f(l,m)=d_{10},\,i<j<k<l<m\}|
  \label{eq:comm_pattern_last}
	\end{align}
  where $i,j,k,l,m \in \{1,\ldots,10L\}$, let $t \in\R$, and let $p:=1/(4-4^{1/3})$. Then
	\begin{align}
	\label{eq:fourth_err}
	\norm*{\exp\Biggl(-it\sum_{j=1}^{L}H_{j}\Biggr)-\left[S_{4}\left(-\frac{it}{r}\right)\right]^r}
	&\leq\left(\frac{4p-1}{2}\Lambda |t|\right)^5\frac{1}{5! r^4}\bigg\{
	\sum_{a \in \pm1} C_{a}N_{a}+\sum_{b_1,b_2,b_3 \in \pm1} C_{b_1 b_2 b_3}N_{b_1 b_2 b_3} \nonumber\\
	&\quad+\sum_{c_1,\ldots,c_6 \in \pm1} C_{c_1 \ldots c_6}N_{c_1 \ldots c_6}
	+\sum_{d_1,\ldots,d_{10}\in \pm1} C_{d_1 \ldots d_{10}}N_{d_1 \ldots d_{10}}
	\bigg\} \nonumber\\
	&\quad+2\frac{(5(4p-1)L\Lambda t)^6}{6!\cdot r^5}\exp\left(\frac{(5(4p-1)L\Lambda |t|}{r}\right)
	\end{align}
	for some real coefficients $C_{a}$, $C_{b_1 b_2 b_3}$, $C_{c_1 \ldots c_6}$, $C_{d_1 \ldots d_{10}}$.
\end{theorem}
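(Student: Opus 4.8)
The plan is to follow the structure of the proof of \thm{second_comm}, promoting its third-order analysis to a fifth-order one. First I would unroll the recursion \eq{recursive_def}, writing $S_4(\lambda)=\prod_{m=1}^{10L}\exp(\lambda\gamma_m\tilde H_m)$ as a single product of $10L$ exponentials of individual Pauli terms: the five $S_2$-factors in $[S_2(p\lambda)]^2 S_2((1-4p)\lambda)[S_2(p\lambda)]^2$ each contribute $2L$ factors (forward then backward, matching the definition of $\tilde H_j$), and the scalars $\gamma_m\in\{p/2,\,(1-4p)/2\}$ record the argument attached to each one. Since $S_4$ is a fourth-order (hence symmetric) Suzuki formula, the Taylor series of $\exp(\lambda\sum_j H_j)$ and of this product coincide through order $4$ (a property of the construction \eq{recursive_def}, cf.\ \cite{Suz91}), so the leading discrepancy is of order $5$. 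As in \thm{second_comm} I would split
\begin{equation}
\exp\biggl(\lambda\sum_{j=1}^{L}H_{j}\biggr)-S_4(\lambda)
=\Delta_5(\lambda)+\rem_5\biggl(\exp\biggl(\lambda\sum_{j=1}^{L}H_{j}\biggr)-S_4(\lambda)\biggr),
\end{equation}
where $\Delta_5(\lambda)$ collects the order-$5$ terms, bound the two pieces separately, and finally divide the evolution into $r$ segments with $\lambda=-it/r$ and telescope as in the last lines of the proof of \prop{firstana}; the prefactors $1/r^4$ and $1/r^5$ in \eq{fourth_err} arise from multiplying the per-segment $\lambda^5$- and $\lambda^6$-type bounds by $r$.

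The heart of the argument is the analysis of $\Delta_5(\lambda)$. I would expand $\frac{\lambda^5}{5!}\bigl(\sum_j H_j\bigr)^5=\frac{\lambda^5}{5!}\sum_{i_1,\dots,i_5}H_{i_1}\cdots H_{i_5}$ on one side, and the order-$5$ term of $\prod_m\exp(\lambda\gamma_m\tilde H_m)$ — a weighted sum over ways of selecting five factors, with multiplicity, from the $10L$ exponentials — on the other. Taking the difference and repeatedly using $\tilde H_m^2=I$ (each $\tilde H_m$ is a tensor product of Paulis), just as $H_i^3=H_i$ and $H_i^2 H_k=H_k$ were used in \thm{second_comm}, every surviving monomial collapses to an ordered product $\tilde H_{i_1}\cdots\tilde H_{i_5}$ indexed by $2$, $3$, $4$, or $5$ \emph{distinct} elements of $\{1,\dots,10L\}$. (The one-distinct-index contribution must cancel, since for $L=1$ the recursion gives $S_4(\lambda)=\exp(\lambda H_1)$ with no error at all.) Symmetrizing over the relevant subgroups of $\Sym{5}$ — the analogue of the $\Sym{3}$ sum in \thm{second_comm} — the coefficient of each such ordered product with $i_1<\dots<i_5$ becomes a fixed polynomial in the sign values $f(i_a,i_b)$. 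Using $\norm{\tilde H_{i_1}\cdots\tilde H_{i_5}}\le\Lambda^5$ and grouping the monomials by their number of distinct indices and their commutation pattern produces the four sums $\sum_a C_a N_a$, $\sum_b C_{b_1 b_2 b_3}N_{b_1 b_2 b_3}$, $\sum_c C_{c_1\dots c_6}N_{c_1\dots c_6}$, $\sum_d C_{d_1\dots d_{10}}N_{d_1\dots d_{10}}$ of \eq{fourth_err}, for suitable real constants $C$; the factor $\bigl(\tfrac{4p-1}{2}\bigr)^5/5!$ comes from bounding each scalar by $|\gamma_m|\le\tfrac{4p-1}{2}$ (note $p/2<\tfrac{4p-1}{2}$) in the worst case.

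For the tail, I would bound $\rem_5$ of the difference by the triangle inequality: $\norm{\rem_5(\exp(\lambda\sum_j H_j))}\le\rem_5(\exp(L|\lambda|\Lambda))$, and by \lem{prodexp}, $\norm{\rem_5(S_4(\lambda))}\le\rem_5\bigl(\exp(\sum_m|\lambda\gamma_m|\,\norm{\tilde H_m})\bigr)\le\rem_5(\exp(5(4p-1)L|\lambda|\Lambda))$, using $\sum_m|\gamma_m|\le 10L\cdot\tfrac{4p-1}{2}=5(4p-1)L$. Since $L|\lambda|\Lambda\le 5(4p-1)L|\lambda|\Lambda$, both pieces are at most $\rem_5(\exp(5(4p-1)L|\lambda|\Lambda))$, and \lem{prodtail} turns their sum into $2\frac{(5(4p-1)L|\lambda|\Lambda)^6}{6!}\exp(5(4p-1)L|\lambda|\Lambda)$; multiplying by $r$ after setting $\lambda=-it/r$ gives the second line of \eq{fourth_err}.

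The main obstacle is the fifth-order combinatorics: five indices carry ten pairwise commutation relations instead of the three in \thm{second_comm}, and since $S_4$ is assembled from five $S_2$-blocks with two different scale factors, one must also track which block each chosen factor lies in. Executing the $\Sym{5}$-symmetrization to identify the polynomials in $f$ multiplying each ordered product — and verifying that the contributions inherited from the (canceling) orders $\le 4$, together with the one-index term, really drop out — is long and delicate, which is why the theorem only asserts the existence of the constants $C_a$, $C_{b_1 b_2 b_3}$, $C_{c_1\dots c_6}$, $C_{d_1\dots d_{10}}$ rather than exhibiting them. A lesser but genuine point is that the argument takes the order-$4$ accuracy of $S_4$ as an input (established through Suzuki's recursive construction, \cite{Suz91}); re-deriving that fact from scratch would itself require a Baker--Campbell--Hausdorff computation of comparable size.
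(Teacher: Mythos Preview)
Your proposal is correct and follows essentially the same approach as the paper, which explicitly omits the proof and states that it ``proceeds along similar lines to that of \thm{second_comm}.'' Your outline in fact supplies considerably more detail than the paper does---in particular the bookkeeping of the scale factors $\gamma_m\in\{p/2,(4p-1)/2\}$ and the origin of the constant $5(4p-1)$ in the tail bound---and matches the paper's illustrative computation of $C_{-1}$ in spirit.
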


We omit the proof, which proceeds along similar lines to that of \thm{second_comm}.  Note that similar bounds also hold for higher-order formulas.

The coefficients $C_{a}$, $C_{b_1 b_2 b_3}$, $C_{c_1 \ldots c_6}$, $C_{d_1 \ldots d_{10}}$ can in principle be determined by a computer program.  The list of coefficients is long, so we omit it here. In \sec{pfcomconcrete}, we give a specialized version of this theorem that can be applied to the Hamiltonian \eq{heisenberg} without explicitly computing all the coefficients.

To give an idea of how the bound is proved, we show how to determine the coefficient $C_{-1}$ in \eq{fourth_err}. Similar arguments can be used to determine all the coefficients in the bound.

First consider the fifth-order terms of the expression
	\begin{equation}
	\label{eq:fifth_err}
	\exp\left(\sum_{j=1}^{L}H_{j}\lambda\right)-\exp(H_{1}\lambda)\cdots \exp(H_{L}\lambda).
	\end{equation}
The coefficient $C_{-1}$ of $N_{-1}$ counts the pairs of non-commuting terms $H_{i}$ and $H_{j}$. The second term in \eq{fifth_err} contributes
	\begin{equation}
	\label{eq:second_term}
	\begin{aligned}
	\frac{\lambda^4}{4!}\lambda\sum_{i<j}\left(H_{i}^{4}H_{j}+H_{i}H_{j}^{4}\right)
	+\frac{\lambda^3}{3!}\frac{\lambda^2}{2!}\sum_{i<j}\left(H_{i}^{3}H_{j}^{2}+H_{i}^{2}H_{j}^{3}\right),
	\end{aligned}
	\end{equation}
	whereas the first term in (\ref{eq:fifth_err}) contributes
	\begin{equation}\label{eq:fifth_err_first}
	\begin{aligned}
	\frac{\lambda^5}{5!}\sum_{i\neq j}&\bigl(H_i^4H_j+H_i^3H_jH_i+H_i^2H_jH_i^2+H_iH_jH_i^3+H_jH_i^4\bigr)\\
	+\frac{\lambda^5}{5!}\sum_{i\neq j}&\bigl(H_i^3H_j^2+H_i^2H_jH_iH_j+H_i^2H_j^2H_i+H_iH_jH_i^2H_j+H_iH_jH_iH_jH_i \\
  &\quad +H_iH_j^2H_i^2+H_jH_i^3H_j+H_jH_i^2H_jH_i+H_jH_iH_jH_i^2+H_j^2H_i^3\bigr).
	\end{aligned}
	\end{equation}
Since we assume the terms of the Hamiltonian are tensor products of Pauli operators, we can interchange the order of multiplication, possibly introducing minus signs. Thus \eq{fifth_err_first} equals
	\begin{align}
	\label{eq:first_term}
	&\frac{\lambda^5}{5!}\sum_{i<j}\left(\bigl(3+2f(i,j)\bigr)H_i^4H_j+\bigl(3+2f(i,j)\bigr)H_iH_j^4\right) \nonumber\\
	&\quad+\frac{\lambda^5}{5!}\sum_{i<j}\left(\bigl(6+4f(i,j)\bigr)H_i^3H_j^2+\bigl(6+4f(i,j)\bigr)H_i^2H_j^3\right).
	\end{align}
	Subtracting (\ref{eq:first_term}) from (\ref{eq:second_term}) and setting $f(i,j)=-1$, we find
	\begin{equation}
	\frac{\lambda^5}{5!}\sum_{i<j}\left(4H_i^4H_j+4H_iH_j^4+8H_i^3H_j^2+8H_i^2H_j^3\right),
	\end{equation}
	whose spectral norm is bounded by
	\begin{equation}
	\frac{|\lambda|^5}{5!}24\Lambda^5N_{-1}.
	\end{equation}
Comparing the result to \eq{fourth_err}, we find that $C_{-1}=24$.

\subsubsection{Concrete commutator bounds}
\label{sec:pfcomconcrete}

To apply the above commutator bounds, we must compute the number of tuples of terms in the Hamiltonian satisfying certain commutation relations (e.g., equations \eq{comm_pattern}--\eq{comm_pattern_last} for the fourth-order bound). While this can be done in polynomial time provided the degree is constant, a direct approach is prohibitive in practice. For example, consider the problem of counting $5$-tuples of $\Theta(n)$ terms, as required for the fourth-order commutator bound. A naive iteration over all 5-tuples takes time $O(n^5)$, and the multiplicative constant is large since our Hamiltonian gives rise to $40n$ terms $\tilde H_i$. Thus a direct computation can be infeasible even for small $n$.

However, for the Hamiltonian \eq{heisenberg}, we show that each number of tuples is given by a low-degree polynomial in $n$. In turn, this means that the lowest-order contribution to the error is also a polynomial in $n$.  Thus, by performing polynomial interpolation on a constant number of numerically-obtained values, we can determine a closed-form expression for general $n$.  A similar result holds for other Hamiltonians with appropriate symmetries, so this approach could be applied more broadly.

We first assume the Hamiltonian acts locally on a line (in particular, we assume open boundary conditions instead of periodic ones). For example, we could consider the Hamiltonian
\begin{align}
	\sum_{j=1}^{n-1} \bigl(\vec \sigma_j \cdot \vec \sigma_{j+1} + h_j \sigma_j^z \bigr),
	\label{eq:heisenberg_line}
\end{align}
which is simply the Heisenberg model \eq{heisenberg} without the boundary term, and without the local magnetic field on the rightmost spin (that term could also be handled, but we exclude it to simplify the presentation). We show that if a counting problem for such a Hamiltonian is \emph{oblivious} (defined below), then the result is a polynomial in $n$. We then show that a similar property holds for Hamiltonians with periodic boundary conditions, giving a method for evaluating the commutator bound for our model system.

We focus on the list of terms in the Hamiltonian to be applied in each segment of a product formula. Specifically, consider $n$ qubits arranged on a line, and suppose we are given Hermitian operators $G_1,\ldots,G_\numcycles$ with values $d_1,\ldots,d_\numcycles \in \{-1,1\}$ indicating shift directions. We assume that each term $G_i$ acts nontrivially only on the first $k$ qubits if $d_i=1$ or on the last $k$ qubits if $d_i=-1$.
This determines an ordered list of length $\numterms := \numcycles (n-k+1)$ including all translations of the given terms, of the form
\begin{align}
\bigl(&G_1,R^{d_1} G_1 R^{-d_1},\ldots,R^{d_1(n-k)} G_1 R^{-d_1(n-k)}, \nonumber\\
&G_2,R^{d_2} G_2 R^{-d_2},\ldots,R^{d_2(n-k)} G_2 R^{-d_2(n-k)}, \nonumber\\
&\vdots \nonumber\\
&G_\numcycles,R^{d_\numcycles} G_\numcycles R^{-d_\numcycles},\ldots,R^{d_\numcycles(n-k)} G_\numcycles R^{-d_\numcycles(n-k)}\bigr),
\label{eq:hamcycles}
\end{align}
where $R$ is the operator that shifts the qubits right by one position.  For each $i \in \range{\mu} := \{1,\ldots,\mu\}$, let $H_i$ denote the $i$th term in this list.

In the example \eq{heisenberg_line}, the $n$-qubit Hamiltonian consists of $4(n-1)$ terms. We adopt a specific ordering of the terms, namely
\begin{align}
  \sigma_1^x \sigma_2^x, \ldots, \sigma_{n-1}^x \sigma_n^x, \;
  \sigma_1^y \sigma_2^y, \ldots, \sigma_{n-1}^y \sigma_n^y, \;
  \sigma_1^z \sigma_2^z, \ldots, \sigma_{n-1}^z \sigma_n^z, \;
  \sigma_1^z, \ldots, \sigma_{n-1}^z.
  \label{eq:opentermorder}
\end{align}
Using the fourth-order product formula, every term of the Hamiltonian repeats $10$ times in a fixed order within each segment. Thus, we obtain an ordered list of $2$-local operators (i.e., $k=2$) with length $\numterms = 40(n-1)$, and we see that it has the form of \eq{hamcycles} with $\numcycles=40$.

We now introduce two mappings, $\support$ and $\interval$. The mapping $\support\colon \range{\numterms}\rightarrow \Z_{2}^{n}$ returns the \emph{support} of a term in the Hamiltonian. The support of $G_i$ is $1$ in the first $k$ positions and $0$ in the rest if $d_i=1$, or $1$ in the last $k$ positions and $0$ in the rest if $d_i=-1$. Similarly, the support of $R^j G_i R^{-j}$ has a $1$ at positions $j+1,j+2,\ldots,j+k$ if $d_i=1$, or a $1$ at positions $n-j-k+1,n-j-k+2,\ldots,n-j$ if $d_i=-1$.  Note that the support may include positions at which the Hamiltonian acts trivially.

To study the action of a $\tuplesize$-tuple of Hamiltonian operators, we take the entrywise $\OR$ of all elements of the tuple, defining
\begin{align}
	\support\colon \mathcal{S}^\numterms_\tuplesize&\rightarrow \Z_{2}^{n}\\
	\support\bigl(
\{i_1,\ldots,i_\tuplesize\}
	\bigr)&=\bigvee_{j=1}^\tuplesize\support(i_j).
\end{align}
Here $\mathcal{S}^\numterms_\tuplesize := \{S \subseteq \range{\numterms} : |S|=\tau\}$ is the set of all subsets of $\range{\numterms}$ with size $\tau$.
Since the support of each term in the Hamiltonian has Hamming weight $k$, the Hamming weight of $\support(\{i_1,\ldots,i_\tuplesize\})$ is at most $\tuplesize k$.

The mapping $\interval\colon \Z_{2}^{n}\rightarrow\Z_{\tuplesize k}^\tuplesize$ returns the \emph{interval pattern} for a given subset of qubits. The interval pattern is a list of the numbers of consecutive $1$s in the string, padded with trailing $0$s if necessary to have length $\tau k$. For example, $\interval([1\ 0\ 1\ 1\ 0])=[1\ 2\ 0\ \ldots\ 0]$.

Recall that, by assumption, every term in the Hamiltonian acts nontrivially on at most $k$ consecutive qubits, and its support is defined to include all $k$ relevant positions (possibly including ones where the Hamiltonian acts trivially). Thus any such term can occupy at most one interval of any given interval pattern.

Our main goal is to count the number of $\tuplesize$-tuples of Hamiltonian operators satisfying a given commutation pattern. We regard this as a special case of a more general task, which we call \emph{oblivious counting}.

Let $f\colon \mathcal{S}^\numterms_\tuplesize\rightarrow \Z_{2}$ be a binary-valued function that represents whether a tuple of operators has some specified property. For example, when computing the fourth-order commutator bound, we set $f(\{i_1, i_2, i_3, i_4, i_5\})=1$ if the tuple $(H_{i_1},\ldots,H_{i_5})$ satisfies a given commutation pattern, and $f(\{i_1, i_2, i_3, i_4, i_5\})=0$ otherwise. Our task now is to count the number of tuples satisfying the property, i.e., to compute
\begin{equation}
	\bigl|f^{-1}(1)\bigr|=\bigl|\bigl\{\{i_1, \ldots, i_\tuplesize\} : f\bigl(\{i_1, \ldots, i_\tuplesize\}\bigr)=1\bigr\}\bigr|.
	\label{eq:countingprob}
\end{equation}
We say the counting is \emph{oblivious} if, for any fixed $z \in \Z_{\tuplesize k}^{\tuplesize}$, the quantity
\begin{equation}
	\bigl|\bigl\{\{i_1, \ldots, i_\tuplesize\} : f\bigl(\{i_1, \ldots, i_\tuplesize\}\bigr)=1,\, \support\bigl(\{i_1, \ldots, i_\tuplesize\}\bigr)=y,\, \interval(y)=z\bigr\}\bigr|
\end{equation}
is identical for all $n$ and all $y\in\Z_2^n$ with $\interval(y)=z$. In other words, an oblivious counting problem depends only on the interval pattern of the $\tuplesize$-tuple, not on its precise support or the size of the underlying system.

Our main result for counting such tuples is as follows.

\begin{theorem}
	\label{thm:poly_rep}
	Let $G_1,\ldots,G_\numcycles$ be Hermitian operators with $d_1,\ldots,d_\numcycles \in \{-1,1\}$. We assume that each term $G_i$ acts as the identity on all but the first (if $d_i=1$) or last (if $d_i=-1$) $k$ qubits.
	Suppose that $f\colon\mathcal{S}^\numterms_\tuplesize\rightarrow \Z_{2}$ induces an oblivious counting problem, as defined above. Then $|f^{-1}(1)|$ is a polynomial in $n$.
\end{theorem}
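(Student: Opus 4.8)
The plan is to show that $|f^{-1}(1)|$ can be computed by partitioning tuples according to their interval pattern $z$, observing that the number of tuples realizing a fixed interval pattern is itself a polynomial in $n$, and that (by obliviousness) the "per-pattern" count is a constant independent of $n$. Summing a bounded number of polynomials in $n$ then yields a polynomial in $n$.

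First I would set up the partition. For a $\tuplesize$-tuple $S=\{i_1,\ldots,i_\tuplesize\}$, its support $\support(S)\in\Z_2^n$ has Hamming weight at most $\tuplesize k$, so its interval pattern $\interval(\support(S))\in\Z_{\tuplesize k}^\tuplesize$ ranges over a fixed finite set $P$ whose size depends only on $\tuplesize$ and $k$, not on $n$. Thus
\begin{equation}
  |f^{-1}(1)| = \sum_{z \in P} \bigl|\{S : f(S)=1,\ \interval(\support(S))=z\}\bigr|.
\end{equation}
For each fixed $z\in P$, I would further stratify by the actual support $y\in\Z_2^n$ with $\interval(y)=z$:
\begin{equation}
  \bigl|\{S : f(S)=1,\ \interval(\support(S))=z\}\bigr|
   = \sum_{\substack{y\in\Z_2^n \\ \interval(y)=z}} c(z),
\end{equation}
where $c(z) := \bigl|\{S : f(S)=1,\ \support(S)=y,\ \interval(y)=z\}\bigr|$ is, by the obliviousness hypothesis, the same for every such $y$ and every $n$. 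Hence the inner sum equals $N_n(z)\cdot c(z)$, where $N_n(z)$ is the number of $y\in\Z_2^n$ whose interval pattern is $z$.

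Next I would argue that $N_n(z)$ is a polynomial in $n$. Fix $z=[z_1,\ldots,z_\tuplesize]$ with $m$ nonzero entries summing to some $w\le\tuplesize k$. Choosing a bit string $y$ with interval pattern $z$ amounts to placing $m$ disjoint, non-adjacent blocks of consecutive $1$s (of the prescribed lengths, in the prescribed order) inside $n$ positions; a standard stars-and-bars count shows the number of ways is a binomial coefficient $\binom{n-w+1}{m}$ (up to a combinatorial factor accounting for equal block lengths, which is a constant independent of $n$), and $\binom{n-w+1}{m}$ is a polynomial in $n$ of degree $m\le\tuplesize$. Therefore each term $N_n(z)\,c(z)$ is a polynomial in $n$, and since $|P|$ is finite and independent of $n$, the total $|f^{-1}(1)|=\sum_{z\in P} N_n(z)\,c(z)$ is a polynomial in $n$ of degree at most $\tuplesize$. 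This also immediately suggests the practical interpolation recipe used elsewhere in the paper: evaluate $|f^{-1}(1)|$ numerically at $\tuplesize+1$ values of $n$ and recover the closed form.

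The main obstacle I anticipate is not the counting of supports but verifying cleanly that the relevant counting problems (the commutation-pattern counts) really are oblivious in the precise sense defined—i.e., that whether a given commutation pattern is satisfied depends only on the interval pattern and not on the ambient $n$ or the exact location of the blocks. This requires the structural facts that each Hamiltonian term occupies at most one interval of any interval pattern (noted just before the theorem) and that commutation of two Pauli-product terms depends only on their relative arrangement within a block, not on absolute positions; once obliviousness is granted, the polynomiality argument above is routine. A secondary subtlety is the passage from the line to periodic boundary conditions, where the "wraparound" terms must be handled by a separate (still finite, still oblivious) case analysis—but the excerpt signals this is treated afterward, so for this theorem I would only need the open-boundary statement.
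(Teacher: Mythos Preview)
Your proposal is correct and follows essentially the same approach as the paper: partition by interval pattern $z$ (finitely many, independent of $n$), use obliviousness to make the per-support count a constant $c(z)$, and show the number $N_n(z)$ of supports with pattern $z$ is a polynomial in $n$. The only minor difference is that the paper proves $N_n(z)$ is polynomial by induction on the number of intervals together with Faulhaber's formula, whereas you give a direct stars-and-bars count $\binom{n-w+1}{m}$; your argument is in fact a bit cleaner, and your parenthetical about a ``combinatorial factor for equal block lengths'' is unnecessary since the interval pattern is an ordered list, so no multiplicity correction arises.
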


\begin{proof}
	Our task is to compute
	\begin{align}
	\bigl|f^{-1}(1)\bigr|
	&=\bigl|\bigl\{\{i_1, \ldots, i_\tuplesize\} : f\bigl(\{i_1, \ldots, i_\tuplesize\}\bigr)=1\bigr\}\bigr|\\
	&=\sum_{z\in\Z_{\tuplesize k}^{\tuplesize}}\sum_{y\in\Z_{2}^{n}}C_{zy}(n),
	\end{align}
	where
	\begin{equation}
		C_{zy}(n):=\bigl|\bigl\{\{i_1, \ldots, i_\tuplesize\} : f\bigl(\{i_1, \ldots, i_\tuplesize\}\bigr)=1,\, \support\bigl(\{i_1, \ldots, i_\tuplesize\}\bigr)=y,\, \interval(y)=z\bigr\}\bigr|.
	\end{equation}

We claim that
\begin{enumerate}[topsep=4pt,itemsep=0pt,label=(\roman*)]
		\item a constant number of values $z\in\Z_{\tuplesize k}^{\tuplesize}$ (independent of $n$) contribute to the sum; and \label{item:const}
		\item for fixed $z\in\Z_{\tuplesize k}^{\tuplesize}$, the number of $y\in\Z_{2}^{n}$ such that $\interval(y)=z$ is a polynomial in $n$. \label{item:poly}
\end{enumerate}
Claim \ref{item:const} follows because we assume the size $\tuplesize$ of each tuple and the number $k$ of local operators are both independent of $n$.  To prove claim \ref{item:poly}, fix an interval pattern
	\begin{equation}
		z=
		\begin{bmatrix}
			z_1 & \ldots & z_r & 0 & \ldots & 0
		\end{bmatrix}
		\in\Z_{\tuplesize k}^{\tuplesize}
	\end{equation}
	and define
\begin{align}
  T(n; z_1,\ldots,z_r) := |\{y\in\Z_{2}^{n} : \interval(y)=z \}|.
\end{align}
We show by induction on $r$ that $T(n; z_1,\ldots,z_r)$ is a polynomial of degree at most $r$. The base case requires us to place an interval of length $z_1$ in $n$ slots, and clearly $T(n; z_1) = n-z_1+1$, a polynomial of degree $1$.

For the induction step, suppose the claim is true for any $r-1$ intervals $(z_1,\ldots,z_{r-1},0,\ldots,0)$. Using the hypothesis for the case of $r-1$ intervals, we find
	\begin{align}
	T(n; z_1,\ldots,z_{r})
	&=\sum_{j=1}^{n-z_r}T(n-z_r-j; z_1,\ldots,z_{r-1})\\
	&=\sum_{j=1}^{n-z_r}c_{r-1}(z_1,\ldots,z_{r-1})\,(n-z_r-j)^{r-1} \nonumber\\
  &\quad\qquad+ \cdots +c_{0}(z_1,\ldots,z_{r-1})\,(n-z_r-j)^{0},
	\end{align}
where in the second equality we use the hypothesis to conclude that
\begin{equation}
	T(n; z_1,\ldots,z_{r-1})=c_{r-1}(z_1,\ldots,z_{r-1})n^{r-1}+\cdots+c_{0}(z_1,\ldots,z_{r-1})n^{0}
\end{equation}
is a polynomial of degree at most $r-1$.
It follows from Faulhaber's formula that $T(n; z_1,\ldots,z_{r})$ is a polynomial of degree at most $r$, which completes the inductive step.

To complete the proof of the main result, we observe that
\begin{align}
	\bigl|f^{-1}(1)\bigr|
	&= \sum_{z\in\Z_{\tuplesize k}^{\tuplesize}}\sum_{\substack{y\in\Z_{2}^{n}\\\interval(y)=z}}C_{zy}(n)\\
	&= \sum_{z\in\Z_{\tuplesize k}^{\tuplesize}}p_z(n) C_{z},
	\end{align}
where, for each $z \in \Z_{\tuplesize k}^{\tuplesize}$, $p_z(n)$ is a polynomial in $n$ (by claim \ref{item:poly}).  Here we denote $C_z := C_{zy}(n)$ since it is independent of $n$ and $y$, because we assume the counting problem is oblivious.
\end{proof}

The above theorem relies crucially on the assumption that the terms in the Hamiltonian are arranged on a line.
We now describe a variant of the theorem that handles Hamiltonians with periodic boundary conditions.
To this end, consider $n$ qubits arranged on a ring, and again suppose we are given Hermitian operators $G_1,\ldots,G_\numcycles$ with values $d_1,\ldots,d_\numcycles \in \{-1,1\}$ indicating shift directions. We assume that each term $G_i$ acts as the identity except on the first $k$ qubits if $d_i=1$, or on the first $k-1$ qubits and the last qubit if $d_i=-1$. This determines an ordered list of length $\numterms := \numcycles n$ including all translations of the given terms, of the form
\begin{align}
	\bigl(&G_1,R^{d_1} G_1 R^{-d_1},\ldots,R^{d_1(n-1)} G_1 R^{-d_1(n-1)}, \nonumber\\
	&G_2,R^{d_2} G_2 R^{-d_2},\ldots,R^{d_2(n-1)} G_2 R^{-d_2(n-1)}, \nonumber\\
	&\vdots \nonumber\\
	&G_\numcycles,R^{d_\numcycles} G_\numcycles R^{-d_\numcycles},\ldots,R^{d_\numcycles(n-1)} G_\numcycles R^{-d_\numcycles(n-1)}\bigr),
	\label{eq:hamcycles_ring}
\end{align}
where $R$ is the operator that shifts the qubits right by one position.  For each $i \in \range{\mu} := \{1,\ldots,\mu\}$, let $H_i$ denote the $i$th term in this list.

We define the support similarly as before, except that we take into account the periodic boundary conditions. In particular, the support of $R^j G_i R^{-j}$ with $d_i=1$ has a $1$ at positions $(j\bmod n)+1,((j+1)\bmod n)+1,\ldots,((j+k-1)\bmod n)+1$. Likewise, the support of $R^{-j} G_i R^{j}$ with $d_i=-1$ has a $1$ at positions $((n-j-1)\bmod n)+1,((n-j)\bmod n)+1,\ldots,((n-j+k-2)\bmod n)+1$. Again we extend the definition of support to a $\tuplesize$-tuple of terms by taking the entrywise OR of all elements of the tuple.  The notion of an interval pattern remains unchanged.

Our analysis divides the problem into a part where the boundary is irrelevant (so that we can apply the previous analysis) and a part where the boundary plays a role (which only involves a constant number of terms).  To do this, we divide a support into three parts: the \emph{prefix}, the \emph{suffix}, and the \emph{infix}.

Let $y\in\Z_{2}^{n}$ be the support of a tuple of operators. If $y_1=0$, we define $\prefix(y)\in\Z_{2}^{n}$ to be the zero vector. Otherwise, $\prefix(y)$ is the indicator function of the first interval in $y$ (i.e., it is a $0/1$ vector that is $1$ only on the first interval in $y)$. Likewise, if $y_n=0$, we define $\suffix(y)\in\Z_{2}^{n}$ to be the zero vector. Otherwise, $\suffix(y)$ is the indicator function of the last interval in $y$. Then $\infix(y) \in \Z_{2}^{n}$ is the vector obtained by removing the first and the last interval from $y$, i.e.,
\begin{equation}
	\infix(y) := y\oplus \prefix(y) \oplus \suffix(y).
\end{equation}

Let $z\in\Z_{\tuplesize k}^{\tuplesize}$ be an interval pattern. We define the \emph{boundary} as a mapping $\boundary\colon \Z_{\tuplesize k}^{\tuplesize}\rightarrow \Z_{2}^{n}$. If there exists a $y\in\Z_{2}^{n}$ with $y_1=y_n=1$ such that $\interval(y)=z$, then $\boundary(z)=\prefix(y)\oplus \suffix(y)$. Otherwise, we set $\boundary(z)$ to be the zero vector. The mapping $\boundary$ is a well-defined function of the interval pattern $z$ since it does not depend on the specific choice of support $y$ (provided $y_1=y_n=1$, which is the only case in which we will use this function).

We now consider the counting problem.  As before, we let $f\colon \mathcal{S}^\numterms_\tuplesize\rightarrow \Z_{2}$ be a binary-valued function that represents whether a tuple of operators has some specified property, and our goal is to compute $|f^{-1}(1)|$ as in \eq{countingprob}.
For a Hamiltonian on a ring with periodic boundary conditions, we say that the counting problem is \emph{oblivious} if for any fixed $z\in \Z_{\tuplesize k}^{\tuplesize}$, the following two requirements are satisfied:
\begin{enumerate}[topsep=4pt,itemsep=0pt,label=(O\arabic*)]
	\item the quantity
	\begin{equation}
	\bigl|\bigl\{\{i_1, \ldots, i_\tuplesize\} : f\bigl(\{i_1, \ldots, i_\tuplesize\}\bigr)=1,\, \support\bigl(\{i_1, \ldots, i_\tuplesize\}\bigr)=y,\, \interval(y)=z\bigr\}\bigr|
	\end{equation}
	is identical for all $n$ and all $y\in\Z_2^n$ with $\interval(y)=z$ and $y_1 y_n=0$; and \label{item:nonboundary}
	\item \label{item:boundary_poly} fixing $\omega\in\range{\tau}$ and $\{j_1,\ldots,j_\omega\}\in\mathcal{S}^\numterms_\omega$ with $\support(\{j_1,\ldots,j_\omega\})=\boundary(z)$, the quantity
	\begin{equation}
	\bigl|\bigl\{\{k_1, \ldots, k_{\tuplesize-\omega}\} : f_{j_1,\ldots,j_\omega}\bigl(\{k_1, \ldots, k_{\tuplesize-\omega}\}\bigr)=1,\, \support\bigl(\{k_1, \ldots, k_{\tuplesize-\omega}\}\bigr)=\infix(y),\, \interval(y)=z\bigr\}\bigr|
	\end{equation}
	is identical for all $n$ and all $y\in\Z_2^n$ with $\interval(y)=z$ and $y_1=y_n=1$. Here $f_{j_1,\ldots,j_\omega}$ is a restricted version of $f$ defined by
	\begin{equation}
		f_{j_1,\ldots,j_\omega}\bigl(\{k_1, \ldots, k_{\tuplesize-\omega}\}\bigr)=f\bigl(\{j_1,\ldots,j_\omega,k_1, \ldots, k_{\tuplesize-\omega}\}\bigr).
	\end{equation}
\end{enumerate}
Note that the set of indices $\{j_1,\ldots,j_\omega\}$ is disjoint from $\{k_1, \ldots, k_{\tuplesize-\omega}\}$ as $\boundary(z)$ and $\infix(y)$ do not overlap. Therefore, the function $f_{j_1,\ldots,j_\omega}$ is well-defined. Note also that in the first case, no operator can wrap around the boundary since $y_1=0$ or $y_n= 0$. We can thus impose obliviousness on $n-1$ qubits, where the Hamiltonian is arranged on a line. In the second case, we have $y_1=y_n=1$ so that an operator could wrap around the boundary. However, the above definition of obliviousness allows us to enumerate all possible ways to put tuples on the boundary. The usual requirement of obliviousness can then be imposed on the remaining qubits that are not on the boundary.

We now show that oblivious counting
also results in a polynomial in $n$ in the periodic case. Throughout the remainder of this subsection, we use the notations
\begin{align}
	C_{zy}(n,f)&:=\bigl|\bigl\{\{i_1, \ldots, i_\tuplesize\} : f\bigl(\{i_1, \ldots, i_\tuplesize\}\bigr)=1,\, \support\bigl(\{i_1, \ldots, i_\tuplesize\}\bigr)=y,\, \interval(y)=z\bigr\}\bigr| \\
	C_{zy}(n,f_{j_1,\ldots,j_\omega})&:= \bigl|\bigl\{\{k_1, \ldots, k_{\tuplesize-\omega}\} : \begin{aligned}[t] & f_{j_1,\ldots,j_\omega}\bigl(\{k_1, \ldots, k_{\tuplesize-\omega}\}\bigr)=1,\, \\
	& \support\bigl(\{k_1, \ldots, k_{\tuplesize-\omega}\}\bigr)=\infix(y),\, \interval(y)=z\bigr\}\bigr|.
	\end{aligned}
\end{align}

\begin{theorem}
	\label{thm:poly_rep_ring}
	Let $G_1,\ldots,G_\numcycles$ be Hermitian operators and let $d_1,\ldots,d_\numcycles \in \{-1,1\}$. We assume that each term $G_i$ acts as identity except on the first $k$ qubits if $d_i=1$, or on the first $k-1$ qubits and the last qubit if $d_i=-1$.
	Suppose that $f\colon\mathcal{S}^\numterms_\tuplesize\rightarrow \Z_{2}$ induces an oblivious counting problem, as defined above. Then $|f^{-1}(1)|$ is a polynomial in $n$.
\end{theorem}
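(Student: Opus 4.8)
The plan is to follow the proof of \thm{poly_rep}, but to split the sum over supports according to whether a term wraps around the boundary of the ring. As before, fix an interval pattern $z\in\Z_{\tuplesize k}^{\tuplesize}$; since $\tuplesize$ and $k$ are independent of $n$, only a constant number of patterns $z$ contribute, so it suffices to show that each contributes a polynomial in $n$ to $|f^{-1}(1)|$. I would decompose
\begin{equation}
  |f^{-1}(1)| = \sum_{z} \Biggl( \sum_{\substack{y : \interval(y)=z\\ y_1 y_n = 0}} C_{zy}(n,f) \;+\; \sum_{\substack{y : \interval(y)=z\\ y_1 = y_n = 1}} C_{zy}(n,f) \Biggr)
\end{equation}
and handle the two inner sums separately.

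For the non-wrapping sum, no operator can cross the boundary because $y_1=0$ or $y_n=0$, so the entire configuration lives on a line of $n$ qubits with (at least) one endpoint unoccupied. The first obliviousness requirement (O1) gives that $C_{zy}(n,f)$ equals a constant $C_z$ depending only on $z$, and the number of supports $y\in\Z_2^n$ with $\interval(y)=z$ and $y_1 y_n=0$ is a polynomial in $n$ by the same induction on the number of intervals used in \thm{poly_rep} (via Faulhaber's formula, with a minor modification to forbid occupancy of the boundary slots). Hence this sum is a polynomial in $n$.

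For the wrapping sum, I would first observe that when $y_1=y_n=1$ the first and last intervals of $y$ merge into a single interval straddling position~$1$, and that $\boundary(z)=\prefix(y)\oplus\suffix(y)$ is then determined by $z$ alone. Only a bounded number of terms $H_i$ have support contained in $\boundary(z)$, so one can enumerate the constant-size collection of pairs $(\omega,\{j_1,\ldots,j_\omega\})$ with $\support(\{j_1,\ldots,j_\omega\})=\boundary(z)$. For each such choice the remaining $\tuplesize-\omega$ terms must have support exactly $\infix(y)$, a contiguous line segment of length $n$ minus the length of the boundary interval; thus the second obliviousness requirement (O2) gives that $C_{zy}(n,f_{j_1,\ldots,j_\omega})$ is a constant depending only on $z$ and the boundary choice, while the number of supports $y$ with $\interval(y)=z$, $y_1=y_n=1$, and the prescribed boundary interval is polynomial in $n$ by the counting lemma of \thm{poly_rep} applied to the infix region. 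Summing over the bounded number of boundary placements yields another polynomial contribution, and adding the two contributions over the finitely many $z$ completes the argument.

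The main obstacle will be the bookkeeping in the wrapping case: one must verify that fixing $z$ really does pin down $\boundary(z)$ independently of which wrapping support realizes it, that the leftover region carrying $\infix(y)$ is a genuine contiguous line of length $n$ minus a constant (so that both the polynomial counting lemma and the line notion of obliviousness transfer verbatim), and that the index sets $\{j_1,\ldots,j_\omega\}$ and $\{k_1,\ldots,k_{\tuplesize-\omega}\}$ are disjoint so that $f_{j_1,\ldots,j_\omega}$ is well-defined. These are precisely the facts that requirements (O1) and (O2) in the ring definition of obliviousness were designed to supply, so the argument should go through once the reduction to the line case is set up carefully.
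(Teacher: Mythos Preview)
Your proposal is correct and follows essentially the same approach as the paper: split the sum over supports according to whether $y_1 y_n=0$ or $y_1=y_n=1$, reduce the non-wrapping case to the line theorem via (O1), and in the wrapping case enumerate the finitely many boundary placements $(\omega,\{j_1,\ldots,j_\omega\})$ and invoke (O2) so that the remaining count over $\infix(y)$ is polynomial in $n$. The bookkeeping concerns you flag (that $\boundary(z)$ is well-defined from $z$ alone, disjointness of the boundary and infix index sets) are exactly the points the paper disposes of just before stating the ring definition of obliviousness.
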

\begin{proof}
	We compute $|f^{-1}(1)|$ by expanding it over all interval patterns and supports as
	\begin{align}
		|f^{-1}(1)|
		&=\sum_{z\in\Z_{\tuplesize k}^{\tuplesize}}\sum_{y\in\Z_{2}^{n}}C_{zy}(n,f) \\
		&= \sum_{z\in\Z_{\tuplesize k}^{\tuplesize}}\sum_{\substack{y\in\Z_{2}^{n}\\ y_1 y_n=0}}C_{zy}(n,f)
		+ \sum_{z\in\Z_{\tuplesize k}^{\tuplesize}}\sum_{\substack{y\in\Z_{2}^{n}\\ y_1=y_n=1}}C_{zy}(n,f).
	\end{align}
	For the first term, no operator crosses the boundary. We can thus reduce the problem to counting $\tau$-tuples on $n-1$ qubits with interval pattern $z$ and support given by either the first or last $n-1$ positions of $y$. The Hamiltonian can effectively be regarded as arranged on a line. Invoking \thm{poly_rep}, we see that the first term is a polynomial in $n$.

	To handle the second term, we use a basic counting argument: to count the number of $\tuplesize$-tuples, we add up all the possible cases where $\omega$ of them are on the boundary $\boundary(z)$ and the other $\tuplesize-\omega$ have support $\infix(y)$ away from the boundary. This allows us to rewrite
	\begin{equation}
		\sum_{z\in\Z_{\tuplesize k}^{\tuplesize}}\sum_{\substack{y\in\Z_{2}^{n}\\ y_1=y_n=1}}C_{zy}(n,f)
		=\sum_{z\in\Z_{\tuplesize k}^{\tuplesize}}\sum_{\omega\in\range{\tuplesize}}\sum_{\substack{\{j_1,\ldots,j_\omega\}\in\mathcal{S}^\numterms_\omega\\ \support(\{j_1,\ldots,j_\omega\})=\boundary(z)}}
		\sum_{\substack{y\in\Z_{2}^{n}\\ y_1=y_n=1}}C_{zy}(n,f_{j_1,\ldots,j_\omega}).
	\end{equation}
	By \ref{item:boundary_poly} and a similar argument as above, the result is a sum of constant number of polynomials in $n$, which completes the proof.
\end{proof}

Now we apply \thm{poly_rep_ring} to compute commutator bounds for our model Hamiltonian \eq{heisenberg}.
To construct a product formula, we order the terms of the Hamiltonian similarly to the example in \eq{opentermorder}; in particular, we take them in the order
\begin{align}
  \sigma_1^x \sigma_2^x, \ldots, \sigma_{n-1}^x \sigma_n^x, \sigma_n^x \sigma_1^x, \;
  \sigma_1^y \sigma_2^y, \ldots, \sigma_{n-1}^y \sigma_n^y, \sigma_n^y \sigma_1^y, \;
  \sigma_1^z \sigma_2^z, \ldots, \sigma_{n-1}^z \sigma_n^z, \sigma_n^z \sigma_1^z,\;
  \sigma_1^z, \ldots, \sigma_n^z.
  \label{eq:cycletermorder}
\end{align}

Our approach can be applied to product formulas of any order. However, we focus on the second- and fourth-order cases, for which the list of of operators in the product formulas have lengths $\numterms=8n$ and $\numterms=40n$, respectively. The resulting lists of terms take the form specified in \thm{poly_rep_ring}. Since each term in the Hamiltonian acts on at most two neighboring qubits, we can take the support of each operator to be a 0/1 vector of Hamming weight $2$, with a $1$ in two consecutive positions.

To apply the commutator bounds, we must count $\tuplesize$-tuples of Hamiltonian operators satisfying a certain commutation pattern, where $\tuplesize \in \{2,3\}$ for the second-order bound and $\tuplesize \in \{2,3,4,5\}$ for the fourth-order case. It remains to check that this counting problem is oblivious.

We first verify condition \ref{item:nonboundary}. Fix an interval pattern $z\in \Z_{\tuplesize k}^{\tuplesize}$ and consider supports $y\in\Z_2^n$ and $y' \in \Z_2^{n'}$ with $\interval(y)=\interval(y')=z$ and $y_1 y_n = y'_1 y'_{n'} = 0$. By the latter conditions, we know that the relevant operators $H_{i_1},\ldots,H_{i_\tuplesize}$ do not touch the boundary. Intuitively, since the Hamiltonian includes all possible shifts of any given term, the counting problems for $y$ and $y'$ should be equivalent.

To make this rigorous, we set up a one-to-one correspondence between $\tuplesize$-tuples with support $y$ and those with support $y'$.
First consider the special case in which there is only one interval with the same length for both $y$ and $y'$. Let $H_{j_1},\ldots,H_{j_\omega}$ be operators with support $\support(\{j_1,\ldots,j_\omega\})=y$. We construct a corresponding $\omega$-tuple with support $y'$ as follows. Let $\delta$ be the position difference between the first non-zero element of $y'$ and that of $y$.
For any term in the Hamiltonian of the form $R^{d_i l} G_i R^{-d_il}$, the corresponding operator is $R^{d_il+\delta} G_i R^{-d_il-\delta}$.
More generally, we can apply the same transformation to all intervals in $y$ and $y'$ with lengths $z_1,\ldots,z_\tuplesize$ for any $\omega\in\range{\tau}$ and $\{j_1,\ldots,j_\omega\}\in\mathcal{S}^\numterms_\omega$. The resulting correspondence is easily checked to be invertible, so it specifies a bijection between $\omega$-tuples as claimed.
We have thus established condition \ref{item:nonboundary}.
Condition \ref{item:boundary_poly} follows similarly as $\boundary(z)$ and $\infix(y)$ do not overlap.

Using symbolic polynomial interpolation, we derive the following succinct commutator bounds for the second- and fourth-order formulas.

\begin{theorem}[Second-order commutator bound, succinct form]
	\label{thm:second_succinct}
  Let $H$ be the Hamiltonian \eq{heisenberg}, with terms ordered as in \eq{cycletermorder}.  Then the error in the second-order product formula approximation satisfies
	\begin{equation}
	\begin{aligned}
	\norm*{\exp(-iHt)-\left[S_{2}\left(-{it}/{r}\right)\right]^r}
	\leq\frac{\Lambda^3|t|^3}{r^2}T_2(n)
	+\frac{4(4n\Lambda t)^4}{3r^3}\exp\left(\frac{8n\Lambda |t|}{r}\right),
	\end{aligned}
	\end{equation}
	where
	\begin{equation}
	T_2(n) :=
	\begin{cases}
	194,\ &n=3 \\
	40n^2-58n,\ &n\geq 4.
	\end{cases}
	\end{equation}
\end{theorem}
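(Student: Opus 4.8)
The plan is to obtain \thm{second_succinct} by specializing the abstract bound \thm{second_comm} to the Heisenberg model \eq{heisenberg}. For this Hamiltonian there are $L=4n$ Pauli-product terms (the three couplings $\sigma_j^a\sigma_{j+1}^a$ for $a\in\{x,y,z\}$ together with the field term $h_j\sigma_j^z$, for each of the $n$ sites), so the augmented list $\{\tilde H_j\}_{j=1}^{2L}$ appearing in the second-order formula has $2L=8n$ entries. Substituting $L=4n$ into \eq{second_comm_result}, the higher-order remainder is already exactly $\tfrac{4(4n\Lambda t)^4}{3r^3}\exp\!\big(\tfrac{8n\Lambda|t|}{r}\big)$, so the entire task reduces to showing that the bracketed quantity $\tfrac1{24}D+\tfrac1{12}T_2+\tfrac16T_3+\tfrac18T_4$ — where $D,T_2,T_3,T_4$ are the commutation-pattern counts of \thm{second_comm} evaluated for the term ordering \eq{cycletermorder} — equals the function $T_2(n)$ stated in the theorem, i.e.\ $194$ for $n=3$ and $40n^2-58n$ for $n\ge 4$.

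To evaluate that combination I would recognize each of $D,T_2,T_3,T_4$ as an instance of the oblivious counting problem for ring Hamiltonians set up in \sec{pfcomconcrete}: the indicator $f$ selects tuples realizing a fixed commutation pattern, with tuple size $\tau=2$ for $D$ and $\tau=3$ for $T_2,T_3,T_4$. The key observation is that, because $H$ contains every cyclic translate of each of its four term types and the relation ``$\tilde H_i$ commutes with $\tilde H_j$'' is translation invariant, the number of tuples realizing a given commutation pattern with a given interval pattern depends only on that interval pattern; this is precisely the obliviousness hypothesis. I would verify the two required conditions: the non-boundary condition reduces to the line Hamiltonian and hence to \thm{poly_rep}, while the boundary condition follows by enumerating the constantly many ways a tuple can wrap around the ring, exactly as in the proof of \thm{poly_rep_ring}. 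Invoking \thm{poly_rep_ring} then shows that $D,T_2,T_3,T_4$ are each polynomials in $n$, of degree at most $2$ (for $D$) and at most $3$ (for the triple counts). In fact, since each term of the Hamiltonian fails to commute with only $O(1)$ others, $D$ and the ``path'' counts $T_2,T_3$ are linear in $n$, while $T_4$ is quadratic because it includes the triples containing exactly one non-commuting pair (choose the non-commuting pair in $O(n)$ ways, then a free third operator in $O(n)$ ways); the cubic count $T_1$ does not appear in the bound, so no cancellation is needed and $T_2(n)$ is a genuine quadratic for large $n$.

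With the degree in hand, I would determine the polynomial for $n\ge 4$ by computing $D,T_2,T_3,T_4$ explicitly at a handful of values of $n$ via direct enumeration of the $8n$ operators (feasible because only small $n$ are needed), forming the combination $\tfrac1{24}D+\tfrac1{12}T_2+\tfrac16T_3+\tfrac18T_4$, and interpolating to recover $40n^2-58n$. The case $n=3$ must be treated separately: on a ring of only three sites, distinct cyclic translates of a two-local term can coincide or acquire anomalous commutation relations, so the generic polynomial need not apply, and a direct count there gives $194$. Plugging the resulting expression for $T_2(n)$ back into \thm{second_comm} with $L=4n$ yields the stated inequality.

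The main obstacle is the careful verification of obliviousness — especially the boundary condition, which requires checking that every configuration in which a tuple of terms straddles the periodic boundary contributes a count independent of $n$ and of the precise support — together with the explicit treatment of $n=3$, where one must rule out (or otherwise account for) the coincidences among cyclic translates that break the polynomial behavior. Once the oblivious structure and the quadratic degree bound are established, the symbolic interpolation and the final algebraic rearrangement are routine.
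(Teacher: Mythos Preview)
Your proposal is correct and follows essentially the same approach as the paper: specialize \thm{second_comm} with $L=4n$, verify that the commutation-pattern counts $D,T_2,T_3,T_4$ on the augmented $8n$-term list are oblivious in the sense of \sec{pfcomconcrete}, invoke \thm{poly_rep_ring} to conclude they are polynomials in $n$, and determine the combination $\tfrac{1}{24}D+\tfrac{1}{12}T_2+\tfrac{1}{6}T_3+\tfrac{1}{8}T_4$ by symbolic interpolation from direct enumeration at small $n$, treating $n=3$ separately. The paper states exactly this strategy in the paragraph preceding the theorem and does not give a more detailed argument.
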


\begin{theorem}[Fourth-order commutator bound, succinct form]
	\label{thm:fourth_succinct}
  Let $H$ be the Hamiltonian \eq{heisenberg}, with terms ordered as in \eq{cycletermorder}, and let $p:=1/(4-4^{1/3})$.
  Then the error in the fourth-order product formula approximation satisfies
	\begin{align}
	&\norm*{\exp(-iHt)-\left[S_{4}\left(-{it}/{r}\right)\right]^r} \nonumber\\
	&\quad \leq \biggl(\frac{4p-1}{2}\Lambda |t|\biggr)^5\frac{1}{5!\cdot r^4}T_4(n)
	+2\frac{(20(4p-1)n\Lambda t)^6}{6!\cdot r^5}\exp\left(\frac{20(4p-1)n\Lambda |t|}{r}\right),
	\end{align}
  where
	\begin{equation}
	T_4(n) :=
	\begin{cases}
	23073564672,\ &n=3 \\
	94192316416,\ &n=4\\
	278878851840,\ &n=5\\
	1280000000n^4-7701760000n^3+23685120000n^2-30224677632n,\ &n\geq 6.
	\end{cases}
	\end{equation}
\end{theorem}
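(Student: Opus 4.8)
The plan is to specialize the abstract fourth-order commutator bound of \thm{fourth_comm} to the Hamiltonian \eq{heisenberg} and to identify the resulting error coefficient as a low-degree polynomial in $n$ by invoking \thm{poly_rep_ring}. The periodic Heisenberg model \eq{heisenberg} has $L=4n$ terms, so substituting $L=4n$ into the higher-order remainder of \eq{fourth_err} directly produces the term $2(20(4p-1)n\Lambda t)^6/(6!\,r^5)\,\exp(20(4p-1)n\Lambda|t|/r)$ appearing in the statement, and the main term becomes $((4p-1)\Lambda|t|/2)^5 (5!\,r^4)^{-1}\,T_4(n)$ where
\begin{multline*}
  T_4(n) := \sum_{a\in\pm1}C_a N_a + \sum_{b_1,b_2,b_3\in\pm1}C_{b_1 b_2 b_3}N_{b_1 b_2 b_3} \\
  + \sum_{c_1,\ldots,c_6\in\pm1}C_{c_1\ldots c_6}N_{c_1\ldots c_6} + \sum_{d_1,\ldots,d_{10}\in\pm1}C_{d_1\ldots d_{10}}N_{d_1\ldots d_{10}}.
\end{multline*}
It then remains to show that $T_4(n)$ equals the claimed piecewise expression.

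First I would observe that, with the ordering \eq{cycletermorder} and each term repeated ten times as dictated by $S_4$, the list of operators entering \thm{fourth_comm} is exactly of the form required by \thm{poly_rep_ring}: it comprises $\numcycles=40$ families of cyclic translates of $2$-local ($k=2$) operators on a ring of $n$ qubits, for a total length $\numterms=40n$. Next I would verify that counting $5$-tuples (and the $4$-, $3$-, $2$-subtuples) realizing any fixed commutation pattern $a$, $b_1b_2b_3$, $c_1\ldots c_6$, or $d_1\ldots d_{10}$ is an \emph{oblivious} counting problem in the sense of \thm{poly_rep_ring}. Condition \ref{item:nonboundary} holds because the Hamiltonian contains every cyclic translate of each term: translating a tuple that avoids the boundary gives an explicit bijection between tuples with a given interval pattern on $n$ qubits and on $n'$ qubits preserving all pairwise values of $f$, so the count depends only on the interval pattern. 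Condition \ref{item:boundary_poly} follows the same way after freezing the (constantly many) operators that wrap the boundary. Hence \thm{poly_rep_ring} applies to each $N_\bullet$, so each is a polynomial in $n$, and therefore $T_4(n)$ is a polynomial in $n$ for all sufficiently large $n$.

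For the degree, \thm{poly_rep} (and its ring version) says that the count associated with an interval pattern having $m$ intervals is a polynomial of degree at most $m$. A $5$-tuple of $2$-local terms occupies five distinct singleton intervals precisely when its terms are pairwise disjoint, in which case all ten pairwise relations are ``commute''; the associated coefficient $C_{+1\,+1\,\cdots\,+1}$ is zero, because for pairwise-commuting Hamiltonians $S_4$ is exact (indeed $S_2(\lambda)=\exp(\lambda\sum_j H_j)$ and the weights of the five $S_2$ factors making up $S_4$ sum to $1$), so the fifth-order error cannot contain a monomial in mutually commuting operators. Every other commutation pattern includes a non-commuting --- hence overlapping --- pair, which must lie inside a single interval, leaving at most four intervals; thus all other $N_\bullet$ contribute degree at most $4$, and $T_4$ is a polynomial of degree at most $4$ for $n$ beyond some threshold.

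Finally I would pin down the polynomial by symbolic interpolation: evaluate $T_4(n)$ exactly by brute-force enumeration for enough consecutive values of $n$ above the exceptional range (five points determine a quartic), obtaining $1280000000n^4-7701760000n^3+23685120000n^2-30224677632n$ for $n\ge 6$; the interval-pattern decomposition underlying \thm{poly_rep_ring} both certifies the validity of this interpolation and organizes the enumeration so as to tame the nominal $O(n^5)$ cost over a $40n$-term list. For $n=3,4,5$ the ring is short enough that a $5$-tuple's support (Hamming weight up to $10$) can wrap entirely around it, producing degenerate interval overlaps not captured by the generic analysis, so these three cases are computed directly, yielding the constants $23073564672$, $94192316416$, and $278878851840$. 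I expect the main obstacle to be the careful case-by-case verification of obliviousness for every commutation pattern together with the precise determination of the threshold $n\ge 6$ below which the closed form fails; once those are secured, the remaining combinatorics and the interpolation are routine.
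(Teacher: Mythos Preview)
Your proposal is correct and mirrors the paper's approach: specialize \thm{fourth_comm} to the periodic Heisenberg Hamiltonian, verify obliviousness exactly as in the discussion preceding \thm{second_succinct}, invoke \thm{poly_rep_ring} to conclude that $T_4$ is a polynomial, cap its degree at four via the vanishing of the all-commute coefficient (the paper makes the same cancellation argument at the end of \sec{pfcomconcrete}), and then fix the coefficients and the exceptional small-$n$ values by symbolic interpolation and direct enumeration. One harmless wording slip: five intervals implies pairwise disjoint $2$-qubit supports, but not conversely (disjoint-but-adjacent supports merge into a single interval), so ``precisely when'' should read ``only if''; only that direction is needed and it is the one you actually use.
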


We conclude this appendix by considering the asymptotic gate complexity of the \PF\ algorithm using our commutator bounds. Take the fourth-order bound as an example. With $\Lambda=\Theta(1)$ and $t=\Theta(n)$, the commutator bound is
\begin{equation}
\norm*{\exp(-iHt)-\left[S_{4}\left(-{it}/{r}\right)\right]^r}
\leq O\left(\frac{n^9}{r^4}+\frac{n^{12}}{r^5}\right).
\end{equation}
To guarantee that the simulation error $\epsilon$ is at most some constant, it suffices to use $r=O(n^{2.4})$ segments. Since the circuit for each segment has size $O(n)$, we have an overall gate complexity of $O(n^{3.4})$. Along similar lines, we find gate complexities of $O(n^4)$ (resp., $O(n^{11/3})$) for the first-order (resp., second-order) commutator bound. These bounds improve the asymptotic gate complexities of the \PF\ algorithm established by the analytic and minimized bounds (as shown in \tab{algsummary}), which give $O(n^5)$ for the first-order formula, $O(n^4)$ for second order, and $O(n^{3.5})$ for fourth order.

We only present concrete commutator bounds for the first-, second-, and fourth-order product formulas. In general, to obtain the $2k$th-order commutator bound, one must count the number of $(2k+1)$-tuples satisfying a certain commutation pattern in a list of operators of length $8\cdot5^{k-1}n$. For $k\geq 3$, computing the exact form of the $(2k)$th order bound seems challenging even with the help of \thm{poly_rep_ring} and polynomial interpolation.

Nevertheless, it is still possible to obtain the asymptotic $n$-dependence of the commutator bound. The key step is to study those $(2k+1)$-tuples for which all pairs of operators commute with each other. The number $N_{\mathrm{comm}}$ of such commuting $(2k+1)$-tuples satifsfies
\begin{equation}
	(8\cdot5^{k-1})^{2k+1}\binom{n-2k}{2k+1}\leq N_{\mathrm{comm}}\leq \binom{8\cdot5^{k-1}n}{2k+1}.
\end{equation}
(Here the lower bound follows by placing each $(2k+1)$-tuple so that its support contains exactly $2k+1$ intervals.)  Invoking \thm{poly_rep_ring}, we see that $N_{\textrm{comm}}$ is a polynomial in $n$ whose leading term is $(8\cdot5^{k-1}n)^{2k+1}$.

When we Taylor expand the evolutions $\exp(\lambda H)$ and $S_{2k}(\lambda)$, those $(2k+1)$-tuples for which all pairs of operators commute with each other cancel. The remaining terms are either $(2k+1)$-tuples where at least one pair of operators do not commute, or $l$-tuples with $l\leq 2k$. Our above discussion shows that there are $O(n^{2k})$ such tuples. Therefore, the $2k$th-order commutator bound takes the form
\begin{equation}
	\norm*{\exp\left(-iHt\right)-\left[S_{2k}\left(-{it}/{r}\right)\right]^r}
	\leq O\left(\frac{|t|^{2k+1}n^{2k}}{r^{2k}}+\frac{(nt)^{2k+2}}{r^{2k+1}}\right)
	=O\left(\frac{n^{4k+1}}{r^{2k}}+\frac{n^{4k+4}}{r^{2k+1}}\right).
\end{equation}
To ensure that the simulation error is $O(1)$ for $t=n$, it suffices to choose $r=O(n^{2+2/(2k+1)})$, which leads to a total gate complexity of $O(n^{3+2/(2k+1)})$.  This improves over the performance of the analytic and minimized bounds, which give complexity $O(n^{3+1/k})$.

\subsection{Empirical bounds}
\label{sec:empiricalbounds}

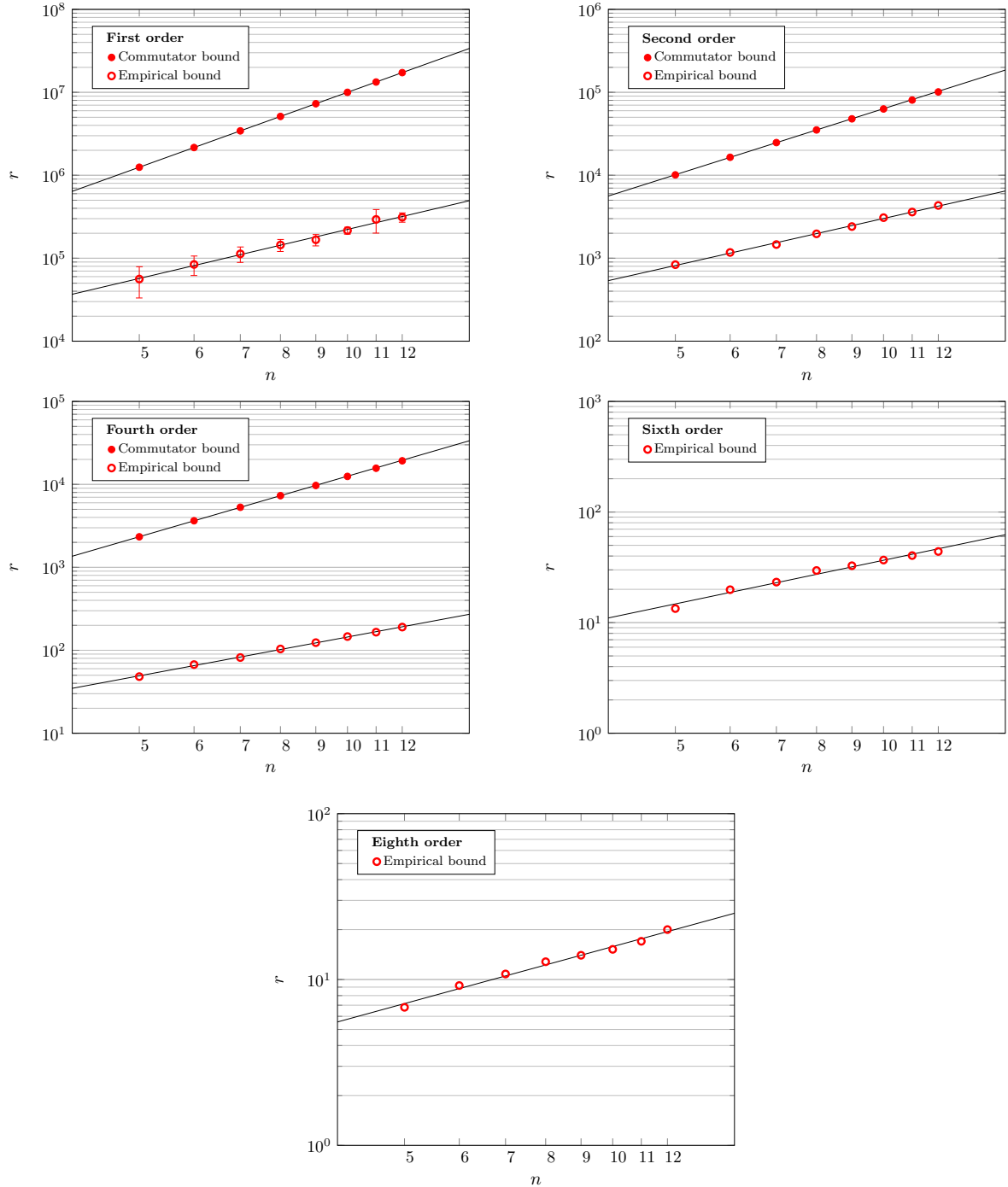
\begin{figure}
  \begin{subfigure}{.5\linewidth}
    \resizebox{.9\textwidth}{!}{
	\begin{tikzpicture}
	\begin{axis}[
	width=10cm,
	log x ticks with fixed point,
	xtick={5,6,7,8,9,10,11,12},
	xmode=log,
	ymode=log,
	xmin = 4,
	xmax = 15,
	ymin = 10^4,
	ymax = 10^8,
	ymajorgrids=true,
	yminorgrids=true,
	legend style={at={(0.05,0.95)},anchor=north west, font=\fontsize{8}{5}\selectfont},
	xlabel={$n$},
	ylabel={$r$},
	every axis legend/.append style={nodes={right},
		every x tick label/.append style={font=\small},
		every y tick label/.append style={font=\small},
	}
	]

  \addlegendimage{empty legend}
  \addlegendentry[yshift=0pt]{\hspace{-.25cm}\textbf{First order}}

	\addplot[only marks, red] coordinates {
		(5,1250268)
		(6,2160462)
		(7,3430733)
		(8,5121094)
		(9,7291555)
		(10,10002133)
		(11,13312839)
		(12,17283687)
	};
	\addlegendentry{Commutator bound}

	\addplot+[only marks,red,mark=o,mark options={fill=white,line width=1.25pt},error bars/.cd,y dir=both,y explicit] coordinates {
		(5,56093.95) +- (0,22918.36)
		(6,84104.95) +- (0,22220.77)
		(7,112854.35) +- (0,23819.89)
		(8,144517.00) +- (0,23849.85)
		(9,166919.60) +- (0,26090.40)
		(10,215675.60) +- (0,20655.86)
		(11,293557.00) +- (0,92756.01)
		(12,311816.20) +- (0,39949.60)
	};
	\addlegendentry{Empirical bound}

	\addplot[
	color = black,
	mark = none
	]	coordinates {
		( 1, 10002.17 )
		( 100, 10002093516.99 )
	};

	\addplot[
	color = black,
	mark = none
	]	coordinates {
		( 1, 2417.22 )
		( 100, 20480580.59 )
	};

	\end{axis}
	\end{tikzpicture}
    }
  \end{subfigure}
  \begin{subfigure}{.5\linewidth}
    \resizebox{.9\textwidth}{!}{
	\begin{tikzpicture}
	\begin{axis}[
	width=10cm,
	log x ticks with fixed point,
	xtick={5,6,7,8,9,10,11,12},
	xmode=log,
	ymode=log,
	xmin = 4,
	xmax = 15,
	ymin = 10^2,
	ymax = 10^6,
	ymajorgrids=true,
	yminorgrids=true,
	legend style={at={(0.05,0.95)},anchor=north west, font=\fontsize{8}{5}\selectfont},
	xlabel={$n$},
	ylabel={$r$},
	every axis legend/.append style={nodes={right},
		every x tick label/.append style={font=\small},
		every y tick label/.append style={font=\small},
	}
	]

  \addlegendimage{empty legend}
  \addlegendentry[yshift=0pt]{\hspace{-.25cm}\textbf{Second order}}

	\addplot[only marks, red] coordinates {
		(5,10110)
		(6,16472)
		(7,24775)
		(8,35190)
		(9,47881)
		(10,62998)
		(11,80689)
		(12,101091)
	};
	\addlegendentry{Commutator bound}

	\addplot[only marks,red,mark=o,mark options={fill=white,line width=1.25pt}] coordinates {
		(5,835.85)
		(6,1173.40)
		(7,1461.80)
		(8,1967.80)
		(9,2408.60)
		(10,3084.00)
		(11,3602.20)
		(12,4322.60)
	};
	\addlegendentry{Empirical bound}

	\addplot[
	color = black,
	mark = none
	]	coordinates {
		( 1, 39.47 )
		( 100, 229905.64 )
	};

	\addplot[
	color = black,
	mark = none
	]	coordinates {
		( 1, 143.54 )
		( 100, 28058448.95 )
	};

	\end{axis}
	\end{tikzpicture}
    }
  \end{subfigure}
  \begin{subfigure}{.5\linewidth}
    \resizebox{.9\textwidth}{!}{
	\begin{tikzpicture}
	\begin{axis}[
	width=10cm,
	log x ticks with fixed point,
	xmode=log,
	ymode=log,
	xmin = 4,
	xmax = 15,
	xtick={5,6,7,8,9,10,11,12},
	ymin = 10,
	ymax = 10^5,
	ymajorgrids=true,
	yminorgrids=true,
	legend style={at={(0.05,0.95)},anchor=north west, font=\fontsize{8}{5}\selectfont},
	xlabel={$n$},
	ylabel={$r$},
	every axis legend/.append style={nodes={right},
		every x tick label/.append style={font=\small},
		every y tick label/.append style={font=\small},
	}
	]

  \addlegendimage{empty legend}
  \addlegendentry[yshift=0pt]{\hspace{-.25cm}\textbf{Fourth order}}

	\addplot[only marks, red] coordinates {
		(5,2334)
		(6,3644)
		(7,5297)
		(8,7313)
		(9,9704)
		(10,12485)
		(11,15664)
		(12,19256)
	};
	\addlegendentry{Commutator bound}

	\addplot[only marks,red,mark=o,mark options={fill=white,line width=1.25pt}] coordinates {
		(5,48.15)
		(6,67.25)
		(7,81.80)
		(8,103.70)
		(9,123.60)
		(10,146.40)
		(11,165.40)
		(12,190.40)
	};
	\addlegendentry{Empirical bound}

	\addplot[
	color = black,
	mark = none
	]	coordinates {
		( 1, 47.25 )
		( 100, 3335517.10 )
	};

	\addplot[
	color = black,
	mark = none
	]	coordinates {
		( 1, 4.03 )
		( 100, 5186.16 )
	};

	\end{axis}
	\end{tikzpicture}
    }
  \end{subfigure}
  \begin{subfigure}{.5\linewidth}
    \resizebox{.9\textwidth}{!}{
	\begin{tikzpicture}
	\begin{axis}[
	width=10cm,
	log x ticks with fixed point,
	xtick={5,6,7,8,9,10,11,12},
	xmode=log,
	ymode=log,
	xmin = 4,
	xmax = 15,
	ymin = 10^0,
	ymax = 10^3,
	ymajorgrids=true,
	yminorgrids=true,
	legend style={at={(0.05,0.95)},anchor=north west, font=\fontsize{8}{5}\selectfont},
	xlabel={$n$},
	ylabel={$r$},
	every axis legend/.append style={nodes={right},
		every x tick label/.append style={font=\small},
		every y tick label/.append style={font=\small},
	}
	]

  \addlegendimage{empty legend}
  \addlegendentry[yshift=0pt]{\hspace{-.25cm}\textbf{Sixth order}}

	\addplot[only marks,red,mark=o,mark options={fill=white,line width=1.25pt}] coordinates {
		(5,13.40)
		(6,19.85)
		(7,23.25)
		(8,29.60)
		(9,32.70)
		(10,36.80)
		(11,40.40)
		(12,44)
	};
	\addlegendentry{Empirical bound}

	\addplot[
	color = black,
	mark = none
	]	coordinates {
		( 1, 1.79 )
		( 100, 750.02 )
	};

	\end{axis}
	\end{tikzpicture}
    }
  \end{subfigure}
  \begin{center}
  \begin{subfigure}{.5\linewidth}
  	\resizebox{.9\textwidth}{!}{
  		\begin{tikzpicture}
  		\begin{axis}[
  		width=10cm,
  		log x ticks with fixed point,
  		xtick={5,6,7,8,9,10,11,12},
  		xmode=log,
  		ymode=log,
  		xmin = 4,
  		xmax = 15,
  		ymin = 10^0,
  		ymax = 10^2,
  		ymajorgrids=true,
  		yminorgrids=true,
  		legend style={at={(0.05,0.95)},anchor=north west, font=\fontsize{8}{5}\selectfont},
  		xlabel={$n$},
  		ylabel={$r$},
  		every axis legend/.append style={nodes={right},
  			every x tick label/.append style={font=\small},
  			every y tick label/.append style={font=\small},
  		}
  		]

  		\addlegendimage{empty legend}
  		\addlegendentry[yshift=0pt]{\hspace{-.25cm}\textbf{Eighth order}}

  		\addplot[only marks,red,mark=o,mark options={fill=white,line width=1.25pt}] coordinates {
  			(5,6.80)
  			(6,9.20)
  			(7,10.80)
  			(8,12.80)
  			(9,14)
  			(10,15.20)
  			(11,17)
  			(12,20)
  		};
  		\addlegendentry{Empirical bound}

  		\addplot[
  		color = black,
  		mark = none
  		]	coordinates {
  			( 1, 1.14 )
  			( 100, 218.79 )
  		};

  		\end{axis}
  		\end{tikzpicture}
  	}
  \end{subfigure}
  \end{center}
\caption{Comparison of the values of $r$ using the commutator and empirical bounds for formulas of order $1$, $2$, and $4$, and values of $r$ for the empirical bound for formulas of order $6$ and $8$. Straight lines show power-law fits to the data.  The error bars for product formulas of order greater than $1$ are negligibly small, so we omit them from the plots.\label{fig:pf_empirical}}
\end{figure}

\comment{
In the previous discussions, we have reviewed the several existing
error bounds for product formulas. We also proposed new error bounds
that utilize the commutation relations between terms of the
Hamiltonian. However, as is pointed out in \cite{Pou15}, these bounds
are likely to become loose in practical applications. Below is a toy
example that illustrates this problem.

\begin{example}[Tightness problem]
  Suppose that we want to simulate $H=X+Z$ with time $t=1$ and
  accuracy $\epsilon_{\text{pred}}=10^{-3}$. If the first-order
  formula is used, the corresponding naive bound and commutator bound
  are
  \begin{equation*}
    \begin{aligned}
      \norm{e^{-i(X+Z)}-\big(e^{\frac{-i}{r}X}e^{\frac{-i}{r}Z}\big)^r}&\leq \frac{4e^{\frac{2}{r}}}{r},\\
      \norm{e^{-i(X+Z)}-\big(e^{\frac{-i}{r}X}e^{\frac{-i}{r}Z}\big)^r}&\leq
      \frac{1}{r}+\frac{8e^{\frac{2}{r}}}{3r^2}.
    \end{aligned}
  \end{equation*}
  To achieve the desired accuracy, the above bounds suggest choosing
  $r_{\text{naive}}=4002$ and $r_{\text{comm}}=1003$,
  respectively. However, a direct calculation of the error shows that
  it suffices to set $r_{\text{actual}}=699$.
\end{example}

In this subsection, we describe a possible solution to the above issue by establishing empirical bounds for product formulas.
} 

While the bounds discussed in \sec{pfanamin} and \sec{pfcom} provide rigorous correctness guarantees, they can be very loose.  To understand the minimum resources that suffice for product formula simulation, we estimate their empirical performance.  Of course, since quantum simulation is computationally challenging, we can only directly compute the actual simulation error for small instances.  Using binary search, we find the value of $r$ (the total number of segments) that just suffices to ensure error $10^{-3}$.  We extrapolate this behavior to produce a non-rigorous estimate of the performance of product formula simulation for instances of arbitrary size. We emphasize that the resulting \emph{empirical bound} does not come with a guarantee of correctness. Nevertheless, we believe it better captures the true performance of product formula simulations and indicates the extent to which our rigorous bounds are loose.

We numerically simulate the product formula algorithm for systems of size $5$ to $12$, determining the value of $r$ required to ensure error $10^{-3}$ as described above, and averaging over five random choices of the local field strengths $h_j$.  We fit these data to power laws, as depicted in \fig{pf_empirical}. We find
\begin{equation}
	r_1=2417 n^{1.964},~~
	r_2=39.47 n^{1.883},~~
	r_4=4.035 n^{1.555},~~
	r_6=1.789 n^{1.311},~~
	r_8=1.144 n^{1.141},
\end{equation}
where $r_i$ is the number of segments for the $i$th-order formula to produce a simulation that is accurate to within $10^{-3}$.  Considering the size of the circuit for each segment, this suggests an asymptotic complexity of roughly $9668 n^{2.964}$ for the first-order formula, $315.8 n^{2.883}$ for second order, $161.4 n^{2.555}$ for fourth order, $357.8 n^{2.311}$ for sixth order, and $1144 n^{2.141}$ for eighth order.

\section{Taylor series implementation details}
\label{app:lcu}

In this appendix, we discuss some technical details related to the implementation of the Taylor series (\TS) algorithm, as introduced in \sec{alglcu}.  We present concrete error bounds in \sec{lcuerror} and discuss the failure probability of the approach in \sec{tsfailure}.  Then we describe how to implement $\select(V)$ gates, a major component of the algorithm: in \sec{controlencoding}, we explain how the control registers for these gates are encoded, and in \sec{selectV}, we give an optimized implementation of $\select(V)$ gates.

\subsection{Error analysis}
\label{sec:lcuerror}

In this section, we derive an explicit error bound that quantifies how the truncation order $K$ affects the overall accuracy. We then discuss how to apply this bound.

We begin by bounding the error introduced by truncating the Taylor series.

\begin{lemma}
	\label{lem:trunc}
	With the definitions of $\tilde{U}(t)$ in \eq{utwiddle} and $\tseg$ in \eq{tseg}, we have
	\begin{equation}
	\norm{\tilde{U}(t)-\exp(-iHt)}
  \leq 2\frac{(\ln 2)^{K+1}}{(K+1)!}
	\end{equation}
  for any $t \le \tseg$.
\end{lemma}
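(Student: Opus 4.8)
The plan is to recognize that $\tilde U(t)$ from \eq{utwiddle} is precisely the degree-$K$ Taylor polynomial of $\exp(-iHt)$, so the difference is the operator-valued Taylor remainder $\rem_K\bigl(\exp(-iHt)\bigr) = \sum_{k=K+1}^{\infty}\frac{(-it)^k}{k!}H^k$, and then to dominate this in spectral norm by a scalar remainder that we already control via \lem{prodtail}. First I would apply the triangle inequality and submultiplicativity of the spectral norm to get
\begin{equation}
  \norm{\tilde U(t)-\exp(-iHt)}
  = \norm*{\sum_{k=K+1}^{\infty}\frac{(-it)^k}{k!}H^k}
  \le \sum_{k=K+1}^{\infty}\frac{|t|^k\norm{H}^k}{k!}
  = \rem_K\bigl(\exp(|t|\,\norm{H})\bigr).
\end{equation}

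Next I would bound $\norm{H}$. Since $H=\sum_{\ell=1}^{L}\alpha_\ell H_\ell$ with $\alpha_\ell>0$ and each $H_\ell$ a tensor product of Pauli operators (so $\norm{H_\ell}=1$), the triangle inequality gives $\norm{H}\le\alpha_1+\cdots+\alpha_L$. Combined with the hypothesis $t\le\tseg=\frac{\ln 2}{\alpha_1+\cdots+\alpha_L}$ and the standing assumption $t>0$, this yields $|t|\,\norm{H}\le(\alpha_1+\cdots+\alpha_L)\,\tseg=\ln 2$. Then I would invoke \lem{prodtail} with $\lambda=|t|\,\norm{H}$ to obtain $\rem_K\bigl(\exp(|t|\,\norm{H})\bigr)\le\frac{(|t|\,\norm{H})^{K+1}}{(K+1)!}\exp(|t|\,\norm{H})$, and finally use that $x\mapsto x^{K+1}e^{x}$ is nondecreasing on $[0,\infty)$ together with $|t|\,\norm{H}\le\ln 2$ and $e^{\ln 2}=2$ to conclude $\norm{\tilde U(t)-\exp(-iHt)}\le 2\frac{(\ln 2)^{K+1}}{(K+1)!}$, as claimed.

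There is no substantive obstacle here: the argument is a one-line application of standard norm inequalities followed by the already-proved \lem{prodtail}. The only point requiring care is tracking that the product $|t|\,\norm{H}$ is bounded by \emph{exactly} $\ln 2$, which is precisely what the definition of $\tseg$ in \eq{tseg} is engineered to deliver; once that is in place, the factor of $2$ in the bound is simply $e^{\ln 2}$.
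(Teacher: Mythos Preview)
Your proof is correct and follows essentially the same route as the paper: identify the difference as the Taylor remainder, bound $\norm{H}\le\alpha_1+\cdots+\alpha_L$ so that $|t|\,\norm{H}\le\ln 2$, and then apply \lem{prodtail}. The only cosmetic difference is that the paper substitutes $\ln 2$ into the tail sum before invoking \lem{prodtail}, whereas you invoke \lem{prodtail} first and then use monotonicity; both orderings yield the same bound.
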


\begin{proof}
We have
\begin{align}
	\norm{\tilde{U}(t)-\exp(-iHt)}
	&=\norm*{\sum_{k=K+1}^{\infty}\frac{(-iHt)^k}{k!}} \\
	&\leq\sum_{k=K+1}^{\infty}\frac{(\norm{H}t)^k}{k!} \\
	&\leq\sum_{k=K+1}^{\infty}\frac{((\alpha_1+\cdots+\alpha_L)t)^k}{k!}\\
	&\leq\sum_{k=K+1}^{\infty}\frac{(\ln 2)^k}{k!}\\
  &\leq 2\frac{(\ln 2)^{K+1}}{(K+1)!},
\end{align}
where the final inequality follows from \lem{prodtail}.
\end{proof}

Next we consider the error induced by the isometry $\iso(t)$ defined in \eq{iso}.  It is straightforward to verify that
\begin{equation}
  (\langle 0|\otimes I) \iso(t)
  =\frac{3}{s}\tilde{U}(t)-\frac{4}{s^3}\tilde{U}(t)\tilde{U}(t)^{\dagger}\tilde{U}(t).
\end{equation}
As discussed in \sec{alglcu}, we rotate an ancilla qubit to increase the value of $s$ to be precisely $s=2$.  Then the following bound characterizes the error in the implementation of the Taylor series.

\comment{
Thus we use the following bound:

\begin{lemma}
	\label{lem:amp}
	Suppose that $|s-2|\leq\delta$ and $\norm{\tilde{U}-U}\leq\delta$ for some unitary operator $U$. Then
	\begin{equation}
	\norm*{\frac{3}{s}\tilde{U}-\frac{4}{s^3}\tilde{U}\tilde{U}^{\dagger}\tilde{U}-U}
	\leq \frac{\delta^3+9\delta^2+9\delta+28}{(2-\delta)^3}\delta.
	\end{equation}
\end{lemma}
\begin{proof}
Since $|s-2|\le\delta$, we have
\begin{equation}
	\bigg|\frac{3}{s}-\frac{3}{2}\bigg|
  =\frac{3|s-2|}{2s}\leq\frac{3\delta}{2(2-\delta)}
\end{equation}
and
\begin{equation}
	\bigg|\frac{4}{s^3}-\frac{1}{2}\bigg|
  =\frac{|s-2|\cdot|s^2+2s+4|}{2s^3}
  \leq\frac{\delta[4+2(\delta+2)+(\delta+2)^2]}{2(2-\delta)^3}
  =\frac{\delta(12+6\delta+\delta^2)}{2(2-\delta)^3}.
\end{equation}
Furthermore, since $\norm{\tilde{U}-U}\leq\delta$, we have
\begin{align}
	\norm{\tilde{U}}&\leq\norm{\tilde{U}-U}+\norm{U}\leq 1+\delta
\end{align}
and therefore
\begin{align}
	\norm{\tilde{U}\tilde{U}^{\dagger}-I}&\leq\norm{(\tilde{U}-U)\tilde{U}^{\dagger}}+\norm{U(\tilde{U}^{\dagger}-U^{\dagger})}\leq \delta(2+\delta).
\end{align}
Thus, by the triangle inequality, we have
\begin{align}
	\norm*{\frac{3}{s}\tilde{U}-\frac{4}{s^3}\tilde{U}\tilde{U}^{\dagger}\tilde{U}-U}
	&\leq\norm*{\frac{3}{s}\tilde{U}-\frac{3}{2}\tilde{U}}
	+\norm{\tilde{U}-U}
	+\norm*{\frac{1}{2}\tilde{U}-\frac{4}{s^3}\tilde{U}}
	+\norm*{\frac{4}{s^3}\tilde{U}-\frac{4}{s^3}\tilde{U}\tilde{U}^{\dagger}\tilde{U}}\\
	&\leq\frac{(1+\delta)3\delta}{2(2-\delta)}
	+\delta
	+\frac{(1+\delta)\delta(12+6\delta+\delta^2)}{2(2-\delta)^3}
	+\frac{4\delta(2+\delta)(1+\delta)}{(2-\delta)^3}\\
	&=\frac{\delta^3+9\delta^2+9\delta+28}{(2-\delta)^3}\delta.
\end{align}
as claimed.
\end{proof}

When the above lemma is applied to the final segment with evolution time $\trem$, we have the following more succinct expression for the error bound.
}

\begin{lemma}
	\label{lem:amp_rem}
	Suppose $\norm{\tilde{U}-U}\leq\delta$ for some unitary operator $U$. Then
	\begin{equation}
	\norm*{\frac{3}{2}\tilde{U}-\frac{1}{2}\tilde{U}\tilde{U}^{\dagger}\tilde{U}-U}
	\leq \frac{\delta^2+3\delta+4}{2}\delta.
	\end{equation}
\end{lemma}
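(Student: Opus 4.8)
The plan is to reduce the estimate to a single algebraic identity together with two applications of the triangle inequality, exploiting the simplification that occurs in the amplitude-amplification expression $(\langle 0|\otimes I)\iso(t)=\frac{3}{s}\tilde U-\frac{4}{s^3}\tilde U\tilde U^\dagger\tilde U$ when $s=2$. Throughout, write $\tilde U = U + E$ with $\norm{E}=\norm{\tilde U-U}\le\delta$, and set $F:=\tilde U\tilde U^\dagger - I$. This mirrors the structure of the (commented-out) proof of the general bound, but the special value $s=2$ removes the need to track the deviation $|s-2|$ and collapses the cubic term.

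First I would record two elementary norm estimates. Since $U$ is unitary, $\norm{\tilde U}\le\norm{U}+\norm{E}\le 1+\delta$, and, using submultiplicativity together with $\norm{U}=1$,
\[
\norm{F}=\norm{\tilde U\tilde U^\dagger - UU^\dagger}\le\norm{(\tilde U-U)\tilde U^\dagger}+\norm{U(\tilde U^\dagger-U^\dagger)}\le\delta\norm{\tilde U}+\delta\le\delta(2+\delta).
\]
The key step is then the observation that, substituting $\tilde U\tilde U^\dagger=I+F$,
\[
\frac{3}{2}\tilde U-\frac{1}{2}\tilde U\tilde U^\dagger\tilde U=\frac{3}{2}\tilde U-\frac{1}{2}(I+F)\tilde U=\tilde U-\frac{1}{2}F\tilde U,
\]
so that $\frac{3}{2}\tilde U-\frac{1}{2}\tilde U\tilde U^\dagger\tilde U-U=(\tilde U-U)-\frac{1}{2}F\tilde U$. (Morally, this is because the amplification polynomial $x\mapsto\frac{3}{2}x-\frac{1}{2}x^3$ has a double fixed point at $x=1$.)

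Applying the triangle inequality and the two bounds above yields
\[
\norm*{\frac{3}{2}\tilde U-\frac{1}{2}\tilde U\tilde U^\dagger\tilde U-U}\le\delta+\frac{1}{2}\norm{F}\,\norm{\tilde U}\le\delta+\frac{1}{2}\delta(2+\delta)(1+\delta),
\]
and a short expansion, $\delta+\tfrac12\delta(2+\delta)(1+\delta)=\delta\bigl(1+\tfrac12(2+3\delta+\delta^2)\bigr)=\tfrac{\delta^2+3\delta+4}{2}\,\delta$, gives exactly the claimed bound. There is no genuine obstacle here; the only points requiring care are using unitarity of $U$ rather than of $\tilde U$ (which is only approximately unitary) when bounding $\norm{F}$, and the final polynomial arithmetic. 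All the content is in the identity $\frac{3}{2}\tilde U-\frac{1}{2}\tilde U\tilde U^\dagger\tilde U=\tilde U-\frac12 F\tilde U$, which turns a cubic expression into an affine one.
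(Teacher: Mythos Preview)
Your proof is correct and follows essentially the same approach as the paper's: both bound $\norm{\tilde U}\le 1+\delta$ and $\norm{\tilde U\tilde U^\dagger-I}\le\delta(2+\delta)$, then use the decomposition $\frac{3}{2}\tilde U-\frac{1}{2}\tilde U\tilde U^\dagger\tilde U-U=(\tilde U-U)+\frac{1}{2}(\tilde U-\tilde U\tilde U^\dagger\tilde U)$ (your $-\frac12 F\tilde U$ is exactly the second summand) and the triangle inequality to reach the same final expression.
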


\begin{proof}
	Since $\norm{\tilde{U}-U}\leq\delta$, we have
	\begin{align}
		\norm{\tilde{U}}&\leq\norm{\tilde{U}-U}+\norm{U}\leq 1+\delta
	\end{align}
	and therefore
	\begin{align}
		\norm{\tilde{U}\tilde{U}^{\dagger}-I}
    &\leq\norm{(\tilde{U}-U)\tilde{U}^{\dagger}}+\norm{U(\tilde{U}^{\dagger}-U^{\dagger})}
    \leq \delta(2+\delta).
	\end{align}
	Thus, by the triangle inequality, we have
	\begin{align}
		\norm*{\frac{3}{2}\tilde{U}-\frac{1}{2}\tilde{U}\tilde{U}^{\dagger}\tilde{U}-U}
		&\leq\norm{\tilde{U}-U}
		+\frac{1}{2}\norm{\tilde{U}-\tilde{U}\tilde{U}^{\dagger}\tilde{U}}\\
		&\leq\delta
		+\frac{\delta(2+\delta)(1+\delta)}{2}\\
		&=\frac{\delta^2+3\delta+4}{2}\delta
	\end{align}
	as claimed.
\end{proof}

We use the following basic property of contractions (operators of norm at most $1$), which is easily proved using the triangle inequality.

\begin{lemma}
	\label{lem:telescope}
	Suppose $U_i\colon\mathcal{H}\rightarrow\mathcal{H}$ and $V_i\colon\mathcal{H}\rightarrow\mathcal{H}$ are contractions for all $i \in \{1,\ldots,r\}$. If $\norm{U_i-V_i}\leq \eta$ for all $i$, then
	\begin{equation}
	\norm{U_r\cdots U_2 U_1-V_r\cdots V_2 V_1}\leq r\eta.
	\end{equation}
\end{lemma}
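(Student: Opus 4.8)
The plan is to prove this by a standard telescoping (hybrid) argument, swapping the operators $U_i$ for $V_i$ one factor at a time and controlling each swap by submultiplicativity of the spectral norm together with the contraction hypothesis.

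Concretely, I would write the difference as a telescoping sum
\begin{equation}
  U_r\cdots U_1 - V_r\cdots V_1
  = \sum_{k=1}^{r} V_r\cdots V_{k+1}\,(U_k-V_k)\,U_{k-1}\cdots U_1,
\end{equation}
where the $k=r$ term is understood to have no $V$'s on the left and the $k=1$ term no $U$'s on the right. This identity is verified by observing that consecutive terms of the sum share the mixed products $V_r\cdots V_{k+1}U_k U_{k-1}\cdots U_1$, which cancel pairwise, leaving exactly the first and last endpoints $U_r\cdots U_1$ and $V_r\cdots V_1$.

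Next I would apply the triangle inequality to the sum and bound each summand: since the spectral norm is submultiplicative and every $U_i$ and $V_i$ is a contraction, we get
\begin{equation}
  \norm{V_r\cdots V_{k+1}\,(U_k-V_k)\,U_{k-1}\cdots U_1}
  \le \Biggl(\prod_{j=k+1}^{r}\norm{V_j}\Biggr)\norm{U_k-V_k}\Biggl(\prod_{j=1}^{k-1}\norm{U_j}\Biggr)
  \le \norm{U_k-V_k}\le \eta.
\end{equation}
Summing over $k=1,\ldots,r$ yields $\norm{U_r\cdots U_1-V_r\cdots V_1}\le r\eta$, as claimed. (Alternatively, the same bound follows by a one-line induction on $r$, writing $U_r\cdots U_1 - V_r\cdots V_1 = (U_r-V_r)U_{r-1}\cdots U_1 + V_r(U_{r-1}\cdots U_1 - V_{r-1}\cdots V_1)$ and using $\norm{V_r}\le 1$.)

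There is essentially no obstacle here: the only point requiring any care is getting the flanking products in the telescoping sum right so that the cancellation is exact, and making sure each summand is flanked by operators that are all contractions so the middle factor $\norm{U_k-V_k}$ is the only thing that survives the bound.
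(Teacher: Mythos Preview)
Your proof is correct and is exactly the standard telescoping argument the paper has in mind; the paper does not spell out a proof but simply remarks that the lemma ``is easily proved using the triangle inequality,'' which is precisely what you do.
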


We also use the following lemma, which bounds the error introduced by normalization.

\begin{lemma}
\label{lem:normalization}
Suppose $\norm{|\phi\rangle}=1$, $\norm{|\psi\rangle}\leq 1$, and $\norm{|\psi\rangle-|\phi\rangle} \leq \xi < 1$. Then
\begin{equation}
	\norm*{\frac{|\psi\rangle}{\norm{|\psi\rangle}}-|\phi\rangle} \leq \sqrt{1+\xi}-\sqrt{1-\xi}.
\end{equation}
\end{lemma}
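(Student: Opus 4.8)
The plan is to reduce the bound to a single-variable calculus problem in the quantity $a:=\norm*{\ket{\psi}}$. First I would extract from the hypotheses that $1-\xi\le a\le 1$: expanding $\norm*{\ket{\psi}-\ket{\phi}}^2 = a^2 - 2\Re\braket{\psi}{\phi} + 1 \le \xi^2$ and combining it with the Cauchy--Schwarz bound $\Re\braket{\psi}{\phi}\le\norm*{\ket{\psi}}\norm*{\ket{\phi}}=a$ gives $(a-1)^2\le\xi^2$, hence $a\ge 1-\xi$, while $a\le 1$ is assumed. In particular $a>0$ since $\xi<1$, so the normalized vector $\ket{\psi}/a$ is well-defined.

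Next I would compute the target norm exactly. Using $\norm*{\ket{\psi}}^2=a^2$ and $\norm*{\ket{\phi}}^2=1$,
\begin{equation}
  \norm*{\frac{\ket{\psi}}{a}-\ket{\phi}}^2 = 1 - \frac{2}{a}\Re\braket{\psi}{\phi} + 1 = 2 - \frac{2}{a}\Re\braket{\psi}{\phi},
\end{equation}
and I would then lower-bound the inner product by the rearranged hypothesis $\Re\braket{\psi}{\phi}\ge\tfrac12(a^2+1-\xi^2)$, obtaining
\begin{equation}
  \norm*{\frac{\ket{\psi}}{a}-\ket{\phi}}^2 \le 2 - a - \frac{1-\xi^2}{a} =: g(a).
\end{equation}

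Then I would maximize $g$ over $a\in[1-\xi,1]$. Since $g''(a) = -2(1-\xi^2)a^{-3}<0$, the function $g$ is concave, so its maximum occurs at the stationary point $g'(a)=0$, namely $a=\sqrt{1-\xi^2}$, which indeed lies in $[1-\xi,1]$ for $0\le\xi\le 1$; the maximal value is $g\bigl(\sqrt{1-\xi^2}\bigr) = 2-2\sqrt{1-\xi^2}$. Finally I would invoke the identity $\bigl(\sqrt{1+\xi}-\sqrt{1-\xi}\bigr)^2 = (1+\xi)+(1-\xi) - 2\sqrt{(1+\xi)(1-\xi)} = 2-2\sqrt{1-\xi^2}$ to recast this bound in the stated form and take square roots.

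There is no genuinely hard step here; the only points requiring a little care are checking that the unconstrained maximizer $\sqrt{1-\xi^2}$ actually falls inside the feasible range $[1-\xi,1]$ (which reduces to $\xi\le 1$), and recognizing the algebraic identity that makes the optimized bound coincide \emph{exactly} with $\sqrt{1+\xi}-\sqrt{1-\xi}$, rather than a looser estimate such as the naive triangle-inequality bound $\norm*{\ket{\psi}/a-\ket{\psi}}+\norm*{\ket{\psi}-\ket{\phi}} = (1-a)+\xi\le 2\xi$, which is too weak.
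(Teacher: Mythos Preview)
Your proof is correct and follows essentially the same approach as the paper: expand the squared target norm, use the hypothesis $\norm{\ket{\psi}-\ket{\phi}}^2\le\xi^2$ to relate $\Re\braket{\psi}{\phi}$ and $\norm{\ket{\psi}}$, substitute, and reduce to a one-variable optimization whose optimum yields $2-2\sqrt{1-\xi^2}$. The only cosmetic difference is that the paper parametrizes by $x=\Re\braket{\phi}{\psi}$ and minimizes $x/\sqrt{2x-1+\xi^2}$ over $[1-\xi,1+\xi]$, whereas you parametrize by $a=\norm{\ket{\psi}}$ and maximize $g(a)=2-a-(1-\xi^2)/a$ over $[1-\xi,1]$; your version is arguably a bit cleaner algebraically and you are explicit about checking that the optimizer lies in the feasible interval.
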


\begin{proof}
Decompose $\ket{\psi}$ as
\begin{equation}
	|\psi\rangle=\alpha|\phi\rangle+\beta|\phi^\bot\rangle
\end{equation}
for some normalized state $\ket{\phi^\perp}$ orthogonal to $\ket{\phi}$.
Clearly $|\alpha|^2+|\beta|^2\leq 1$ since $\norm{|\psi\rangle}\leq 1$. Furthermore, the assumption $\norm{|\psi\rangle-|\phi\rangle}\leq \xi$ implies
\begin{equation}
	  |\alpha-1|^2+|\beta|^2\leq\xi^2,
\end{equation}
so
\begin{equation}
  \label{eq:lcu_condition}
	  |\alpha|^2+|\beta|^2\leq\xi^2+2\Re(\alpha)-1
\end{equation}
with $1-\xi \leq \Re(\alpha)\leq 1+\xi$.

Then we have
\begin{align}
	\norm*{\frac{|\psi\rangle}{\norm{|\psi\rangle}}-|\phi\rangle}
	&=\sqrt{2-\frac{2\Re(\alpha)}{\sqrt{|\alpha|^2+|\beta|^2}}}\\
	&\leq\sqrt{2-\frac{2\Re(\alpha)}{\sqrt{\xi^2+2\Re(\alpha)-1}}}\\
	&\leq\sqrt{2-\frac{2(1-\xi^2)}{\sqrt{\xi^2+2(1-\xi^2)-1}}}\\
  &=\sqrt{1+\xi}-\sqrt{1-\xi}.
\end{align}
Here the first inequality uses (\ref{eq:lcu_condition}) and the fact that $\Re(\alpha) \ge 1-\xi \ge 0$, and the second inequality follows since the function $x/\sqrt{2x-1+\xi^2}$ achieves its minimum at $x=1-\xi^2$ within the interval $1-\xi\leq x\leq 1+\xi$.
\end{proof}

With all the above lemmas in hand, we are ready to prove an explicit error bound for the \TS\ algorithm.

\begin{theorem}
	\label{thm:lcu}
The \TS\ algorithm achieves error at most $\sqrt{1+\xi}-\sqrt{1-\xi}$ with success probability at least $(1-\xi)^2$, where
	\begin{align}
  \label{eq:lcu_parameter}
	\xi = r \frac{\delta^2+3\delta+4}{2}\delta \quad \text{with} \quad
	\delta=2\frac{(\ln2)^{K+1}}{(K+1)!}.
	\end{align}
\end{theorem}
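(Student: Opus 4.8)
The plan is to combine the three error contributions that have already been isolated in the preceding lemmas: (i) the Taylor-truncation error, bounded by \lem{trunc}; (ii) the amplitude-amplification / normalization error incurred on each individual segment, bounded by \lem{amp_rem}; and (iii) the accumulation over all $r$ segments, handled by \lem{telescope}; followed by a final conversion from ``unnormalized state close to the target'' to ``normalized state close to the target'' via \lem{normalization}, which simultaneously yields the claimed error and the success probability.

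Concretely, first I would fix the truncation order $K$ and set $\delta := 2(\ln 2)^{K+1}/(K+1)!$, so that \lem{trunc} gives $\norm{\tilde U(\tseg) - \exp(-i\tseg H)} \le \delta$ for each of the first $r-1$ segments, and likewise $\norm{\tilde U(\trem) - \exp(-i\trem H)} \le \delta$ for the final segment since $\trem \le \tseg$. Next, recalling that after the ancilla rotation \eq{isorot} every segment has $s=2$, the projected isometry satisfies $(\langle 0|\otimes I)\iso(t) = \tfrac32 \tilde U(t) - \tfrac12 \tilde U(t)\tilde U(t)^\dagger \tilde U(t)$; applying \lem{amp_rem} with $U = \exp(-itH)$ shows that each segment's projected operation is within $\eta := \tfrac{\delta^2+3\delta+4}{2}\delta$ of the corresponding exact evolution. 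Since the projected operations and the exact evolutions are all contractions, \lem{telescope} with $r$ factors gives that the composition of all $r$ projected segments is within $r\eta = \xi$ of $\exp(-iHt)$, the full target evolution.

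Finally I would translate this into a statement about the actual output state of the algorithm. Starting from $|\psi\rangle$ with $\norm{|\psi\rangle}=1$, the (sub-normalized) branch of the output in which every ancilla measurement yields the ``good'' outcome is a vector $|\tilde\psi\rangle$ with $\norm{|\tilde\psi\rangle} \le 1$ and $\norm{|\tilde\psi\rangle - \exp(-iHt)|\psi\rangle} \le \xi$; here I use that the composition of the projected segments is exactly the map taking $|\psi\rangle$ to this good branch. The success probability is $\norm{|\tilde\psi\rangle}^2$, and from $|\alpha - 1|^2 + |\beta|^2 \le \xi^2$ (with $|\tilde\psi\rangle = \alpha \exp(-iHt)|\psi\rangle + \beta|\cdot\rangle^\perp$) one gets $\norm{|\tilde\psi\rangle} \ge \Re(\alpha) \ge 1-\xi$, hence success probability at least $(1-\xi)^2$. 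Conditioned on success, the normalized output state is $|\tilde\psi\rangle/\norm{|\tilde\psi\rangle}$, and \lem{normalization} (with $\xi$ in the role of its parameter, assuming $\xi<1$) bounds its distance from $\exp(-iHt)|\psi\rangle$ by $\sqrt{1+\xi}-\sqrt{1-\xi}$, which is the asserted error.

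The main obstacle, and the step requiring the most care, is making precise the claim that the product of the $r$ per-segment projected operators is literally the map sending the input state to the postselected output branch—i.e., that the informal ``$(\langle 0|\otimes I)$'' projections compose correctly across segments once ancillas are reused and re-initialized, and that the rotations folded into $\iso(\tseg)$ and $\iso(\trem)$ are accounted for consistently. One must check that intermediate ancilla registers are genuinely returned to (or disentangled from) $|0\rangle$ on the good branch so that \lem{telescope}'s contraction hypothesis applies segment by segment; everything else is a routine chaining of the triangle-inequality lemmas already established.
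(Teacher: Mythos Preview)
Your proposal is correct and follows essentially the same route as the paper: apply \lem{trunc} to get the per-segment truncation error $\delta$, feed that into \lem{amp_rem} to bound each projected segment within $\xi/r$ of the ideal evolution, use \lem{telescope} (with the contraction property of $(\langle 0|\otimes I)\iso(t)$) to accumulate over $r$ segments, then invoke the triangle inequality for the success probability and \lem{normalization} for the final error bound. Your extra discussion of how the projected segments compose and your slightly more explicit success-probability argument are helpful elaborations, but the underlying strategy is identical to the paper's.
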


\begin{proof}
For $t \in \{\tseg,\trem\}$, \lem{trunc} shows that
\begin{align}
	\norm{\tilde{U}(t)-\exp(-iHt)}&\leq\delta,
\end{align}
and \lem{amp_rem} shows that
\begin{align}
	\norm{(\langle 0|\otimes I) \iso(t)- \exp(-iHt)}&\leq\xi/r.
\end{align}
Since $\iso(t)$ is an isometry, $(\langle 0|\otimes I) \iso(t)$ is a contraction, so by \lem{telescope}, we have
\begin{equation}
	\norm{(\langle 0|\otimes I) \iso(\trem) \bigl((\langle 0|\otimes I) \iso(\tseg)\bigr)^{r-1}
		-\exp(-itH)}\leq\xi.
\end{equation}
The claim about the success probability follows by applying the triangle inequality, and the accuracy can be established by invoking \lem{normalization}.
\end{proof}

To apply this bound, we must determine the truncation order $K$ that achieves the desired error bound $\epsilon$. Just as for the product formula error bounds presented in \app{pf}, it does not seem possible to compute $K$ in closed form. However, since $K$ can only take integer values, it is straightforward to tabulate the error estimates corresponding to all potentially relevant values of $K$, as shown in \tab{K}. Using the known value of $r$, we can then determine which value of $K$ suffices to ensure small error.

\begin{table}
	\centering
	\begin{tabular}{|c|c|}
		\hline
		$K$ & $\xi/r$ \\ \hline\hline
		1 & 1.3626 \\ \hline
		2 & 0.24118 \\ \hline
		3 & 0.039031 \\ \hline
		4 & 0.0053441 \\ \hline
		5 & 0.00061628 \\ \hline
		6 & $6.10123 \times 10^{-5}$ \\ \hline
		7 & $5.28621 \times 10^{-6}$ \\ \hline
		8 & $4.07124 \times 10^{-7}$ \\ \hline
		9 & $2.821965 \times 10^{-8}$ \\ \hline
		10 & $1.778215 \times 10^{-9}$ \\ \hline
	\end{tabular}
	\caption{Lookup table for the truncation order $K$, with $s$ boosted to be $2$ in each segment.}\label{tab:K}
\end{table}

In \sec{empiricalbounds}, we presented empirical error bounds for simulations based on product formulas. It would be natural to perform a similar analysis of the error in the \TS\ algorithm. Unfortunately, it is intractable to find an empirical bound by direct simulation since the number of ancilla qubits used by the \TS\ algorithm puts it beyond the reach of classical simulation even for very small systems (see \fig{qubitcounts}). A more limited alternative would be to use empirical data to improve the estimated error of the truncated Taylor series. However, since
\begin{equation}
\frac{(\ln 2)^{K+1}}{(K+1)!}\leq\sum_{k=K+1}^{\infty}\frac{(\ln 2)^k}{k!}\leq 2\frac{(\ln 2)^{K+1}}{(K+1)!},
\end{equation}
the estimated error can be improved by a factor of at most $2$, which results in an additive offset of at most $\ln 2$ for the truncation order $K$. Thus we do not consider such a bound in our analysis.

\subsection{Failure probability}
\label{sec:tsfailure}

Note that unlike the \PF\ approach, the \TS\ algorithm does not always succeed. When a measurement of the ancilla register indicates a failure, we must restart the simulation. To make a fair comparison between the \PF\ and \TS\ algorithms, we should take the success probability into account.

Fortunately, it turns out that the resulting overhead is almost negligible. With $\epsilon=10^{-3}$, \thm{lcu} gives
\begin{equation}
\xi\leq\sqrt{\epsilon^2-\frac{\epsilon^4}{4}}<0.001,
\end{equation}
so the success probability of the algorithm is at least
\begin{equation}
(1-\xi)^2\geq 0.998.
\end{equation}
The expected number of times we must repeat the algorithm before succeeding is approximately $1/0.998\approx 1.002$.  Since this factor is very close to $1$, we simply neglect it, and we find it reasonable to directly compare gate counts for the \PF\ algorithm (which has no probability of failure) and the \TS\ approach.

\newcommand{\lft}{\textsc{left}}
\newcommand{\rgt}{\textsc{right}}

\subsection{Encoding of the control register}
\label{sec:controlencoding}

A crucial step in the implementation of the \TS\ algorithm (and in the \QSP\ algorithm) is to synthesize the $\select(V)$ gates.  The cost of this implementation depends strongly on the chosen representation for the control register.

Recall from \eq{defnofselectV} that the $\select(V)$ operation has the form
\begin{equation}
\select(V):=\sum_{j=0}^{\numselect-1}|j\rangle\langle j|\otimes V_j.
\label{eq:recallselectv}
\end{equation}
For the \TS\ algorithm, the operators $V_j$ are defined
via \eq{tslcu}.  Here the index $j$ labels a value $k \in \{0,1,\ldots,K\}$ and indices $\ell_1,\ldots,\ell_k\in\{1,\ldots,L\}$. Perhaps the most straightforward approach is to represent the entire control register with a binary encoding using $\log_2(K+1) + K \log_2 L$ bits. However, as pointed out in \cite{BCCKS14}, we can significantly reduce the gate complexity by choosing a different encoding of the control register.

Specifically, we use a unary encoding to label $k$ and a binary encoding for each $\ell_1,\ldots,\ell_k$.  With such an encoding, the instance of $\select(V)$ in the \TS\ algorithm can be represented as the map
\begin{equation}\label{eq:105}
|1^{k}0^{K-k}\rangle|\ell_1,\ldots,\ell_K\rangle|\psi\rangle\mapsto
|1^{k}0^{K-k}\rangle|\ell_1,\ldots,\ell_K\rangle(-i)^{k}H_{\ell_1}\cdots H_{\ell_k}|\psi\rangle.
\end{equation}
We implement this transformation as follows. Conditioned on the $j$th qubit of the unary encoding of $k$ being $1$, and the $j$th coordinate of $\ell_1,\ldots,\ell_K$ being the binary encoding of $\ell_j$, we apply $(-i)H_{\ell_j}$. Compared to an entirely binary encoding, this approach only requires an additional $\lceil K+1-\log_2(K+1) \rceil$ qubits, which is a modest increase since $K$ is typically small (see \tab{K}).  In return, instead of selecting on a large register of $\Theta(K \log L)$ bits, we can perform $K+1$ independent selections on registers of $\log_2 L$ bits, each controlled by a single qubit.

\subsection{Implementation of \texorpdfstring{$\select(V)$}{select(V)}}
\label{sec:selectV}

In this section, we present a circuit for the $\select(V)$ operation $\sum_{\gamma=1}^\numselect \ket{\gamma}\bra{\gamma} \otimes V_\gamma$. We assume that the control register $\gamma \in \{1,\ldots,\numselect\}$ is stored in binary using $\lenstring := \ceil{\log_2{\numselect}}$ bits.  As discussed in \sec{controlencoding}, this procedure is a basic subroutine in our implementation of the $\select(V)$ operation appearing in the \TS\ algorithm, using a mixed unary/binary encoding.  We also use this $\select(V)$ construction in our implementation of the \QSP\ algorithm.

Our goal is to apply a unitary operation $V_\gamma$ conditioned on the control register being in the state $\ket{\gamma}$. To do this, we cycle the value of a designated ancilla qubit through $\numselect$ Boolean products of $\lenstring$ literals, where in each of the products, each of the variables $x_1,\ldots,x_\lenstring$ appears exactly once (either negated or not).  We apply $V_\gamma$ conditioned on the ancilla qubit at the $\gamma$th step of this construction.  To simplify the presentation, we describe how to cycle through all $2^\lenstring$ Boolean products. This procedure can be terminated after producing only $\numselect$ Boolean products to avoid unnecessary cost.

A straightforward way of obtaining the Boolean products is to implement them via multiply-controlled Toffoli gates with appropriate control negations.  Implementations of the multiply-controlled Toffoli gates have been well-studied.  In particular, using the approach of \cite{M16}, and assuming that the negative controls introduce no additional cost,\footnote{While a construction with this property was not explicitly developed in \cite{M16}, it appears that any multiply-controlled Toffoli gate with arbitrary positive and negative controls can be implemented with the same number of gates as the multiply-controlled Toffoli gate of the same size with only positive controls.} we can cycle through all Boolean products using the following resources:
\begin{align*}
  \begin{array}{rl}\label{arr:upb}
4\lenstring2^\lenstring-6\cdot 2^\lenstring & \text{Hadamard gates} \\
8\lenstring2^\lenstring-9\cdot 2^\lenstring & \text{$T$/$T^\dagger$ gates} \\
6\lenstring2^\lenstring-6\cdot 2^\lenstring & \text{$\CNOT$ gates} \\
\ceil{\frac{\lenstring-2}{2}} & \text{ancilla qubits, initialized in and returned to the state $\ket{0}$}.
  \end{array}
\end{align*}

We now present an optimized circuit to cycle through these Boolean products.  In particular, we do this using a reversible circuit consisting of $\NOT$, $\CNOT$, and Toffoli gates (we call a circuit over this gate set an \emph{NCT circuit}).  Of the three basic gates used by the NCT circuits, the Toffoli gate is the most expensive---its best known implementation uses $6$ $\CNOT$ gates, $2$ Hadamard gates, and $7$ $T$/$T^\dagger$ gates \cite{ar:ammr}---so we first focus on optimizing the Toffoli gate count.  We use Toffoli gates with controls of arbitrary polarities.  All such Toffoli gates can be implemented by a Clifford+$T$ circuit using the same resources \cite{ar:ammr}.

To cycle though $2^\lenstring$ Boolean products, clearly we must use at least $2^\lenstring$ gates (note that we do not allow gates with multiple targets).
Indeed, most of these must be Toffoli gates, the most expensive gates in an NCT circuit.

\begin{lemma}[{\cite[Lemma 4]{ar:m}}]
\label{lem:sV1}
An NCT circuit for $\select(V)$ control generation requires at least $2^\lenstring-\lenstring-1$ Toffoli gates.
\end{lemma}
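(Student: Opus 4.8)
The plan is to prove this via a purely linear-algebraic argument over $\mathbb{F}_2$, quantifying the amount of ``nonlinearity'' that an NCT circuit with few Toffoli gates can create. I would begin by fixing such a circuit and viewing every wire, at every time step, as carrying a Boolean function of the $\lenstring$ input variables $x_1,\dots,x_\lenstring$, with ancilla wires initialized to constant functions. Writing the circuit as an alternating sequence of affine blocks (maximal runs of $\NOT$ and $\CNOT$ gates) and individual Toffoli gates, suppose it uses $k$ Toffoli gates, and let the $i$th of these multiply the current values $g_i$ and $h_i$ of its two control wires; this product is well defined even when controls have negative polarity, since an anti-control only introduces an extra affine summand on the target.

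The central step is an induction on the number of gates applied, establishing the invariant that after $t$ gates, every wire holds a function in the $\mathbb{F}_2$-linear span of $\{1,x_1,\dots,x_\lenstring\}$ together with the products $g_ih_i$ contributed by all Toffoli gates executed so far. The base case is trivial; a $\NOT$ gate adds the constant $1$, which is already in the span; a $\CNOT$ gate adds another spanned function; and a Toffoli gate adds its new product $g_ih_i$ to the generating set (plus, for anti-controls, an affine correction that is already present). It follows that the $2^\lenstring$ distinct Boolean products that the designated ancilla must realize at various times all lie in the subspace $\spn\bigl(\{1,x_1,\dots,x_\lenstring\}\cup\{g_1h_1,\dots,g_kh_k\}\bigr)$, whose dimension is at most $\lenstring+1+k$.

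To conclude, I would note that the required $2^\lenstring$ Boolean products are exactly the minterms of $\lenstring$ variables---the indicator functions of the $2^\lenstring$ assignments in $\mathbb{F}_2^\lenstring$---which are $\mathbb{F}_2$-linearly independent and form a basis of the full $2^\lenstring$-dimensional space of Boolean functions on $\lenstring$ variables. Hence that subspace must have dimension at least $2^\lenstring$, giving $\lenstring+1+k\ge 2^\lenstring$, i.e.\ $k\ge 2^\lenstring-\lenstring-1$. I do not expect a serious obstacle here: the only delicate point is setting up the inductive invariant so that it uniformly covers ancilla wires, modified input wires, and Toffoli gates with arbitrary control polarities; once that bookkeeping is in place, the dimension count is immediate.
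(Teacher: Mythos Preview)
Your argument is correct. The inductive invariant is set up properly: $\NOT$ and $\CNOT$ preserve the $\mathbb{F}_2$-affine span, each Toffoli enlarges the generating set by at most one product (and your handling of negative-polarity controls is right, since $\bar{a}b = b \oplus ab$ only contributes $ab$ modulo what is already spanned), and the $2^\lenstring$ minterms are indeed the standard basis of the full Boolean function space, forcing the dimension inequality $\lenstring + 1 + k \ge 2^\lenstring$.

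Note, however, that the paper does not supply its own proof of this lemma: it is quoted verbatim from \cite[Lemma~4]{ar:m} and stated without argument. Your dimension-counting approach is precisely the standard one used in that reference, so there is no meaningful methodological difference to compare; you have simply reconstructed the cited proof.
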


Considering the cost of a single Toffoli gate, a Clifford+$T$ circuit with $2^\lenstring$ Toffoli gates could contain as many as $6\cdot 2^\lenstring$ $\CNOT$ gates and $7\cdot 2^\lenstring$ $T$/$T^\dagger$ gates.  We next construct a quantum circuit that cycles through all $2^\lenstring$ Boolean products, and show that after several optimizations, the gate counts are close to the above optimistic figure, and significantly better than those obtained using multiply-controlled Toffoli gates.  We begin by describing an efficient NCT circuit that cycles through $2^\lenstring$ Boolean products.

\begin{lemma}\label{lem:sV2}
The $\select(V)$ control generation can be implemented by an NCT circuit using at most $2$ $\NOT$ gates, $1.5\cdot 2^\lenstring - 2$ $\CNOT$ gates, and $1.5\cdot 2^\lenstring - 4$ Toffoli gates, with $\lenstring-1$ ancilla qubits initialized in and returned to the state $\ket{0}$.
\end{lemma}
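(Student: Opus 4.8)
The plan is to exhibit an explicit NCT circuit that cycles a designated output qubit through all $2^\lenstring$ Boolean products of the literals $\ell_i \in \{x_i,\bar x_i\}$, verify that it does so correctly, and then count its gates. The first observation I would use is that it suffices to generate the products in an order in which consecutive products differ by flipping exactly one literal — a reflected Gray code on the $\lenstring$ control bits. If the output qubit currently holds a product $m$ and we flip the literal $\ell_d$, then the new product equals $m$ XORed with the conjunction of the other $\lenstring-1$ literals, so all we need is to have that conjunction cheaply accessible. To that end I would maintain $\lenstring-1$ ancilla qubits holding running conjunctions $\ell_1\wedge\cdots\wedge\ell_j$ along a chain, update them lazily as literals flip, and organize the whole process as a depth-first traversal of the binary tree of partial assignments, applying $V_\gamma$ at the $\gamma$th leaf (and terminating early once $\numselect\le 2^\lenstring$ products have been produced).

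Correctness — that the output qubit visits each product exactly once while the ancillas begin and end in $\ket 0$ and the variable qubits are restored — should then follow by a routine induction on $\lenstring$, tracking the chain invariant $a_j=\ell_1\wedge\cdots\wedge\ell_j$ at each node of the tree and checking the base case ($\lenstring=2$, handled directly with $2$ Toffoli, at most $4$ $\CNOT$, and $2$ $\NOT$ gates) by hand.

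The crux, and the step I expect to be the main obstacle, is getting the precise constants. A naive traversal that recomputes each ancilla from scratch on entering a subtree and uncomputes it on leaving costs roughly $2\cdot 2^\lenstring$ Toffoli gates, which exceeds the claimed $1.5\cdot 2^\lenstring-4$; the savings must come from carefully merging the Toffoli gate that uncomputes an ancilla at the end of one subtree with the Toffoli gate that (re)computes it at the start of its sibling, and from the identity $v\wedge\bar x=v\oplus(v\wedge x)$, which lets one toggle the polarity of a literal $\ell_d$ with the prefix conjunction $\ell_1\wedge\cdots\wedge\ell_{d-1}$ held fixed using a single $\CNOT$ rather than two Toffoli gates. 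I would structure the argument as a recursion expressing the $\lenstring$-variable circuit in terms of two traversals of the $(\lenstring-1)$-variable circuit linked by a constant-size ``bridge'' gadget, with $x_1$ used as a positive control in one traversal and a negative control in the other (so that, since arbitrary control polarities on a Toffoli gate are free, no $\NOT$ gates accrue at this step). This produces linear recurrences of the form $f(\lenstring)=2f(\lenstring-1)+4$ for the Toffoli count and $g(\lenstring)=2g(\lenstring-1)+2$ for the $\CNOT$ count which, together with the base case, resolve to $f(\lenstring)=1.5\cdot 2^\lenstring-4$ and $g(\lenstring)=1.5\cdot 2^\lenstring-2$, while the $\NOT$ count stays at $2$. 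The bulk of the remaining work is bookkeeping: pinning down that the bridge gadget really costs only the claimed constant number of gates, that it leaves the ancilla chain in exactly the configuration the second traversal expects, and that early termination after $\numselect$ products never increases any count. Finally, \lem{sV1} shows this is within a factor $1.5$ of optimal, so little further reduction of the leading constant is available by this approach.
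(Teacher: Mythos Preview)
Your core construction---a depth-first walk on the binary tree of partial assignments with an ancilla chain $q_j=\ell_1\wedge\cdots\wedge\ell_j$---is exactly the paper's. One minor point: the Gray-code framing does not match what you actually run. The DFS visits leaves in ordinary binary order, and prefix ancillas give you $\ell_1\cdots\ell_{d-1}$, not ``the conjunction of the other $\lenstring-1$ literals'' for an arbitrary flipped position $d<\lenstring$. This does not break the construction, but the motivation should be dropped.

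The substantive gap is in the Toffoli count. Your two savings mechanisms are really one identity: when the parent $q_{d-1}$ is held fixed between an uncompute and a recompute of $q_d$, the pair $\TOF(q_{d-1},x_d;q_d)\,\TOF(q_{d-1},\bar x_d;q_d)$ collapses to $\CNOT(q_{d-1};q_d)$---this is both your $v\wedge\bar x=v\oplus(v\wedge x)$ and your ``merge at sibling boundaries,'' since siblings share a parent. Applied everywhere, this cuts the naive $4\cdot 2^\lenstring-8$ Toffolis only to $2\cdot 2^\lenstring-4$. The remaining $0.5\cdot 2^\lenstring$ comes from transitions where $q_{d-1}$ \emph{itself changes} between the uncompute and recompute of $q_d$, and there the paper uses a second, nonobvious identity,
\[
\TOF(q_{d-1},\bar x_d;q_d)\,\CNOT(q_{d-2};q_{d-1})\,\TOF(q_{d-1},x_d;q_d)
=\CNOT(q_{d-1};q_d)\,\TOF(q_{d-2},x_d;q_d)\,\CNOT(q_{d-2};q_{d-1}).
\]
In your recursive framing this is precisely what is needed when you embed the $(\lenstring{-}1)$-circuit as a subtree with a non-virtual root $x_1$: the $(\lenstring{-}1)$-circuit's own top-level two-$\CNOT$ trick no longer works (the XOR required becomes $x_1\wedge(x_2\oplus x_3)$ rather than $x_2\oplus x_3$), and without the identity above the embedding costs an extra Toffoli per subtree, so the recurrence you actually derive is $f(\lenstring)=2f(\lenstring-1)+6$, not $+4$. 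Note too that $f(\lenstring)=2f(\lenstring-1)+4$ is solved by $C\cdot 2^\lenstring-4$ for \emph{any} $C$, so getting the base case right is not the issue---you must show the inductive step closes with the constant $+4$, and that is where the second identity is indispensable.
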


\begin{proof}
\begin{figure}[t]
\centering
\resizebox{.9\textwidth}{!}{
\begin{tikzpicture}[every tree node/.style={inner sep=0pt, minimum size=0pt},
  edge from parent/.append style={very thick},
  level distance=1.5cm,
  level 1/.style={sibling distance=1cm},
  level 2/.style={sibling distance=1cm},
  level 3/.style={sibling distance=1cm},
  level 4/.style={sibling distance=1cm}]
\Tree [.\node(t){};
  [.\node(t0){};
    [.\node(t00){};
      [.\node(t000){};
        [.\node(t0000){};]
        [.\node(t0001){};]
      ]
      [.\node(t001){};
        [.\node(t0010){};]
        [.\node(t0011){};]
      ]
    ]
    [.\node(t01){};
      [.\node(t010){};
        [.\node(t0100){};]
        [.\node(t0101){};]
      ]
      [.\node(t011){};
        [.\node(t0110){};]
        [.\node(t0111){};]
      ]
    ]
  ]
  [.\node(t1){};
    [.\node(t10){};
      [.\node(t100){};
        [.\node(t1000){};]
        [.\node(t1001){};]
      ]
      [.\node(t101){};
        [.\node(t1010){};]
        [.\node(t1011){};]
      ]
    ]
    [.\node(t11){};
      [.\node(t110){};
        [.\node(t1100){};]
        [.\node(t1101){};]
      ]
      [.\node(t111){};
        [.\node(t1110){};]
        [.\node(t1111){};]
      ]
    ]
  ]
]
\tikzset{every path/.style={thick,gray,bend left=15}}
\draw (t) to (t0);
\draw (t) to (t1);
\draw (t0) to (t);
\draw (t1) to (t);
\tikzset{every path/.style={thick,blue,bend left=15}}
\draw (t0) to (t00);
\draw (t0) to (t01);
\draw (t00) to (t0);
\draw (t01) to (t0);
\draw (t1) to (t10);
\draw (t1) to (t11);
\draw (t10) to (t1);
\draw (t11) to (t1);
\draw (t00) to (t000);
\draw (t00) to (t001);
\draw (t000) to (t00);
\draw (t001) to (t00);
\draw (t01) to (t010);
\draw (t01) to (t011);
\draw (t010) to (t01);
\draw (t011) to (t01);
\draw (t10) to (t100);
\draw (t10) to (t101);
\draw (t100) to (t10);
\draw (t101) to (t10);
\draw (t11) to (t110);
\draw (t11) to (t111);
\draw (t110) to (t11);
\draw (t111) to (t11);
\draw (t000) to (t0000);
\draw (t000) to (t0001);
\draw (t0000) to (t000);
\draw (t0001) to (t000);
\draw (t001) to (t0010);
\draw (t001) to (t0011);
\draw (t0010) to (t001);
\draw (t0011) to (t001);
\draw (t010) to (t0100);
\draw (t010) to (t0101);
\draw (t0100) to (t010);
\draw (t0101) to (t010);
\draw (t011) to (t0110);
\draw (t011) to (t0111);
\draw (t0110) to (t011);
\draw (t0111) to (t011);
\draw (t100) to (t1000);
\draw (t100) to (t1001);
\draw (t1000) to (t100);
\draw (t1001) to (t100);
\draw (t101) to (t1010);
\draw (t101) to (t1011);
\draw (t1010) to (t101);
\draw (t1011) to (t101);
\draw (t110) to (t1100);
\draw (t110) to (t1101);
\draw (t1100) to (t110);
\draw (t1101) to (t110);
\draw (t111) to (t1110);
\draw (t111) to (t1111);
\draw (t1110) to (t111);
\draw (t1111) to (t111);
\tikzset{every path/.style={black,dotted}}
\draw let \p{t0000}=(t0000), \p{t1111}=(t1111), \p{t}=(t), \p{t0}=(t0), \p{t00}=(t00), \p{t000}=(t000) in
  ($ (\x{t0000},\y{t}) - (1,0) $) coordinate (l0) -- ($ (\x{t1111},\y{t}) + (1,0) $)
  ($ (\x{t0000},\y{t0}) - (1,0) $) coordinate (l1) -- ($ (\x{t1111},\y{t0}) + (1,0) $)
  ($ (\x{t0000},\y{t00}) - (1,0) $) coordinate (l2) -- ($ (\x{t1111},\y{t00}) + (1,0) $)
  ($ (\x{t0000},\y{t000}) - (1,0) $) coordinate (l3) -- ($ (\x{t1111},\y{t000}) + (1,0) $)
  ($ (\x{t0000},\y{t0000}) - (1,0) $) coordinate (l4) -- ($ (\x{t1111},\y{t0000}) + (1,0) $);
\tikzset{every node/.style={black}}
\node[left=0 cm of l0] {$q_0$};
\node[left=0 cm of l1] {$q_1$};
\node[left=0 cm of l2] {$q_2$};
\node[left=0 cm of l3] {$q_3$};
\node[left=0 cm of l4] {$q_4$};
\node[above=0 cm of t] {\huge *};
\tikzset{every node/.style={black,rotate=-90}}
\node[below=0 cm of t0000, anchor=west] {$x_1 x_2 x_3 x_4$};
\node[below=0 cm of t0001, anchor=west] {$x_1 x_2 x_3 \bar x_4$};
\node[below=0 cm of t0010, anchor=west] {$x_1 x_2 \bar x_3 x_4$};
\node[below=0 cm of t0011, anchor=west] {$x_1 x_2 \bar x_3 \bar x_4$};
\node[below=0 cm of t0100, anchor=west] {$x_1 \bar x_2 x_3 x_4$};
\node[below=0 cm of t0101, anchor=west] {$x_1 \bar x_2 x_3 \bar x_4$};
\node[below=0 cm of t0110, anchor=west] {$x_1 \bar x_2 \bar x_3 x_4$};
\node[below=0 cm of t0111, anchor=west] {$x_1 \bar x_2 \bar x_3 \bar x_4$};
\node[below=0 cm of t1000, anchor=west] {$\bar x_1 x_2 x_3 x_4$};
\node[below=0 cm of t1001, anchor=west] {$\bar x_1 x_2 x_3 \bar x_4$};
\node[below=0 cm of t1010, anchor=west] {$\bar x_1 x_2 \bar x_3 x_4$};
\node[below=0 cm of t1011, anchor=west] {$\bar x_1 x_2 \bar x_3 \bar x_4$};
\node[below=0 cm of t1100, anchor=west] {$\bar x_1 \bar x_2 x_3 x_4$};
\node[below=0 cm of t1101, anchor=west] {$\bar x_1 \bar x_2 x_3 \bar x_4$};
\node[below=0 cm of t1110, anchor=west] {$\bar x_1 \bar x_2 \bar x_3 x_4$};
\node[below=0 cm of t1111, anchor=west] {$\bar x_1 \bar x_2 \bar x_3 \bar x_4$};
\end{tikzpicture}
}
\caption{Binary tree encoding the circuit that implements the control generation for the $\select(V)$ operation.}
\label{fig:circtree1}
\end{figure}
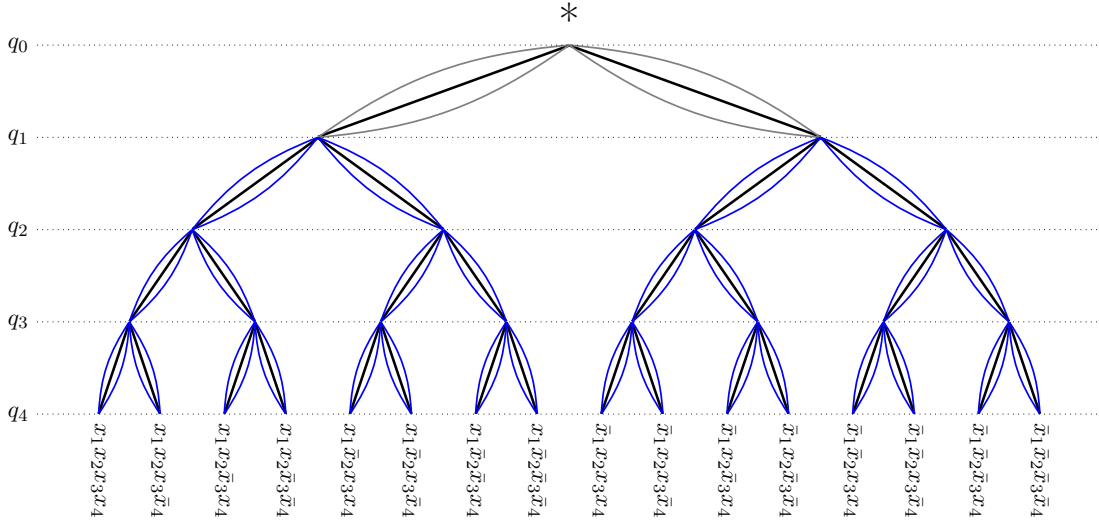

Consider a depth-$\lenstring$ complete binary tree, where vertices are labeled by the Boolean products of the $\lenstring$ variables $x_1,\ldots,x_\lenstring$.  A vertex at depth $d$ is labeled by a 
product of the first $d$ variables, $x_1^{a_1}\ldots x_d^{a_d}$, where $x_i^{a_i}:=x_i$ if $a_i=\lft$ and $x_i^{a_i}:=\overline{x_i}$ if $a_i=\rgt$.  The label $a_d$ indicates how we arrive at $x_1^{a_1}\ldots x_d^{a_d}$ from its parent $x_1^{a_1}\ldots x_{d-1}^{a_{d-1}}$, by taking the left edge if $a_d=\lft$ or the right edge if $a_d=\rgt$.  The product $x_1^{a_1}\ldots x_d^{a_d}$ thus indicates the path to the vertex from the root, as shown in \fig{circtree1}.  Leaves of the tree are labeled by
the Boolean products over all $\lenstring$ variables $x_1,\ldots, x_\lenstring$, in all possible polarities.

In the quantum circuit implementing the control generation for the $\select(V)$ transformation, all vertices at a given depth are represented by a single qubit, labeled $q_d$ for $d \in \{0,1,\ldots,\lenstring\}$.  The qubit $q_k$ carries the product of the first $k$ variables with appropriate polarities that evolve as the computation proceeds. We discuss how to implement the circuit using $\lenstring+1$ ancilla qubits $q_0,q_1,\ldots,q_\lenstring$, although it is straightforward to reduce the number of ancillas to $\lenstring-1$. The root of the tree corresponds to the Boolean constant $1$, so it can be viewed as a virtual qubit.  The two depth-$1$ vertices represent the variable $x_1$ and its negation, values that can be obtained in-place by negating the variable $x_1$ as needed.  Thus it suffices to use $\lenstring-1$ ancilla qubits $q_2,q_3, \ldots, q_\lenstring$.

We compute all desired Boolean products by traversing the tree using the left-hand rule, i.e., always taking the leftmost edge going down if there are vertices not yet explored, and going up if all edges and vertices below were already explored.  The path starts and ends in the root (marked with an asterisk in \fig{circtree1}), and traverses each edge twice, once going down and once coming up, so it has length $4\cdot 2^\lenstring-4$.

When traversing the edge $(x_1^{a_1}x_2^{a_2} \ldots x_{d-1}^{a_{d-1}},x_1^{a_1}x_2^{a_2} \ldots x_{d-1}^{a_{d-1}}x_d^{a_d})$ in either direction, we apply the gate $\TOF(q_{d-1},x_d^{a_d};q_d)$, where $\TOF(c_1,c_2;t)$ denotes a Toffoli gate with controls $c_1$ and $c_2$ and target $t$.  These edges in the walk are colored blue in \fig{circtree1}.  No gate needs be applied for edges connecting to the root, as the variable $x_1$ and its negation can be obtained without using a Toffoli or a $\CNOT$ gate.  Those edges in the walk are colored grey in \fig{circtree1}.  Overall, this circuit uses $4\cdot 2^\lenstring-8$ Toffoli gates and two $\NOT$ gates (one to negate $x_1$ and the other to uncompute this negation). During the walk, each time we reach a leaf, we create a new Boolean product of $\lenstring$ variables with polarities determined by the path to the root.  By visiting all leaves, we create all $2^\lenstring$ desired products.  Completing the path and returning to the root ensures that all ancilla qubits are reset to $0$.  Note that we can return to the root from any given leaf without visiting all leaves, finishing the walk early if only some of the Boolean products are needed (and thereby using fewer gates).

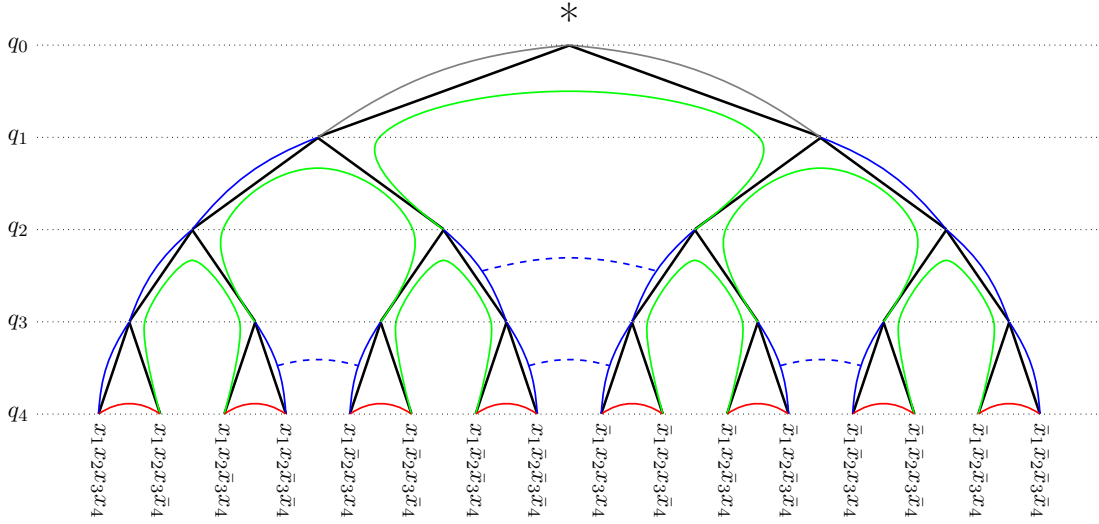
\begin{figure}[t]
\centering
\resizebox{.9\textwidth}{!}{
\begin{tikzpicture}[every tree node/.style={inner sep=0pt, minimum size=0pt},
  edge from parent/.append style={very thick},
  level distance=1.5cm,
  level 1/.style={sibling distance=1cm},
  level 2/.style={sibling distance=1cm},
  level 3/.style={sibling distance=1cm},
  level 4/.style={sibling distance=1cm}]
\Tree [.\node(t){};
  [.\node(t0){};
    [.\node(t00){};
      [.\node(t000){};
        [.\node(t0000){};]
        [.\node(t0001){};]
      ]
      [.\node(t001){};
        [.\node(t0010){};]
        [.\node(t0011){};]
      ]
    ]
    [.\node(t01){};
      [.\node(t010){};
        [.\node(t0100){};]
        [.\node(t0101){};]
      ]
      [.\node(t011){};
        [.\node(t0110){};]
        [.\node(t0111){};]
      ]
    ]
  ]
  [.\node(t1){};
    [.\node(t10){};
      [.\node(t100){};
        [.\node(t1000){};]
        [.\node(t1001){};]
      ]
      [.\node(t101){};
        [.\node(t1010){};]
        [.\node(t1011){};]
      ]
    ]
    [.\node(t11){};
      [.\node(t110){};
        [.\node(t1100){};]
        [.\node(t1101){};]
      ]
      [.\node(t111){};
        [.\node(t1110){};]
        [.\node(t1111){};]
      ]
    ]
  ]
]
\tikzset{every path/.style={thick,gray,bend left=15}}
\draw (t0) to (t);
\draw (t) to (t1);
\tikzset{every path/.style={thick,blue,bend left=15}}
\draw (t0000) to (t000);
\draw (t000) to (t00);
\draw (t00) to (t0);
\draw (t1) to (t11);
\draw (t11) to (t111);
\draw (t111) to (t1111);
\draw (t001) to coordinate[midway](m001) (t0011);
\draw (t0100) to coordinate[midway](m010) (t010);
\draw (t101) to coordinate[midway](m101) (t1011);
\draw (t1100) to coordinate[midway](m110) (t110);
\draw (t01) to coordinate[midway](m01) (t011);
\draw (t011) to coordinate[midway](m011) (t0111);
\draw (t1000) to coordinate[midway](m100) (t100);
\draw (t100) to coordinate[midway](m10) (t10);
\tikzset{every path/.style={thick,blue,dashed,bend left=15}}
\draw (m001) to (m010);
\draw (m011) to (m100);
\draw (m101) to (m110);
\draw (m01) to (m10);
\tikzset{every path/.style={thick,green}}
\draw plot [smooth, tension=.5] coordinates {(t0001) ($ (t000) + (.25,0) $) ($ (t00) - (0,.5) $) ($ (t001) - (.25,0) $) (t0010)};
\draw plot [smooth, tension=.5] coordinates {(t0101) ($ (t010) + (.25,0) $) ($ (t01) - (0,.5) $) ($ (t011) - (.25,0) $) (t0110)};
\draw plot [smooth, tension=.5] coordinates {(t1001) ($ (t100) + (.25,0) $) ($ (t10) - (0,.5) $) ($ (t101) - (.25,0) $) (t1010)};
\draw plot [smooth, tension=.5] coordinates {(t1101) ($ (t110) + (.25,0) $) ($ (t11) - (0,.5) $) ($ (t111) - (.25,0) $) (t1110)};
\draw plot [smooth, tension=.75] coordinates {(t001) ($ (t00) + (.5,0) $) ($ (t0) - (0,.5) $) ($ (t01) - (.5,0) $) (t010)};
\draw plot [smooth, tension=.75] coordinates {(t101) ($ (t10) + (.5,0) $) ($ (t1) - (0,.5) $) ($ (t11) - (.5,0) $) (t110)};
\draw plot [smooth, tension=.75] coordinates {(t01) ($ (t0) + (1,0) $) ($ (t) - (0,.75) $) ($ (t1) - (1,0) $) (t10)};
\tikzset{every path/.style={thick,red,bend left=35}}
\draw (t0000) to (t0001);
\draw (t0010) to (t0011);
\draw (t0100) to (t0101);
\draw (t0110) to (t0111);
\draw (t1000) to (t1001);
\draw (t1010) to (t1011);
\draw (t1100) to (t1101);
\draw (t1110) to (t1111);
\tikzset{every path/.style={black,dotted}}
\draw let \p{t0000}=(t0000), \p{t1111}=(t1111), \p{t}=(t), \p{t0}=(t0), \p{t00}=(t00), \p{t000}=(t000) in
  ($ (\x{t0000},\y{t}) - (1,0) $) coordinate (l0) -- ($ (\x{t1111},\y{t}) + (1,0) $)
  ($ (\x{t0000},\y{t0}) - (1,0) $) coordinate (l1) -- ($ (\x{t1111},\y{t0}) + (1,0) $)
  ($ (\x{t0000},\y{t00}) - (1,0) $) coordinate (l2) -- ($ (\x{t1111},\y{t00}) + (1,0) $)
  ($ (\x{t0000},\y{t000}) - (1,0) $) coordinate (l3) -- ($ (\x{t1111},\y{t000}) + (1,0) $)
  ($ (\x{t0000},\y{t0000}) - (1,0) $) coordinate (l4) -- ($ (\x{t1111},\y{t0000}) + (1,0) $);
\tikzset{every node/.style={black}}
\node[left=0 cm of l0] {$q_0$};
\node[left=0 cm of l1] {$q_1$};
\node[left=0 cm of l2] {$q_2$};
\node[left=0 cm of l3] {$q_3$};
\node[left=0 cm of l4] {$q_4$};
\node[above=0 cm of t] {\huge *};
\tikzset{every node/.style={black,rotate=-90}}
\node[below=0 cm of t0000, anchor=west] {$x_1 x_2 x_3 x_4$};
\node[below=0 cm of t0001, anchor=west] {$x_1 x_2 x_3 \bar x_4$};
\node[below=0 cm of t0010, anchor=west] {$x_1 x_2 \bar x_3 x_4$};
\node[below=0 cm of t0011, anchor=west] {$x_1 x_2 \bar x_3 \bar x_4$};
\node[below=0 cm of t0100, anchor=west] {$x_1 \bar x_2 x_3 x_4$};
\node[below=0 cm of t0101, anchor=west] {$x_1 \bar x_2 x_3 \bar x_4$};
\node[below=0 cm of t0110, anchor=west] {$x_1 \bar x_2 \bar x_3 x_4$};
\node[below=0 cm of t0111, anchor=west] {$x_1 \bar x_2 \bar x_3 \bar x_4$};
\node[below=0 cm of t1000, anchor=west] {$\bar x_1 x_2 x_3 x_4$};
\node[below=0 cm of t1001, anchor=west] {$\bar x_1 x_2 x_3 \bar x_4$};
\node[below=0 cm of t1010, anchor=west] {$\bar x_1 x_2 \bar x_3 x_4$};
\node[below=0 cm of t1011, anchor=west] {$\bar x_1 x_2 \bar x_3 \bar x_4$};
\node[below=0 cm of t1100, anchor=west] {$\bar x_1 \bar x_2 x_3 x_4$};
\node[below=0 cm of t1101, anchor=west] {$\bar x_1 \bar x_2 x_3 \bar x_4$};
\node[below=0 cm of t1110, anchor=west] {$\bar x_1 \bar x_2 \bar x_3 x_4$};
\node[below=0 cm of t1111, anchor=west] {$\bar x_1 \bar x_2 \bar x_3 \bar x_4$};
\end{tikzpicture}
}
\caption{Optimized circuit walk on the binary tree, using grey, blue, red, and green solid edges. Dashed edges indicate reductions in the quantum gate count.}
\label{fig:circtree2}
\end{figure}

Next we show how to reduce the Toffoli gate count.  Observe that the above path makes $2^\lenstring-2$ moves of the form $v_1 \rightarrow v_2 \rightarrow v_3$, where $v_1$ and $v_3$ are the children of $v_2$, which is not the root (for example, consider any path indicated by a red arc in \fig{circtree2}).  Such a path requires applying the gates $\TOF(q_{d-1},x_d^{\lft};q_d)$ and $\TOF(q_{d-1},x_d^{\rgt};q_d)$. However, by the circuit identity
\begin{equation}
\TOF(q_{d-1},x_d^{\lft};q_d) \TOF(q_{d-1},x_d^{\rgt};q_d) = \CNOT(q_{d-1};q_d),
\end{equation}
we can replace two Toffoli gates with one $\CNOT$ gate.  Since $v_2$ is never a leaf, we do not need to explicitly visit it, thereby saving resources.  This modification reduces the Toffoli gate count by $2(2^\lenstring-2)=2 \cdot 2^\lenstring-4$, giving a reduced Toffoli gate count of $2\cdot 2^\lenstring-4$.  However, there are now $2^\lenstring-2$ $\CNOT$s.  The circuit still uses $2$ $\NOT$ gates.

Next, consider a path  $v_1 \rightarrow v_2 \rightarrow v_3 \rightarrow v_4 \rightarrow v_5$, where $v_2$ and $v_4$ are the children of $v_3$, $v_1$ is the right child of $v_2$, and $v_5$ is the left child of $v_4$. Such paths are indicated by green arcs in \fig{circtree2}.  Taking such a path involves applying the circuit
\begin{equation}
\TOF(q_{d-1},\overline{x_d};q_d) \CNOT(q_{d-2};q_{d-1}) \TOF(q_{d-1},x_d;q_d).
\end{equation}
This gate sequence can be simplified to
\begin{equation}
\CNOT(q_{d-1};q_d) \TOF(q_{d-2},x_d;q_d) \CNOT(q_{d-2};q_{d-1}).
\end{equation}
This additional optimization eliminates $\frac{1}{2}\cdot 2^\lenstring-1$ Toffoli gates and introduces $\frac{1}{2}\cdot 2^\lenstring-1$ $\CNOT$s, giving a reduced Toffoli gate count of $2\cdot 2^\lenstring-4-\frac{1}{2}\cdot 2^\lenstring+1 = 1.5 \cdot 2^\lenstring - 3$.

We conclude by carefully counting the resources, while reducing them a little further. Naively, every move along an edge of the tree involves applying a Toffoli gate. As detailed above, each red shortcut lets us traverse two edges using only one $\CNOT$ gate and each green shortcut lets us traverse four edges using one Toffoli and two $\CNOT$s. Overall, the number of Toffoli gates, per the colored tree in \fig{circtree2}, is
\begin{align}
&4\cdot 2^\lenstring-4 & &\text{(length of the entire unreduced path)} \nonumber\\
&\quad- 2^\lenstring & &\text{(red shortcuts reduction)} \nonumber\\
&\quad- 1.5 \cdot 2^\lenstring+3 & &\text{(green shortcuts reduction)} \nonumber\\
&\quad- 2 & &\text{(grey edges connect to a virtual node and require no Toffoli gates)} \nonumber\\
&\quad- 1 & &\text{(we can compute $\bar x_1 x_2$ from $x_1 \bar x_2$ using two $\CNOT$s and no Toffoli gates)} \nonumber\\
&= 1.5\cdot 2^\lenstring - 4.
\end{align}
The $\CNOT$ gate count is
\begin{align}
&0.5\cdot 2^\lenstring && \text{(red)} \nonumber\\
&\quad + 2\cdot (2^{\lenstring-1}-1) && \text{(green)} \nonumber\\
&\quad + 0 && \text{(across blue and grey paths)} \nonumber\\
&= 1.5\cdot 2^\lenstring - 2.
\end{align}
The $\NOT$ gate count is $2$.
\end{proof}

By substituting Toffoli gates with their Clifford+$T$ implementations, the desired control generation circuit may be implemented using $2$ $\NOT$ gates, $3\cdot 2^\lenstring - 8$ Hadamard gates, $10.5\cdot 2^\lenstring - 28$ $T$ gates, and $10.5\cdot 2^\lenstring - 26$ $\CNOT$ gates. In the remainder of this section, we reduce this gate count. To accomplish this, we pair Toffoli gates as indicated in \fig{circtree2} with dashed blue lines.  Formally, we pair every $\TOF(q_{d-1},\overline{x_d};q_d)$ with the $\TOF(q_{d-1},x_d;q_d)$, such that the tree walk path between these two edges does not visit any leaf, and the gates applied between these two Toffoli gates do not affect qubits $x_d$ and $q_d$.  The path beginning and ending with such a pair of Toffoli gates results in the following circuit, where the unitary $R$ implements all gates in between:
\[
\Qcircuit @C=.5em @R=.83em @!R {
\lstick{\ket{...}} 		& {/} \qw	& \qw		& \multigate{1}{R} 	& \qw		& \qw \\
\lstick{\ket{q_{d-1}}} 	&	 \qw	& \ctrl{1}	& \ghost{R} 		& \ctrl{1}	& \qw \\
\lstick{\ket{x_d}} 		&	 \qw	& \ctrlo{1}	& \qw	 			& \ctrl{1}	& \qw \\
\lstick{\ket{q_d}} 		&	 \qw	& \targ		& \qw	 			& \targ		& \qw
}
\]
The two Toffoli gates can be decomposed into Clifford+$T$ gates such that the $\TOF(q_{d-1},\overline{x_d};q_d)$ gate ends with one Hadamard, two $\CNOT$s, and 3 $T$/$T^\dagger$ gates on qubits $x_d$ and $q_d$. These cancel with one Hadamard, two $\CNOT$s, and 3 $T$/$T^\dagger$ gates in a decomposition of $\TOF(q_{d-1},x_d;q_d)$, leaving behind a single Phase gate (due to the Toffoli gate controls appearing with different polarities).  We furthermore choose the minimal resource implementation of $\TOF(q_{d-1},\overline{x_d};q_d)$ such that the last gate applying to qubit $\ket{q_{d-1}}$ is $\CNOT(q_d;q_{d-1})$ and the first gate in the decomposition of $\TOF(q_{d-1},x_d;q_d)$ affecting qubit $\ket{q_{d-1}}$ is also $\CNOT(q_d;q_{d-1})$.  These two $\CNOT$s can be commuted to each other and canceled, since Phase commutes with the control on the qubit $\ket{q_d}$ and $R$ is a reversible circuit with its output $\XOR$ed onto the target qubit $\ket{q_{d-1}}$, implying that $R$ commutes with $\CNOT(q_d;q_{d-1})$.  The reduced circuit, equivalent to the one shown above, is as follows:
\[
\Qcircuit @C=.4em @R=.4em @!R {
\lstick{\ket{...}} 		& {/} \qw	& \qw		& \qw 		& \qw 				& \qw		& \qw 		& \qw 		& \qw 		& \multigate{1}{R}
& \qw 				& \qw 		& \qw 				& \qw		& \qw 				& \qw 		& \qw 		& \qw \\
\lstick{\ket{q_{d-1}}} 	&	 \qw	& \gate{T} 	& \targ 	& \gate{T} 	& \targ		& \gate{T^\dagger} 	& \targ 	& \gate{T^\dagger} 	& \ghost{R}
& \gate{T^\dagger} 			& \targ 	& \gate{T} 	& \targ		& \gate{T^\dagger} 	& \targ 	& \gate{T} 	& \qw \\
\lstick{\ket{x_d}} 		&	 \qw	& \qw		& \ctrl{-1} & \qw  				& \qw		& \qw  		& \ctrl{-1} & \qw  		& \qw
& \qw  				& \ctrl{-1} & \qw  				& \qw		& \qw  				& \ctrl{-1} & \qw 		& \qw \\
\lstick{\ket{q_d}} 		&	 \qw	& \gate{H}	& \qw		& \qw  				& \ctrl{-2}	& \qw  		& \qw		& \qw 		& \gate{P}
& \qw  				& \qw		& \qw  				& \ctrl{-2}	& \qw  				& \qw		& \gate{H} 	& \qw
}
\]
The equivalence of the circuits before and after optimization can be verified by direct computation.

The control generation circuit contains $\frac{1}{2} \cdot 2^\lenstring - \lenstring$ such Toffoli pairs, each reducing the gate count by $2$ Hadamards, $6$ $T$/$T^\dagger$ gates, and $6$ $\CNOT$s while introducing a single Phase gate. Overall, the optimized $\select(V)$ control generation circuit thus uses the following resources:
\begin{align*}
\begin{array}{rl}
2 						& \text{$\NOT$ gates} \\
2 \cdot 2^\lenstring + 2\lenstring - 8 	& \text{Hadamard gates} \\
0.5 \cdot 2^\lenstring - \lenstring 		& \text{Phase gates} \\
7.5 \cdot 2^\lenstring+6\lenstring-28 	& \text{$T$/$T^\dagger$ gates} \\
7.5 \cdot 2^\lenstring+6\lenstring-26 	& \text{$\CNOT$ gates} \\
\lenstring-1 					& \text{ancilla qubits, initialized in and returned to the state $\ket{0}$}.
\end{array}
\end{align*}

Note that the resulting $T$/$T^\dagger$-count is only about $7\%$ higher than the expected $T$/$T^\dagger$-count in a circuit with $2^\lenstring-\lenstring-1$ Toffoli gates (which is the number of Toffoli gates required to generate $2^\lenstring$ Boolean products by a reversible NCT circuit, according to \lem{sV1}).  The overall elementary gate count in our optimized implementation is $17.5 \cdot 2^\lenstring + o(2^\lenstring)$. This compares favorably to the baseline estimation of $18\lenstring2^\lenstring + o(\lenstring2^\lenstring)$ elementary gates, improving the gate count by a factor of $\frac{36\lenstring}{35}$ at leading order.  Recall that in our application to the \TS\ algorithm, we have $\lenstring = \ceil{\log_2{\numselect}}$, where $\numselect$ is the number of selected unitaries,
 as in \eq{recallselectv}.

\section{Quantum signal processing implementation details}
\label{app:lowchuang}

We now turn our attention to the quantum signal processing (\QSP) algorithm of Low and Chuang \cite{LC17,LC16}, as introduced in \sec{algqsp}. We estimate the success probability of the \QSP\ algorithm in \sec{lc_failure} and discuss how this can be taken into account to make a fair comparison with other simulation algorithms. In \sec{lc_opt}, we describe optimizations that reduce the gate count of our implementation.   Then, in \sec{lc_phase_seg}, we discuss the difficulty of computing the phases that specify the algorithm and describe a segmented version of the algorithm that mitigates this issue.  Finally, in \sec{lc_bound}, we describe empirical bounds on the error in the truncated Jacobi-Anger expansion and in the overall algorithm.

\subsection{Failure probability}
\label{sec:lc_failure}

Just like the \TS\ algorithm, the \QSP\ algorithm is probabilistic. The success probability is not explicitly evaluated in references \cite{LC17,LC16}. The error analysis presented in \sec{algqsp} guarantees that the post-selected operator $(\langle +|\langle G|\otimes I)V(|+\rangle|G\rangle\otimes I)$ is close to the ideal evolution $e^{-iHt}$ in the sense that
\begin{equation}
	\norm*{\big(\langle +|\langle G|\otimes I\big)V\big(|+\rangle|G\rangle\otimes I\big)-e^{-iHt}}\le \epsilon.
\end{equation}
Therefore, for any given input state $|\psi\rangle$, the postselection succeeds with probability at least
\begin{align}
	&\norm*{\bigl(\langle +|\langle G|\otimes I\bigr)V\bigl(|+\rangle|G\rangle\otimes |\psi\rangle\bigr)}^2 \nonumber\\
	&\quad\geq \bigl(\norm*{e^{-iHt}|\psi\rangle}-\norm*{\bigl(\langle +|\langle G|\otimes I\bigr)V\bigl(|+\rangle|G\rangle\otimes |\psi\rangle\bigr)-e^{-iHt}|\psi\rangle}\bigr)^2 \\
	&\quad\geq \bigl(1-\norm*{\bigl(\langle +|\langle G|\otimes I\bigr)V\bigl(|+\rangle|G\rangle\otimes I\bigr)-e^{-iHt}}\bigr)^2 \\
	&\quad\ge 1-2\epsilon.
\end{align}
With $\epsilon=10^{-3}$, the success probability of the \QSP\ algorithm is at least $1-2\epsilon=0.998$. Reasoning similarly as in \sec{tsfailure}, we find that the expected number of times we must repeat the algorithm before succeeding is approximately $1/0.998\approx 1.002$. This factor is very close to $1$, so as for the \TS\ algorithm, we neglect it when comparing gate counts.

\subsection{Circuit optimizations}
\label{sec:lc_opt}

The $\select(V)$ gate is a major component of the \QSP\ algorithm, so we use the optimized implementation of that subroutine described in \sec{selectV} in our implementation of the \QSP\ algorithm. In this section, we present some further circuit optimizations that also reduce the gate count.

As discussed in \sec{algqsp}, we use the phased iterate $V_\phi$ defined in \eq{qspiter}, whereas Low and Chuang use the operation $V_\phi'$ defined in \eq{lciter}.  It is easy to see that
\begin{align}
  V_\phi = (I \otimes Z_{\pi/2}) V_\phi' (I \otimes Z_{-\pi/2}),
\end{align}
so
\begin{align}
  V_\phi^\dag = (I \otimes Z_{\pi/2}) V_\phi^{\prime\dagger} (I \otimes Z_{-\pi/2})
\end{align}
also involves conjugation by $I \otimes Z_{\pi/2}$. Thus, when the phased iterates are applied in the sequence \eq{pisequence}, the inner partial reflection gates cancel.  Furthermore, $Z_{\varphi}$ simply introduces a relative phase between $|G\rangle$ and its orthogonal subspace, so its action is trivial if the ancilla register is initialized in and postselected on $|G\rangle$. Thus we see that the implementation as defined in \sec{algqsp} has the same effect as in \cite{LC16}.  Each partial reflection is implemented using $O(\log n)$ elementary gates, and there are $O(n^2)$ phased iterates, so our implementation saves $O(n^2\log n)$ gates.

We apply a similar simplification to further reduce the gate count. For every phased iterate $V_\phi$ defined in \eq{qspiter}, we must implement a controlled version of the operator $-iQ= -i\bigl((2|G\rangle\langle G|-I)\otimes I\bigr)\select(H)$. In particular, this requires us to perform a controlled-reflection about $|G\rangle$. We can do this by performing a controlled-$U^\dagger$ that unprepares the state $|G\rangle$ (where $U$ is a unitary operation satisfying $U|0\rangle=|G\rangle$), a controlled reflection about $|0\rangle$, and finally a controlled-$U$ that prepares the state $|G\rangle$. However, observe that we can replace the controlled unitary conjugation by its uncontrolled version without changing the behavior of the circuit. Furthermore,
by grouping neighboring pairs of phased iterates in the sequence of $V_\phi$ and $V_{\phi'}^\dagger$ operations, we can cancel pairs of unitary operators $U$ and $U^\dagger$ for state preparation and unpreparation.

\subsection{Phase computation and segmented algorithm}
\label{sec:lc_phase_seg}

Recall that to specify the \QSP\ algorithm, we must find phases $\phi_1,\ldots,\phi_M$ that realize the truncated Jacobi-Anger expansion.  In principle, these angles can be computed in polynomial time \cite{LYC16}. However, this computation is difficult in practice, so we can only carry it out for very small instances. Specifically, we found the time required to calculate the angles to be prohibitive for values of $M$ greater than about $32$.  For $n=10$ qubits with $t=n$ and $\epsilon=10^{-3}$, the error bound \eq{qsp_errbd} suggests that we should take $M=1100$.  Thus the difficulty of computing the angles prevents us from synthesizing nontrivial instances of the algorithm.  This difficulty arises because the procedure for computing the angles requires us to compute the roots of a high-degree polynomial to high precision.  It is a natural open problem to give a more practical method for computing the angles.

Fortunately, to determine the Clifford+$R_z$ gate count in our implementation of the \QSP\ algorithm, we do not need to know the angles of the phased iterates.  Furthermore, since most $R_z$ gates require approximately the same number of Hadamard and $T$ gates to realize within a given precision, we can get a reasonable estimate of the Clifford+$T$ count by using random angles in place of the true values.  However, we emphasize that this method does not produce a correct quantum simulation circuit, and should only be used as a benchmark of the resource requirements of the \QSP\ algorithm---which is only useful if the true angles can ultimately be computed.

An alternative is to consider what we call a \emph{segmented} version of the algorithm. In this approach, we first divide the evolution time into $r$ segments, each of which is sufficiently short that the angles can readily be computed. Since the optimality of the \QSP\ approach to Hamiltonian simulation relies essentially on simulating the entire evolution as a single segment, the segmented approach has higher asymptotic complexity. However, it allows us to give a complete implementation, and the overhead for moderate values of $n$ is not too great.

To analyze the algorithm with $r$ segments, we apply the error bound \eq{qsp_errbd} with $t$ replaced by $t/r$ and $\epsilon$ replaced by $\epsilon/r$.  This gives the sufficient condition
\begin{equation}
  \label{eq:lc-closed}
  \frac{4({\alpha t}/{r})^{q}}{2^{q} \, q!}\leq \frac{\epsilon}{8r}
\end{equation}
where $q=\frac{M}{2}+1$.
Thus
\begin{align}
  r=O\Bigl(\alpha t \Bigl(\tfrac{\alpha t}{2\epsilon(\frac{M}{2}+1)!}\Bigr)^{2/M}\Bigr)
\end{align}
segments suffice to ensure overall error at most $\epsilon$. With $t=n$, $\alpha=O(n)$, and $M$ a fixed constant, we have $r=O(n^{2+4/M})$ segments. Within each segment, the number of phased iterates is $M$, which is independent of the system size. Each phased iterate has circuit size $O(n)$ using the improved $\select(V)$ implementation described in \sec{selectV}. Therefore, the segmented algorithm has gate complexity $O(n^{3+4/M})$.

In our implementation, we use $M=28$ (i.e., $q=15$). For the instance of quantum simulation considered in this paper, we set $\epsilon=10^{-3}$, $\alpha=4n$, and $t=n$. With these values, \eq{lc-closed} shows that it suffices to use
\begin{equation}
	r\geq\sqrt[14]{\frac{10^3\cdot 2^{20} n^{30}}{15!}}
	= 0.6010 \, n^{15/7}
\end{equation}
segments.

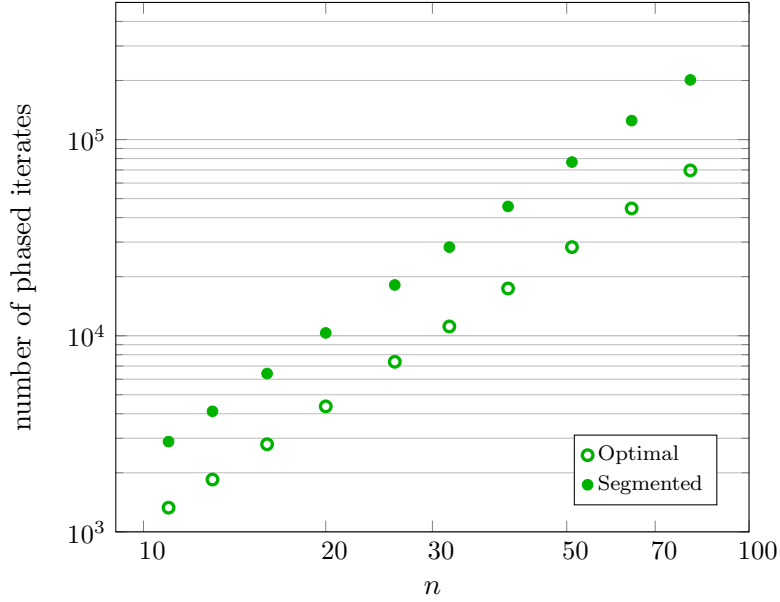
\begin{figure}
	\centering
	\begin{tikzpicture}
	\begin{axis}[
	log x ticks with fixed point,
	xtick={10,20,30,50,70,100},
	xmode=log,
	ymode=log,
	xmin = 9,
	xmax = 100,
	ymin = 1000,
	ymax = 5*10^5,
  width=10cm,
	ymajorgrids=true,
	yminorgrids=true,
	legend style={at={(0.95,0.05)},anchor=south east, font=\fontsize{8}{5}\selectfont},
	xlabel={$n$},
	ylabel={number of phased iterates},
	every axis legend/.append style={nodes={right},
		every x tick label/.append style={font=\small},
		every y tick label/.append style={font=\small},
	}
	]

  \addplot[only marks,black!30!green,mark=o,mark options={fill=white,line width=1.25pt}] coordinates {
  	(11,1328)
		(13,1848)
		(16,2796)
		(20,4360)
		(26,7362)
		(32,11144)
		(40,17406)
		(51,28290)
		(64,44544)
		(80,69598)
		(101,110924)
	};
	\addlegendentry{Optimal}

	\addplot[only marks, black!30!green] coordinates {
		(11,2884)
		(13,4116)
		(16,6412)
		(20,10332)
		(26,18144)
		(32,28280)
		(40,45612)
		(51,76776)
		(64,124880)
		(80,201432)
		(101,331912)
	};
	\addlegendentry{Segmented}

	\end{axis}
	\end{tikzpicture}
	\caption{Comparison of the number of phased iterates using optimal and segmented implementations of the \QSP\ algorithm.}
	\label{fig:lowchuang_seg}
\end{figure}

\fig{lowchuang_seg} compares the total number of phased iterates used in the segmented and optimal implementations.
Over the range of interest, the segmented algorithm is only worse by a factor between $2$ and $3$.

\subsection{Empirical error bounds}
\label{sec:lc_bound}

The error bound \eq{qsp_errbd} uses the closed-form expression \eq{jaerror} for the remainder of the Jacobi-Anger expansion.
While it is a convenient to use such an analytical expression, it is natural to ask how tightly it bounds the complexity of the algorithm.

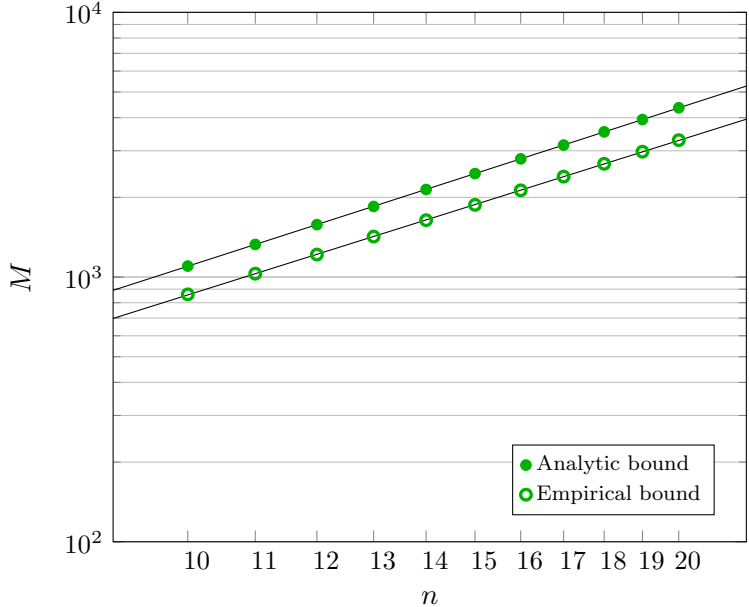
\begin{figure}
	\centering
	\begin{tikzpicture}
	\begin{axis}[
	log x ticks with fixed point,
  xtick={10,11,12,13,14,15,16,17,18,19,20},
	xmode=log,
	ymode=log,
	xmin = 9,
	xmax = 22,
	ymin = 10^2,
	ymax = 10^4,
	width=10cm,
	ymajorgrids=true,
	yminorgrids=true,
	legend style={at={(0.95,0.05)},anchor=south east, font=\fontsize{8}{5}\selectfont},
	xlabel={$n$},
	ylabel={$M$},
	every axis legend/.append style={nodes={right},
		every x tick label/.append style={font=\small},
		every y tick label/.append style={font=\small},
	}
	]

	\addplot[only marks, black!30!green] coordinates {
		(10,1100)
		(11,1328)
		(12,1578)
		(13,1848)
		(14,2142)
		(15,2460)
		(16,2796)
		(17,3152)
		(18,3536)
		(19,3936)
		(20,4360)
	};
	\addlegendentry{Analytic bound}

	\addplot[only marks,black!30!green,mark=o,mark options={fill=white,line width=1.25pt}] coordinates {
		(10,860)
		(11,1030)
		(12,1216)
		(13,1422)
		(14,1640)
		(15,1874)
		(16,2126)
		(17,2396)
		(18,2678)
		(19,2976)
		(20,3290)
	};
	\addlegendentry{Empirical bound}

	\addplot[
	color = black,
	mark = none
	]	coordinates {
		( 1, 11.2987 )
		( 100, 106810 )
	};

	\addplot[
	color = black,
	mark = none
	]	coordinates {
		( 1, 9.8486 )
		( 100, 74393 )
	};

	\end{axis}
	\end{tikzpicture}
	\caption{Comparison of the number of phased iterates using the analytic bound \eq{jaerror} and the empirical bound for the Jacobi-Anger expansion.  Here $M$ is the number of phased iterates and $n$ is the system size.}
	\label{fig:jacobi}
\end{figure}

To address this question, we numerically evaluate the left-hand side of \eq{jaerror} for systems of sizes ranging from $10$ to $20$, as shown in \fig{jacobi}. By extrapolating these data, we estimate the complexity of the \QSP\ algorithm for arbitrary sizes, including those for which classical evaluation of the series is intractable. The empirical bound improves the gate count by a factor between $1.25$ and $1.45$ over the range of interest ($10 \le n \le 100$). More specifically, power law fits to the data give
\begin{equation}
	\label{eq:ja_bound}
	M_{\text{ana}}=11.30 \, n^{1.988},\quad
	M_{\text{emp}}=9.849 \, n^{1.939}
\end{equation}
for the number of phased iterates using either the analytic bound or the empirical bound, respectively. Since each phased iterate has gate complexity $O(n)$ (using the technique from \sec{selectV}), we find that the \QSP\ algorithm has complexity $O(n^{2.988})$ (resp., $O(n^{2.939})$) using the analytic bound (resp., empirical bound).  Note that the former is roughly consistent with the upper bound $O(n^3\log n)$ shown in \tab{algsummary}.

We do not consider the empirical bound for the segmented version of the \QSP\ algorithm, since the savings is small in that case (even less at $M=28$ than at the values shown in \fig{jacobi}), and the main goal of the segmented approach is to have a fully-specified algorithm with rigorous guarantees.  However, we use the empirical bound to estimate resources using the non-segmented \QSP\ algorithm. This produces our most optimistic benchmark for the performance of the \QSP\ algorithm.

\begin{figure}
	\centering
	\begin{tikzpicture}
  \pgfplotsset{
      every non boxed x axis/.style={}
  }
	\begin{groupplot}[
	group style={
		group size=1 by 2,
		xticklabels at=edge bottom,
		vertical sep=0pt
	}
	]

	\nextgroupplot[
		xtick={5,6,7,8,9},
		xmode=log,
		ymode=log,
		xmin = 4.5,
		xmax = 10,
		ymin = 10^(-6),
		ymax = 10^(-4),
		width=10cm,
		height=7cm,
    axis x line=top,
		ymajorgrids=true,
		yminorgrids=true,
		legend style={at={(0.95,0.25)},anchor=south east, font=\fontsize{8}{5}\selectfont},
		y label style={at={(axis description cs:-.06,0.194538)},anchor=mid},
		ylabel={error},
		every axis legend/.append style={nodes={right},
			every x tick label/.append style={font=\small},
			every y tick label/.append style={font=\small},
		},
    scale only axis
	]
  \addplot+[only marks,black!30!green,mark=*,mark options={fill=white,line width=1.25pt},error bars/.cd,y dir=both,y explicit] coordinates {
    (5,0.00003846498613836768) +- (0,0.00002342998792421650)
    (6,0.00005413408105097961) +- (0,0.00002646830715358416)
    (7,0.00005161292779960877) +- (0,0.00001911711152242151)
    (8,0.00005182940875179489) +- (0,0.00002645539014021875)
    (9,0.00005220208967044450) +- (0,0.00001671253354392060)
  };

  \addplot+[only marks,red,mark=*,mark options={fill=white,line width=1.25pt},error bars/.cd,y dir=both,y explicit] coordinates {
    (5,0.00004506908002061410) +- (0,0.00001544416803328523)
    (6,0.00004231281448720436) +- (0,0.00001052993696655874)
    (7,0.00003449385552973019) +- (0,0.000006578262679717169)
    (8,0.00002934885775781500) +- (0,0.000005955828327453132)
    (9,0.00002478130585468500) +- (0,0.000002945891747431949)
  };

  \addplot+[only marks,red,mark=triangle*,mark options={fill=white,line width=1.25pt},error bars/.cd,y dir=both,y explicit] coordinates {
    (5,0.000006957121496613446) +- (0,0.0000006834997689817799)
    (6,0.000005266837501835215) +- (0,0.0000005857497610800569)
    (7,0.000003587903297605539) +- (0,0.0000003989541727554721)
    (8,0.000003283547411227005) +- (0,0.0000005613058081752632)
    (9,0.000002417658902701606) +- (0,0.0000003462214321831288)
  };

	\nextgroupplot[
	xtick={5,6,7,8,9},
  xticklabels={5,6,7,8,9},
  xmode=log,
	ymode=log,
	xmin = 4.5,
	xmax = 10,
	ymin = 3*10^(-11),
	ymax = 5*10^(-10),
  ytick = {0.00000000003,0.0000000001,0.0000000003},
  yticklabels={$3 \times 10^{-11}$,$10^{-10}$,$3 \times 10^{-10}$},
  minor ytick = {0.00000000003,0.00000000004,0.00000000005,0.00000000006,0.00000000007,0.00000000008,0.00000000009,0.0000000002,0.0000000003},
  axis y discontinuity=parallel,
	width=10cm,
	height=4.27647cm,
	ymajorgrids=true,
	yminorgrids=true,
	legend style={at={(0.95,1.2)},anchor=north east, font=\fontsize{8}{5}\selectfont},
	xlabel={$n$},
	axis x line=bottom,
	every axis legend/.append style={nodes={right},
		every x tick label/.append style={font=\small},
		every y tick label/.append style={font=\small},
	},
  scale only axis
	]

  \addplot+[only marks,black!30!green,mark=*,mark options={fill=white,line width=1.25pt},error bars/.cd,y dir=both,y explicit] coordinates {(5,1) +- (0,.1)};
  \addlegendentry{\QSP\ (seg)}

  \addplot+[only marks,red,mark=*,mark options={fill=white,line width=1.25pt},error bars/.cd,y dir=both,y explicit] coordinates {(5,1) +- (0,.1)};
  \addlegendentry{\PF\ (com 1)}

  \addplot+[only marks,red,mark=triangle*,mark options={fill=white,line width=1.25pt},error bars/.cd,y dir=both,y explicit] coordinates {(5,1) +- (0,.1)};
  \addlegendentry{\PF\ (com 2)}

	\addplot+[only marks,red,mark=square*,mark options={fill=white,line width=1.25pt},error bars/.cd,y dir=both,y explicit] coordinates {
		(5,0.0000000001828823539720707) +- (0,0.00000000003794022501383703)
		(6,0.0000000001239206189626176) +- (0,0.00000000001754466436146383)
		(7,0.00000000006139029304243768) +- (0,0.000000000007382059699760361)
		(8,0.00000000005007357115021354) +- (0,0.000000000007252429020121750)
	};
	\addlegendentry{\PF\ (com 4)}

	\end{groupplot}
  \pgfresetboundingbox
  \path[draw=none] (current axis.south west) -- ++(10cm,11.27647cm);
  \path[draw=none] (current axis.south west) -- ++(-1.35cm,-1cm);
	\end{tikzpicture}
	\caption{Empirical error in the segmented \QSP\ algorithm and \PF\ algorithms of orders 1, 2, and 4 (with commutator bound) for small system sizes.}
	\label{fig:empirical_sp}
\end{figure}
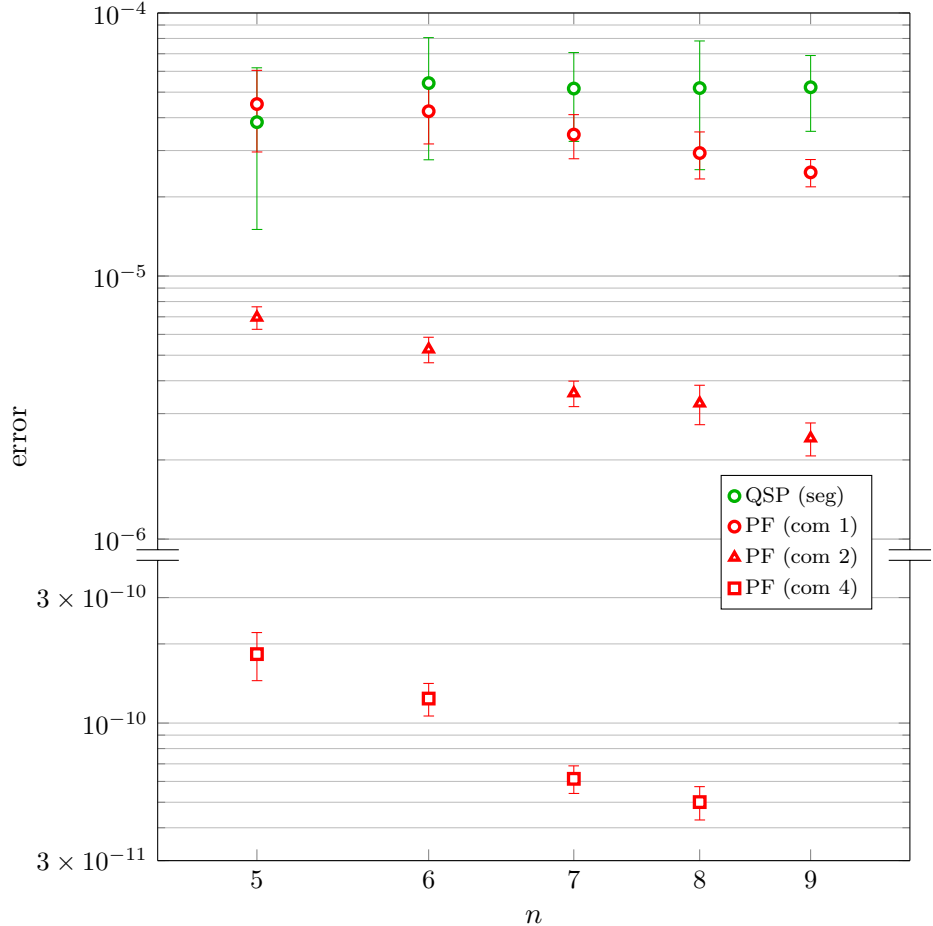

One could also consider a full empirical bound for the \QSP\ algorithm by using direct simulation to determine its true overall error. The need for ancilla qubits makes this challenging: the algorithm uses $n+\lceil\log 4n\rceil+1$ qubits to simulate an $n$-qubit system, as shown in \fig{qubitcounts}.  Fortunately, unlike with the \TS\ algorithm, small instances of the \QSP\ algorithm are just within reach of direct classical simulation.

However, preliminary numerical investigation suggests that the performance of the \QSP\ algorithm cannot be significantly improved using such an empirical bound. \fig{empirical_sp} shows the empirical error in the segmented \QSP\ algorithm for small system sizes, averaging over $10$ random experiments with a fixed target error $\epsilon=10^{-3}$, along with similar data for the \PF\ algorithm using the commutator bound. We observe that for system sizes between $5$ and $9$, the \QSP\ error is consistently around $5\times10^{-5}$, which is not significantly less than the target error of $10^{-3}$. While there was more variation in the error of the \QSP\ algorithm as compared to the \PF\ algorithm, in no case was the \QSP\ error less than $10^{-5}$. In contrast to the \PF\ algorithm, where the error apparently decreases as a power law in $n$, the \QSP\ error shows no indication of decreasing. Furthermore, since the complexity of the \QSP\ algorithm depends logarithmically on the inverse error $1/\epsilon$, even a large reduction in the error may not have a significant effect. For these reasons, we do not consider full empirical error bounds in our resource estimates for the \QSP\ algorithm.

\section{Detailed results}
\label{app:results}

In this appendix, we present our results in more detail, expanding upon the discussion in \sec{results}.

First we briefly describe the variation of gate counts with respect to the random choice of Hamiltonian (namely, the coefficients $h_j \in [-1,1]$ in \eq{heisenberg}).
For each of the system sizes considered below, we generated five random Hamiltonians and determined the gate counts using Quipper. We report the
average gate count over the five Hamiltonians.

In the Clifford+$\Rz$ basis, the gate counts for \PF\ algorithms are actually  independent of the particular choice of coefficients $h_j$: the rotation angles in the circuits differ, but the number of rotations does not. The number of gates in the Clifford+$T$ basis depends on these angles, but only weakly. Across orders and bounds, the $T$ count never varies by more than $1\%$.
The \TS\ algorithm is the most sensitive to the choice of Hamiltonian. For systems of size 13, the Clifford+$\Rz$ gate counts vary by about $10\%$, and similarly for the $T$ counts. However, the variance reduces as the system size increases. For systems of size 100, the variance is about $2\%$ across gate types.
The \QSP\ algorithm (segmented or not) has a mild dependence on the random choice of Hamiltonian. In the Clifford+$\Rz$ basis, the $\CNOT$ and $\Rz$ counts are independent of the choice of Hamiltonian while the Clifford counts vary by less than $1\%$. Because the rotation angles depend on the Hamiltonian, the $T$ count depends on the choice of Hamiltonian, but again the observed variation is always less than $1\%$.

In the remainder of this section, we present our results in detail.  We discuss the \PF\ algorithm first, in \sec{results-pf}, since this case has the most variations to compare.  Then, in \sec{results-rest}, we present results for the \TS\ and \QSP\ algorithms and discuss the relative performance of all three approaches.  In \sec{results-opt}, we explain how the circuit counts are improved using our automated circuit optimizer.

\subsection{Product formulas}
\label{sec:results-pf}

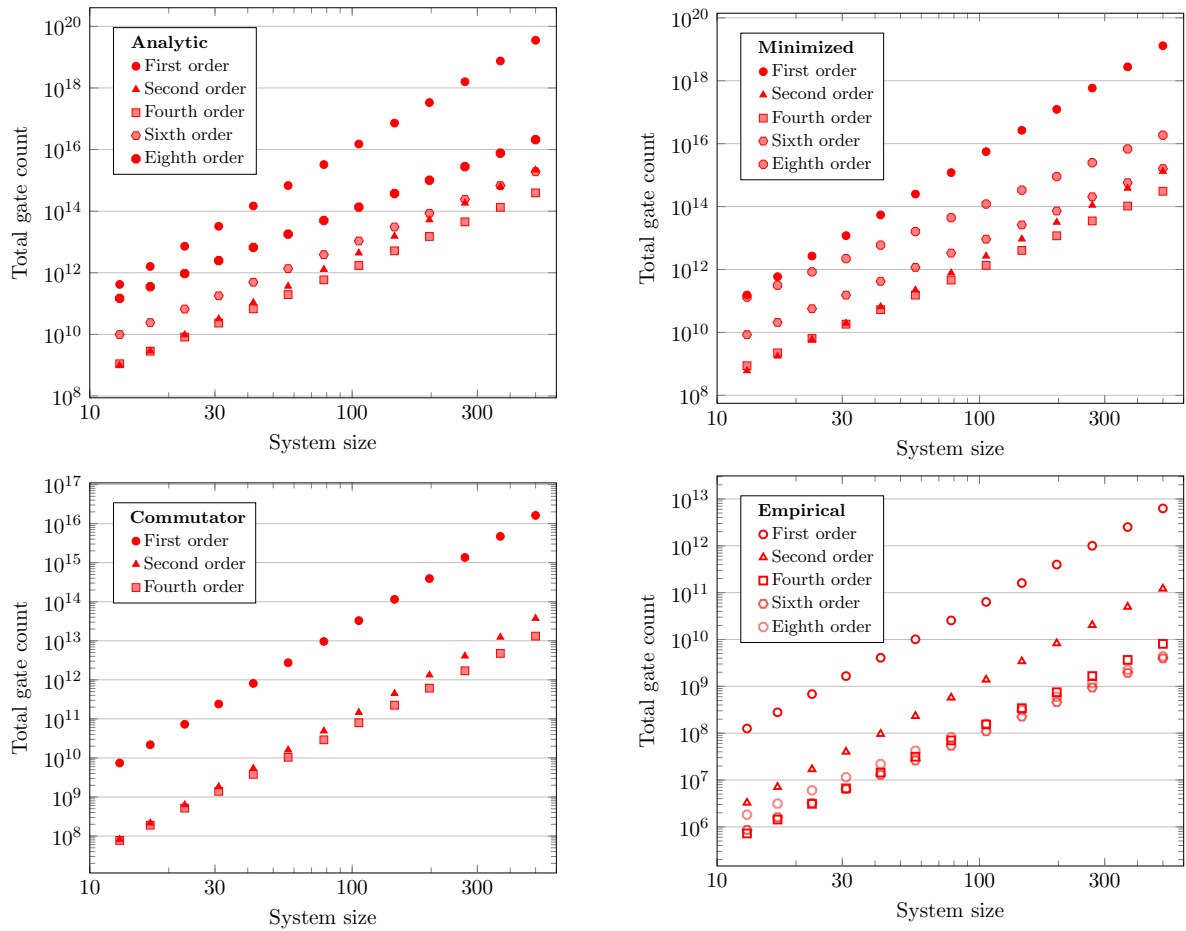
\begin{figure}
  \begin{subfigure}{.5\linewidth}
    \resizebox{.9\textwidth}{!}{\begin{tikzpicture}
  \begin{loglogaxis}[
    width=10cm,
    ymajorgrids=true,
    legend style={at={(0.05,0.95)},anchor=north west,font=\footnotesize},
    xlabel={System size},
    ylabel={Total gate count},
    xmin=10,
    xmax=600,
    xtick={10,100},
    xticklabels={10,100},
    extra x ticks={30,300},
    extra x tick labels={30,300},
    every axis legend/.append style={nodes={right}}
    ]

    \addlegendimage{empty legend}
    \addlegendentry[yshift=0pt]{\hspace{-.25cm}\textbf{Analytic}}

    \addplot[only marks, red] coordinates {
      \preoptimczlargetotalfstanaavg
    };
    \addlegendentry{First order}

    \addplot[only marks, red, mark=triangle*] coordinates {
      \preoptimczlargetotalsndanaavg
    };
    \addlegendentry{Second order}

    \addplot[only marks, red, fill opacity=.5, mark=square*] coordinates {
      \preoptimczlargetotalfrthanaavg
    };
    \addlegendentry{Fourth order}

    \addplot[only marks, red, fill opacity=.5, mark=hexagon*] coordinates {
      \preoptimczlargetotalsxthanaavg
    };
    \addlegendentry{Sixth order}

    \addplot[only marks, red, mark=octagon*] coordinates {
      \preoptimczlargetotaleigthanaavg
    };
    \addlegendentry{Eighth order}
  \end{loglogaxis}
\end{tikzpicture}}
  \end{subfigure}
  \begin{subfigure}{.5\linewidth}
    \resizebox{.9\textwidth}{!}{\begin{tikzpicture}
  \begin{loglogaxis}[
    width=10cm,
    ymajorgrids=true,
    legend style={at={(0.05,0.95)},anchor=north west,font=\footnotesize},
    xlabel={System size},
    ylabel={Total gate count},
    xmin=10,
    xmax=600,
    xtick={10,100},
    xticklabels={10,100},
    extra x ticks={30,300},
    extra x tick labels={30,300},
    every axis legend/.append style={nodes={right}}
    ]

    \addlegendimage{empty legend}
    \addlegendentry[yshift=0pt]{\hspace{-.25cm}\textbf{Minimized}}
    
    \addplot[only marks, red] coordinates {
      \preoptimczlargetotalfstminavg
    };
    \addlegendentry{First order}
    
    \addplot[only marks, red, mark=triangle*] coordinates {
      \preoptimczlargetotalsndminavg
    };
    \addlegendentry{Second order}
    
    \addplot[only marks, red, fill opacity=.5, mark=square*] coordinates {
      \preoptimczlargetotalfrthminavg
    };
    \addlegendentry{Fourth order}

    \addplot[only marks, red, fill opacity=.5, mark=hexagon*] coordinates {
      \preoptimczlargetotalsxthminavg
    };
    \addlegendentry{Sixth order}

    \addplot[only marks, red, fill opacity=.5, mark=octagon*] coordinates {
      \preoptimczlargetotaleigthminavg
    };        
    \addlegendentry{Eighth order}    
  \end{loglogaxis}
\end{tikzpicture}}
  \end{subfigure}
  \begin{subfigure}{.5\linewidth}
    \resizebox{.9\textwidth}{!}{\begin{tikzpicture}
  \begin{loglogaxis}[
    width=10cm,
    ymajorgrids=true,
    legend style={at={(0.05,0.95)},anchor=north west,font=\footnotesize},
    xlabel={System size},
    ylabel={Total gate count},
    xmin=10,
    xmax=600,
    xtick={10,100},
    xticklabels={10,100},
    extra x ticks={30,300},
    extra x tick labels={30,300},
    every axis legend/.append style={nodes={right}}
    ]

    \addlegendimage{empty legend}
    \addlegendentry[yshift=0pt]{\hspace{-.25cm}\textbf{Commutator}}
    
    \addplot[only marks, red] coordinates {
      \preoptimczlargetotalfstcomavg
    };
    \addlegendentry{First order}

    \addplot[only marks, mark=triangle*, red] coordinates {
      \preoptimczlargetotalsndcomavg
    };
    \addlegendentry{Second order}
    
    \addplot[only marks, fill opacity=.5, red, mark=square*] coordinates {
      \preoptimczlargetotalfrthcomavg
    };
    \addlegendentry{Fourth order}

  \end{loglogaxis}
\end{tikzpicture}}
  \end{subfigure}
  \begin{subfigure}{.5\linewidth}
    \resizebox{.9\textwidth}{!}{\begin{tikzpicture}
  \begin{loglogaxis}[
    width=10cm,
    ymajorgrids=true,
    legend style={at={(0.05,0.95)},anchor=north west,font=\footnotesize},
    xlabel={System size},
    ylabel={Total gate count},
    xmin=10,
    xmax=600,
    xtick={10,100},
    xticklabels={10,100},
    extra x ticks={30,300},
    extra x tick labels={30,300},
    every axis legend/.append style={nodes={right}}
    ]

    \addlegendimage{empty legend}
    \addlegendentry[yshift=0pt]{\hspace{-.25cm}\textbf{Empirical}}
    
    \addplot[only marks, red, mark options={fill=white,line width=1pt}] coordinates {
      \preoptimczlargetotalfstfitavg
    };
    \addlegendentry{First order}
    
    \addplot[only marks, red, mark=triangle*, mark options={fill=white,line width=1pt}] coordinates {
      \preoptimczlargetotalsndfitavg
    };
    \addlegendentry{Second order}
    
    \addplot[only marks, red, fill opacity=0, mark=square*, mark options={fill=white,line width=1pt}] coordinates {
      \preoptimczlargetotalfrthfitavg
    };
    \addlegendentry{Fourth order}

    \addplot[only marks, red, fill opacity=0, draw opacity=.66, mark=hexagon*, mark options={fill=white,line width=1pt}] coordinates {
      \preoptimczlargetotalsxthfitavg
    };
    \addlegendentry{Sixth order}    

    \addplot[only marks, red, fill opacity=0, draw opacity=.5, mark=octagon*, mark options={fill=white,line width=1pt}] coordinates {
      \preoptimczlargetotaleigthfitavg
    };        
    \addlegendentry{Eighth order}    

  \end{loglogaxis}
\end{tikzpicture}}
  \end{subfigure}
\caption{Total gate counts in the Clifford+$\Rz$ basis for product formula algorithms using the analytic (top left), minimized (top right), commutator (bottom left), and empirical (bottom right) bounds, for system sizes between 13 and 500.}\label{fig:ana_min_com_CZ}
\end{figure}

The performance of quantum simulation algorithms using product formulas depends strongly on the order of the formula and the method used to bound its error.  We now compare the effect of these choices.

\fig{ana_min_com_CZ} compares the total gate counts of simulation algorithms using product formulas of various orders with each of our four error bounds.  Using rigorous bounds, for system sizes between $13$ and $500$, the fourth-order algorithm outperforms other orders except for very small system sizes.  The second-order algorithm is preferred for the largest range of system sizes with the minimized bound, but even in that case, it only outperforms the fourth-order algorithm for $n \leq 28$. Similar considerations hold for Clifford+$T$ circuits and when we only count gates of a particular type.
Overall, it is clear that the fourth-order algorithm using the commutator bound is preferred for our application if one requires a provable guarantee on the error.
If a heuristic method is acceptable, the empirical bound offers a dramatic reduction in complexity.
In that case, the sixth-order algorithm begins to outperform the fourth-order algorithm at around 30 qubits, although the difference is small for the range of sizes we focus on (at most $100$ qubits).

Of the cases we considered, the first-order algorithm has the worst performance, using from $10^2$ to over $10^4$ times more gates than the fourth- or sixth-order algorithm. The experimental demonstrations of digital quantum simulation that we are aware of \cite{BCC06,Lan11,Bar15} have primarily used the first-order formula, with some limited applications of the second-order formula \cite{BCC06,Lan11}. Our work suggests that higher-order formulas may be practically relevant for surprisingly small instances, and we hope our results motivate practitioners to take advantage of this.

We did not evaluate the commutator bound for orders higher than $4$. This is because the bound is difficult to compute in practice, as discussed in \sec{pfcom}. Since the fourth-order algorithm performs significantly better than the sixth-order one for both the analytic and minimized bounds, we also expect the fourth-order algorithm to outperform the sixth-order algorithm even if we used an analogous sixth-order commutator bound.

\fig{4_ana_min_com_totalCZ} shows the total gate counts for the Clifford+$\Rz$ gate set using the fourth-order algorithm with the analytic, minimized, commutator, and empirical bounds. The commutator bound saves a factor of more than 10 over the minimized bound, a significant improvement in practical terms.  The empirical bound results in even greater savings of a further factor of about 100 (for system sizes in the 10s) to over 500 (for system sizes around 100).  Clearly, the empirical bound is strongly preferred if one can tolerate its lack of a rigorous correctness guarantee.  This suggests that existing rigorous bounds are quite loose, so a natural open problem is to establish stronger rigorous bounds, as mentioned in \sec{discussion}.

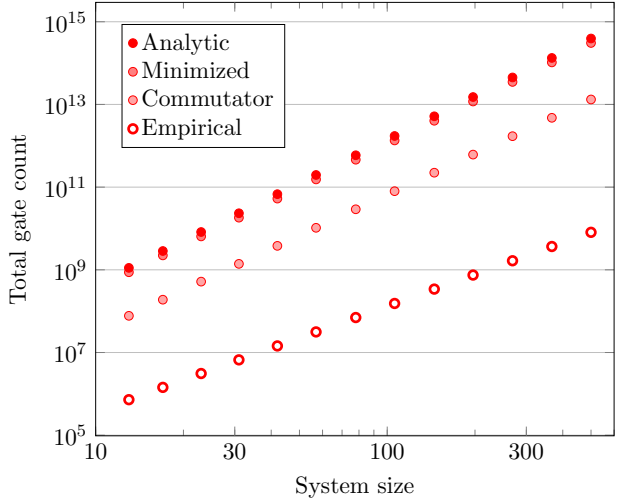
\begin{figure}
  \centering
  \resizebox{.5\linewidth}{!}{\begin{tikzpicture}
  \begin{loglogaxis}[
    width=10cm,
    ymajorgrids=true,
    legend style={at={(0.05,0.95)},anchor=north west},
    xlabel={System size},
    ylabel={Total gate count},
    xmin=10,
    xmax=600,
    xtick={10,100},
    xticklabels={10,100},
    extra x ticks={30,300},
    extra x tick labels={30,300},
    every axis legend/.append style={nodes={right}}
    ]
    
    \addplot[only marks, red] coordinates {
      \preoptimczlargetotalfrthanaavg
    };
    \addlegendentry{Analytic}
    
    \addplot[only marks, red, fill opacity=0.5] coordinates {
      \preoptimczlargetotalfrthminavg
    };
    \addlegendentry{Minimized}
    
    \addplot[only marks, red, fill opacity=0.35] coordinates {
      \preoptimczlargetotalfrthcomavg
    };
    \addlegendentry{Commutator}
    
    \addplot[only marks,red,mark=o,mark options={fill=white,line width=1.25pt}] coordinates {
      \preoptimczlargetotalfrthfitavg
    };
    \addlegendentry{Empirical}
    
  \end{loglogaxis}
\end{tikzpicture}}
  \caption{Total gate counts of Clifford+$\Rz$ circuits using the
    fourth-order \PF\ algorithm and varying bounds for system sizes between
    10 and 500.\label{fig:4_ana_min_com_totalCZ}}
\end{figure}

In fact, the commutator and empirical bounds not only reduce the gate count for small instances, but actually improve the dependence on system size $n$.  This is evident in \fig{4_ana_min_com_totalCZ}, where the slope of the data is reduced for the commutator and empirical bounds.  \tab{pfcomplexity} presents the exponents of $n$, both for the proven performance of the commutator bound (as stated in \thm{first_comm}, \thm{second_succinct}, and \thm{fourth_succinct}) and the numerically observed performance of the empirical bound.  Surprisingly, the empirical bound appears to give better asymptotic performance than $O(n^3)$ (the best asymptotic gate complexity listed in \tab{algsummary}): in particular, the eighth-order formula has asymptotic empirical complexity only slightly greater than quadratic in $n$.

While our empirical bound is only directly applicable to the particular system and parameters considered in our study, we expect that qualitatively similar improvements should apply in other cases. In particular, we performed a similar extrapolation for the fourth-order product formula applied to the same system with $h=1/10$ and $h=10$. We found very similar performance estimates for these parameters, with essentially the same exponent of $n$ and slightly different prefactors.

Note that while the constants in high-order commutator bounds are nontrivial to compute, we can more easily determine the asymptotic performance of these bounds, as discussed at the end of \sec{pfcom}. The lowest-order error in the $(2k)$th-order product formula is $O(n^{4k+2}/r^{2k+1})$. Our commutator bound utilizes the commutation relations to reduce the $n$-dependence of this lowest-order error to $O(n^{4k+1}/r^{2k+1})$, while the remaining higher-order terms are handled by standard techniques \cite{BACS05}. The resulting asymptotic performance is $O(n^{3+2/(2k+1)})$, improving over the complexity $O(n^{3+1/k})$ obtained using the standard bound.

\begin{table}
  \begin{subfigure}{.65\linewidth}
	\begin{center}
		\renewcommand{\arraystretch}{1.15}
		\begin{tabular}{r@{\;}l|c|c|c|c|c}
      && \multicolumn{5}{c}{Order} \\
			&Bound & 1 & 2 & 4 & 6 & 8 \\ \hline
      \raisebox{2pt}{\begin{tikzpicture}\draw[mark=*,red] plot coordinates{(0,0)};\end{tikzpicture}}
      &Analytic/Minimized & 5 & 4 & 3.5 & 3.333 & 3.25 \\
      \raisebox{2pt}{\begin{tikzpicture}\draw[mark=*,red,fill opacity=0.35] plot coordinates{(0,0)};\end{tikzpicture}}
      &Commutator & 4 & 3.667 & 3.4 & 3.286 & 3.222 \\
      \raisebox{2pt}{\begin{tikzpicture}\draw[mark=o,red,mark options={fill=white,line width=1.25pt}] plot coordinates{(0,0)};\end{tikzpicture}}
      &Empirical & 2.964 & 2.883 & 2.555 & 2.311 & 2.141
		\end{tabular}
		\renewcommand{\arraystretch}{1}
	\end{center}
  \end{subfigure}
  \begin{subfigure}{.35\linewidth}
    \begin{tikzpicture}
      \begin{axis}[
        width=5cm,
        legend style={at={(1.05,.5)},anchor=west,font=\scriptsize},
        xlabel={Order},
        ylabel={Exponent},
        xlabel near ticks,
        ylabel near ticks,
        extra x ticks={1},
        extra x tick labels={1},
        every axis legend/.append style={nodes={right}},
        every axis/.append style={font=\footnotesize}
        ]
        \pgfplotsset{every tick label/.append style={font=\scriptsize}}

        \addplot[only marks, red] coordinates {
          (1,5)
          (2,4)
          (4,3.5)
          (6,3.333)
          (8,3.25)
        };

        \addplot[only marks, red, fill opacity=0.35] coordinates {
          (1,4)
          (2,3.667)
          (4,3.4)
          (6,3.286)
          (8,3.222)
        };

        \addplot[only marks,red,mark=o,mark options={fill=white,line width=1.25pt}] coordinates {
          (1,2.875)
          (2,2.796)
          (4,2.587)
          (6,2.507)
          (8,2.141)
        };

      \end{axis}
    \end{tikzpicture}
    \end{subfigure}
	\caption{Exponent of $n$ in the asymptotic gate complexity of product formula simulation algorithms using various error bounds.\label{tab:pfcomplexity}}
\end{table}

\subsection{Other algorithms and comparisons}
\label{sec:results-rest}

\comment{
\begin{figure}
  \begin{subfigure}{.5\linewidth}
  \centering
  \resizebox{.9\textwidth}{!}{
  \begin{tikzpicture}
    \begin{loglogaxis}[
      width=10cm,
      ymajorgrids=true,
      legend style={at={(0.05,0.95)},anchor=north west,font=\small},
      xlabel={System size},
      ylabel={$\CNOT$ gate count},
      xmin=10,
      xtick={10,100},
      xticklabels={10,100},
      extra x ticks={20,30,50,70},
      extra x tick labels={20,30,50,70},
      every axis legend/.append style={nodes={right}}
      ]

      \addplot[only marks, red] coordinates {
        \preoptimfrthcomczcnot
      };
      \addlegendentry{\PF~(com 4)}

      \addplot[only marks,red,mark=o,mark options={fill=white,line width=1.25pt}] coordinates {
        \preoptimfrthfitczcnot
      };
      \addlegendentry{\PF~(fit 4)}

      \addplot[only marks, blue] coordinates {
        \preoptimlcuczcnot
      };
      \addlegendentry{\TS}

      \addplot[only marks, black!30!green] coordinates {
        \preoptimspsegmentczcnot
      };
      \addlegendentry{\QSP~(seg)}

      \addplot[only marks,black!30!green,mark=o,mark options={fill=white,line width=1.25pt}] coordinates {
        \preoptimspjaczcnot
      };
      \addlegendentry{\QSP~(JA fit)}

    \end{loglogaxis}
  \end{tikzpicture}
  }
 \end{subfigure}
  \begin{subfigure}{.5\linewidth}
  \resizebox{.9\textwidth}{!}{
  \begin{tikzpicture}
    \begin{loglogaxis}[
      width=10cm,
      ymajorgrids=true,
      legend style={at={(0.05,0.95)},anchor=north west,font=\small},
      xlabel={System size},
      ylabel={$T$ gate count},
      xmin=10,
      xtick={10,100},
      xticklabels={10,100},
      extra x ticks={20,30,50,70},
      extra x tick labels={20,30,50,70},
      every axis legend/.append style={nodes={right}}
      ]

      \addplot[only marks, red] coordinates {
        \preoptimfrthcomctt
      };
      \addlegendentry{\PF~(com 4)}

      \addplot[only marks,red,mark=o,mark options={fill=white,line width=1.25pt}] coordinates {
        \preoptimfrthfitctt
      };
      \addlegendentry{\PF~(fit 4)}

      \addplot[only marks, blue] coordinates {
        \preoptimlcuctt
      };
      \addlegendentry{\TS}

      \addplot[only marks, black!30!green] coordinates {
        \preoptimspsegmentctt
      };
      \addlegendentry{\QSP~(seg)}

      \addplot[only marks,black!30!green,mark=o,mark options={fill=white,line width=1.25pt}] coordinates {
        \preoptimspjactt
      };
      \addlegendentry{\QSP~(JA fit)}

    \end{loglogaxis}
  \end{tikzpicture}
  }
  \end{subfigure}
  \caption{Gate counts for the fourth-order product formula algorithm
    (using the commutator and empirical error bounds), the \TS\
    algorithm, and the \QSP\ algorithm (using the segmented version
    with analytic error bound and the non-segmented version with
    empirical error bound for the Jacobi-Anger expansion) for system
    sizes between 10 and 100.  Left: $\CNOT$ gates for the circuit
    over Clifford+$\Rz$.  Right: $T$ gates for the circuit over Clifford+$T$.\label{fig:pftsqsp_gates}}
\end{figure}
}

\fig{opt} compares gate counts for the fourth-order \PF\ algorithm with commutator bound, the better of the fourth- or sixth-order \PF\ algorithm with empirical bound, the \TS\ algorithm, and the \QSP\ algorithm (in both its segmented and non-segmented versions).

\comment{
\begin{table}
\makebox[\linewidth][c]{\begin{subfigure}{.6\textwidth}
\centering\small
\preoptimfrthcomcz4060table
\end{subfigure}\begin{subfigure}{.4\textwidth}
\centering\small
\preoptimfrthfitcz4060table
\end{subfigure}}

\bigskip

\makebox[\linewidth][c]{\begin{subfigure}{\linewidth}
\centering\small
\preoptimlcucz4060table
\end{subfigure}}

\bigskip

\makebox[\linewidth][c]{\begin{subfigure}{.5\textwidth}
\centering\small
\preoptimspjacz4060table
\end{subfigure}\begin{subfigure}{.5\textwidth}
\centering\small
\preoptimspsegmentcz4060table
\end{subfigure}}
\caption{Detailed gate count comparisons for the fourth-order product formula algorithm (using the commutator and empirical error bounds), the \TS\ algorithm, and the \QSP\ algorithm (using the segmented version with analytic error bound and the non-segmented version with empirical error bound for the Jacobi-Anger expansion) over the Clifford+$\Rz$ basis. \label{tab:CZ_40/60}}
\end{table}

\begin{table}
\makebox[\linewidth][c]{\begin{subfigure}{.6\textwidth}
\centering\small
\preoptimfrthcomct4060table
\end{subfigure}\begin{subfigure}{.4\textwidth}
\centering\small
\preoptimfrthfitct4060table
\end{subfigure}}

\bigskip

\makebox[\linewidth][c]{\begin{subfigure}{\linewidth}
\centering\small
\preoptimlcuct4060table
\end{subfigure}}

\bigskip

\makebox[\linewidth][c]{\begin{subfigure}{.5\textwidth}
\centering\small
\preoptimspjact4060table
\end{subfigure}\begin{subfigure}{.5\textwidth}
\centering\small
\preoptimspsegmentct4060table
\end{subfigure}}
\caption{Detailed gate count comparisons for the fourth-order product formula algorithm (using the commutator and empirical error bounds), the \TS\ algorithm, and the \QSP\ algorithm (using the segmented version with analytic error bound and the non-segmented version with empirical error bound for the Jacobi-Anger expansion) over the Clifford+$T$ basis. \label{tab:CT_40/60}}
\end{table}
}

Observe that the \TS\ algorithm outperforms the fourth-order \PF\ algorithm with rigorous performance guarantees provided $n$ is at least about $30$.  However, the $\CNOT$ count over Clifford+$R_z$ is only negligibly improved, whereas the $T$ count over Clifford+$T$ is improved by a factor of about 7.  This advantage comes at a significant cost in terms of space, as discussed in \sec{intro}. For system sizes between 20 and 50, the \TS\ algorithm uses between 116 and 171 qubits, whereas the \PF\ algorithm has no space overhead.  Furthermore, the \QSP\ algorithm always outperforms the \TS\ algorithm with respect to both circuit size and number of qubits, so it is clearly preferred.

Recall that we implemented two variants of the \QSP\ algorithm: the full algorithm (with random rotation angles in place of the true, hard-to-compute values) using the empirical Jacobi-Anger bound \eq{ja_bound}, and the segmented version using the rigorous error bound \eq{lc-closed}. Among all algorithms with a complete implementation and a rigorous accuracy guarantee, the segmented \QSP\ algorithm has the lowest gate counts. With respect to the $\CNOT$ count, it outperforms the fourth-order \PF\ algorithm, and even the \TS\ algorithm, by a factor of at least 5. The improvement in the $T$ count is greater. The full \QSP\ algorithm with empirical Jacobi-Anger bound has even lower gate complexity, but this will only be useful in practice if one can overcome the difficulty of classically computing the rotation angles.  Furthermore, if we are willing to accept heuristic error estimates, then the \PF\ algorithm significantly outperforms either \QSP\ algorithm while also requiring no ancilla qubits.

\subsection{Circuit optimization}
\label{sec:results-opt}

To reduce the resource requirements for quantum simulation as much as possible, we post-processed all of our circuits with an automated circuit optimizer \cite{Optimizer} as described in \sec{circuit-opt}.
\fig{opt} compares the gate counts of the \PF, \TS, and \QSP\ algorithms before and after this optimization. The $\CNOT$ and $T$ counts of the fourth-order \PF\ algorithm are both reduced by about $30\%$ throughout the range of interest. While the gate counts of the \TS\ and \QSP\ algorithms are also reduced, the improvement is much less significant.

For the Clifford+$R_z$ resource estimates (see the left plot in \fig{opt}), the \PF\ circuits show a 33\% reduction in $\CNOT$ count, whereas both the \TS\ and \QSP\ circuits admit only about 1\% reduction in their $\CNOT$ counts. We expect to see improvements in the $\CNOT$ count reduction for the \TS\ and \QSP\ circuits when using heavy instead of light optimization, although preliminary results suggest that this reduction will not qualitatively change the relative performance of the different algorithms.
For the Clifford+$T$ resource estimates (see the right plot in \fig{opt}), the $T$ count reduction is approximately $30\%$ for the \PF\ algorithm, $0.5\%$ for \TS, and $1\%$ for \QSP.

With respect to $\CNOT$ count, whereas the \TS\ algorithm outperforms the best \PF\ algorithm with rigorous performance guarantees for all system sizes shown in \fig{opt}, optimization improves the \PF\ algorithm to outperform the \TS\ algorithm for system sizes smaller than about $30$.  While the \TS\ algorithm is dominated by the more efficient \QSP\ approach, this example nevertheless shows that optimization can sometimes affect the relative performance of algorithms.

We also compare the optimization of \PF\ algorithms of different orders, as shown in \fig{opt_pf}.  We plot gate counts for \PF\ circuits with the empirical bound, which offers the best performance. Since the structure of the \PF\ circuit is not affected by choosing a different error bound, essentially the same relative improvements hold for other bounds as well.

The first-order \PF\ circuits do not admit any optimization in the $\CNOT$ counts. The remaining \PF\ circuits admit $\CNOT$ gate count reduction of about 33\%, with marginally more savings observed for higher orders (although this additional gain is too small to see in the scale used in \fig{opt_pf}).
The $R_z$ count of the \PF\ circuits behaves similarly under optimization. For the first-order \PF\ circuits, the $R_z$ count is not reduced.
However, for all higher-order circuits, the optimizer can take advantage of the reflection symmetry in formulas of order $2$ and higher, which reduces the $R_z$ counts in the original Clifford+$R_z$ circuits. The observed reduction is about 29\%.

\comment{
\ys{Discussion of the results:
  \begin{itemize}
    \item Comparison of different algorithms:
    \begin{itemize}
      \item \PF\ optimizes the best (with $xx\%$ reduction)
      \item We also observe a gatecount reduction in the \TS\ and \QSP\ algorithms, although not as significant as that in \PF\
      \item The select(V) subroutine of \TS\ and \QSP\ mainly consists of toffoli gates. These gates are decomposed by our optimizer with the state-of-art approach? Therefore, the $T$ count of \TS\ and \QSP\ is also improved.
    \end{itemize}
    \item Comparison of \PF\ algorithm with different orders:
    \begin{itemize}
      \item The optimizer can take advantage of the symmetry in definition of PF2, PF4 and PF6. So the gate count for these three algorithms is reduced by a larger amount than PF1.
      \item The structure of the \PF\ circuit is not affected by choosing different error bound. For concreteness we only show data for \PF\ with empirical bound. But similar result holds for other bounds.
      \item The percentage of improvement is constant throughout the entire range of interest. This is due to the LCR feature?
    \end{itemize}
  \end{itemize}}
}

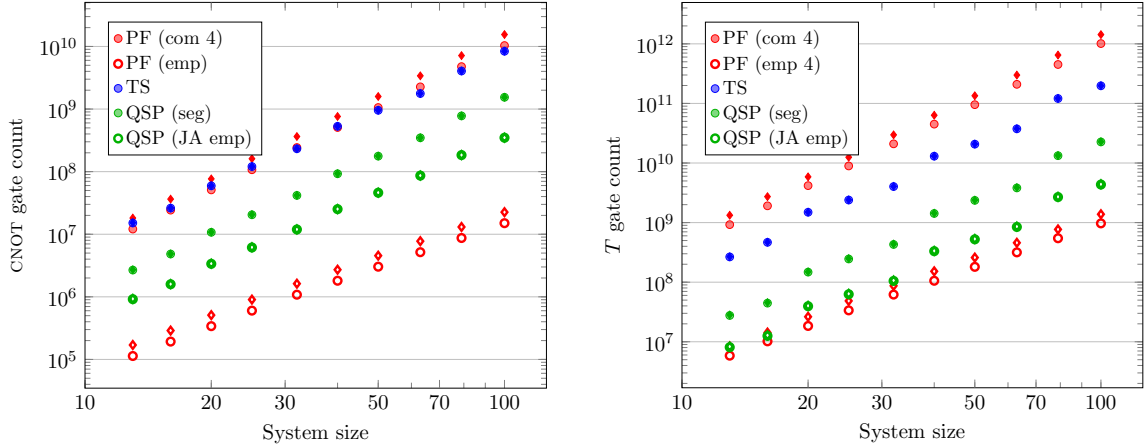
\begin{figure}
  \begin{subfigure}{.5\linewidth}
  \centering
  \resizebox{.9\textwidth}{!}{
  \begin{tikzpicture}
    \begin{loglogaxis}[
      width=10cm,
      ymajorgrids=true,
      legend style={at={(0.05,0.95)},anchor=north west,font=\small},
      xlabel={System size},
      ylabel={$\CNOT$ gate count},
      xmin=10,
      xtick={10,100},
      xticklabels={10,100},
      extra x ticks={20,30,50,70},
      extra x tick labels={20,30,50,70},
      every axis legend/.append style={nodes={right}}
      ]

      \addplot[only marks, red, mark=diamond*, forget plot] coordinates {
        \preoptimcznormalcnotfrthcomavg
      };
      \addplot[only marks, red, fill opacity=0.5] coordinates {
        \postoptimcznormalcnotfrthcomavg
      };
      \addlegendentry{\PF~(com 4)}
      \addplot[only marks,red,mark=diamond*,mark options={fill=white,line width=1.25pt}, forget plot] coordinates {
        \preoptimcznormalcnotbestfitavg
      };
      \addplot[only marks,red,mark=o,mark options={fill=white,line width=1.25pt}, fill opacity=0.5] coordinates {
        \postoptimcznormalcnotbestfitavg
      };
      \addlegendentry{\PF~(emp)}
      \addplot[only marks, blue, mark=diamond*, forget plot] coordinates {
        \preoptimcznormalcnotlcuavg
      };
      \addplot[only marks, blue, fill opacity=0.5] coordinates {
        \postoptimcznormalcnotlcuavg
      };
      \addlegendentry{\TS}
      \addplot[only marks, black!30!green, mark=diamond*, forget plot] coordinates {
        \preoptimcznormalcnotspsegmentavg
      };
      \addplot[only marks, black!30!green, fill opacity=0.5] coordinates {
        \postoptimcznormalcnotspsegmentavg
      };
      \addlegendentry{\QSP~(seg)}
      \addplot[only marks,black!30!green,mark=diamond*,mark options={fill=white,line width=1.25pt}, forget plot] coordinates {
        \preoptimcznormalcnotspjaavg
      };
      \addplot[only marks,black!30!green,mark=o,mark options={fill=white,line width=1.25pt}, fill opacity=0.5] coordinates {
        \postoptimcznormalcnotspjaavg
      };
      \addlegendentry{\QSP~(JA emp)}

    \end{loglogaxis}
  \end{tikzpicture}
  }
 \end{subfigure}
  \begin{subfigure}{.5\linewidth}
  \resizebox{.9\textwidth}{!}{
  \begin{tikzpicture}
    \begin{loglogaxis}[
      width=10cm,
      ymajorgrids=true,
      legend style={at={(0.05,0.95)},anchor=north west,font=\small},
      xlabel={System size},
      ylabel={$T$ gate count},
      xmin=10,
      xtick={10,100},
      xticklabels={10,100},
      extra x ticks={20,30,50,70},
      extra x tick labels={20,30,50,70},
      every axis legend/.append style={nodes={right}}
      ]

       \addplot[only marks, red, mark=diamond*, forget plot] coordinates {
       	\preoptimctnormaltfrthcomavg
       };
       \addplot[only marks, red, fill opacity=0.5] coordinates {
       	\postoptimctnormaltfrthcomavg
       };
       \addlegendentry{\PF~(com 4)}

       \addplot[only marks,red,mark=diamond*,mark options={fill=white,line width=1.25pt}, forget plot] coordinates {
       	\preoptimctnormaltbestfitavg
       };
       \addplot[only marks,red,mark=o,mark options={fill=white,line width=1.25pt}, fill opacity=0.5] coordinates {
       	\postoptimctnormaltbestfitavg
       };
       \addlegendentry{\PF~(emp 4)}

       \addplot[only marks, blue, mark=diamond*, forget plot] coordinates {
       	\preoptimctnormaltlcuavg
       };
       \addplot[only marks, blue, fill opacity=0.5] coordinates {
       	\postoptimctnormaltlcuavg
       };
       \addlegendentry{\TS}

      \addplot[only marks, black!30!green, mark=diamond*, forget plot] coordinates {
        \preoptimctnormaltspsegmentavg
      };
      \addplot[only marks, black!30!green, fill opacity=0.5] coordinates {
        \postoptimctnormaltspsegmentavg
      };
      \addlegendentry{\QSP~(seg)}

      \addplot[only marks,black!30!green,mark=diamond*,mark options={fill=white,line width=1.25pt}, forget plot] coordinates {
        \preoptimctnormaltspjaavg
      };
      \addplot[only marks,black!30!green,mark=o,mark options={fill=white,line width=1.25pt}, fill opacity=0.5] coordinates {
        \postoptimctnormaltspjaavg
      };
      \addlegendentry{\QSP~(JA emp)}

    \end{loglogaxis}
  \end{tikzpicture}
  }
\end{subfigure} \caption{Gate counts before (diamonds) and after
  (circles) circuit optimization.  Left: $\CNOT$ gates over
  Clifford+$R_z$.  Right: $T$ gates over
  Clifford+$T$. \label{fig:opt}}
\end{figure}

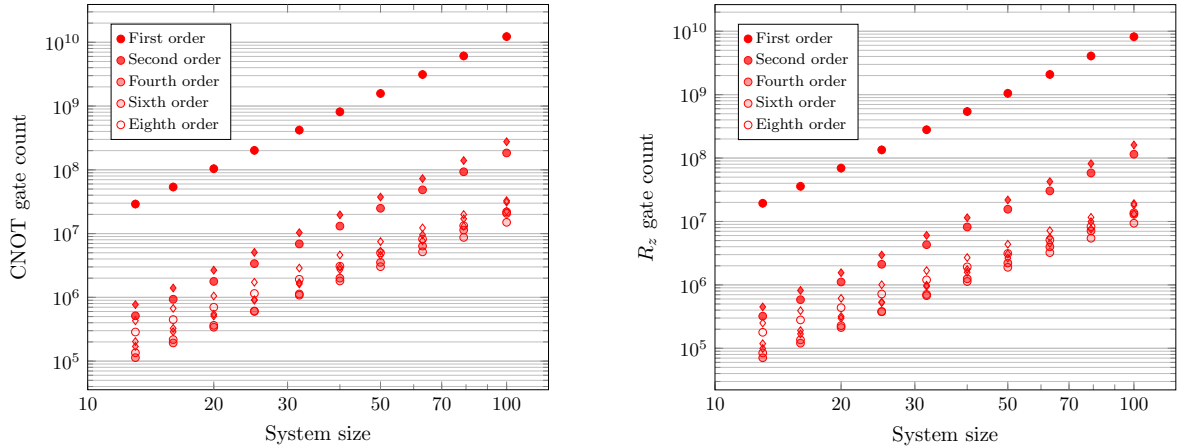
\begin{figure}
	\begin{subfigure}{.5\linewidth}
		\resizebox{.9\textwidth}{!}{
			\begin{tikzpicture}
			\begin{axis}[
			width=10cm,
			log x ticks with fixed point,
			xmin=10,
			xtick={10,100},
			xticklabels={10,100},
			extra x ticks={20,30,50,70},
			extra x tick labels={20,30,50,70},
			xmode=log,
			ymode=log,
			xmin = 10,
			ymajorgrids=true,
			yminorgrids=true,
			legend style={at={(0.05,0.95)},anchor=north west, font=\fontsize{8}{5}\selectfont},
			xlabel={System size},
			ylabel={CNOT gate count},
			every axis legend/.append style={nodes={right},
				every x tick label/.append style={font=\small},
				every y tick label/.append style={font=\small},
			}
			]

			\addplot[only marks, red, mark=diamond*,forget plot] coordinates {
				\preoptimcznormalcnotfstfitavg
			};
			\addplot[only marks, red] coordinates {
				\postoptimcznormalcnotfstfitavg
			};
			\addlegendentry{First order}

			\addplot[only marks, red, mark=diamond*, fill opacity=0.7,forget plot] coordinates {
				\preoptimcznormalcnotsndfitavg
			};
			\addplot[only marks, red, fill opacity=0.7] coordinates {
				\postoptimcznormalcnotsndfitavg
			};
			\addlegendentry{Second order}

			\addplot[only marks, red, mark=diamond*, fill opacity=0.4,forget plot] coordinates {
				\preoptimcznormalcnotfrthfitavg
			};
			\addplot[only marks, red, fill opacity=0.4] coordinates {
				\postoptimcznormalcnotfrthfitavg
			};
			\addlegendentry{Fourth order}

			\addplot[only marks, red, mark=diamond*, fill opacity=.2,forget plot] coordinates {
				\preoptimcznormalcnotsxthfitavg
			};
			\addplot[only marks, red, fill opacity=.2] coordinates {
				\postoptimcznormalcnotsxthfitavg
			};
			\addlegendentry{Sixth order}

      \addplot[only marks, red, mark=diamond*, fill opacity=.05,forget plot] coordinates {
				\preoptimcznormalcnoteigthfitavg
			};
			\addplot[only marks, red, fill opacity=.05] coordinates {
				\postoptimcznormalcnoteigthfitavg
			};
			\addlegendentry{Eighth order}

			\end{axis}
			\end{tikzpicture}
		}
	\end{subfigure}
	\begin{subfigure}{.5\linewidth}
		\resizebox{.9\textwidth}{!}{
			\begin{tikzpicture}
			\begin{axis}[
			width=10cm,
			log x ticks with fixed point,
			xmin=10,
			xtick={10,100},
			xticklabels={10,100},
			extra x ticks={20,30,50,70},
			extra x tick labels={20,30,50,70},
			xmode=log,
			ymode=log,
			xmin = 10,
			ymajorgrids=true,
			yminorgrids=true,
			legend style={at={(0.05,0.95)},anchor=north west, font=\fontsize{8}{5}\selectfont},
			xlabel={System size},
			ylabel={$\Rz$ gate count},
			every axis legend/.append style={nodes={right},
				every x tick label/.append style={font=\small},
				every y tick label/.append style={font=\small},
			}
			]

			\addplot[only marks, red, mark=diamond*,forget plot] coordinates {
				\preoptimcznormalrzfstfitavg
			};
			\addplot[only marks, red] coordinates {
				\postoptimcznormalrzfstfitavg
			};
			\addlegendentry{First order}

			\addplot[only marks, red, mark=diamond*, fill opacity=0.7,forget plot] coordinates {
				\preoptimcznormalrzsndfitavg
			};
			\addplot[only marks, red, fill opacity=0.7] coordinates {
				\postoptimcznormalrzsndfitavg
			};
			\addlegendentry{Second order}

			\addplot[only marks, red, mark=diamond*, fill opacity=0.4,forget plot] coordinates {
				\preoptimcznormalrzfrthfitavg
			};
			\addplot[only marks, red, fill opacity=0.4] coordinates {
				\postoptimcznormalrzfrthfitavg
			};
			\addlegendentry{Fourth order}

			\addplot[only marks, red, mark=diamond*, fill opacity=.2,forget plot] coordinates {
				\preoptimcznormalrzsxthfitavg
			};
			\addplot[only marks, red, fill opacity=.2] coordinates {
				\postoptimcznormalrzsxthfitavg
			};
			\addlegendentry{Sixth order}

      \addplot[only marks, red, mark=diamond*, fill opacity=.05,forget plot] coordinates {
				\preoptimcznormalrzeigthfitavg
			};
			\addplot[only marks, red, fill opacity=.05] coordinates {
				\postoptimcznormalrzeigthfitavg
			};
			\addlegendentry{Eighth order}

			\end{axis}
			\end{tikzpicture}
		}
	\end{subfigure}
	\caption{Gate counts before (diamonds) and after (circles)
          circuit optimization for the first, second, fourth and sixth
          order \PF\ algorithm with empirical bound.  Left: $\CNOT$
          gates over Clifford+$R_z$. Right: $\Rz$ gates over
          Clifford+$R_z$.\label{fig:opt_pf}}
\end{figure}

\comment{
\begin{table}
\makebox[\linewidth][c]{\begin{subfigure}{.6\textwidth}
\centering\small
\postoptimfrthcomcz4060table
\end{subfigure}\begin{subfigure}{.4\textwidth}
\centering\small
\postoptimfrthfitcz4060table
\end{subfigure}}

\bigskip

\makebox[\linewidth][c]{\begin{subfigure}{\linewidth}
\centering\small
\preoptimlcucz4060table
\end{subfigure}}

\bigskip

\makebox[\linewidth][c]{\begin{subfigure}{.5\textwidth}
\centering\small
\preoptimspjacz4060table
\end{subfigure}\begin{subfigure}{.5\textwidth}
\centering\small
\preoptimspsegmentcz4060table
\end{subfigure}}
\caption{\nam{Currently showing preoptim results for non \PF\ results; Julien, do you happen to have the corresponding non \PF\ circuits somewhere in the dropbox that we could optimize?} Detailed gate count comparisons for the fourth-order product formula algorithm (using the commutator and empirical error bounds), the \TS\ algorithm, and the \QSP\ algorithm (using the segmented version with analytic error bound and the non-segmented version with empirical error bound for the Jacobi-Anger expansion) over the Clifford+$\Rz$ basis before and after optimization. \label{tab:optCZ_40/60}}
\end{table}
}

\providecommand{\bysame}{\leavevmode\hbox to3em{\hrulefill}\thinspace}

\end{document}